\newtheorem{thm}{Theorem}[chapter]
\newtheorem{cor}[thm]{Corollary}
\newtheorem{lem}[thm]{Lemma}
\newtheorem{defin}{Definition}[chapter]
\newcommand{\bra}[1]{\langle #1|}
\newcommand{\ket}[1]{|#1 \rangle}
\newcommand{\braket}[2]{\langle #1|#2\rangle}
\newcommand{\ketbra}[1]{\ket{#1}\bra{#1}}
\newcommand{\ident}{\mathbb{I}}
\DeclareMathOperator{\tr}{Tr}
\DeclareMathOperator{\rank}{rank}
\newcommand{\mdag}{^{\dag}} 
\newcommand{\demi}{\frac{1}{2}}
\DeclareMathOperator{\vect}{vec}
\DeclareMathOperator{\Span}{span}
\DeclareMathOperator{\op}{op}
\DeclareMathOperator{\Herm}{Herm}
\DeclareMathOperator{\Pos}{Pos}
\DeclareMathOperator{\DD}{D}
\DeclareMathOperator{\enc}{enc}
\DeclareMathOperator{\acc}{acc}
\DeclareMathOperator{\polylog}{polylog}
\newcommand{\LL}{\mathrm{L}}
\newcommand{\wtA}{\widetilde{A}}
\newcommand{\wtB}{\widetilde{B}}
\newcommand{\mbC}{\mathbb{C}}
\newcommand{\mbE}{\mathbb{E}}
\newcommand{\mbN}{\mathbb{N}}
\newcommand{\mbR}{\mathbb{R}}
\newcommand{\mbU}{\mathbb{U}}
\newcommand{\mfB}{\mathfrak{B}}
\newcommand{\mfD}{\mathfrak{D}}
\newcommand{\mfN}{\mathfrak{N}}
\newcommand{\mfQ}{\mathfrak{Q}}
\newcommand{\mfX}{\mathfrak{X}}
\newcommand{\mfY}{\mathfrak{Y}}
\newcommand{\sfA}{\mathsf{A}}
\newcommand{\sfB}{\mathsf{B}}
\newcommand{\sfC}{\mathsf{C}}
\newcommand{\sfD}{\mathsf{D}}
\newcommand{\sfE}{\mathsf{E}}
\newcommand{\sfX}{\mathsf{X}}
\begin{document}

\setcounter{page}{1}
 \PagesCouverture

\setstretch{1.3}
\resume
\selectlanguage{french}

\noindent La théorie de l'information quantique étudie les limites fondamentales qu'imposent les lois de la physique sur les tâches de traitement de données comme la compression et la transmission de données sur un canal bruité.  Cette thèse présente des techniques générales permettant de résoudre plusieurs problèmes fondamentaux de la théorie de l'information quantique dans un seul et même cadre. Le théorème central de cette thèse énonce l'existence d'un protocole permettant de transmettre des données quantiques que le receveur connaît déjà partiellement à l'aide d'une seule utilisation d'un canal quantique bruité. Ce théorème a de plus comme corollaires immédiats plusieurs théorèmes centraux de la théorie de l'information quantique.

Les chapitres suivants utilisent ce théorème pour prouver l'existence de nouveaux protocoles pour deux autres types de canaux quantiques, soit les canaux de diffusion quantiques et les canaux quantiques avec information supplémentaire fournie au transmetteur. Ces protocoles traitent aussi de la transmission de données quantiques partiellement connues du receveur à l'aide d'une seule utilisation du canal, et ont comme corollaires des versions asymptotiques avec et sans intrication auxiliaire. Les versions asymptotiques avec intrication auxiliaire peuvent, dans les deux cas, être considérées comme des versions quantiques des meilleurs théorèmes de codage connus pour les versions classiques de ces problèmes.

Le dernier chapitre traite d'un phénomène purement quantique appelé \emph{verrouillage}: il est possible d'encoder un message classique dans un état quantique de sorte qu'en lui enlevant un sous-système de taille logarithmique par rapport à sa taille totale, on puisse s'assurer qu'aucune mesure ne puisse avoir de corrélation significative avec le message. Le message se trouve donc \og verrouillé \fg{} par une clé de taille logarithmique. Cette thèse présente le premier protocole de verrouillage dont le critère de succès est que la distance trace entre la distribution jointe du message et du résultat de la mesure et le produit de leur marginales soit suffisamment petite.

{\bfseries Mots clés\hspace{-3pt}: Théorie de l'information, information quantique}
\setstretch{1.4}

\selectlanguage{english}

\abstract

\vspace{-0.5cm}
Quantum information theory studies the fundamental limits that physical laws impose on information processing tasks such as data compression and data transmission on noisy channels. This thesis presents general techniques that allow one to solve many fundamental problems of quantum information theory in a unified framework. The central theorem of this thesis proves the existence of a protocol that transmits quantum data that is partially known to the receiver through a single use of an arbitrary noisy quantum channel. In addition to the intrinsic interest of this problem, this theorem has as immediate corollaries several central theorems of quantum information theory.

The following chapters use this theorem to prove the existence of new protocols for two other types of quantum channels, namely quantum broadcast channels and quantum channels with side information at the transmitter. These protocols also involve sending quantum information partially known by the receiver with a single use of the channel, and have as corollaries entanglement-assisted and unassisted asymptotic coding theorems. The entanglement-assisted asymptotic versions can, in both cases, be considered as quantum versions of the best coding theorems known for the classical versions of these problems.

The last chapter deals with a purely quantum phenomenon called \emph{locking}. We demonstrate that it is possible to encode a classical message into a quantum state such that, by removing a subsystem of logarithmic size with respect to its total size, no measurement can have significant correlations with the message. The message is therefore ``locked'' by a logarithmic-size key. This thesis presents the first locking protocol for which the success criterion is that the trace distance between the joint distribution of the message and the measurement result and the product of their marginals be sufficiently small.

{\bfseries Keywords: Information theory, quantum information}

\tabledesmatieres



\listedesannexes


\notation
\setstretch{1.3}

\label{notation-at-beginning-firstpage}

\newcommand{\tabstart}[1]{\noindent \begin{tabular}{p{2.95cm}p{10cm}}
    \multicolumn{2}{l}{{\bf #1}} \\ \hline \\[-2.5ex] } 

\newcommand{\tabstop}{\\ \hline \end{tabular}}

\newcommand{\tabinter}{\vspace{2ex}}

\tabstart{General}
$\log$ & Logarithm base 2.\\
$\ln$ & Natural logarithm.\\
$\mbR$ & Real numbers.\\
$\mbC$ & Complex numbers.\\
$c^*$ & Complex conjugate of $c$.\\
$\mbE_U[f(U)]$ & Expectation value of $f(U)$ over the random variable $U$.
\tabstop

\tabinter

\tabstart{Linear Algebra and Quantum Systems}
$A, B, C, \dots$ & Labels for quantum systems, or linear operators between Hilbert spaces. (Should be clear from context.)\\
$\sfA, \sfB, \sfC, \dots$ & Hilbert spaces associated with the systems $A, B, C, \dots$   \\
$|A|$ & Dimension of $\sfA$.\\
$AB$ & Composite quantum system whose associated Hilbert space is $\sfA \otimes \sfB$.\\
$A^n$ & Quantum system composed of $n$ copies of $A$.\\
$\LL(\sfA, \sfB)$ & The space of linear operators from $\sfA$ to $\sfB$ \\
$\LL(\sfA)$ & $\LL(\sfA, \sfA)$\\
$M^{A \rightarrow B}$ & Indicates that the operator $M$ is in $\LL(\sfA, \sfB)$.\\
$M\mdag$ & Adjoint of $M$\\
$M^{A \rightarrow B}_T$ & Transpose of $M$ with respect to the canonical bases of $\sfA$ and $\sfB$. This has lower priority than matrix multiplication: $AB \cdot C = (AB) C (AB)\mdag$\\
$M \cdot N$ & $MNM\mdag$\\
$\Herm(\sfA)$ & The set of Hermitian operators from $\sfA$ to $\sfA$\\
$\Pos(\sfA)$ & The subset of $\Herm(\sfA)$ consisting of positive semidefinite matrices\\
$M \leqslant N$ & If $M, N \in \Herm(\sfA)$, this means that $N - M \in \Pos(\sfA)$.
\tabstop

\tabinter

\tabstart{Linear Algebra and Quantum Systems, continued}
$\DD(\sfA)$ & The set of all density operators on $\sfA$; i.e.\ $\DD(\sfA) = \{ \rho : \rho \in \Pos(\sfA), \tr[\rho] = 1 \}$\\
$\mathcal{N}^{A \rightarrow B}, \mathcal{T}^{A \rightarrow B}, \dots$    & Superoperators (completely positive linear maps from $\LL(\sfA)$ to $\LL(\sfB)$)\\
$\ident^A$ & Identity operator on $\sfA$ or identity superoperator on $\LL(\sfA)$. (Should be clear from context.)\\
$\ket{\psi}^A, \ket{\varphi}^A, \dots$ & Vectors in $\sfA$.\\
$\psi^A, \varphi^A, \dots$ & The ``unketted'' versions denote their associated density matrices: $\psi^A = \ketbra{\psi}$. Furthermore, if we have defined a state $\psi^{AB}$, then $\psi^A = \tr_B[\psi^{AB}]$.\\
$\op_{A \rightarrow B}(\ket{\psi}^{AB})$ & Turns a vector into an operator. See Section \ref{sec:vec-op}.\\
$\vect(M^{A \rightarrow B})$ & Turns an operator into a vector. See Section \ref{sec:vec-op}.\\
$\sqrt{M}$ & If $M \in \Pos(\sfA)$ has spectral decomposition $M = \sum_i \lambda_i \ketbra{\psi_i}$, then $\sqrt{M} = \sum_i \sqrt{\lambda_i} \ketbra{\psi_i}$.\\
$\ket{\Phi}^{AA'}$ & $\frac{1}{\sqrt{|A|}}\sum_{i=1}^{|A|} \ket{i}^A \ket{i}^{A'}$, where $\ket{i}^A$ and $\ket{i}^{A'}$ are fixed canonical bases for $\sfA$ and $\sfA'$, and $\sfA \cong \sfA'$.\\
$\pi^A$ & The maximally mixed state $\frac{\ident^A}{|A|}$
\tabstop

\tabinter

\tabstart{Norms and Distance measures}
$\left\| M^{A \rightarrow B} \right\|_1$ & $\tr\sqrt{M\mdag M}$\\
$\left\| \ket{\psi} \right\|_2$ & $\sqrt{|\braket{\psi}{\psi}|}$\\
$\left\| M^{A \rightarrow B} \right\|_2$ & $\sqrt{\tr[M \mdag M]}$\\
$\left\| M^{A \rightarrow B} \right\|_{\infty}$ & Largest singular value of $M$.\\
$\left\| \mathcal{N}^{A \rightarrow B} \right\|_{\diamond}$ & Diamond norm; see Section \ref{sec:diamond-norm}.\\
$F(\rho^A, \sigma^A)$ & $\left\| \sqrt{\rho} \sqrt{\sigma} \right\|_1$. This is called the \emph{fidelity}.\\
$d_F(\rho^A, \sigma^A)$ & $\sqrt{1 - F(\rho, \sigma)^2}$. This is called the \emph{fidelity distance}.
\tabstop

\tabinter

\tabstart{Entropies}
$H(A|B)_{\rho}$ & Conditional von Neumann entropy of $A$ given $B$ on $\rho^{AB}$, see Definition \ref{def:von-neumann}.\\
$H_2(A|B)_{\rho}$ & Conditional 2-entropy of $A$ given $B$, defined as $-\log \min_{\sigma^B \in \DD(\sfB)} \tr\left[ \left( (\sigma^B \otimes \ident^A)^{-1/2} \rho^{AB}\right)^2  \right]$\\
$H_2^{\varepsilon}(A|B)_{\rho}$ & Smooth 2-entropy of $A$ given $B$, defined as $\max_{\sigma^{AB}, d_F(\rho, \sigma) \leqslant \varepsilon} H_2(A|B)_{\sigma}$\\
$H_{\min}(A|B)_{\rho}$ & Conditional min-entropy, see Definition \ref{def:cond-min-entropy}.\\
$H_{\max}(A|B)_{\rho}$ & Conditional max-entropy, see Definition \ref{def:cond-max-entropy}.\\
$H^{\varepsilon}_{\min}(A|B)_{\rho}$ & $\varepsilon$-smooth conditional min-entropy, see Definition \ref{def:smooth-cond-min-entropy}.\\
$H^{\varepsilon}_{\max}(A|B)_{\rho}$ & $\varepsilon$-smooth conditional max-entropy, see Definition \ref{def:smooth-cond-max-entropy}.\\
$I(A \rangle B)_{\rho}$ & Coherent information, see Definition \ref{def:coherent-info}.\\
$I(A; B)_{\rho}$ & Mutual information, see Definition \ref{def:mutual-info}.\\
$I(A; B|C)_{\rho}$ & Conditional mutual information, see Definition \ref{def:cond-mutual-info}.
\tabstop

\tabinter 

\tabstart{First names}
Alice & The sender in all the protocols.\\
Bob & The receiver in all the protocols.
\tabstop
\label{notation-at-beginning-lastpage}
\setstretch{1.4}


\remerciements

\setstretch{1.15}
\selectlanguage{french}
Je tiens d'abord à remercier mes deux directeurs, Gilles Brassard et Patrick Hayden de qui j'ai tant appris, pour leur soutien constant et sans qui cette thèse n'aurait jamais vu le jour.

Je dois aussi beaucoup à mes coauteurs, avec lesquels j'ai eu énormément de plaisir à collaborer: Simon-Pierre Desrosiers, Patrick Hayden, Ke Li, Debbie Leung, André Méthot, Nicolas Gisin, Avinatan Hassidim et Haran Pilpel.

Je tiens aussi à remercier Andreas Winter, Renato Renner et Aram Harrow pour leur hospitalité ainsi que pour les discussions enrichissantes que j'ai eues avec eux durant les quelques semaines où une grosse partie de la rédaction de cette thèse a été accomplie.

Je remercie aussi Mario Berta, qui m'a envoyé une copie de sa thèse de \og Diplom \fg{}, d'où j'ai appris un des trucs utilisés dans la preuve du théorème principal.

De plus, je tiens à remercier les membres de mon jury pour leur coopération, leur lecture attentive ainsi que les nombreuses corrections qu'ils ont suggérées. Je remercie également Marco Tomamichel et Oleg Szehr pour leurs corrections incluses dans la version finale.

Je voudrais aussi remercier les gens qui ont été (ou qui sont toujours) au LITQ à l'Université de Montréal ou au LaCIQ à McGill durant mon doctorat et qui ont contribué à rendre mon environnement de travail si agréable: Ashton Anderson, Somshubhro  Bandyopadhay, Laurent  Beaudou, Guido  Berlín, Hugue  Blier, Anne  Broadbent, Félix  Bussières, Michaël  Cadilhac, Claude  Crépeau, Omar  Fawzi, Jan  Florjanczyk, Sébastien  Gambs, Claude  Gravel, Charles  Hélou, Kassem  Kalach, Olivier Landon-Cardinal, Adrien  Lemaître, Abubakr  Muhammad, Roger  Müller, Nicolas Dutil, Éric  Paquette, Paul Chouha, David Pouliot, Louis  Salvail, Ivan  Savov, Jean-Raymond Simard, Alain  Tapp et Jürg  Wullschleger.

Puisqu'il faut d'abord se loger et se nourrir pour être en mesure de déchiffrer les mystères de l'univers, je tiens à remercier le CRSNG ainsi que l'ICRA pour leur soutien financier tout au long de mon doctorat.

Finalement, je tiens tout particulièrement à remercier ma famille, qui a de tout temps cru en moi et dont le soutien n'a jamais faibli depuis le début de ma vie.

\setstretch{1.4}
\selectlanguage{english}


\debutchapitres

\chapter{Introduction}

The origins of information theory go back to 1948, when Claude Shannon published ``A mathematical theory of communication'' \cite{shannon}, in which he proposed a mathematical framework to study information processing tasks such as data compression and data transmission over noisy channels. Data compression is the following task: we have a large amount of digital data, and we would like to shrink it down to a smaller size for efficient storage or transmission. If the data is sufficiently redundant, then it is possible to do this with a very small probability of decompressing it incorrectly. Data transmission over noisy channels involves the following problem: one has a communication channel in which the transmitter can select an input and the receiver receives an output that has been corrupted by noise in the channel. A concrete example of this would be the phone line between a house and the telephone central, or the radio link between a cellphone tower and the handsets. One would then like to use this channel to send a message and make sure that, with high probability, the receiver will be able to reconstruct it exactly.

Since our universe is governed by the laws of quantum mechanics, the physical limits imposed on these problems are themselves quantum mechanical. It also turns out that information can behave in counterintuitive ways under the laws of quantum mechanics: for example, one can know precisely the state of a two-particle quantum system while remaining ignorant of the state of either of the two particles separately. Furthermore, measurements made on two particles that are kept very far apart can exhibit correlations that could not be explained classically without assuming that information was transmitted faster than the speed of light. This is why a quantum version of information theory is so interesting: it is our attempt at taming these apparent paradoxes and counterintuitive facts. In this thesis, we will be concerned specifically with coding for various different types of quantum channels. In the last chapter, we will also look at the phenomenon of \emph{information locking}, in which a small key can ``unlock'' an amount of information far beyond what would be possible classically.

\section{Decoupling}
One of the most bizarre features of quantum information theory turns out to be extremely useful for solving channel coding problems. It is the notion of \emph{purification}: given any quantum system $A$ whose state is random, one can find a bigger system $AB$ such that the state on $A$ is the same as before, but where the global state on $AB$ is completely deterministic. This is impossible classically: if the state of a system is random, considering it together with another system only adds the potential of having more randomness globally. This, however, will help us tremendously. In a channel coding problem, we want to ensure that the output of the channel is strongly correlated (or ``coupled'') with the input. When we look at the purification of the final state that we want between the input and output, however, it turns out that this is equivalent to requiring that the input to the channel be completely decorrelated with the entire universe minus the channel output. This helps us because we can achieve it by \emph{destroying} correlations---and, as in other areas of life, destruction is easier to achieve than construction.

This ``decoupling'' approach---we use the term ``decouple'' to mean ``decorrelate''---has therefore become a staple of quantum information theory. It was already used to some extent in \cite{lsd3}, the first general coding theorem for the quantum capacity of quantum channels, and was used more systematically in \cite{state-merging} and \cite{FQSW}, which derived basic quantum protocols from which a large number of other, previously known protocols could be derived. In \cite{lsd-decoupling}, the results of \cite{lsd3} were revisited using a ``purer'' decoupling approach.

While we have some sense that these last three papers use the same ``trick'', they are nonetheless proven separately, and while they can be used to derive other protocols, one sometimes needs to work quite a bit to accomodate the particular forms of the theorems (using, for instance, typical projectors to limit the dimensions of various quantum systems). One of the main contributions of this thesis is to give a general decoupling theorem, from which all of the known ones can be derived very easily, and which is much more flexible. We then go on to give quantum coding theorems for different varieties of quantum channels, including quantum broadcast channels and quantum channels with side information at the transmitter. In both cases, no prior results exist regarding the particular tasks considered. Finally, we also use the main decoupling theorem to prove a result on information locking.

\section{Contributions}
This thesis is broken down into the following chapters:

\textbf{Chapter \ref{chp:preliminaries} (Preliminaries)}: This chapter contains the concepts and definitions necessary to understand the rest of the thesis. It does not contain original material.

\textbf{Chapter \ref{chp:decoupling} (The decoupling theorem)}: This chapter is devoted to the main decoupling theorem. We state it and prove it along with several variants, including a new one-shot coding theorem for quantum channels, in which Bob potentially knows part of the state before the start of the protocol. We then use it to rederive the main results of \cite{state-merging}, \cite{FQSW}, \cite{gpw04} and \cite{lsd-decoupling} in a more straightforward manner. The contents of this chapter will be published as a paper at a later date.

\textbf{Chapter \ref{chp:side-info} (Quantum channels with side information at the transmitter)}: This chapter derives new results on quantum channels with side information at the transmitter. A channel with side information at the transmitter is a channel in which the transmitter has access ahead of time to information about the noise in the channel, but where the receiver does not have access to this information. We give a one-shot coding theorem for them similar to the one for regular channels in Chapter \ref{chp:decoupling}, and show that applying it to entanglement-assisted coding for memoryless channels yields an optimal protocol. In particular, we show that the entanglement-assisted capacity of these channels admits a single-letter formula that parallels the solution to the classical version of this problem given in \cite{gelfand-pinsker}. Part of the work in this section was presented in a different form at the 2009 International Symposium on Information Theory \cite{gpquantique}.

\textbf{Chapter \ref{chp:bcast} (Quantum broadcast channels)}: This chapter contains a coding theorem for quantum broadcast channels, namely channels with one input but two outputs going to two physically separated receivers. Again, we give a general one-shot coding theorem, and we then derive from it an entanglement-assisted coding scheme for memoryless channels that parallels the best known classical coding theorem for broadcast channels given in \cite{marton}. These are the first coding theorems given for these tasks.  A different version of this work was accepted for publication in IEEE Transactions on Information Theory and is joint work with Patrick Hayden and Ke Li \cite{dhl09}.

\textbf{Chapter \ref{chp:locking} (Locking classical information in quantum states)}: This chapter deals with the purely quantum phenomenon of information locking. We show that there exists a unitary such that if we encode a classical message into a quantum state, apply this unitary to it, and remove a very small part (logarithmic in the total size), then one can get almost no information about the message by measuring the remaining part. This is done by showing that the statistical distance between the joint distribution of the message and the measurement result and a product distribution can be made very small. This is slightly stronger than what was done in prior information locking results, in which upper bounds on the mutual information between the measurement result and the message were derived. Furthermore, this is the first locking protocol in which one uses a single unitary and a quantum key instead of applying one of several unitaries and using the choice of unitary as the key. We also show that this scheme can be used to construct a quantum key distribution protocol that guarantees that the eavesdropper can gain almost no information about the key by making a measurement immediately after the execution of the protocol, but where the eavesdropper only needs to learn a very small portion of the key to be able to recover the rest. This underscores much more spectacularly than before \cite{krbm07} the need to take into account the fact that an eavesdropper might keep \emph{quantum} information after the protocol and use it only when making his actual attack. This will be published at a later date and is joint work with Patrick Hayden and Debbie Leung.

\textbf{Chapter \ref{chp:conclusion} (Conclusion)}: This chapter concludes the thesis with a recapitulation of what was done, and speculates on what the future might hold.

\chapter{Preliminaries}\label{chp:preliminaries}
This chapter explains the notation used throughout the thesis and presents some concepts one needs to understand this document.

\section{Notation}
Linear algebra is the language of quantum mechanics; we therefore start by introducing the notation we will use for linear algebraic concepts. One can find explanations of all the concepts below in any linear algebra textbook, or, to be introduced to these concepts in the setting of quantum information, in \cite{watrousnotes}. Note that a condensed version of this appears on pages \pageref{notation-at-beginning-firstpage}--\pageref{notation-at-beginning-lastpage} so the reader can refer back to it more easily. We will denote by sans-serif capital letters (such as $\sfA, \sfB,\dots$) complex finite-dimensional inner product vector spaces (which we will usually simply call Hilbert spaces following the usual quantum information convention---these are the only Hilbert spaces that we will ever consider in this thesis), and we will use regular capital letters $A, B, \dots$ to label the quantum systems associated with the spaces $\sfA, \sfB, \dots$. We will denote the dimension of $\sfA$ by $|A|$. Vectors in $\sfA$ are denoted by ``kets'' $\ket{\psi}^{A}$ with the superscript omitted when it causes no confusion. Furthermore, we will denote by $\LL(\sfA, \sfB)$ the space of linear operators from $\sfA$ to $\sfB$, and we will use the shorthand $\LL(\sfA)$ for $\LL(\sfA, \sfA)$. Elements of the dual space $\LL(\sfA, \mbC)$ of $\sfA$ are written as ``bras''; for instance the dual of $\ket{\psi}$ is written $\bra{\psi}$. We use $\mdag$ to designate the adjoint of an operator, $\Herm(\sfA)$ is the set of all Hermitian (self-adjoint) operators on $\sfA$, and $\Pos(\sfA) \subseteq \Herm(\sfA)$ is the set of all positive semidefinite operators on $\sfA$. Given two operators $M, N \in \Herm(\sfA)$, we say that $M \leqslant N$ if $N - M \in \Pos(\sfA)$. Given an operator $M \in \LL(\sfA, \sfB)$, we will use the superscript $M^{A \rightarrow B}$ to indicate its input and output spaces. We will use the symbol $\cdot$ to denote conjugation: given two operators $M^{A \rightarrow B}$ and $N^A$, we define $M \cdot N = MNM\mdag$.

We will denote by the calligraphic letters $\mathcal{N}^{A \rightarrow B}, \mathcal{T}^{A \rightarrow B}, \mathcal{S}^{A \rightarrow B} ,\dots$ completely positive linear maps from $\LL(\sfA)$ to $\LL(\sfB)$; we will call these ``superoperators''. We will also write $\ident^A$ for either the identity operator on $\sfA$, or the identity superoperator on $\LL(\sfA)$; which one is meant should be clear from the context.

In superscripts, we will simply concatenate letters to indicate the tensor product: for instance, $M^{AB \rightarrow CD} \in \LL(\sfA \otimes \sfB, \sfC \otimes \sfD)$. When applying an operator $M^{A \rightarrow B}$ to a vector $\ket{\psi}^{AC}$, we will usually omit the implicit identity: for instance, $M^{A \rightarrow B} \ket{\psi}^{AC} = (M^{A \rightarrow B} \otimes \ident^C) \ket{\psi}^{AC}$.

Given an operator $M^{AB} \in \LL(\sfA \otimes \sfB)$ on a composite Hilbert space, we can define its \emph{partial trace on $B$}, denoted either as $\tr_B[M^{AB}]$ or simply by $M^A$, omitting the $B$ in the superscript, as the unique operator $N$ in $\LL(\sfA)$ such that $\tr[ZN] = \tr[(Z \otimes \ident^B) M^{AB}]$ for all $Z \in \LL(\sfA)$. In other words, the partial trace is defined as the adjoint of the superoperator $\mathcal{F}^{A \rightarrow AB}, \mathcal{F}(N^A) = N^A \otimes \ident^B$ under the Hilbert-Schmidt inner product $\langle X, Y \rangle := \tr[X\mdag Y]$.

We will also need the concept of \emph{partial isometries}. A partial isometry is an operator $V^{A \rightarrow B}$ whose singular values are all either 1 or 0. Equivalently, they can be defined as any operator $V^{A \rightarrow B}$ such that $V\mdag V$ and $VV\mdag$ are projectors. A \emph{full-rank partial isometry} is a partial isometry $V^{A \rightarrow B}$ whose rank is $\min\{ |A|, |B| \}$.

\section{Quantum mechanics: an extremely short introduction}
Since one cannot hope to cover basic quantum mechanics in a few paragraphs, the author strongly recommends the interested reader to consult \cite{NC2000} or \cite{watrousnotes} for a more complete introduction. Nonetheless, a short introduction to the basic concepts using the notation that we will use is given here for the sake of completeness.

A quantum system $A$ is represented by a Hilbert space $\sfA$; a state of the system is a positive semidefinite operator $\rho^A \in \Pos(\sfA)$ such that $\tr[\rho^A] = 1$. We also call these states \emph{density operators} or \emph{density matrices} and denote the set of all density operators on $A$ as $\DD(\sfA)$. A state $\rho$ is considered \emph{pure} if $\rank \rho = 1$, in which case there exists a $\ket{\psi} \in \sfA$ of norm 1 and such that $\rho = \ketbra{\psi}$. We sometimes also call general states \emph{mixed states} when we want to emphasize the fact that the state is not necessarily pure. Let $\sfA$ and $\sfB$ be Hilbert spaces corresponding to quantum systems $A$ and $B$; we can then consider them as a single composite system $AB$ with $\sfA \otimes \sfB$ as its associated Hilbert space. By convention in this document, we will write systems on which a quantum state is defined as a superscript; for instance, $\rho^{AB} \in \DD(\sfA \otimes \sfB)$. The same convention will apply to all operators. If the input and output spaces of an operator are different, we will write an arrow in the superscript to indicate this: for example, $M^{A \rightarrow B} \in \LL(\sfA, \sfB)$. When we want to consider only part of a composite system, we take its partial trace on the system we want to eliminate.

The operations that can be applied to a quantum system without making it interact with other systems correspond to the unitary operators on the associated Hilbert space, namely all transformations of the form $\rho \rightarrow U \rho U\mdag$, where $U$ is unitary. Since conjugation will be used so often in this thesis, we will use the notation $A \cdot B$ to denote $ABA\mdag$. Transformations involving interactions with other systems can be simulated by adding an ancillary system, applying a unitary on the composite system, and then tracing out part of the remaining system.

Such a transformation can also be represented by a trace-preserving superoperator (sometimes called CPTP map, which stands for ``completely positive trace-preserving'' map). It can be shown that a linear map $\mathcal{N}^{A \rightarrow B}$ is completely positive if and only if it can be written as $\mathcal{N}(\rho) = \sum_i N_i \rho N_i\mdag$, where $N_i \in \LL(\sfA, \sfB)$; furthermore, any such linear map is trace-preserving (i.e.\ $\tr[\mathcal{N}(M)] = \tr[M] \hspace{3mm} \forall M$) if $\sum_i N_i \mdag N_i = \ident^A$. We will sometimes call trace-preserving superoperators ``quantum channels'' when we want to emphasize that this is a transformation over which we have no control and wish to view as a noisy channel.

There is also a class of operations that leaves the quantum system intact but changes its underlying Hilbert space. For instance, suppose we have a state $\rho^A$ and want to embed the information it contains into the system $B$. An operation that does this is a partial isometry $V^{A \rightarrow B}$ such that $\tr[V \rho V\mdag] = 1$ (i.e.\ the image of $V\mdag$ must contain the support of $\rho$). We will sometimes call these simply ``isometries'' since they act as isometries on the part of $\sfA$ in which $\rho$ lies. Such an operation can be implemented by the superoperator $\mathcal{V}^{A \rightarrow B}(\rho^A) = V\rho V\mdag + \sum_i N_i \rho N_i\mdag$, where the $N_i$ are such that $V\mdag V + \sum_i N_i \mdag N_i = \ident^A$ and $\tr[N_i \rho N_i\mdag] = 0$ for all $i$. 

Quantum systems can also be measured, yielding a classical output. In addition to the measurement result, a measurement can also have a quantum residue, in case the measurement does not completely measure the state. To represent this, we will use a special type of trace-preserving superoperator that we will call a \emph{measurement superoperator}. A measurement superoperator is a superoperator of the form $\mathcal{M}^{A \rightarrow BX}(\sigma^A) = \sum_x \ketbra{x}^X \otimes N_x \sigma^A {N_x}\mdag$, where the $\ket{x}$ are all part of the same orthonormal basis for $\sfX$, and the $N_x^{A \rightarrow B}$ are arbitrary operators such that $\mathcal{M}$ is CPTP. The interpretation for this is that we get the measurement result $x$ with probability $\tr[N_x \sigma N_x\mdag]$, $X$ is a classical register that holds the measurement result, and, if the measurement result was $x$, the $B$ register gets the state $N_x \sigma N_x\mdag/\tr[N_x \sigma N_x\mdag]$. If we are not interested in the quantum residue and only care about the classical result, we only need to describe the set of positive semidefinite operators $\{ N_x\mdag N_x \}$ to describe the measurement, which is then called a \emph{positive operator valued measure}, or POVM. We call a measurement superoperator \emph{complete} if all of the $N_x$ are of rank 1, in which case the quantum residue is superfluous since it can be reconstructed from the classical result only.

One particularly strange and interesting feature of quantum mechanics is the concept of entanglement. We say that a bipartite state $\rho^{AB}$ is \emph{entangled} if it cannot be written in the form $\rho^{AB} = \sum_i \alpha_i \sigma^A_i \otimes \omega^B_i$. In other words, a state on $AB$ in entangled if it cannot be expressed as a probabilistic mixture of separate states on $A$ and $B$. An example of such a state is the following pure maximally entangled state that will be of great importance throughout the thesis: $\Phi^{AB} = \ketbra{\Phi}^{AB}$ with $\ket{\Phi}^{AB} = \frac{1}{\sqrt{|A|}} \sum_i \ket{i}^A \otimes \ket{i}^{B}$, where $|A| = |B|$ and the $\ket{i}^A$ and $\ket{i}^B$ are standard orthonormal bases for $A$ and $B$. When $|A|=|B|=2$, we will call this state an \emph{EPR pair} \cite{epr}, after Einstein, Podolski and Rosen who first noticed the phenomenon of entanglement and defined this state. With some abuse of terminology, we will call higher-dimensional instances of this state ``EPR pairs'' even when the dimension is not a power of two.

Of central importance to this thesis is the concept of \emph{purification}. Given a mixed state $\rho^A$, it is always possible to find a pure state $\omega^{AB}$ on a larger system such that $\rho^A = \tr_B[\omega^{AB}]$. Note that this is also a purely quantum phenomenon: if one has a probability distribution $p$ over a set $\mathfrak{X}$, it is impossible to find a single element $(x,y)$ of $\mathfrak{X} \times \mathfrak{Y}$ which then somehow ends up being distributed as $p$ when we stop looking at the $y$ part of it!

An analogous fact holds for quantum channels: given a completely positive superoperator $\mathcal{N}^{A \rightarrow B}$, it is possible to find a partial isometry $U_{\mathcal{N}}^{A \rightarrow BE}$ such that $\mathcal{N}(X) = \tr_E[U_{\mathcal{N}} \cdot X]$ for every $X \in \LL(\sfA)$. In other words, one can find a deterministic operation that takes the input $A$ to two output systems: the actual output of the channel $B$, and an environment system $E$. When we ignore the environment system, we get exactly the same channel. We call such a partial isometry a \emph{Stinespring dilation} of $\mathcal{N}$.

\section{Distance measures}
We will often need a notion of distance between quantum states, usually to state that the result of a particular protocol that we developed is ``close'' to some ideal output state that we would like to get. The distance we will use most of the time is called the \emph{trace distance}; the trace distance between two states $\rho$ and $\sigma$ is $\| \rho - \sigma \|_1$, where $\| M \|_1 := \tr\sqrt {M\mdag M}$ for any $M^{A \rightarrow B}$. In other words, it is equal to the sum of the absolute values of the eigenvalues of the matrix $\rho - \sigma$. The reason for which this is a meaningful measure of distance is that it characterizes how easy it is for someone to determine through a measurement whether an unknown state is $\rho$ or $\sigma$, as was discovered by Helstrom:
\begin{thm}[Helstrom's theorem \cite{helstrom}]\label{thm:helstrom}
	Let $\rho^A$ and $\sigma^A$ be two density operators on $A$, and suppose one holds $\rho^A$ with probability $\demi$ and $\sigma^A$ with probability $\demi$, and one tries to determine which one it is by performing a measurement on $A$. Then, the best possible measurement will give the correct answer with probability $\demi + \frac{1}{4} \| \rho - \sigma \|_1$.
\end{thm}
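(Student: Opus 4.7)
The plan is to recast the optimization over measurements as a linear optimization over a single effect operator, and then use a spectral decomposition of $\rho - \sigma$ to read off the optimum.

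First I would note that any two-outcome measurement used to guess between $\rho$ and $\sigma$ is specified by a POVM $\{E_0, E_1\}$ with $E_0, E_1 \in \Pos(\sfA)$ and $E_0 + E_1 = \ident^A$; upon outcome $0$ we guess $\rho$ and upon outcome $1$ we guess $\sigma$. With the uniform prior, the success probability becomes
\[
P_{\mathrm{succ}} \;=\; \tfrac{1}{2}\tr[E_0 \rho] + \tfrac{1}{2}\tr[E_1 \sigma] \;=\; \tfrac{1}{2} + \tfrac{1}{2}\tr\!\left[E_0 (\rho - \sigma)\right],
\]
using $E_1 = \ident - E_0$ together with $\tr[\sigma] = 1$. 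So the problem reduces to maximizing $\tr[E_0 (\rho-\sigma)]$ subject to $0 \leqslant E_0 \leqslant \ident$.

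Next I would invoke the spectral decomposition of the Hermitian operator $\rho - \sigma$ and split it into its positive and negative parts: write $\rho - \sigma = P - Q$ with $P, Q \in \Pos(\sfA)$ having orthogonal supports. Let $\Pi$ denote the projector onto the support of $P$, so that $\tr[\Pi P] = \tr[P]$ and $\tr[\Pi Q] = 0$. For any feasible $E_0$ we have $\tr[E_0(\rho-\sigma)] = \tr[E_0 P] - \tr[E_0 Q] \leqslant \tr[E_0 P] \leqslant \tr[P]$, where the last step uses $E_0 \leqslant \ident$ and $P \geqslant 0$. The bound is saturated by the choice $E_0 = \Pi$, so the optimum value is exactly $\tr[P]$.

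Finally I would convert this back to the trace norm. Because $P$ and $Q$ have orthogonal supports, $|\rho-\sigma| = P + Q$, and the condition $\tr[\rho] = \tr[\sigma] = 1$ forces $\tr[P] = \tr[Q]$. Hence $\|\rho-\sigma\|_1 = \tr[P] + \tr[Q] = 2\tr[P]$, so the optimal success probability is
\[
P_{\mathrm{succ}} \;=\; \tfrac{1}{2} + \tfrac{1}{2}\tr[P] \;=\; \tfrac{1}{2} + \tfrac{1}{4}\|\rho - \sigma\|_1,
\]
as claimed. The only genuinely delicate point is verifying that the measurement one writes down is indeed optimal among \emph{all} POVMs (not just projective two-outcome ones); but the reduction above already ranges over arbitrary $0 \leqslant E_0 \leqslant \ident$, so no additional work is needed, and more general multi-outcome measurements cannot help because any such measurement can be post-processed by a binary guess that at best matches the above bound.
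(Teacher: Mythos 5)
Your proof is correct and follows essentially the same route as the paper: reduce the problem to maximizing $\tr[E_0(\rho-\sigma)]$ over effects $0\leqslant E_0 \leqslant \ident$, use the spectral (Jordan/Hahn) decomposition of $\rho-\sigma$ into orthogonal positive and negative parts, and observe that the zero-trace condition ties $\tr[P]$ to $\frac{1}{2}\|\rho-\sigma\|_1$. The paper phrases the key inequality as $\rho-\sigma \leqslant P_+(\rho-\sigma)P_+$ whereas you explicitly drop the $-Q$ term, but these are the same bound with the same optimizer $E_0 = \Pi = P_+$.
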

\begin{proof}
	Let $\{ M_{\rho}^A, M_{\sigma}^A \}$ be a POVM used to guess which state we have (since there are only two possible answers, one only needs two POVM operators). Then, the probability of guessing correctly is
	\begin{align*}
		\demi \tr[M_{\rho} \rho] + \demi \tr[M_{\sigma} \sigma] &= \demi \tr[M_{\rho} \rho +  (\ident - M_{\rho}) \sigma]\\ 
		&= \demi \tr[M_{\rho} \rho + \sigma - M_{\rho} \sigma ]\\
		&= \demi + \demi \tr[M_{\rho}(\rho - \sigma)]\\
		&\leqslant \demi + \demi \tr[M_{\rho}P_+(\rho - \sigma)P_+]\\
		&\leqslant \demi + \demi \tr[P_+(\rho - \sigma)P_+]\\
		&= \demi + \frac{1}{4} \| \rho - \sigma \|_1
	\end{align*}
	where $P_+$ is a projector onto the eigenspaces of $\rho - \sigma$ corresponding to positive eigenvalues. The first inequality is due to the operator inequality $P_+(\rho - \sigma)P_+ \geqslant \rho - \sigma$, and the second inequality, to the operator inequality $M_{\rho} \leqslant \ident$. The last equality is due to the fact that, since $\rho - \sigma$ has zero trace, $P_+(\rho - \sigma)P_+$ must correspond to exactly half of the trace distance. Of course, equality can be attained if $M_{\rho} = P_+$.
\end{proof}
This means that, if the trace distance between two states is very small, someone trying to determine which of the states an unknown state is in will be scarcely better off by doing the optimal measurement than by guessing randomly. In particular, if the output of a quantum protocol is $\varepsilon$-close in trace distance to the output of an ideal protocol, then, regardless of what we use the protocol for, we will almost never be able to tell the difference.

A related notion is the \emph{fidelity} between quantum states:
\begin{defin}[Fidelity]
Given two states $\rho$ and $\sigma$, their fidelity is defined as $F(\rho, \sigma) := \left\| \sqrt{\rho} \sqrt{\sigma} \right\|_1$.
\end{defin}
One can easily see that the fidelity approaches one when two states get closer together; in fact, $F(\rho, \rho) = \| \rho \|_1 = 1$.  An important property of the fidelity is that it is stable under purifications: given two states $\rho^{A}$ and $\sigma^A$, and a purification $\rho^{AB}$ of $\rho^A$, then $F(\rho^A, \sigma^A) = \max_{\sigma^{AB}} F(\rho^{AB}, \sigma^{AB})$, where we maximize over all purifications of $\sigma^A$. This is due to Uhlmann's theorem \cite{uhlmann} and will be proven in the next chapter as Theorem \ref{thm:uhlmann}. One can also define a distance measure based on the fidelity:
\begin{defin}[Fidelity distance]
Let $\rho$ and $\sigma$ be two density operators. Then, their \emph{fidelity distance} is defined as
\[ d_F(\rho, \sigma) := \sqrt{1 - F(\rho, \sigma)^2}. \]
\end{defin}

The fidelity distance is essentially equivalent to the trace distance, as shown by the Fuchs-van de Graaf inequalities \cite{fuchs-vdg}:
\begin{lem}[Fuchs-van de Graaf inequalities]\label{lem:fuchs-vdg}
Let $\rho \in \DD(\sfA)$ and $\sigma \in \DD(\sfA)$ be density operators on $A$. Then,
\begin{equation*}
1 - \demi \| \rho - \sigma \|_1 \leqslant F(\rho, \sigma) \leqslant \sqrt{1 - \frac{1}{4}\| \rho - \sigma \|_1^2}.
\end{equation*}
This implies that
\[ \demi \| \rho - \sigma \|_1 \leqslant d_F(\rho, \sigma) \leqslant \sqrt{\| \rho - \sigma \|_1}. \]
\end{lem}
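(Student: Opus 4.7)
The plan is to prove the two inequalities on $F$ separately; the corresponding bounds on $d_F = \sqrt{1 - F^2}$ then drop out by elementary algebra (squaring the $F$ bounds and using $0 \leqslant \|\rho-\sigma\|_1 \leqslant 2$, which holds since $\|\rho-\sigma\|_1 \leqslant \|\rho\|_1 + \|\sigma\|_1 = 2$). The upper and lower bounds on $F$ call for quite different techniques, so I would handle them one at a time.

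For the upper bound $F(\rho,\sigma) \leqslant \sqrt{1 - \tfrac{1}{4}\|\rho-\sigma\|_1^2}$, I would invoke Uhlmann's theorem (forthcoming as Theorem \ref{thm:uhlmann}) to choose purifications $\ket{\psi}^{AB}$ and $\ket{\varphi}^{AB}$ of $\rho$ and $\sigma$ on a common reference system with $|\braket{\psi}{\varphi}| = F(\rho,\sigma)$. On the two-dimensional subspace $\Span\{\ket{\psi},\ket{\varphi}\}$, the operator $\ketbra{\psi} - \ketbra{\varphi}$ is a rank-two traceless Hermitian matrix, and a direct diagonalisation on that subspace shows that its nonzero eigenvalues are $\pm\sqrt{1 - |\braket{\psi}{\varphi}|^2}$; hence $\|\ketbra{\psi} - \ketbra{\varphi}\|_1 = 2\sqrt{1 - F(\rho,\sigma)^2}$. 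Since the partial trace is a CPTP map and the trace norm is monotone under CPTP maps, $\|\rho-\sigma\|_1 \leqslant \|\ketbra{\psi} - \ketbra{\varphi}\|_1 = 2\sqrt{1 - F(\rho,\sigma)^2}$, which rearranges into the stated upper bound.

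For the lower bound $F(\rho,\sigma) \geqslant 1 - \demi\|\rho-\sigma\|_1$, I would reduce to the classical setting. For probability vectors $p$ and $q$ one has the elementary chain
\[ F(p,q) = \sum_x \sqrt{p_x q_x} \geqslant \sum_x \min(p_x, q_x) = 1 - \demi\|p - q\|_1, \]
using $\sqrt{ab} \geqslant \min(a,b)$ termwise and the standard $\ell^1$/min identity for probability vectors. The quantum bound then follows once we exhibit a POVM $\{M_x\}$ whose induced distributions $p_x = \tr(M_x \rho)$ and $q_x = \tr(M_x \sigma)$ satisfy $F(p,q) = F(\rho,\sigma)$: since any measurement is CPTP, $\|p - q\|_1 \leqslant \|\rho-\sigma\|_1$, and chaining yields $F(\rho,\sigma) = F(p,q) \geqslant 1 - \demi\|p-q\|_1 \geqslant 1 - \demi\|\rho-\sigma\|_1$.

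The main obstacle is producing such a fidelity-achieving measurement. The standard construction takes the polar decomposition $\sqrt{\rho}\sqrt{\sigma} = U|\sqrt{\rho}\sqrt{\sigma}|$ and uses the projective measurement in an eigenbasis of $|\sqrt{\rho}\sqrt{\sigma}|$ (suitably conjugated by the isometry coming from $U$), then verifies by a short Cauchy--Schwarz calculation that $\sum_x \sqrt{p_x q_x}$ collapses exactly to $\tr|\sqrt{\rho}\sqrt{\sigma}| = F(\rho,\sigma)$. Once this technical lemma is in place, the rest is a short sequence of inequalities, and the corollary for $d_F$ is immediate from the algebraic observations described in the first paragraph.
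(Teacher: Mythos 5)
The paper does not actually prove this lemma; it simply states it with a citation to Fuchs and van de Graaf \cite{fuchs-vdg}, so there is no internal proof to compare yours against. Your argument is the standard one from the literature and is essentially correct. The upper bound via Uhlmann purifications, the eigenvalue computation for a difference of rank-one projectors, and monotonicity of the trace norm under partial trace all go through cleanly, as does the $d_F$ corollary via the elementary algebra you indicate (for the right-hand bound: with $t = \demi\|\rho-\sigma\|_1 \in [0,1]$, $F \geqslant 1-t$ gives $1-F^2 \leqslant t(2-t) \leqslant 2t$). The lower bound via a fidelity-achieving measurement is likewise the standard route, and the classical inequality $\sum_x \sqrt{p_x q_x} \geqslant \sum_x \min(p_x,q_x)$ is fine.

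One caution on the detail you flag as the main obstacle: your description of the optimal measurement (eigenbasis of $|\sqrt{\rho}\sqrt{\sigma}|$, conjugated by the polar unitary) is not obviously the correct one, and the Cauchy--Schwarz bound it naturally produces runs in the wrong direction relative to what you need. The usual Fuchs--Caves construction measures instead in the eigenbasis of $M := \sigma^{-1/2}\bigl(\sqrt{\sigma}\,\rho\,\sqrt{\sigma}\bigr)^{1/2}\sigma^{-1/2}$ (assuming full rank; a limiting argument handles the general case). This $M$ satisfies $M\sigma M = \rho$, so in its eigenbasis $\{\ket{x}\}$ with eigenvalues $m_x$ one has $p_x = m_x^2 q_x$, hence $\sum_x \sqrt{p_x q_x} = \sum_x m_x q_x = \tr[M\sigma] = \tr\bigl[(\sqrt{\sigma}\rho\sqrt{\sigma})^{1/2}\bigr] = F(\rho,\sigma)$, with no Cauchy--Schwarz step needed. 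I would replace your sketch with this. Finally, note that your proof relies on Theorem \ref{thm:uhlmann}, which the paper proves later; this is only a forward reference, not a circularity, since the paper's proof of Uhlmann does not invoke Fuchs--van de Graaf.
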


\subsection{The diamond norm}\label{sec:diamond-norm}
It will also be convenient on a few occasions to be able to compare two superoperators. To do this, we introduce the so-called \emph{diamond norm}:

\begin{defin}[Diamond norm]
Let $\mathcal{N}: \LL(\sfA) \rightarrow \LL(\sfB)$ be any linear operator from $\LL(\sfA)$ to $\LL(\sfB)$. Then, we define its diamond norm to be
\[ \left\| \mathcal{N} \right\|_{\diamond} := \max_{\sigma^{AA'} \in \DD(\sfA \otimes \sfA')} \left\| (\mathcal{N}^{A \rightarrow B} \otimes \ident^{A' \rightarrow A'})(\sigma^{AA'}) \right\|_1 \]
where the maximization is taken over all mixed states $\sigma^{AA'}$, and where $\sfA' \cong \sfA$.
\end{defin}

This norm is usually called the \emph{completely bounded trace norm} in operator theory and has been an object of study in that field for many years (see, for example, \cite{paulsen-operator-theory} for an introduction to the area), but it was introduced to quantum information theory by Kitaev \cite{diamond-norm} as the ``diamond norm''.

The main reason for using the diamond norm to define a notion of distance on quantum channels is essentially the same as for using the trace norm on quantum states: it characterizes the optimal probability of successfully distinguishing two channels. Just as Theorem \ref{thm:helstrom} shows that the optimal probability of distinguishing the quantum states $\rho$ and $\sigma$ is $\demi + \frac{1}{4}\|\rho - \sigma \|_1$, it is possible to show that the optimal probability of distinguishing the quantum channels $\mathcal{N}$ and $\mathcal{M}$ is given by $\frac{1}{2} + \frac{1}{4} \|\mathcal{N} - \mathcal{M} \|_{\diamond}$.

\section{Information measures}

\subsection{von Neumann entropy and derived quantities}
To be able to give solutions to information theory problems, we must have ways of measuring amounts of information. The fundamental quantity is the \emph{von Neumann entropy} of a quantum state:

\begin{defin}[von Neumann entropy \cite{vonneumann-grundlagen}]\label{def:von-neumann}
	The von Neumann entropy of a quantum state $\rho^A$ is defined as $H(A)_{\rho} := -\tr[\rho^A \log \rho^A]$ (where, if $\rho^A = \sum_i \lambda_i \ketbra{\psi_i}^A$ is a spectral decomposition of $\rho^A$, $\log \rho^A = \sum_i \log(\lambda_i) \ketbra{\psi_i}^A$, and we interpret $0 \log 0$ as 0).
\end{defin}
The von Neumann entropy measures the amount of information present in sequences of many copies of the same state, i.e.\ in $\rho^{\otimes n}$. More specifically, it has been shown by Schumacher that an i.i.d.\ state ${\rho^{A}}^{\otimes n}$ can be compressed into $n[H(A)_{\rho} - \delta]$ qubits with an error rate going to zero as $n \rightarrow \infty$ for any $\delta > 0$ \cite{schumi95}. Hence, the higher the entropy, the less certain we are about the state and the more space we need to store it.

Many other information measures are derived from the von Neumann entropy. The first one is the conditional von Neumann entropy:
\begin{defin}[Conditional von Neumann entropy]
Given a state $\rho^{AB}$, the conditional von Neumann entropy of $A$ given $B$ is defined as
\[ H(A|B)_{\rho} := H(AB)_{\rho} - H(B)_{\rho}. \]
\end{defin}
This is meant to describe the amount of uncertainty that we have about $A$ if we already possess $B$. This interpretation was problematic for a long time, however, given that it can be negative (for instance, $H(A|A')_{\Phi^{AA'}} = -1$, where $|A| = |A'| = 2$ and $\Phi^{AA'} = \demi \sum_{i,j = 1}^2 \ket{ii}\bra{jj}$). However, it turns out to give the solution to the following problem: given a state ${\rho^{AB}}^{\otimes n}$ between Alice and Bob, how many EPR pairs are required between Alice and Bob to teleport Alice's $n$ shares to Bob (with free classical communication)? This task, called \emph{state merging} \cite{state-merging}, is possible if we have $n[H(A|B)_{\rho} + \delta]$ EPR pairs, and the error goes to zero as $n \rightarrow \infty$ for every $\delta > 0$. When $H(A|B)_{\rho}$ is negative, we can teleport while \emph{generating} EPR pairs at this rate.

Another information measure derived from the von Neumann entropy is the quantum mutual information:
\begin{defin}[Quantum mutual information]\label{def:mutual-info}
Let $\rho^{AB}$ be a quantum state. Then, the mutual information between $A$ and $B$ is defined as
\begin{align*}
I(A;B)_{\rho} &:= H(A)_{\rho} + H(B)_{\rho} - H(AB)_{\rho}\\
&= H(A)_{\rho} - H(A|B)_{\rho}\\
&= H(B)_{\rho} - H(B|A)_{\rho}.
\end{align*}
\end{defin}
Without going into details, this quantity gives the entanglement-assisted classical capacity of memoryless quantum channels (channels of the form $\mathcal{N}^{\otimes n}$) \cite{BSST02}. It is always nonnegative.

We can also define a conditional version of quantum mutual information:
\begin{defin}[Conditional quantum mutual information]\label{def:cond-mutual-info}
Let $\rho^{ABC}$ be a quantum state. Then, the mutual information between $A$ and $B$ given $C$ is defined as
\[ I(A;B|C)_{\rho} := I(A;BC)_{\rho} - I(A;C)_{\rho}. \]
\end{defin}
This is also never negative \cite{strong-subadditivity}, a fact that turns out to be highly nontrivial to prove. While we will only use this quantity in a technical proof in Chapter \ref{chp:side-info}, it nonetheless has a natural interpretation through the task of \emph{state redistribution} \cite{redistribution}.

The coherent information is another measure that is important for channel coding problems. Unlike the previous one, this one has no classical analogue:
\begin{defin}[Coherent information]\label{def:coherent-info}
Let $\rho^{AB}$ be a quantum state. Then, the coherent information from $A$ to $B$ is defined as
\[ I(A \rangle B)_{\rho} := -H(A|B)_{\rho}.  \]
\end{defin}
This is only positive when conditional entropy is negative, which only happens when a state is entangled. This quantity gives the best known general rate for unassisted transmission of quantum data through i.i.d.\ quantum channels.

\subsection{Properties of the von Neumann entropy}\label{sec:entropy-properties}
The family of entropic quantities defined above have a number of useful properties. In all of the statements below, let $\psi^{ABC}$ be any pure state with respect to which all entropic quantities are computed:
\begin{itemize}
\item $H(A) = H(BC)$
\item $H(AB) = H(A) + H(B|A)$
\item $H(A) = \demi I(A;B) + \demi I(A;C)$
\item $I(A \rangle B) = \demi I(A;B) - \demi I(A;C)$
\item $H(A|B) = -H(A|C)$
\end{itemize}
All of the above can be easily proven from the definitions. One can also show that, on a \emph{mixed} state $\rho^{ABC}$, the following holds:
\begin{itemize}
\item $I(A;BC) \geqslant I(A;B)$.
\end{itemize}
In other words, the mutual information is monotonic under the addition of more subsystems; by taking into consideration an additional system $C$ in addition to $B$, one cannot lose information about $A$. This comes from the strong subadditivity of the von Neumann entropy \cite{strong-subadditivity} and its proof is rather involved compared to the previously stated properties.

\subsection{One-shot information measures}
All of the above quantities were relevant for tasks involving $n$ copies of a state, or $n$ uses of a quantum channel, and where we then take the limit as $n \rightarrow \infty$. However, in this thesis, we will generally start from protocols involving a single use of an arbitrary channel on a given arbitrary state, and then derive this special case by considering a ``single use'' of the channel $\mathcal{N}^{\otimes n}$. We will therefore need information measures that are relevant for the one-shot case and that reduce to the above quantities in the case of multiple copies.

The first one is the min-entropy of a quantum state:
\begin{defin}[Quantum min-entropy]
Let $\rho^A$ be a quantum state. Then, its min-entropy is defined as
\[ H_{\min}(A)_{\rho} := -\log \min_{\lambda \in \mbR} \{ \lambda : \rho^A \leqslant \lambda \ident^A \}. \]
\end{defin}
In other words, the min-entropy is the negative logarithm of the largest eigenvalue. Classically, this definition goes back to Chor and Goldreich \cite{chor-goldreich}, and was generalized to quantum information by Renner \cite{renner-phd}.

Renner also defined a conditional version of the quantum min-entropy:
\begin{defin}[Quantum conditional min-entropy]\label{def:cond-min-entropy}
Let $\rho^{AB} \in \Pos(\sfA \otimes \sfB)$. Then, the conditional min-entropy of $A$ given $B$ is defined as
\[ H_{\min}(A|B)_{\rho} := -\log \min \left\{ \tr[\sigma^B] : \sigma^B \in \Pos(\sfB), \rho^{AB} \leqslant \ident^A \otimes \sigma^B \right\}. \]
\end{defin}

This quantity measures how much uniform and private randomness we can extract from a random variable that is correlated with a quantum state that an attacker might possess, as shown in \cite{renner-phd}. It is also the quantity that governs how many bits of key must be used to encrypt the $A$ part of a quantum state $\rho^{AB}$ against an adversary that knows $B$ \cite{sec-entropique}.

While much more is known about the min-entropy, the following slightly more unwieldy quantity is used in many proofs:

\begin{defin}[Quantum conditional 2-entropy]
Let $\rho^{AB} \in \Pos(\sfA \otimes \sfB)$. Then, the conditional 2-entropy of $A$ given $B$ is defined as
\[ H_2(A|B)_{\rho} := -\log \inf_{\sigma^B \in \DD(\sfB), \sigma^B > 0} \tr\left[ \left( (\sigma^B \otimes \ident^A)^{-1/2} \rho^{AB} \right)^2 \right]. \]
\end{defin}

Note that the conditional 2-entropy is always lower-bounded by the conditional min-entropy:
\begin{lem}
Let $\rho^{AB} \in \Pos(\sfA \otimes \sfB)$; then $H_{\min}(A|B)_{\rho} \leqslant H_2(A|B)_{\rho}$.
\end{lem}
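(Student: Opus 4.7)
The plan is to show, equivalently, that
\[
\inf_{\sigma^B \in \DD(\sfB), \sigma^B > 0} \tr\!\left[\left((\sigma^B \otimes \ident^A)^{-1/2} \rho^{AB}\right)^{\!2}\right] \;\leqslant\; \min\left\{\tr[\sigma^B] : \sigma^B \in \Pos(\sfB),\ \rho^{AB} \leqslant \ident^A \otimes \sigma^B\right\},
\]
since the lemma follows by applying $-\log$. The strategy is to take an optimizer $\sigma_*^B$ for the min-entropy and use its normalization as a feasible test state in the infimum defining $H_2(A|B)_\rho$.

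Concretely, I would first pick $\sigma_*^B \in \Pos(\sfB)$ achieving the minimum in the definition of $H_{\min}(A|B)_\rho$, so that $\rho^{AB} \leqslant \ident^A \otimes \sigma_*^B$ and $\tr[\sigma_*^B] = 2^{-H_{\min}(A|B)_\rho}$. Then set $\hat{\sigma}^B := \sigma_*^B / \tr[\sigma_*^B] \in \DD(\sfB)$ and plug $\hat{\sigma}^B$ into the expression inside the infimum. Writing $M := (\hat{\sigma}^B \otimes \ident^A)^{-1/2}$, which is positive, the key computation is to sandwich the operator inequality $\rho^{AB} \leqslant \ident^A \otimes \sigma_*^B$ by $M$ on both sides, giving
\[
M \rho^{AB} M \;\leqslant\; M (\ident^A \otimes \sigma_*^B) M \;=\; \tr[\sigma_*^B]\,\ident^{AB}.
\]
Then I rewrite $\tr\bigl[(M\rho^{AB})^2\bigr] = \tr\bigl[(M\rho^{AB}M)\,\rho^{AB}\bigr]$ and use the elementary fact that if $0 \leqslant X \leqslant c\,\ident$ and $Y \geqslant 0$ then $\tr[XY] \leqslant c\,\tr[Y]$, together with $\tr[\rho^{AB}] = 1$, to conclude
\[
\tr\!\left[\left((\hat{\sigma}^B \otimes \ident^A)^{-1/2} \rho^{AB}\right)^{\!2}\right] \;\leqslant\; \tr[\sigma_*^B] \;=\; 2^{-H_{\min}(A|B)_\rho}.
\]
Since the infimum in the $H_2$ definition can only make this quantity smaller (and $-\log$ reverses the inequality), this yields $H_{\min}(A|B)_\rho \leqslant H_2(A|B)_\rho$.

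The only subtlety is that $\hat{\sigma}^B$ must be strictly positive to serve as a legal test state, whereas the min-entropy optimizer $\sigma_*^B$ need not be. I would handle this by a standard perturbation argument: replace $\sigma_*^B$ by $\sigma_*^B + \varepsilon\,\ident^B$, which still satisfies $\rho^{AB} \leqslant \ident^A \otimes (\sigma_*^B + \varepsilon\,\ident^B)$ and has trace arbitrarily close to $\tr[\sigma_*^B]$, then let $\varepsilon \to 0$. I do not expect any real obstacle; the proof is essentially the sandwiching inequality above, and the main point is simply to recognize that the canonical operator inequality defining $H_{\min}$ becomes, after a change of variables by $(\hat{\sigma}^B \otimes \ident^A)^{-1/2}$, precisely the right bound on the quadratic expression defining $H_2$.
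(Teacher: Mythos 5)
Your proof is correct and takes a genuinely different, more elementary route than the paper's. The paper also starts by normalizing the $H_{\min}$-optimizer and inserting it into the $H_2$ infimum, but then invokes the operator anti-monotonicity of $t \mapsto t^{-1/2}$ (citing Bhatia) to pass from $\rho^{AB} \leqslant \lambda\,\ident^A \otimes \hat\sigma^B$ to $(\ident^A \otimes \hat\sigma^B + \varepsilon P)^{-1/2} \leqslant \lambda^{1/2}(\rho^{AB} + \varepsilon P)^{-1/2}$, with $P$ the kernel projector of $\rho^{AB}$ inserted to regularize the inverse; this requires a somewhat delicate choice of regularizer and relies on a nontrivial matrix-analytic fact. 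You instead conjugate the defining operator inequality directly by $M := (\hat\sigma^B \otimes \ident^A)^{-1/2}$, which is the trivial implication $X \leqslant Y \Rightarrow MXM \leqslant MYM$, reducing everything to $M\rho^{AB}M \leqslant \lambda\,\ident$ and the elementary trace bound $\tr[(M\rho^{AB}M)\rho^{AB}] \leqslant \lambda\tr[\rho^{AB}]$. Your regularization ($\sigma_*^B + \varepsilon\ident^B$) is also simpler and more robust than the paper's kernel-projector device. Both arguments implicitly use $\tr[\rho^{AB}] = 1$, which is consistent with the paper's own proof despite the lemma statement only requiring $\rho^{AB} \in \Pos(\sfA \otimes \sfB)$; if one wants the general positive-semidefinite case the final bound picks up an extra factor of $\tr[\rho^{AB}]$ in both proofs. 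Net effect: you trade a citation to operator monotonicity for a two-line sandwich argument, which is a cleaner proof of the same fact.
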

\begin{proof}
Let $\lambda = 2^{-H_{\min}(A|B)_{\rho}}$, and let $\sigma^B$ be a normalized density operator such that $\rho^{AB} \leqslant \lambda \ident^A \otimes \sigma^B$; assume without loss of generality that $\sigma^B$ is positive definite (otherwise redefine $B$ as the support of $\rho^B$). Also, let $P^{AB} = \ident^{AB} - (\rho^{AB})^{0}$ (i.e.\ $P$ is a projector onto the kernel of $\rho^{AB}$). Then, using the fact that $X \leqslant Y \Rightarrow X^{-1/2} \geqslant Y^{-1/2}$ (which one can derive from Propositions V.I.6 and V.I.8 in \cite{bhatia}), we have that $\lambda^{1/2} (\rho^{AB} + \varepsilon P)^{-1/2} \geqslant (\ident^A \otimes \sigma^B + \varepsilon P)^{-1/2}$, and therefore
\begin{align*}
\tr\left[ \left( (\ident^A \otimes \sigma^B + \varepsilon P)^{-1/2} \rho^{AB} \right)^2 \right] &\leqslant \lambda \tr\left[ \left( (\rho^{AB} + \varepsilon P)^{-1/2} \rho^{AB} \right)^2 \right]\\
&= \lambda \tr[\rho^{AB}]\\
&= \lambda.
\end{align*}
Taking the limit as $\varepsilon \rightarrow 0$ yields the lemma.
\end{proof}

One can also define a ``max-entropy'' as done in \cite{min-max-entropy} in the following manner:
\begin{defin}[Quantum conditional max-entropy]\label{def:cond-max-entropy}
Let $\psi^{ABC}$ be a pure state. Then, the conditional max-entropy of $A$ given $B$ is defined as
\[ H_{\max}(A|B)_{\psi} := -H_{\min}(A|C)_{\psi}. \]
Since $H_{\min}(A|C)_{\psi}$ is invariant under unitaries on $C$, this does not depend on the particular choice of purification.
\end{defin}

Note that there are at present two competing definitions of the max-entropy in circulation, at least in the non-conditional case. The other one is simply the logarithm of the rank of a state. However, the author feels that Definition \ref{def:cond-max-entropy} is more compelling given the various results in \cite{min-max-entropy} and \cite{duality-min-max-entropy}, as well as the results in this thesis.

In \cite{min-max-entropy}, the authors give a nice direct interpretation of both the min- and the max-entropy: given a state $\rho^{AB}$ the conditional min-entropy $H_{\min}(A|B)_{\rho}$ quantifies how close to a maximally entangled state we can make $\rho^{AB}$ by applying an arbitrary CPTP map on $B$:
\[ 2^{-H_{\min}(A|B)_{\rho}} = |A| \max_{\mathcal{F}^{B \rightarrow A'}} F\left( (\ident^A \otimes \mathcal{F})(\rho^{AB}), \Phi^{AA'} \right)^2 \]
where $\mathcal{F}$ ranges over all CPTP maps from $\LL(\sfB)$ to $\LL(\sfA')$, and $A'$ is a quantum system of the same dimension as $A$.  Likewise, the max-entropy $H_{\max}(A|B)_{\rho}$ characterizes how close the state is to being decoupled and uniform on $A$:
\[ 2^{H_{\max}(A|B)_{\rho}} = |A| \max_{\sigma^B \in \DD(\sfB)} F\left( \rho^{AB}, \pi^A \otimes \sigma^B \right). \]
It can also be shown that $H_{\min}(A|B)_{\rho} \leqslant H(A|B) \leqslant H_{\max}(A|B)_{\rho}$ (\cite{tcr08}, Lemma 2).

One problem with all of the above quantities is that they are very sensitive to small variations in the state on which they are defined, whereas most of the quantities that we are bounding with them are not. Hence, if we use these quantities directly, we can end up with very poor bounds in certain cases. For this reason, we define ``smooth'' versions of these entropies. Instead of computing the entropic quantities directly on the state we are given, we optimize them over an $\varepsilon$-ball around the state; this idea was introduced by Renner and Wolf in \cite{smoothing}. For any $\rho^{AB} \in \DD(\sfA \otimes \sfB)$, define
\[ \mfB(\rho, \varepsilon) := \{ \tilde{\rho}^{AB} : \tr[\tilde{\rho}] \leqslant 1, d_F(\rho, \tilde{\rho}) \leqslant \varepsilon \}. \]
We then define the following quantities:

\begin{defin}[Smooth conditional min-entropy]\label{def:smooth-cond-min-entropy}
Let $\rho^{AB}$ be a quantum state. Then, the $\varepsilon$-smooth conditional min-entropy of $A$ given $B$ is defined as
\[ H^{\varepsilon}_{\min}(A|B)_{\rho} := \max_{\sigma^{AB} \in \mfB(\rho,\varepsilon)} H_{\min}(A|B)_{\sigma}. \]
\end{defin}

\begin{defin}[Smooth conditional 2-entropy]\label{def:smooth-cond-2-entropy}
Let $\rho^{AB}$ be a quantum state. Then, the $\varepsilon$-smooth conditional 2-entropy of $A$ given $B$ is defined as
\[ H^{\varepsilon}_{2}(A|B)_{\rho} := \max_{\sigma^{AB} \in \mfB(\rho, \varepsilon)} H_{2}(A|B)_{\sigma}. \]
\end{defin}

\begin{defin}[Smooth conditional max-entropy]\label{def:smooth-cond-max-entropy}
Let $\rho^{AB}$ be a quantum state. Then, the $\varepsilon$-smooth conditional max-entropy of $A$ given $B$ is defined as
\[ H^{\varepsilon}_{\max}(A|B)_{\rho} := \min_{\sigma^{AB} \in \mfB(\rho, \varepsilon)} H_{\max}(A|B)_{\sigma}. \]
\end{defin}

As mentioned before, these quantities reduce to von Neumann quantities in the i.i.d.\ case. This is formalized in the following theorem, called the \emph{fully quantum asymptotic equipartition property} \cite{tcr08} by Tomamichel, Colbeck and Renner:

\begin{thm}[Fully Quantum Asymptotic Equipartition Property]\label{thm:fully-quantum-aep}
	Let $\rho^{AB}$ be a density operator, $\varepsilon > 0$, $\eta \leqslant 2^{-\demi H_{\min}(A|B)_{\rho}} + 2^{\demi H_{\max}(A|B)_{\rho}} + 1 \leqslant 2\sqrt{|A|} + 1$, and $n \in \mbN$. Then, if $n \geqslant \frac{8}{5} \log \frac{2}{\varepsilon^2}$,
	\[ \frac{1}{n} H^{\varepsilon}_{\min}(A^n | B^n)_{\rho^{\otimes n}} \geqslant H(A|B)_{\rho} - 4 \log \eta \sqrt{\frac{\log(2/\varepsilon^2)}{n}}. \]
\end{thm}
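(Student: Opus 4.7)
The plan is to follow the standard route that goes through a one-parameter family of conditional Rényi entropies $H_\alpha(A|B)_\rho$ which interpolates between $H(A|B)_\rho$ at $\alpha = 1$ and $H_{\min}(A|B)_\rho$ at $\alpha = \infty$. I would use the family defined by $H_\alpha(A|B)_\rho := \frac{1}{1-\alpha}\log \inf_{\sigma^B \in \DD(\sfB)}\tr\bigl[(\sigma^B \otimes \ident^A)^{(1-\alpha)/2}\rho^{AB}(\sigma^B \otimes \ident^A)^{(1-\alpha)/2}\bigr]^{\alpha}$ (or an equivalent sandwiched form). Three properties are crucial and would be established or invoked first: \textbf{(a) additivity} on tensor products, $H_\alpha(A^n|B^n)_{\rho^{\otimes n}} = n H_\alpha(A|B)_\rho$, which follows from the multiplicativity of the underlying trace expression; \textbf{(b) smoothing}, a bound of the form $H^\varepsilon_{\min}(A|B)_\sigma \geqslant H_\alpha(A|B)_\sigma - \frac{1}{\alpha-1}\log\frac{2}{\varepsilon^2}$ for $\alpha \in (1,2]$, obtained by a Markov-type argument that moves weight onto the ``typical'' subspace within the fidelity ball $\mfB(\sigma,\varepsilon)$; and \textbf{(c) a Taylor expansion} around $\alpha = 1$, $H_\alpha(A|B)_\rho \geqslant H(A|B)_\rho - (\alpha-1)\,V(\rho)$ for $\alpha$ close to $1$, where the coefficient $V(\rho)$ can be controlled by the variance of $-\log\rho^{AB} + \log\ident^A\otimes\sigma^B$ evaluated at the optimal $\sigma^B$.

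Granted these three ingredients, the chain of inequalities is straightforward. Applying (b) and then (a) to the i.i.d.\ state $\rho^{\otimes n}$,
\begin{equation*}
H^\varepsilon_{\min}(A^n|B^n)_{\rho^{\otimes n}} \;\geqslant\; n\,H_\alpha(A|B)_\rho - \frac{1}{\alpha-1}\log\frac{2}{\varepsilon^2},
\end{equation*}
and then (c) gives
\begin{equation*}
H^\varepsilon_{\min}(A^n|B^n)_{\rho^{\otimes n}} \;\geqslant\; n\,H(A|B)_\rho - n(\alpha-1)\,V(\rho) - \frac{1}{\alpha-1}\log\frac{2}{\varepsilon^2}.
\end{equation*}
Optimizing the right-hand side over $\alpha > 1$, by setting $(\alpha-1)^2 = \frac{1}{n V(\rho)}\log\frac{2}{\varepsilon^2}$, produces a correction of order $\sqrt{V(\rho)\log(2/\varepsilon^2)/n}$, and dividing by $n$ gives the advertised rate. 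The lower bound on $n$ ensures that the chosen $\alpha$ lies in the regime where the Taylor remainder is controlled and the smoothing bound is valid.

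The only non-routine step is \textbf{(c)}, which is the heart of the argument and also the source of the constant $\eta$. One computes $\partial_\alpha H_\alpha\big|_{\alpha=1}$ (which gives $-H(A|B)_\rho$ up to sign and normalization) and bounds $\partial_\alpha^2 H_\alpha$ uniformly on a small interval $[1,\alpha_0]$. The second derivative is essentially a conditional variance of the log-likelihood operator $L = -\log\rho^{AB} + \log(\ident^A\otimes\rho^B)$, and bounding it by Jensen/operator inequalities produces a quantity of the form $\bigl(\log\lVert L\rVert_\infty\bigr)^2$. The operator $L$ has largest eigenvalues controlled by $2^{-H_{\min}(A|B)_\rho}$ and $2^{H_{\max}(A|B)_\rho}$ (the former controls where $\rho^{AB}$ is small relative to $\ident^A\otimes\rho^B$, the latter where it is large), so the bound naturally takes the form $V(\rho) \leqslant c'\bigl(\log\eta\bigr)^2$ with $\eta$ exactly as in the statement. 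The trivial bound $\eta \leqslant 2\sqrt{|A|}+1$ then follows from $H_{\min}(A|B),\,-H_{\max}(A|B) \geqslant -\log|A|$.

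The main obstacle is precisely this uniform second-derivative bound: one must simultaneously control both tails of the spectrum of $\rho^{AB}(\ident^A\otimes\rho^B)^{-1}$, which is why both $H_{\min}$ and $H_{\max}$ appear in $\eta$. Doing so while keeping the argument within the radius where the expansion is valid (so that higher-order terms can be absorbed into the linear correction) is the delicate part; the bound $n \geqslant \frac{8}{5}\log(2/\varepsilon^2)$ is exactly what is needed to guarantee that the optimal $\alpha-1$ lands in that safe region.
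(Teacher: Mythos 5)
The thesis does not prove Theorem~\ref{thm:fully-quantum-aep}; it is quoted as an external result with the citation to Tomamichel, Colbeck and Renner \cite{tcr08}, so there is no in-paper proof to compare your attempt against. Judged against the known proof in that reference, your sketch captures the correct skeleton: interpolate the von Neumann and min-entropy via a one-parameter conditional R\'enyi entropy $H_{1+s}$, use its additivity on product states, prove a smoothing lemma of the form $H^{\varepsilon}_{\min} \geqslant H_{1+s} - O\bigl(s^{-1}\log(2/\varepsilon^2)\bigr)$, show $H_{1+s}(A|B) \geqslant H(A|B) - O\bigl(s(\log\eta)^2\bigr)$ for small $s$, and then balance the two error terms by choosing $s \sim \sqrt{\log(2/\varepsilon^2)/n}$. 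The step you identify as delicate --- a uniform bound on the $s$-dependence controlled simultaneously by both tails of the spectrum of $(\ident^A\otimes\rho^B)^{-1/2}\rho^{AB}(\ident^A\otimes\rho^B)^{-1/2}$ --- is indeed where the parameter $\eta$ originates and where the bulk of TCR's technical work lies, and the trivial bound $\eta \leqslant 2\sqrt{|A|}+1$ follows exactly as you say from $H_{\min}(A|B), -H_{\max}(A|B) \geqslant -\log|A|$.

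Two small caveats. First, the displayed formula for $H_\alpha$ places the power $\alpha$ outside the trace, which is not any of the standard conditional R\'enyi families; you presumably meant $\tr\bigl[(\cdot)^\alpha\bigr]$, and in any case the exact family used by TCR predates the sandwiched R\'enyi divergence and differs in detail, though this does not affect the architecture of the argument since any reasonable family satisfying your properties (a)--(c) will do. Second, describing step (c) as a Taylor expansion with a controlled remainder is a fair heuristic, but the actual bound in TCR is a global inequality valid on an interval of $s$, not a local expansion; phrasing it as a remainder estimate understates how much of the work lives there, and it is precisely this inequality (together with the smoothing lemma's validity range) that forces the condition $n \geqslant \frac{8}{5}\log(2/\varepsilon^2)$ rather than it arising from generic remainder control.
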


\section{Quantum channel capacities}
There are many variants of quantum channel capacities that can be defined, reflecting the large number of possible data transmission scenarios in which quantum channels can be useful. The two main ones are the classical capacity (the best rate at which we can send classical data through a quantum channel) and the quantum capacity (the best rate at which we can send arbitrary qubits through the channel). We can also define entanglement-assisted capacities of these two problems, in which the sender and the receiver share an arbitrary number of EPR pairs that they can use for free to help them transmit either classical or quantum data through the quantum channel, as the case may be. We will not be concerned with the classical capacity of quantum channels in this thesis, however, and the entanglement-assisted classical capacity is simply twice the entanglement-assisted quantum capacity. We shall therefore only talk about the unassisted and entanglement-assisted quantum capacities.

Furthermore, we can either consider what we can do with a single use of an arbitrary channel (which is the most general version of the problem), or we can restrict ourselves to i.i.d.\ channels (i.e.\ $n$ copies of a relatively small channel $\mathcal{N}$). The i.i.d.\ case, in addition to being a practically relevant special case, is also typically much easier to solve. We will consider both problems in this thesis: for all of the problems that we will consider, we will first prove a theorem for a single use of an arbitrary channel, and we will then apply it to an i.i.d.\ channel. The goal of this section is to define these problems and say a few words about them.

\subsection{One-shot capacities}
We first consider the simpler case of one-shot capacity. (Simpler to define, not to solve!) By the term ``one-shot'', we mean that we will consider protocols involving a single use of a channel, as opposed to using the same channel $n$ times. Suppose Alice would like to send an arbitrary quantum system $M$ to Bob using the channel $\mathcal{N}^{A' \rightarrow C}$ a single time. Alice will therefore encode her message $M$ into the channel input $A'$ using some encoding CPTP map $\mathcal{E}^{M \rightarrow A'}$. Upon receiving the channel output $C$, Bob will attempt to recover $M$ using a decoding CPTP map $\mathcal{D}^{C \rightarrow M}$. One would like to make sure that, regardless of the actual state of the message system $M$, Bob gets that same state at the output. When we consider every possible state of $M$, we must also include cases in which the contents of $M$ are entangled with another system. While $M$ can be entangled with an arbitrarily large system, it is mathematically equivalent to consider only entanglement with another system $R$ of dimension $|R| = |M|$. 

Of course, since we are only using the channel once, one cannot hope in general to have no error whatsoever. We must therefore decide on an error level that we are willing to tolerate, and then look at how big a message we can transmit given this constraint.

Taking all this into consideration, our goal is to find an encoder-decoder pair that satisfies
\begin{equation*}
\left\| (\mathcal{D} \circ \mathcal{N} \circ \mathcal{E})(\psi^{RM}) - \psi^{RM} \right\|_1 \leqslant \varepsilon
\end{equation*}
for every pure state $\psi^{RM}$. An alternative way of writing this is via the \emph{diamond norm} on superoperators \cite{diamond-norm}:
\begin{equation}\label{eqn:def-quantum-transmission}
\left\| \mathcal{D} \circ \mathcal{N} \circ \mathcal{E} - \ident^M \right\|_{\diamond} \leqslant \varepsilon.
\end{equation}
In other words, the composition of the encoder, channel, and decoder, must be nearly indistinguishable from the identity channel.

This quantity, however, is rather difficult to bound directly because of the optimization over the input state. Fortunately, there exists an essentially equivalent criterion which is much easier to establish for a given protocol. Instead of considering the worst case input, one can consider only a fixed maximally entangled state $\Phi^{RM}$ between $R$ and $M$:
\begin{equation}\label{eqn:def-entanglement-generation}
\left\| (\mathcal{D} \circ \mathcal{N} \circ \mathcal{E})(\Phi^{RM}) - \Phi^{RM} \right\|_{1} \leqslant \varepsilon.
\end{equation}
Requiring that an encoder and decoder fulfill this condition is weaker, but it turns out that by slightly reducing the dimension of the input system of the channel, one can turn an encoder-decoder pair that fulfills (\ref{eqn:def-entanglement-generation}) into one that fulfills (\ref{eqn:def-quantum-transmission}) \cite{tema-con-variazioni}.

Alice and Bob might also have EPR pairs at their disposal to help them increase the transmission rate; we call this the \emph{entanglement-assisted} capacity. The setting is the same as above, except that Alice and Bob start out with an additional state $\Phi^{\wtA B}$ that they are allowed to consume at will to help them in their task. We now have an encoder $\mathcal{E}^{M \wtA \rightarrow A'}$ and a decoder $\mathcal{D}^{CB \rightarrow M}$, and we want to ensure that
\begin{equation*}
\left\| (\mathcal{D} \circ \mathcal{N} \circ \mathcal{E})(\Phi^{RM} \otimes \Phi^{\wtA B}) - \Phi^{RM} \right\|_{1} \leqslant \varepsilon.
\end{equation*}

Classical one-shot capacities were first considered by Han and Verdú \cite{han-verdu}, who pioneered the so-called information-spectrum approach to the capacity of general non-i.i.d.\ channels. Using an approach that is much closer to the one used in this thesis, Renner, Wolf and Wullschleger \cite{single-serving} used classical versions of min- and max-entropies to derive bounds for the one-shot capacity of classical channels. On the quantum side, Buscemi and Datta \cite{buscemi-datta} consider the one-shot capacity using different tools from the ones used here.

\subsection{Capacities of memoryless channels}
We now wish to consider coding for channels of the form $\mathcal{N}^{\otimes n}$, where $n$ grows arbitrarily. In this case, we will want to find the best rate (number of qubits sent divided by number of channel uses) at which we can send quantum data such that the error rate goes to zero as $n \rightarrow \infty$. We call such channels \emph{memoryless channels}, since the channel behaves exactly the same way from one use to the next without ``remembering'' previous inputs; we also sometimes call this the ``i.i.d.\ case''. The definitions in this case are slightly more involved due to the fact that we need a series of encoders and decoders that grows with $n$. We begin by defining the unassisted quantum capacity:

\begin{defin}[Quantum code]
	An $(n,R)$-code for a quantum channel $\mathcal{N}^{A' \rightarrow C}$ is an encoding superoperator $\mathcal{E}^{M \rightarrow {A'}^{n}}$ and a decoding superoperator $\mathcal{D}^{C^{n} \rightarrow M}$, where $M$ is a $2^{nR}$-dimensional quantum system.
\end{defin}

\begin{defin}[Achievable rate]
	A rate $R$ is said to be achievable for a channel $\mathcal{N}^{A' \rightarrow C}$ if there exists a sequence of $(n, R)$-codes $(\mathcal{E}_n, \mathcal{D}_n)$ such that
	\begin{equation}
		\lim_{n \rightarrow \infty} \left\| \left( \mathcal{D}_n \circ \mathcal{N}^{\otimes n} \circ \mathcal{E}_n \right) - \ident^M \right\|_{\diamond} = 0.
	\end{equation}
\end{defin}

\begin{defin}[Quantum capacity]
	The quantum capacity $Q(\mathcal{N})$ of a quantum channel $\mathcal{N}$ is the supremum of all achievable rates for this channel.
\end{defin}

Despite considerable efforts, we do not yet have a satisfactory characterization of the quantum capacity. We do have a general lower bound for the capacity \cite{lsd1,lsd2,lsd3} which is given by the coherent information (see Theorem \ref{thm:direct-vanilla-channels}). This bound is known not to be tight, however \cite{superadd-qcap}, and a very strange phenomenon appears: this capacity is not additive. More specifically, there exist pairs of quantum channels $\mathcal{N}$ and $\mathcal{M}$ such that the capacity of both $\mathcal{N}$ and $\mathcal{M}$ is zero, while the capacity of $\mathcal{N} \otimes \mathcal{M}$ is strictly positive \cite{smith-yard}.

We now turn to the definitions relevant for entanglement-assisted capacity:
\begin{defin}[Quantum entanglement-assisted code]
	An $(n,R,E)$-code for a quantum channel $\mathcal{N}^{A' \rightarrow C}$ is an encoding superoperator $\mathcal{E}^{M \wtA \rightarrow {A'}^{n}}$, and an associated decoding superoperator $\mathcal{D}^{C^{n} \wtB \rightarrow M}$, such that $|M| = 2^{nR}$ and $|\wtA| = |\wtB| = 2^{nE}$.
\end{defin}

\begin{defin}[Achievable rate]
	A rate $R$ is said to be achievable for a channel $\mathcal{N}^{A' \rightarrow C}$ if there exists a sequence of $(n, R, E)$-codes $(\mathcal{C}_n, \mathcal{D}_n)$ (for arbitrary finite $E$) such that 
	\begin{equation}
		\lim_{n \rightarrow \infty} \left\| \mathcal{M}_n - \ident^M \right\|_{\diamond}  = 0.
	\end{equation}
	where $\mathcal{M}_n$ is the superoperator $\mathcal{M}_n(\rho) = (\mathcal{D}_n \circ \mathcal{N}^{\otimes n} \circ \mathcal{E}_n)(\Phi^{\wtA \wtB} \otimes \rho)$.
\end{defin}

\begin{defin}[Entanglement-assisted quantum capacity]
	The entanglement-assisted quantum capacity $Q_E(\mathcal{N})$ of a quantum channel $\mathcal{N}$ is the supremum of all achievable rates for this channel.
\end{defin}

\subsection{Regularization and single-letter converses}
When we set out to characterize the capacity of a type of memoryless channel, we ultimately want to get an expression that can be efficiently computed from the description of the channel. Unfortunately, in quantum information theory, we seldom achieve this ideal. What usually happens is that we are able to give an easily computable \emph{achievable rate region}, meaning a set of transmission rates that we know can be achieved, and we can often give an uncomputable expression for the true capacity. In some cases, such as for the unassisted transmission of quantum data through quantum channels, we know that there is a gap between the two expressions \cite{superadd-qcap}. The same is true for the transmission of classical data through quantum channels \cite{hastings-nonadditivity}. In other cases, such as the entanglement-assisted transmission through quantum multiple-access channels \cite{qmac}, we do not know whether this is the case. Only in a few rare instances can we show that the two coincide; the main example is the entanglement-assisted quantum (and classical) capacities of quantum channels.

The achievable rate region and the uncomputable expression for the capacity usually take particular forms. For the sake of concreteness, we will consider the case of the transmission of quantum data through quantum channels, but the situation tends to be very similar in other settings. The best known achievable rate for this task is expressed in the following theorem (\cite{lsd1,lsd2,lsd3}, see also Theorem \ref{thm:direct-vanilla-channels} for a proof):

\begin{thm}\label{thm:lsd}
Let $\mathcal{N}^{A' \rightarrow C}$ be a quantum channel, let $\sigma^{AA'}$ be any pure state with $\sfA' \cong \sfA$, and let $\rho^{AC} = \mathcal{N}^{A' \rightarrow C}(\sigma)$. Then, any rate $R < I(A \rangle C)_{\rho}$ is achievable for the transmission of quantum data through $\mathcal{N}$.
\end{thm}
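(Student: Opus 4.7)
The plan is to derive this theorem as a corollary of the one-shot coding theorem proven earlier in Chapter \ref{chp:decoupling}, combined with the asymptotic equipartition property. The one-shot theorem (which is itself a direct consequence of the main decoupling theorem) guarantees, for a single use of an arbitrary channel $\mathcal{M}^{A' \to C}$ with code state $\sigma^{AA'}$, an encoder--decoder pair transmitting quantum data of dimension roughly $2^{H_2^{\varepsilon}(A|E)_{\rho}}$ with diamond-norm error controlled by $\varepsilon$, where $\ket{\rho}^{ACE}$ is the pure state obtained from a Stinespring dilation $U_{\mathcal{M}}^{A' \to CE}$ applied to $\ket{\sigma}^{AA'}$.

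First, I would instantiate that one-shot theorem on the ``single-use'' channel $\mathcal{N}^{\otimes n}$ with input state $\sigma^{\otimes n}$. The relevant dilation is $U_{\mathcal{N}}^{\otimes n}$, the resulting pure state is $\ket{\rho}^{\otimes n}$ on $(ACE)^n$, and the encodable message dimension is therefore controlled by $H_2^{\varepsilon}(A^n | E^n)_{\rho^{\otimes n}}$ (modulo additive logarithmic terms in $1/\varepsilon$ coming from the one-shot statement).

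Second, I would extract the asymptotic rate. Since $H_{\min}(A^n|E^n)_{\rho^{\otimes n}} \leqslant H_2(A^n|E^n)_{\rho^{\otimes n}}$ (by the lemma proven in the preliminaries) and since smoothing preserves this inequality, the fully quantum asymptotic equipartition property (Theorem \ref{thm:fully-quantum-aep}) gives
\[
\liminf_{n \to \infty} \frac{1}{n} H_2^{\varepsilon_n}(A^n | E^n)_{\rho^{\otimes n}} \geqslant H(A|E)_{\rho},
\]
provided $\varepsilon_n \to 0$ slowly enough (e.g., $\varepsilon_n = 1/\sqrt{n}$) that the AEP correction of order $\sqrt{\log(1/\varepsilon_n^2)/n}$ vanishes while still driving the error to zero. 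Finally, because $\ket{\rho}^{ACE}$ is pure, the tripartite identity $H(A|E) = -H(A|C)$ listed in Section \ref{sec:entropy-properties} combined with Definition \ref{def:coherent-info} yields $H(A|E)_{\rho} = I(A \rangle C)_{\rho}$. Hence, for any $R < I(A \rangle C)_{\rho}$, taking $n$ sufficiently large produces an $(n,R)$-code whose diamond-norm error is below any prescribed threshold, establishing achievability.

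The heavy lifting in this argument happens upstream: the real difficulty is already absorbed into the decoupling-based one-shot coding theorem of Chapter \ref{chp:decoupling} and into the proof of the AEP cited here. What remains is essentially bookkeeping, and the only delicate point is the simultaneous scaling of the smoothing parameter $\varepsilon_n$ with $n$ so that (i) the per-use rate loss from the $\log(1/\varepsilon_n)$ terms in the one-shot theorem is $o(1)$, (ii) the AEP correction is $o(1)$, and (iii) the overall transmission error still goes to zero. Choosing a polynomially decaying $\varepsilon_n$ handles all three requirements uniformly.
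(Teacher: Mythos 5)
Your proposal follows essentially the route the paper itself uses: Theorem~\ref{thm:lsd} is proved in the paper by way of Theorem~\ref{thm:direct-vanilla-channels}, which applies the one-shot coding theorem (Theorem~\ref{thm:vanilla-channels-oneshot}) to the block channel $\mathcal{N}^{\otimes n}$, controls the smooth conditional $2$-entropies with the fully quantum AEP (Theorem~\ref{thm:fully-quantum-aep}), and uses the purity identity $H(A|E)_{\omega} = -H(A|C)_{\omega} = I(A\rangle C)_{\mathcal{N}(\sigma)}$; your specialization to the unassisted case $E=0$ is exactly that.

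One genuine omission: the one-shot theorem does not yield a single error bound governed by $H_2^{\varepsilon}(A''|E)_{\omega}$ alone. It produces two error terms, $\delta_1$ and $\delta_2$, and $\delta_1$ imposes the additional \emph{encoding} constraint $H_{\max}^{\varepsilon}(A)_{\psi} < H_2^{\varepsilon}(A'')_{\omega}$, i.e.\ that Alice's message can actually be embedded so as to resemble the target input distribution $\sigma^{\otimes n}$. After the AEP this becomes $R < H(A)_{\sigma}$, a condition your sketch never checks. It does hold automatically---by Araki--Lieb, $H(AC)_{\rho} \geqslant H(C)_{\rho} - H(A)_{\rho}$, so $I(A\rangle C)_{\rho} = H(C)_{\rho} - H(AC)_{\rho} \leqslant H(A)_{\rho} = H(A)_{\sigma}$, and hence $R < I(A\rangle C)_{\rho}$ forces $R < H(A)_{\sigma}$---but a complete write-up must state this, since the one-shot theorem genuinely fails if $\delta_1$ is not made small. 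Otherwise the scaling $\varepsilon_n \to 0$ polynomially and the conversion from trace-distance error on $\Phi^{RM}$ to the diamond-norm achievability criterion (via~\cite{tema-con-variazioni}) are handled as in the paper.
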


The main feature of this theorem is that it states the existence of protocols that send quantum data using the channel $n$ times, but whose rates can be computed by looking at a single instance of $\mathcal{N}$. Indeed, the state $\rho$ on which we compute $I(A \rangle C)_{\rho}$ is a state produced by a single application of $\mathcal{N}$. Furthermore, the proof of the theorem shows that these protocols can be constructed by choosing codes that ``look'' like the state $({\sigma^{A'}})^{\otimes n}$ at the channel input. It therefore gives us some information about the structure of codes that achieve the rates advertised in the theorem statement. 

The main question at this point is whether this theorem is optimal or not: is it possible to create codes that go beyond the highest rate this theorem can give? Since the above theorem holds for any channel, it is certainly possible to look at the rates we obtain for channels of the form $\mathcal{N}^{\otimes k}$ (i.e.\ if we regard $k$ uses of $\mathcal{N}$ as a single channel). If the above theorem were optimal, then doing this should never give us a higher rate than simply looking at $\mathcal{N}$ alone. In \cite{superadd-qcap}, the authors show that it is in fact possible to get a higher rate this way, thereby showing Theorem \ref{thm:lsd} to be suboptimal.

This raises a further question: can we in fact get the optimal rate only by using the above theorem on some large number of copies of the same channel, or do we need to do something altogether different? The answer is that taking many copies is sufficient, and is expressed in the following theorem:

\begin{thm}
Let $\mathcal{N}^{A' \rightarrow C}$ be a quantum channel. Then, the capacity of $\mathcal{N}$ is given by
\begin{equation}\label{eqn:lsd-reg-converse}
 C = \sup_{n, \sigma^{A{A'}^{n}}} \frac{1}{n}I(A \rangle C^n)_{\rho}
\end{equation}
where $\sigma^{A {A'}^{n}}$ ranges over all pure states, $\sfA \cong (\sfA')^{\otimes n}$, $\rho^{AC^n} = \mathcal{N}^{\otimes n}(\sigma)$ and $n$ ranges over all positive integers.
\end{thm}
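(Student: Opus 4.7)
The plan is to establish both directions: $Q(\mathcal{N}) \geq \sup_{n,\sigma} \frac{1}{n}I(A\rangle C^n)_\rho$ (achievability) and $Q(\mathcal{N}) \leq \sup_{n,\sigma} \frac{1}{n}I(A\rangle C^n)_\rho$ (converse). The achievability direction is immediate from Theorem~\ref{thm:lsd}. Given any $n$ and any pure $\sigma^{A{A'}^{n}}$ with $\sfA \cong (\sfA')^{\otimes n}$, apply Theorem~\ref{thm:lsd} to the ``super-channel'' $\mathcal{N}^{\otimes n}$ (viewed as a single channel with input space $(\sfA')^{\otimes n}$). This yields, for any rate $R' < I(A\rangle C^n)_\rho$, a sequence of codes achieving rate $R'$ per use of $\mathcal{N}^{\otimes n}$; since one use of $\mathcal{N}^{\otimes n}$ consumes $n$ uses of $\mathcal{N}$, the effective per-use rate for $\mathcal{N}$ is $\frac{1}{n}R'$, so $\frac{1}{n}I(A\rangle C^n)_\rho$ is achievable for $\mathcal{N}$. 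Supremizing over $n$ and $\sigma$ closes this direction.

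For the converse, let $R$ be achievable with a sequence of codes $(\mathcal{E}_n,\mathcal{D}_n)$ of rate $R$ and diamond-norm error $\varepsilon_n \to 0$. Feed half of a maximally entangled state $\Phi^{RM}$ with $|M|=2^{nR}$ through the protocol; the output $\omega_n^{RM} := (\mathcal{D}_n \circ \mathcal{N}^{\otimes n} \circ \mathcal{E}_n)(\Phi^{RM})$ is $\varepsilon_n$-close to $\Phi^{RM}$ in trace norm. The strategy is to exhibit, for each large $n$, a pure $\sigma_n^{A{A'}^{n}}$ such that $\frac{1}{n}I(A\rangle C^n)_{\mathcal{N}^{\otimes n}(\sigma_n)} \geq R - o_n(1)$. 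This is done in three steps: (i) by Alicki--Fannes continuity of the conditional von Neumann entropy, $H(R|M)_{\omega_n}$ is within $O(\varepsilon_n \cdot nR)$ of $H(R|M)_{\Phi^{RM}} = -nR$, so $I(R\rangle M)_{\omega_n} \geq nR - o(n)$; (ii) by the quantum data-processing inequality applied to the decoder $\mathcal{D}_n$, $I(R\rangle C^n)_{\rho_n} \geq I(R\rangle M)_{\omega_n}$ where $\rho_n := (\mathcal{N}^{\otimes n} \circ \mathcal{E}_n)(\Phi^{RM})$; (iii) choose $\sigma_n^{A{A'}^{n}}$ to be a purification of the channel-input marginal $\tau_n := \mathcal{E}_n(\pi^M)$ into $\sfA \cong (\sfA')^{\otimes n}$, and argue that $I(A\rangle C^n)_{\mathcal{N}^{\otimes n}(\sigma_n)} \geq I(R\rangle C^n)_{\rho_n} - o(n)$.

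The delicate step is (iii). A direct entropy identity (using the purity of the global state after dilating the channel) gives
\[ I(A\rangle C^n)_{\mathcal{N}^{\otimes n}(\sigma_n)} - I(R\rangle C^n)_{\rho_n} = H(E_1 | E_\mathcal{N}), \]
where $E_1$ is the Stinespring environment of the encoder $\mathcal{E}_n$ and $E_\mathcal{N}$ that of $\mathcal{N}^{\otimes n}$; this conditional entropy may a priori be strongly negative, which would spoil the naive bound. The resolution exploits the near-perfection of the code: dilating both encoder and decoder produces a global pure state on $R E_1 M E_\mathcal{D} E_\mathcal{N}$ whose marginal on $RM$ is close to $\Phi^{RM}$, so monogamy of entanglement forces the global state to be close to a product $\Phi^{RM} \otimes \phi^{E_1 E_\mathcal{D} E_\mathcal{N}}$ with $\phi$ pure on the environment registers. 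Combining this structural constraint with Fannes-type continuity of the relevant entropies yields the required $o(n)$ control on $H(E_1|E_\mathcal{N})$, which amounts to a quantum Fano-type inequality in the spirit of Barnum--Knill--Nielsen and closes the converse.
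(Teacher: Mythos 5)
Your achievability direction and steps (i)--(ii) of the converse are fine, but step (iii) contains a genuine gap, and the gap is not repairable in the way you suggest. You correctly identify the deficit $I(A\rangle C^n)_{\mathcal{N}^{\otimes n}(\sigma_n)} - I(R\rangle C^n)_{\rho_n} = H(E_1|E_{\mathcal{N}})$, but the monogamy/product-structure argument does not give $H(E_1|E_{\mathcal{N}}) \geq -o(n)$. Near-purity of $\omega^{RM}$ forces the post-decoder global pure state to factorize as approximately $\Phi^{RM}\otimes\phi^{E_1 E_{\mathcal{D}} E_{\mathcal{N}}}$, but this places \emph{no} constraint on the internal correlations of $\phi$: the quantity $H(E_1|E_{\mathcal{N}})_\phi$ can be as negative as $-\log|E_1| = -\Theta(n)$, and Fannes continuity only transports that (uncontrolled) value from $\phi$ to the actual state. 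A concrete counterexample: let $\sfA' = \sfA'_1\otimes\sfA'_2$ with both factors qubits, and let $\mathcal{N}$ act as the identity on $A'_1$ and as the constant map $\rho\mapsto\ketbra{0}$ on $A'_2$ (so the Stinespring of $\mathcal{N}$ routes $A'_2$ into $E_{\mathcal{N}}$). Take the encoder $\mathcal{E}(\rho^M)=\rho^{A'_1}\otimes\pi^{A'_2}$ and decoder $\mathcal{D}=\tr_{C_2}$: this is a zero-error rate-$1$ code, yet its channel-input marginal is $\tau = \pi^{A'_1}\otimes\pi^{A'_2}$, for which $I(A\rangle C)_{\mathcal{N}(\sigma)} = H(\pi^{A'_1})-H(\pi^{A'_2}) = 0$ while $I(R\rangle C)_{\rho} = 1$. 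Thus $H(E_1|E_{\mathcal{N}}) = -1$, and the $n$-fold tensor of this code gives $H(E_1|E_{\mathcal{N}}) = -n = -\Theta(n)$, so your step (iii) would assert $\tfrac1n I(A\rangle C^n)\geq 1-o(1)$ when the true value is $0$.

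The correct way to make ``use the code as $\sigma$'' rigorous is to avoid introducing the encoder environment $E_1$ altogether: fix a Kraus decomposition $\mathcal{E}_n(\cdot) = \sum_k E_k(\cdot)E_k\mdag$, set $p_k := \tr[E_k\mdag E_k\pi^M]$ and $\ket{\sigma_k}^{R{A'}^n} := E_k\ket{\Phi}^{RM}/\sqrt{p_k}$. Since the squared entanglement fidelity is affine in the channel input, $\sum_k p_k\, F\bigl(\Phi^{RM},(\mathcal{D}_n\circ\mathcal{N}^{\otimes n})(\sigma_k)\bigr)^2 \geq 1-\varepsilon_n$, so some $k$ has fidelity $\geq\sqrt{1-\varepsilon_n}$. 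Now $\ket{\sigma_k}$ is already a pure state on $R\,{A'}^n$ with no extra environment; taking $A$ to be (an isometric embedding of) $R$ --- harmless, since $I(A\rangle C^n)$ depends only on $\sigma_k^{{A'}^n}$ --- and applying data processing plus Alicki--Fannes exactly as in your steps (i)--(ii) yields $I(A\rangle C^n)_{\mathcal{N}^{\otimes n}(\sigma_k)} = I(R\rangle C^n) \geq I(R\rangle M) \geq nR - o(n)$. This closes the converse without ever needing to bound $H(E_1|E_{\mathcal{N}})$; in effect one shows that for quantum transmission the encoder can, without loss of generality, be taken to be a (subnormalized) pure Kraus map. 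The paper only sketches the converse in prose, so one cannot say definitively what route it intends, but the sketch (``use [the code's state] as $\sigma$'') is standard shorthand for precisely this pure-Kraus or isometric-encoder reduction, not for the direct substitution of $\mathcal{E}_n(\pi^M)$ that your step (iii) performs.
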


This is what we call a \emph{regularized converse} or \emph{multiletter converse}: it is a converse of Theorem \ref{thm:lsd}, provided that we ``regularize'' it by considering many copies of the channel. This is not a very strong characterization of the capacity. One reason for this is that we cannot compute it: we have no bound on how large $n$ has to be to get within a given factor of the capacity. Another perhaps even more depressing reason is the way we prove this last theorem: we look at an arbitrary code achieving quantum transmission, use it as the state $\sigma$ in the above theorem, and show that the resulting $\frac{1}{n} I(A \rangle C)_{\rho}$ is lower-bounded by the rate of the code. Since there exists a code for every rate below the capacity (by definition), the right-hand side of Equation (\ref{eqn:lsd-reg-converse}) can never be lower than the capacity. This makes the above theorem nearly tautological: if we choose the best possible code, then we reach the capacity. It says nothing whatsoever about the structure of capacity-achieving codes, which is perhaps the main motivation for studying channel capacity problems.

As mentioned earlier, however, it is sometimes possible to prove that regularization is not necessary, and that a theorem that considers only one copy (like Theorem \ref{thm:lsd}) is optimal. When this is the case, we say that we have a \emph{single-letter converse}. The main example in quantum information theory is the entanglement-assisted quantum and classical capacities (see again Theorem \ref{thm:direct-vanilla-channels}). A further example is the entanglement-assisted quantum and classical capacities of quantum channels with side-information at the transmitter, which is studied in Chapter \ref{chp:side-info}, the single-letter converse being given as Theorem \ref{thm:side-info-single-letter}. When we have a single-letter converse, it means that the code structure used in the proof of the corresponding theorem is in fact the optimal way to code for this type of channels. We can then say that we have a good grasp on how the channel carries information.

Finding expressions for the various capacities that are easily computable and that give us information about the structure of optimal codes is one of the main goals of information theory, and it has generally been rather difficult to achieve in the quantum setting. Finding such expressions for the most basic quantum capacities (such as the unassisted quantum and classical capacities of quantum channels) is one of the most central open problems in the field today.

\section{The duality between vectors and operators}\label{sec:vec-op}
Periodically throughout this thesis it will be extremely useful to turn multipartite pure states into operators, and vice versa. This is simply a generalization of turning a ``ket'' into a ``bra'': if we have a vector $\ket{\psi} \in \sfA$, then we can turn it into an operator $\bra{\psi} \in \LL(\sfA,\mbC)$ from vectors to the complex numbers, by defining $\bra{\psi}$ as the only operator in $\LL(\sfA, \mbC)$ such that $\braket{\psi}{\varphi} = \langle \ket{\psi}, \ket{\varphi} \rangle$, where $\langle \cdot, \cdot \rangle$ denotes the inner product in $\sfA$.  We can turn multipartite states into more interesting operators, however. Endow $\sfA$ and $\sfB$ with standard orthonormal bases $\{ \ket{a_i}^A \}$ and $\{ \ket{b_i}^B \}$ respectively, and let $\op_{A \rightarrow B}: \sfA \otimes \sfB \rightarrow \LL(\sfA, \sfB)$ be defined as
\[ \op_{A \rightarrow B}(\ket{a_i}\ket{b_j}) = \ket{b_j}\bra{a_i} \hspace{1cm} \forall i,j.\]
This operation depends on the choice of standard basis; therefore, whenever it is used, a particular choice of basis is implied. Since this choice will never matter in this thesis, we shall not explicitly define these bases.

The following properties of the $\op$ transformation will be needed:
\begin{lem}\label{lem:op-switcheroo}
	Let $\ket{\psi}^{AB}$ and $\ket{\varphi}^{AC}$ be any vectors in $\sfA \otimes \sfB$ and $\sfA \otimes \sfC$ respectively. Then, $\op_{A \rightarrow B}(\ket{\psi}^{AB})\ket{\varphi}^{AC} = \op_{A \rightarrow C}(\ket{\varphi}^{AC})\ket{\psi}^{AB}$.
\end{lem}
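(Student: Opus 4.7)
The plan is to verify the identity by expanding both sides in the fixed standard bases and appealing to bilinearity. Fix $\ket{\psi}^{AB}=\sum_{i,j}\psi_{ij}\ket{a_i}\ket{b_j}$ and $\ket{\varphi}^{AC}=\sum_{i,k}\varphi_{ik}\ket{a_i}\ket{c_k}$. Since both sides of the claimed equality are bilinear in $\ket{\psi}$ and $\ket{\varphi}$, it suffices to establish the identity on product basis vectors $\ket{a_i}\ket{b_j}$ and $\ket{a_{i'}}\ket{c_k}$.

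On such basis vectors the computation is immediate from the definition of $\op$. On the left-hand side, $\op_{A\rightarrow B}(\ket{a_i}\ket{b_j})=\ket{b_j}\bra{a_i}$ is an operator in $\LL(\sfA,\sfB)$, which acts only on the $\sfA$ factor of $\ket{a_{i'}}\ket{c_k}$, giving
\[ \op_{A\rightarrow B}(\ket{a_i}\ket{b_j})\,\ket{a_{i'}}\ket{c_k} = \braket{a_i}{a_{i'}}\,\ket{b_j}\ket{c_k} = \delta_{i,i'}\,\ket{b_j}\ket{c_k}. \]
The right-hand side is obtained by swapping the roles of $B$ and $C$: $\op_{A\rightarrow C}(\ket{a_{i'}}\ket{c_k})=\ket{c_k}\bra{a_{i'}}$, so
\[ \op_{A\rightarrow C}(\ket{a_{i'}}\ket{c_k})\,\ket{a_i}\ket{b_j} = \delta_{i,i'}\,\ket{c_k}\ket{b_j}, \]
which, under the canonical identification $\sfB\otimes\sfC\cong\sfC\otimes\sfB$ that is implicit throughout the thesis, agrees with the left-hand side.

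Summing over $i,j,i',k$ weighted by $\psi_{ij}\varphi_{i'k}$, both sides collapse to the same vector $\sum_{i,j,k}\psi_{ij}\varphi_{ik}\ket{b_j}\ket{c_k}$, completing the proof. There is no real obstacle here; the only point worth highlighting is the implicit tensor-factor reordering, so that the common value really is a single vector in $\sfB\otimes\sfC$, independent of whether one views $\op_{A\rightarrow B}(\ket{\psi})$ as acting on the $\sfA$ factor of $\ket{\varphi}$ or vice versa.
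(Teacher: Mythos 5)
Your proof is correct and takes essentially the same approach as the paper's: expand in the fixed standard bases and compute directly. The only cosmetic difference is that you reduce to basis vectors via bilinearity first, whereas the paper expands both sums at once; you also make explicit the tensor-factor reordering $\sfB\otimes\sfC\cong\sfC\otimes\sfB$ that the paper leaves implicit.
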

\begin{proof}
	Let $\{ \ket{a_i} \}$, $\{ \ket{b_i} \}$, and $\{ \ket{c_i} \}$ be the canonical bases for $\sfA$, $\sfB$ and $\sfC$ respectively, and let
	\begin{align*}
		\ket{\psi}^{AB} &= \sum_{ij} \alpha_{ij} \ket{a_i}\ket{b_j}\\
		\ket{\varphi}^{AC} &= \sum_{ij} \beta_{ij} \ket{a_i}\ket{c_j}.
	\end{align*}
	Then,
	\begin{align*}
		\op_{A \rightarrow B}(\ket{\psi}^{AB}) \ket{\varphi}^{AC} &= \sum_{ijkl} \alpha_{ij} \beta_{kl} \ket{b_j}\braket{a_i}{a_k}\ket{c_l}\\
		&= \sum_{ijl} \alpha_{ij} \beta_{il} \ket{b_j}\ket{c_l}\\
		\op_{A \rightarrow C}(\ket{\varphi}^{AC}) \ket{\psi}^{AB} &= \sum_{ijkl} \alpha_{ij} \beta_{kl} \ket{c_l}\braket{a_k}{a_i}\ket{b_j}\\
		&= \sum_{ijl} \alpha_{ij} \beta_{il} \ket{b_j}\ket{c_l}.
	\end{align*}
\end{proof}

\begin{lem}\label{lem:op-identity}
	Let $\ket{\psi}^{AB}$ be any vector in $\sfA \otimes \sfB$, let $A'$ be a system of equal dimension to $A$, and let $\ket{\Phi}^{AA'} = \frac{1}{\sqrt{|A|}}\sum_i \ket{a_i}\ket{a'_i}$, where the $\ket{a_i}$'s and $\ket{a'_i}$'s are the canonical bases of $A$ and $A'$ respectively. Then,
	\[ \sqrt{|A|} \op_{A \rightarrow B}(\ket{\psi}^{AB})\ket{\Phi}^{AA'} = \ket{\psi}^{A'B}. \]
\end{lem}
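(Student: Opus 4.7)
The plan is to do a direct computation by expanding $\ket{\psi}^{AB}$ in the canonical basis and then carrying out the contraction with $\ket{\Phi}^{AA'}$. Write $\ket{\psi}^{AB} = \sum_{ij} \alpha_{ij}\ket{a_i}\ket{b_j}$, so that by definition of $\op_{A\rightarrow B}$ we have $\op_{A\rightarrow B}(\ket{\psi}^{AB}) = \sum_{ij}\alpha_{ij}\ket{b_j}\bra{a_i}$. Applying this operator to $\ket{\Phi}^{AA'} = \frac{1}{\sqrt{|A|}}\sum_k \ket{a_k}\ket{a'_k}$ and using $\braket{a_i}{a_k} = \delta_{ik}$ collapses the sum to $\frac{1}{\sqrt{|A|}}\sum_{ij}\alpha_{ij}\ket{b_j}\ket{a'_i}$, which after multiplication by $\sqrt{|A|}$ is precisely $\ket{\psi}^{A'B}$ (the same vector with $A$ relabeled as $A'$).

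Alternatively, the statement follows in one line from the preceding Lemma~\ref{lem:op-switcheroo}: taking $C = A'$ and $\ket{\varphi}^{AC} = \ket{\Phi}^{AA'}$, we get
\[ \op_{A\rightarrow B}(\ket{\psi}^{AB})\ket{\Phi}^{AA'} = \op_{A\rightarrow A'}(\ket{\Phi}^{AA'})\ket{\psi}^{AB}. \]
A direct check shows $\op_{A\rightarrow A'}(\ket{\Phi}^{AA'}) = \frac{1}{\sqrt{|A|}}\sum_i \ket{a'_i}\bra{a_i}$, which is $\frac{1}{\sqrt{|A|}}$ times the canonical relabeling isometry from $\sfA$ to $\sfA'$, and hence sends $\ket{\psi}^{AB}$ to $\frac{1}{\sqrt{|A|}}\ket{\psi}^{A'B}$. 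Multiplying through by $\sqrt{|A|}$ gives the claim.

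There is no real obstacle here; the only thing to be careful about is keeping track of the basis conventions implicit in $\op$ and in the definition of $\ket{\Phi}^{AA'}$, since both depend on the same canonical basis of $\sfA$. With that matched up consistently (which is exactly the point of the $\frac{1}{\sqrt{|A|}}$ normalization of $\ket{\Phi}$), the identity is essentially a tautology expressing the fact that the maximally entangled state implements, up to normalization, the canonical identification between $\sfA\otimes\sfB$ and $\LL(\sfA,\sfB)$.
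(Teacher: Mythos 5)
Your first computation is exactly the paper's proof: expand $\ket{\psi}^{AB}$ in the canonical basis, apply $\op_{A\to B}$, and contract against $\ket{\Phi}^{AA'}$. The alternative route via Lemma~\ref{lem:op-switcheroo} is also correct and slightly slicker, but the paper just does the direct expansion.
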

\begin{proof}
	Let $\ket{\psi}^{AB} = \sum_{ij} \alpha_{ij} \ket{a_i}\ket{b_j}$; we then get that
	\begin{align*}
		\sqrt{|A|} \op_{A \rightarrow B}(\ket{\psi}^{AB}) \ket{\Phi}^{AA'} &= \sum_{ijk} \alpha_{ij} \ket{b_j}\braket{a_i}{a_k}\ket{a'_k}\\
		&= \sum_{ij} \alpha_{ij} \ket{a'_i}^{A'} \ket{b_j}^B\\
		&= \ket{\psi}^{A'B}.
	\end{align*}
\end{proof}

\begin{lem}
	For any $\ket{\psi} \in \sfA \otimes \sfB$ and any $M^{A \rightarrow C}$, we have that
	\begin{align*}
		\op_{B \rightarrow C}(M \ket{\psi}) &= M \op_{B \rightarrow A}(\ket{\psi})\\
		\op_{C \rightarrow B}(M \ket{\psi}) &= \op_{A \rightarrow B}(\ket{\psi}) M_T\\
	\end{align*}
where the $T$ subscript denotes transposition.
\end{lem}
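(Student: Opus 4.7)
The plan is to verify both identities by direct computation in the canonical bases, following the same pattern used in the proofs of Lemmas \ref{lem:op-switcheroo} and \ref{lem:op-identity}. Since both sides of each equality are manifestly linear in $\ket{\psi}$ and $M$, it suffices to expand everything componentwise and check that the coefficients match.

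First I would fix canonical bases $\{\ket{a_i}\}$, $\{\ket{b_j}\}$, $\{\ket{c_k}\}$ for $\sfA$, $\sfB$, $\sfC$ respectively, and write $\ket{\psi}^{AB} = \sum_{ij} \alpha_{ij} \ket{a_i}\ket{b_j}$ and $M = \sum_{kl} m_{kl} \ket{c_k}\bra{a_l}$. A short calculation then gives
\[ M\ket{\psi} = \sum_{ijk} \alpha_{ij} m_{ki}\, \ket{c_k}\ket{b_j}, \]
so that by definition of $\op$,
\[ \op_{B \to C}(M\ket{\psi}) = \sum_{ijk} \alpha_{ij} m_{ki}\, \ket{c_k}\bra{b_j}. \]
On the other hand, $\op_{B \to A}(\ket{\psi}) = \sum_{ij} \alpha_{ij} \ket{a_i}\bra{b_j}$, and multiplying on the left by $M$ yields the same sum, establishing the first identity.

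For the second identity, I would compute $\op_{C \to B}(M\ket{\psi}) = \sum_{ijk} \alpha_{ij} m_{ki}\, \ket{b_j}\bra{c_k}$. Since transposition in the canonical bases swaps row and column indices, $M_T = \sum_{kl} m_{kl}\, \ket{a_l}\bra{c_k}$, and
\[ \op_{A \to B}(\ket{\psi}) M_T = \sum_{ijkl} \alpha_{ij} m_{kl}\, \ket{b_j} \braket{a_i}{a_l} \bra{c_k} = \sum_{ijk} \alpha_{ij} m_{ki}\, \ket{b_j}\bra{c_k}, \]
matching the other side.

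There is no real obstacle here; the only point requiring mild care is the bookkeeping of indices when transposition is involved, particularly making sure that the transposition convention (with respect to the same canonical bases used to define $\op$) produces exactly the index swap that appears after multiplying by $\op_{A\to B}(\ket{\psi})$ on the left. Once the conventions are fixed, both equalities reduce to the same coefficient $\alpha_{ij} m_{ki}$, and the lemma follows.
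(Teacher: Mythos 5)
Your proof is correct and follows essentially the same approach as the paper: expand $\ket{\psi}$ and $M$ in the canonical bases, apply the definitions of $\op$ and transposition, and match coefficients on both sides. The bookkeeping is identical to the paper's (which uses $\gamma_{kl}$ where you write $m_{kl}$), so nothing further is needed.
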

\begin{proof}
	Let $\ket{\psi} = \sum_{ij} \alpha_{ij} \ket{a_i} \ket{b_j}$ and $M = \sum_{kl} \gamma_{kl} \ket{c_k} \bra{a_l}$. Then,
	\begin{align*}
		\op_{B \rightarrow C}(M \ket{\psi}) &= \op_{B \rightarrow C}\left( \sum_{ijkl} \alpha_{ij} \gamma_{kl} \ket{c_k} \braket{a_l}{a_i} \ket{b_j}  \right)\\
		&= \op_{B \rightarrow C}\left( \sum_{ijk} \alpha_{ij} \gamma_{ki} \ket{c_k} \ket{b_j}  \right)\\
		&= \sum_{ijk} \alpha_{ij} \gamma_{ki} \ket{c_k} \bra{b_j}.
	\end{align*}
	Likewise,
	\begin{align*}
		M \op_{B \rightarrow A}(\ket{\psi}) &= \sum_{ijkl} \alpha_{ij}\gamma_{kl} \ket{c_k}\braket{a_l}{a_i} \bra{b_j}\\
		&= \sum_{ijk} \alpha_{ij}\gamma_{ki} \ket{c_k} \bra{b_j}.
	\end{align*}
	The other statement is proven in the same manner:
	\begin{align*}
		\op_{C \rightarrow B}(M \ket{\psi}) &= \sum_{ijkl} \alpha_{ij} \gamma_{kl} \op_{C \rightarrow B}\left( \ket{c_k}\braket{a_l}{a_i} \ket{b_j} \right)\\
		&= \sum_{ijk} \alpha_{ij} \gamma_{ki} \ket{b_j} \bra{c_k}
	\end{align*}
	and
	\begin{align*}
		\op_{A \rightarrow B}(\ket{\psi}) M_T &= \sum_{ijkl} \alpha_{ij} \gamma_{kl} \ket{b_j} \braket{a_i}{a_l} \bra{c_k}\\
		&= \sum_{ijk} \alpha_{ij} \gamma_{ki} \ket{b_j} \bra{c_k}.
	\end{align*}
\end{proof}

\begin{lem}
	Let $\ket{\psi} \in \sfA \otimes \sfB$. Then, $\tr_B[\psi^{AB}] = \op_{B \rightarrow A}(\ket{\psi}) \op_{B \rightarrow A}(\ket{\psi})\mdag$.
\end{lem}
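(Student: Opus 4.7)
The plan is to verify the identity by direct expansion in the canonical bases for $\sfA$ and $\sfB$, in the same style as the proofs of the preceding lemmas about $\op$. Since both sides are sesquilinear in $\ket{\psi}$, fixing a basis expansion reduces the claim to matching coefficients.

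Concretely, I would write $\ket{\psi}^{AB} = \sum_{ij} \alpha_{ij} \ket{a_i}\ket{b_j}$ in the canonical bases for $\sfA$ and $\sfB$. By definition of $\op_{B \rightarrow A}$, which sends the $A$-part to a ket and the $B$-part to a bra, this gives
\[ \op_{B \rightarrow A}(\ket{\psi}) = \sum_{ij} \alpha_{ij}\, \ket{a_i}\bra{b_j}, \]
and taking the adjoint yields $\op_{B \rightarrow A}(\ket{\psi})\mdag = \sum_{kl} \alpha_{kl}^{*}\, \ket{b_l}\bra{a_k}$. Multiplying these two and using $\braket{b_j}{b_l} = \delta_{jl}$ collapses the sum to
\[ \op_{B \rightarrow A}(\ket{\psi})\op_{B \rightarrow A}(\ket{\psi})\mdag = \sum_{i,j,k} \alpha_{ij}\alpha_{kj}^{*}\, \ket{a_i}\bra{a_k}. \]

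On the other side, $\psi^{AB} = \sum_{ij,kl} \alpha_{ij} \alpha_{kl}^{*}\, \ket{a_i}\ket{b_j}\bra{a_k}\bra{b_l}$, and taking the partial trace over $B$ in the canonical basis $\{\ket{b_m}\}$ replaces $\ket{b_j}\bra{b_l}$ by $\delta_{jl}$, producing exactly the same expression $\sum_{i,j,k} \alpha_{ij}\alpha_{kj}^{*}\, \ket{a_i}\bra{a_k}$. Comparing the two gives the identity.

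There is no real obstacle here; the only thing to be careful about is keeping the basis conventions consistent so that the $B$ indices are the ones that get contracted on both sides, which matches the fact that $B$ is the ``input'' system of $\op_{B \rightarrow A}$ and therefore the system being traced out. One could also present this a bit more conceptually by observing that $\op$ is essentially the $\vect^{-1}$ map up to a choice of basis, so the identity is the standard fact that the reduced density matrix on $\sfA$ coincides with $MM\mdag$ for $M$ the matrix of coefficients of $\ket{\psi}$; but given the explicit index computations already used in the preceding lemmas, the direct basis calculation is the cleanest route.
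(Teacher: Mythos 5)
Your proof is correct. The computation is the same in spirit as the paper's, but the choice of decomposition differs in a way worth noting. The paper expands $\ket{\psi}$ in its Schmidt decomposition $\sum_i \alpha_i \ket{\psi_i}^A\ket{\varphi_i}^B$, which makes $\tr_B[\psi^{AB}] = \sum_i \alpha_i^2 \ketbra{\psi_i}$ immediate, but then on the $\op$ side one must remember that $\op$ is defined with respect to the \emph{canonical} bases, so a Schmidt ket $\ket{\varphi_i}^B$ gets turned into the bra of the \emph{conjugated} vector; the paper has to observe that $\braket{\varphi_i^*}{\varphi_j^*} = \delta_{ij}$ to close the argument. Your approach works entirely in the canonical basis, so the $\op$ image is read off directly from the definition without any conjugation subtlety, and both sides collapse by contracting the $B$-index via $\braket{b_j}{b_l} = \delta_{jl}$. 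This is slightly more elementary and more uniform with the style of the other $\op$/$\vect$ lemmas in the same section; the paper's Schmidt route is shorter to state but hides a small pitfall that your version sidesteps.
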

\begin{proof}
	Let $\ket{\psi} = \sum_{i} \alpha_i \ket{\psi_i}^A \ket{\varphi_i}^B$ be the Schmidt decomposition of $\ket{\psi}$.
	\begin{align*}
		\tr_B[\psi^{AB}] = \sum_i \alpha_i^2 \ketbra{\psi_i}^A
	\end{align*}
	and
	\begin{align*}
		\op_{B \rightarrow A}(\ket{\psi}) \op_{B \rightarrow A}(\ket{\psi})\mdag &= \sum_{ij} \alpha_i \alpha_j \ket{\psi_i} \braket{\varphi^*_i}{\varphi^*_j} \bra{\psi_j}\\
		&= \sum_{i} \alpha_i^2 \ketbra{\psi_i}^A.
	\end{align*}
\end{proof}

We will also need to turn operators into vectors through the same process. For any pair of systems $A$ and $B$, define $\vect: \LL(\sfA, \sfB) \rightarrow \sfA \otimes \sfB$ as the transformation:
\[ \vect(\ket{b_j}\bra{a_i}) = \ket{a_i}\ket{b_j}. \]
It is simply the inverse of $\op$.

We will need the following property of the $\vect$ transformation:
\begin{lem}\label{lem:vec-isometry}
	Let $M^{A \rightarrow B}$ and $N^{A \rightarrow B}$ be arbitrary operators. Then, $\tr[N\mdag M] = \vect(N)\mdag \vect(M)$.
\end{lem}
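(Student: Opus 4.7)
The plan is to verify the identity by directly expanding both sides in the canonical bases of $\sfA$ and $\sfB$. The lemma is essentially the statement that $\vect$ implements the well-known isomorphism between $\LL(\sfA,\sfB)$ endowed with the Hilbert--Schmidt inner product and $\sfA \otimes \sfB$ endowed with its standard inner product; it is an ``isometry'' in that sense, which justifies the label of the lemma.

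Concretely, I would let $\{\ket{a_i}\}$ and $\{\ket{b_j}\}$ be the canonical bases of $\sfA$ and $\sfB$, and expand $M = \sum_{ij} m_{ij}\ket{b_j}\bra{a_i}$ and $N = \sum_{ij} n_{ij}\ket{b_j}\bra{a_i}$. By the defining rule $\vect(\ket{b_j}\bra{a_i}) = \ket{a_i}\ket{b_j}$ and linearity of $\vect$, we get $\vect(M) = \sum_{ij} m_{ij}\ket{a_i}\ket{b_j}$ and similarly for $N$. Then
\[
\vect(N)\mdag \vect(M) \;=\; \sum_{ijkl} n_{kl}^{*} m_{ij}\,\braket{a_k}{a_i}\braket{b_l}{b_j} \;=\; \sum_{ij} n_{ij}^{*} m_{ij},
\]
using orthonormality of the product basis $\{\ket{a_i}\ket{b_j}\}$.

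On the other side, $N\mdag M = \sum_{ijkl} n_{ij}^{*} m_{kl}\,\ket{a_i}\braket{b_j}{b_l}\bra{a_k} = \sum_{ijk} n_{ij}^{*} m_{kj}\,\ket{a_i}\bra{a_k}$, and taking the trace picks out $i=k$, giving $\tr[N\mdag M] = \sum_{ij} n_{ij}^{*} m_{ij}$. This matches the expression for $\vect(N)\mdag \vect(M)$, so the two sides are equal.

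There is no real obstacle here; the only thing to be careful about is the convention for $\vect$ (which factor of the tensor product corresponds to the row index versus the column index of the operator), so that $\vect(N)\mdag$ acts on the correct factors. Since the same convention is used on both sides of the equality, the calculation is symmetric in $M$ and $N$ and the identity follows immediately. One could also give a coordinate-free proof by noting that both $\tr[\cdot\mdag \cdot]$ and the inner product inherited from $\sfA \otimes \sfB$ are sesquilinear forms that agree on the rank-one basis elements $\ket{b_j}\bra{a_i}$, but the direct expansion above is the quickest route.
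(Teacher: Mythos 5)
Your proof is correct and follows essentially the same route as the paper's: expand $M$ and $N$ in the canonical bases and verify that both sides reduce to $\sum_{ij} n_{ij}^* m_{ij}$. The extra remarks about the Hilbert--Schmidt isomorphism and the coordinate-free alternative are fine context but not needed; the computation matches the paper's.
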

\begin{proof}
	Let $M = \sum_{ij} m_{ij} \ket{b_j} \bra{a_i}$ and $N = \sum_{ij} n_{ij} \ket{b_j} \bra{a_i}$. Then,
	\begin{align*}
		\tr[N\mdag M] &= \sum_{ijkl} \tr[m_{ij} n_{kl}^* \ket{a_k}\braket{b_l}{b_j}\bra{a_i}]\\ 
		&= \sum_{ij} m_{ij} n_{ij}^*
	\end{align*}
	and
	\begin{align*}
		\vect(N)\mdag \vect(M) &= \sum_{ijkl} m_{ij} n_{kl}^* \braket{a_k}{a_i}\braket{b_l}{b_j}\\
		&= \sum_{ij} m_{ij} n_{ij}^*.
	\end{align*}
\end{proof}

\chapter{The decoupling theorem}\label{chp:decoupling}
One peculiar feature of quantum information theory is that some of the simplest coding theorems that we know come from theorems that tell us how to \emph{remove} correlations, even though the goal of an error-correcting code is to establish correlations between the sender and the receiver. The basic idea is the following: to prove a coding theorem, we generally need to assert the existence of a decoder of some sort; this decoder must be able to reproduce a particular state with good fidelity given only partial or noisy information. By purifying all systems, we can consider all subsystems that are not held by the decoder. These will generally include a subsystem purifying the state that the decoder needs to produce, as well as systems considered as part of the environment or that we otherwise don't care about. It turns out that, in such a case, a decoder exists if and only if the system purifying the desired state and the ``environment'' are close to a product state. The theorem that ensures this is called Uhlmann's theorem, and is the subject of the next section.  Of course, for this approach to work, we need a way to ensure that two systems are close to a product state. Section \ref{sec:bertha} will present a very general decoupling theorem with which we will prove all of the coding theorems in this thesis.

Although some elements of this approach were already used earlier, this method came into its own with the discovery of the \emph{state merging} protocol \cite{state-merging}, and later, the \emph{Fully Quantum Slepian-Wolf (FQSW)} \cite{FQSW} protocol. A whole array of results, including the ``mother'' and ``father'' \cite{mother-father}, can be easily derived from either of these protocols, such as the quantum reverse Shannon theorem \cite{reverse-shannon}, the Lloyd-Shor-Devetak (LSD) theorem \cite{lsd1} \cite{lsd2} \cite{lsd3}, one-way entanglement distillation \cite{DW05}, and distributed compression \cite{FQSW}. This chapter will present a generalization of both FQSW and state merging that is much more flexible and which can therefore be used in more diversified contexts. 

\section{Uhlmann's theorem}
Before starting, we will need to formally define what we mean by \emph{purification}:
\begin{defin}[Purification]
	Let $\rho^A \in \DD(\sfA)$ be any normalized density operator. Then, a purification of $\rho^A$ is any normalized vector $\ket{\psi} \in \sfA \otimes \sfB$, with $B$ an arbitrary quantum system, such that $\tr_B[\ketbra{\psi}^{AB}] = \rho^A$. We then call $B$ the \emph{purifying system}.
\end{defin}
For any density operator, a purification exists, and is unique up to isometries on the purifying system.

Uhlmann's theorem was first shown in \cite{uhlmann}; the proof given here essentially follows the one in \cite{watrousnotes}.
\begin{thm}[Uhlmann]\label{thm:uhlmann}
Let $\rho^A$ and $\sigma^A$ be two quantum states, and let $\ket{\psi}^{AB}$ and $\ket{\varphi}^{AC}$ be purifications of $\rho^A$ and $\sigma^A$ respectively (the purifying systems $B$ and $C$ need not be isomorphic). Then,
\begin{equation}
F(\rho^A, \sigma^A) = \max_{V^{B\rightarrow C}} \left| \bra{\psi}V\mdag\ket{\varphi} \right|
\label{eqn:uhlmann-orig}
\end{equation} 
where the maximization is over all partial isometries from $B$ to $C$.
\end{thm}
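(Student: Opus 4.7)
The plan is to convert Uhlmann's identity into a trace-norm optimisation via the $\op$/$\vect$ duality from Section~\ref{sec:vec-op}, then evaluate the optimum using polar decomposition and recognise the answer as $\|\sqrt{\rho}\sqrt{\sigma}\|_1 = F(\rho,\sigma)$.

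First I would associate operator representatives to the two purifications: set $N_\psi := \op_{B \rightarrow A}(\ket{\psi}) \in \LL(\sfB,\sfA)$ and $N_\varphi := \op_{C \rightarrow A}(\ket{\varphi}) \in \LL(\sfC,\sfA)$. The partial-trace lemma from Section~\ref{sec:vec-op} then reads $N_\psi N_\psi^\dagger = \rho^A$ and $N_\varphi N_\varphi^\dagger = \sigma^A$. A direct calculation (either in coordinates or via the $\op$/$\vect$ identities) shows that, up to an overall complex conjugation, the quantity to be maximised in~(\ref{eqn:uhlmann-orig}) is a trace:
\[
\bigl|\bra{\psi}\bigl(\ident^A \otimes (V^{B\to C})^\dagger\bigr)\ket{\varphi}\bigr| \;=\; \bigl|\tr\!\bigl[N_\varphi^\dagger\, N_\psi\,(V^{B\to C})^\dagger\bigr]\bigr|,
\]
so the problem reduces to maximising $|\tr[M\,W]|$ with $M := N_\varphi^\dagger N_\psi \in \LL(\sfB,\sfC)$ fixed and $W = V^\dagger$ ranging over partial isometries $\sfC \to \sfB$.

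Next I would invoke the standard dual characterisation of the trace norm, $\|M\|_1 = \max_W |\tr[MW]|$ over partial isometries, attained by $W = Y^\dagger X^\dagger$ extracted from any singular-value decomposition $M = X\Sigma Y$. The purification conditions force $|B|\geqslant \rank\rho$ and $|C|\geqslant \rank\sigma$, which gives enough room for this optimiser to be realised as a genuine partial isometry $\sfC \to \sfB$; hence the right-hand side of~(\ref{eqn:uhlmann-orig}) equals $\|N_\varphi^\dagger N_\psi\|_1$.

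Finally I would compute this trace norm via polar decomposition. Write $N_\psi = \sqrt{\rho}\,U_\psi$ and $N_\varphi = \sqrt{\sigma}\,U_\varphi$ with partial isometries $U_\psi^{B\to A}$, $U_\varphi^{C\to A}$ whose ranges are $\supp\rho$ and $\supp\sigma$ respectively. Then $N_\varphi^\dagger N_\psi = U_\varphi^\dagger\sqrt{\sigma}\sqrt{\rho}\,U_\psi$, and because the range of $\sqrt{\sigma}\sqrt{\rho}$ lies in $\supp\sigma$ while its co-range lies in $\supp\rho$, the identities $(U_\varphi^\dagger X)^\dagger(U_\varphi^\dagger X) = X^\dagger X$ and $(XU_\psi)(XU_\psi)^\dagger = XX^\dagger$ (with $X = \sqrt{\sigma}\sqrt{\rho}$) show that multiplying by $U_\varphi^\dagger$ on the left and by $U_\psi$ on the right preserves every nonzero singular value. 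Therefore $\|N_\varphi^\dagger N_\psi\|_1 = \|\sqrt{\sigma}\sqrt{\rho}\|_1 = \|\sqrt{\rho}\sqrt{\sigma}\|_1 = F(\rho,\sigma)$, matching the left-hand side of~(\ref{eqn:uhlmann-orig}). The main technical care point will be the partial-isometry bookkeeping: because $\rho$ and $\sigma$ can be rank-deficient and $|B|$, $|C|$ may differ, support projectors need to be tracked carefully so that the trace-norm optimiser lands in the partial-isometry class and the polar-decomposition factors act as genuine isometries on the relevant subspaces; no deeper obstacle than this should arise.
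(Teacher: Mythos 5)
Your proposal is correct and is essentially the paper's own proof run in reverse: the paper writes $\ket{\psi}=\vect(U\sqrt{\rho})$, $\ket{\varphi}=\vect(W\sqrt{\sigma})$, conjugates $\sqrt{\rho}\sqrt{\sigma}$ by those isometries, invokes Lemma~\ref{lem:tracenorm-maxu} and Lemma~\ref{lem:vec-isometry} to land on the overlap, while you start from the overlap, convert it via $\op$ to $\bigl\|N_\varphi^\dagger N_\psi\bigr\|_1$, and use the polar decompositions $N_\psi=\sqrt{\rho}\,U_\psi$, $N_\varphi=\sqrt{\sigma}\,U_\varphi$ to strip the isometric factors. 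The ingredients (the $\op$/$\vect$ duality, the partial-isometry characterisation of the trace norm, and isometric invariance of $\|\cdot\|_1$ on the relevant supports) coincide, so there is nothing substantive to flag.
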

\begin{proof}
Let $U^{A\rightarrow B}$ and $W^{A \rightarrow C}$ be partial isometries such that $\ket{\psi}^{AB} = \vect(U \sqrt{\rho})$ and $\ket{\varphi}^{AC} = \vect(W \sqrt{\sigma})$. Then,
\begin{align}
F(\rho^A, \sigma^A) &= \left\| \sqrt{\rho} \sqrt{\sigma} \right\|_1\\
   &= \left\| U \sqrt{\rho} \sqrt{\sigma} W\mdag \right\|_1\\
   \label{eqn:uhlmann-proof-line3} &= \max_{V^{B \rightarrow C}} \left| \tr \left[V U \sqrt{\rho} \sqrt{\sigma} W\mdag \right] \right|\\
   \label{eqn:uhlmann-proof-line4} &= \max_{V^{B \rightarrow C}} \left| \vect(VU \sqrt{\rho})\mdag \vect(W \sqrt{\sigma}) \right|\\ 
   &= \max_{V^{B \rightarrow C}} \left| \bra{\psi}V\mdag\ket{\varphi} \right|
\end{align}
where we have used Lemma \ref{lem:tracenorm-maxu} from the appendix on line (\ref{eqn:uhlmann-proof-line3}) and \ref{lem:vec-isometry} on line (\ref{eqn:uhlmann-proof-line4}).
\end{proof}

The main use of this theorem for coding purposes is that it often gives us a decoder ``for free''. Indeed, assume that, at the end of the execution of a channel coding protocol, we have a tripartite pure state $\ket{\psi}^{BER}$, with the three subsystems representing the shares of Bob, the environment, and a ``reference'' system which purifies the qubits that Alice wanted to send to Bob. Now, suppose that we were able to show that the environment is nearly uncorrelated with the reference: $F\left( \psi^{RE}, \rho^R \otimes \sigma^E \right) \geqslant 1 - \varepsilon$. Then, given a product purification $\ket{\varphi}^{R\bar{B}} \otimes \ket{\xi}^{E\hat{B}}$ of $\rho^R \otimes \sigma^E$, there exists a partial isometry $V^{B \rightarrow \bar{B} \hat{B}}$ such that $F\left( V \ket{\psi}^{BER}, \ket{\varphi}^{R \bar{B}} \otimes \ket{\xi}^{E \hat{B}} \right) \geqslant 1 - \varepsilon$.

Since we generally use the trace distance rather than the fidelity, the following corollary of Uhlmann's theorem (Lemma 2.2 in \cite{DHW05}) will be very useful to us:

\begin{cor}
    Let $\ket{\psi}^{AB}$ and $\ket{\varphi}^{AC}$ be two quantum states such that $\left\| \psi^A - \varphi^A \right\|_1 \leqslant \varepsilon$. Then there exists an isometry $U^{B \rightarrow C}$ such that $\left\| \left(U^{B \rightarrow C} \cdot \psi^{AB}\right) - \varphi^{AC} \right\|_1 \leqslant 2 \sqrt{\varepsilon}$.
\end{cor}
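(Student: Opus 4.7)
The plan is to chain Uhlmann's theorem together with the Fuchs--van de Graaf inequalities (Lemma \ref{lem:fuchs-vdg}) in both directions, using Uhlmann to produce the isometry and the inequalities to convert between trace distance on the marginals (the hypothesis) and trace distance on the full states (the conclusion).

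First, I would convert the hypothesis from trace distance to fidelity: by the left Fuchs--van de Graaf inequality applied to $\psi^A$ and $\varphi^A$, $F(\psi^A,\varphi^A) \geq 1 - \tfrac{1}{2}\|\psi^A - \varphi^A\|_1 \geq 1 - \varepsilon/2$. Next, I would invoke Uhlmann's theorem (Theorem \ref{thm:uhlmann}) on the pair $(\rho^A,\sigma^A) = (\psi^A,\varphi^A)$ with the given purifications $\ket{\psi}^{AB}$ and $\ket{\varphi}^{AC}$: this yields a partial isometry $U^{B\to C}$ (which, when $|B|\leq|C|$, can be extended to an isometry on all of $\sfB$, justifying the statement of the corollary) such that
\[ \bigl|\bra{\psi}U\mdag\ket{\varphi}\bigr| \;=\; F(\psi^A,\varphi^A) \;\geq\; 1 - \varepsilon/2. \]

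Now let $\ket{\tilde\psi}^{AC} := U^{B\to C}\ket{\psi}^{AB}$. Since both $\ket{\tilde\psi}^{AC}$ and $\ket{\varphi}^{AC}$ are pure states, their fidelity is simply the modulus of their overlap, so $F(\tilde\psi^{AC},\varphi^{AC}) = |\bra{\psi}U\mdag\ket{\varphi}| \geq 1 - \varepsilon/2$. At this point I would apply the right Fuchs--van de Graaf inequality to translate fidelity back into trace distance on the full systems:
\[ \bigl\| \tilde\psi^{AC} - \varphi^{AC} \bigr\|_1 \;\leq\; 2\sqrt{1 - F(\tilde\psi^{AC},\varphi^{AC})^2} \;\leq\; 2\sqrt{1 - (1-\varepsilon/2)^2} \;\leq\; 2\sqrt{\varepsilon}, \]
where the last step uses $1-(1-\varepsilon/2)^2 = \varepsilon - \varepsilon^2/4 \leq \varepsilon$. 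Since $\tilde\psi^{AC} = U \cdot \psi^{AB}$, this is exactly the desired bound.

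The proof is essentially routine once Uhlmann and Fuchs--van de Graaf are in hand; the only mild subtlety is the distinction between \emph{partial} isometries (what Uhlmann delivers) and full isometries (what the statement asks for), which is handled by padding $U$ with any isometric extension on the orthogonal complement of $\supp(\psi^B)$. No step is really an obstacle here: the key conceptual point is simply that closeness of marginals implies the \emph{existence} of local rotations making the global purifications close, and the factor $2\sqrt{\varepsilon}$ blowup is the price paid for moving through the fidelity.
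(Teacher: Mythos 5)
Your proof is correct and follows exactly the same route as the paper's: left Fuchs--van de Graaf to pass from the trace-distance hypothesis to a fidelity bound on the marginals, Uhlmann's theorem to obtain the (partial) isometry and lift the fidelity bound to the purified states, then right Fuchs--van de Graaf to convert back to trace distance, with the $\sqrt{\varepsilon}$ coming from $1-(1-\varepsilon/2)^2 \leqslant \varepsilon$. The only difference is that you spell out the algebra and the partial-isometry-to-isometry padding, which the paper leaves implicit.
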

\begin{proof}
If $\| \psi^A - \varphi^A \|_1 \leqslant \varepsilon$, then by the Fuchs-van de Graaf inequalities \cite{fuchs-vdg} (Lemma \ref{lem:fuchs-vdg}) we have that $F(\psi^A, \varphi^A) \geqslant 1 - \demi \epsilon$. By Uhlmann's theorem, this means that there exists a partial isometry $U^{B \rightarrow C}$ such that $F(U^{B\rightarrow C} \cdot \psi^{AB}, \varphi^{AC}) \geqslant 1 - \demi \varepsilon$. A second application of the Fuchs-van de Graaf inequalities concludes the proof.
\end{proof}

\section{The decoupling theorem}\label{sec:bertha}
To be able to use Uhlmann's theorem to derive a coding scheme, we need a way to ensure that two quantum systems are nearly uncorrelated. The main theorem of this section will achieve this for us.

Suppose Alice holds the $A$ share of a mixed state $\rho^{AR}$. We would like to perform an operation on Alice's system to ensure that her share is decoupled from the reference. We will consider a very general operation: a fixed unitary transformation followed by an arbitrary completely positive superoperator $\mathcal{T}^{A \rightarrow E}$. We will show that if we choose the unitary transformation randomly according to the Haar measure (which can essentially be viewed as the uniform distribution over all unitaries), then the resulting protocol will on average perform well. This generalizes all of the decoupling theorems in the literature that the author is aware of, including the Fully Quantum Slepian-Wolf theorem \cite{FQSW}, which corresponds to the special case in which $\mathcal{T}$ traces out part of the system, as well as the state merging \cite{state-merging} theorem, in which $\mathcal{T}^{A \rightarrow EX}$ corresponds to making a rank-$|E|$ measurement and then storing the measurement result in the classical register $X$ and the residual quantum state in $E$. One advantage of this generalization is that it allows us to choose $\mathcal{T}$ to be a very complex operation; one especially interesting example is to pick $\mathcal{T}$ to be the complementary channel (the channel to the environment) of a channel we are interested in coding for. Another advantage is the use of (smooth) conditional 2-entropies rather than purities and dimension bounds as was done in all of these theorems (although, in the case of state merging, this was already done in \cite{diploma-berta} and \cite{merging-berta-etal}, and, in the case of FQSW, by Hayden in \cite{hayden-fqsw-2entropy}). This theorem allows to show directly that the environment is decoupled from any system of interest, which is usually what we need to show.

We will calculate how close the remaining state on $ER$ is to a product state in the main theorem of this section (Theorem \ref{thm:bertha}). To get to it, however, we will first need the following four technical lemmas. The first one is simply a trick that we will use to compute the trace of the square of a matrix:

\begin{lem}[Swap trick]\label{lem:swap-trick}
	Given two operators $M \in \LL(\mathsf{A})$ and $N \in \LL(\sfA)$, then $\tr[MN] = \tr[(M \otimes N) F]$, where $F$ swaps the two copies of the $A$ subsystem.
\end{lem}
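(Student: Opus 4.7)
The plan is to prove this identity by direct computation in the canonical basis of $\sfA$. First I would fix an orthonormal basis $\{\ket{i}\}$ for $\sfA$ and recall that the swap operator has the explicit representation
\[ F = \sum_{i,j} \ket{i}\bra{j} \otimes \ket{j}\bra{i}, \]
which can be verified by checking its action on product basis vectors $\ket{k}\ket{l}$.

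Next I would expand the product $(M \otimes N)F$ using this representation, obtaining
\[ (M \otimes N)F = \sum_{i,j} M\ket{i}\bra{j} \otimes N\ket{j}\bra{i}. \]
Taking the trace on $\sfA \otimes \sfA$ factors into a product of traces on each tensor factor:
\[ \tr\bigl[(M \otimes N)F\bigr] = \sum_{i,j} \tr\bigl[M\ket{i}\bra{j}\bigr] \, \tr\bigl[N\ket{j}\bra{i}\bigr] = \sum_{i,j} \bra{j}M\ket{i} \bra{i}N\ket{j}. \]
The sum over $i$ collapses by the resolution of the identity $\sum_i \ket{i}\bra{i} = \ident$, leaving $\sum_j \bra{j}MN\ket{j} = \tr[MN]$, which is the desired identity.

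There is no real obstacle here; the only thing to be careful about is the order of factors (since $M$ acts on the first copy and $N$ on the second, while $F$ swaps them, the composition $(M\otimes N)F$ pairs $M$'s row index with $N$'s column index in the right way to reproduce $\tr[MN]$ rather than $\tr[NM]$, though of course these are equal). I expect the proof to fit in a few lines.
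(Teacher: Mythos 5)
Your proof is correct and takes essentially the same approach as the paper: a direct computation in the canonical basis. The only cosmetic difference is that you expand the swap operator $F$ as a sum of rank-one tensors while the paper instead expands $M$ and $N$ in their matrix elements and lets $F$ act on the resulting kets and bras; both yield the same two-line verification.
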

\begin{proof}
Write $M$ and $N$ in the standard basis for $\sfA$: $M = \sum_{ij} m_{ij} \ket{i}\bra{j}$ and $N = \sum_{kl} n_{kl} \ket{k}\bra{l}$. Then,
	\begin{align}
		\tr[(M \otimes N) F] &= \tr\left[ \left( \sum_{ijkl} m_{ij} n_{kl} \ket{i}\bra{j} \otimes \ket{k}\bra{l} \right)F \right]\\
&= \tr\left[ \sum_{ijkl} m_{ij} n_{kl} \ket{i}\bra{l} \otimes \ket{k}\bra{j} \right]\\
&= \sum_{ij} m_{ij} n_{ji}\\
&= \tr[MN].
	\end{align}
\end{proof}

The second lemma involves averaging over Haar-distributed unitaries. While it would take us too far afield to formally introduce the Haar measure, it can simply be thought of as the uniform probability distribution over the set of all unitaries on a Hilbert space. The following then tells us the expected value of $U^{\otimes 2} \cdot M$ (with $M \in \LL(\sfA)$) when $U$ is selected ``uniformly at random'':
\begin{lem}\label{lem:haar-integral}
	Given an operator $M \in \LL(\sfA^{\otimes 2})$, we have that
	\begin{equation}
		\mbE(M) := \int_{\mbU(A)} (U^{\otimes 2} \cdot M)  dU = \alpha \ident^{A A'} + \beta F^{A}
	\end{equation}
	where $\alpha$ and $\beta$ are such that $\tr[M] = \alpha |A|^2 + \beta |A|$ and $\tr[MF] = \alpha |A| + \beta |A|^2$, and where $dU$ is the normalized Haar measure on $\mbU(A)$.
\end{lem}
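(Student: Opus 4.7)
The plan is to use the classical Schur-Weyl duality for two tensor copies of the defining representation, which says that the commutant of $\{U \otimes U : U \in \mbU(\sfA)\}$ inside $\LL(\sfA^{\otimes 2})$ is exactly the two-dimensional algebra spanned by the identity $\ident^{AA'}$ and the swap $F^A$. So the strategy has two ingredients: first show that $\mbE(M)$ lies in this commutant, then pin down the two coefficients by taking two independent traces.

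First I would note that by the left- and right-invariance of the Haar measure, for any fixed $V \in \mbU(\sfA)$ we have
\[ V^{\otimes 2} \cdot \mbE(M) = \int_{\mbU(\sfA)} (VU)^{\otimes 2} \cdot M \, dU = \int_{\mbU(\sfA)} U^{\otimes 2} \cdot M \, dU = \mbE(M), \]
so $\mbE(M)$ commutes with $V \otimes V$ for all unitaries $V$. Invoking Schur-Weyl (or equivalently, the fact that $\LL(\sfA^{\otimes 2})$ decomposes as the symmetric plus antisymmetric subspace, each being irreducible under $U \otimes U$) gives $\mbE(M) = \alpha \ident^{AA'} + \beta F^A$ for some scalars $\alpha, \beta \in \mbC$.

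To fix $\alpha$ and $\beta$ I would take the trace against $\ident$ and against $F$. Using cyclicity of the trace together with unitarity, $\tr[U^{\otimes 2} \cdot M] = \tr[M]$ for every $U$, so after integrating
\[ \tr[M] = \tr[\mbE(M)] = \alpha \tr[\ident^{AA'}] + \beta \tr[F^A] = \alpha |A|^2 + \beta |A|, \]
where I used the elementary identity $\tr[F] = |A|$ (expand $F = \sum_{i,j}\ket{ij}\bra{ji}$). Similarly, since $F$ commutes with $U \otimes U$, the trace $\tr[(U^{\otimes 2} \cdot M) F] = \tr[U^{\otimes 2} M U^{\otimes 2 \dag} F] = \tr[M F]$ is independent of $U$, giving
\[ \tr[MF] = \tr[\mbE(M) F] = \alpha \tr[F] + \beta \tr[F^2] = \alpha |A| + \beta |A|^2, \]
using $F^2 = \ident^{AA'}$. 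These are precisely the two equations claimed.

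I do not foresee any serious obstacle here; the only subtlety is justifying the Schur-Weyl step, which for two tensor copies is the smallest nontrivial case and can alternatively be proven directly by decomposing $\LL(\sfA^{\otimes 2})$ into the symmetric and antisymmetric subspaces (the two projectors being $\tfrac{1}{2}(\ident \pm F)$) and applying Schur's lemma to each irreducible component.
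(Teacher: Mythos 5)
Your proof is correct, and it follows essentially the same route as the paper: both arguments reduce to Schur--Weyl duality to identify $\mbE(M)$ as lying in $\Span\{\ident^{AA'}, F^A\}$, and then read off $\alpha, \beta$ from two trace conditions. The only difference is presentational: the paper cites Proposition~2.2 of Collins--\'{S}niady, which packages the statement that $\mbE$ is the orthogonal (with respect to the Hilbert--Schmidt inner product) projection onto $\Span\{\ident, F\}$, and obtains the trace conditions from the self-adjointness of that projection; you instead derive the commutant membership directly from the left-invariance of the Haar measure and then obtain the trace conditions by observing that $\tr[\cdot]$ and $\tr[\cdot\,F]$ are invariant under conjugation by $U^{\otimes 2}$ (by cyclicity and by $[F, U^{\otimes 2}] = 0$, respectively). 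Your version is more self-contained since it does not lean on an external reference, and the alternative justification via the symmetric/antisymmetric decomposition plus Schur's lemma is a clean way to avoid appealing to Schur--Weyl for general $n$. Either path is perfectly acceptable.
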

\begin{proof}
	This is a standard result in Schur-Weyl duality. This is a special case of, for instance, Proposition 2.2 in \cite{collins-sniady}. To see this, note that Proposition 2.2 states that $\mbE : \LL(\sfA^{\otimes 2}) \rightarrow \LL(\sfA^{\otimes 2})$ is an orthogonal projection onto $\Span\{ \ident, F \}$ under the inner product $\langle A,B \rangle = \tr[A\mdag B]$. Hence, $\mbE(M)$ can be written as $\alpha \ident^{AA'} + \beta F^A$ as claimed, and the conditions $\tr[\ident \mbE(M)] = \tr[M]$ and $\tr[F \mbE(M)] = \tr[FM]$ must be fulfilled, and these lead to the two conditions on $\alpha$ and $\beta$.
\end{proof}

The following bounds the ratio of the purity of a bipartite state and the purity of the reduced state on one subsystem:
\begin{lem}\label{lem:tr2-bounded}
	Let $\xi^{AB} \in \Pos(\sfA \otimes \sfB)$ be any positive semidefinite operator. Then
	\begin{equation}
		\frac{1}{|A|} \leqslant \frac{\tr\left[ {\xi^{AB}}^2 \right]}{\tr\left[ {\xi^B}^2 \right]} \leqslant |A|.
	\end{equation}
\end{lem}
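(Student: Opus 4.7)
The plan is to prove the two inequalities separately, observing that the lower bound amounts to a quantitative statement about how much the partial trace can expand the Hilbert-Schmidt norm, while the upper bound exploits the positivity of $\xi^{AB}$ in a more essential way.

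For the lower bound $\tr[(\xi^B)^2] \leq |A| \tr[(\xi^{AB})^2]$, I would first rewrite the left-hand side using the identity $\tr[(\xi^B)^2] = \tr[\xi^{AB}\,(\ident^A \otimes \xi^B)]$, which follows from the definition of the partial trace. Then applying Cauchy--Schwarz under the Hilbert--Schmidt inner product gives
\begin{equation*}
\tr[(\xi^B)^2] \;\leqslant\; \sqrt{\tr[(\xi^{AB})^2]}\cdot\sqrt{\tr[(\ident^A \otimes \xi^B)^2]} \;=\; \sqrt{|A|}\,\sqrt{\tr[(\xi^{AB})^2]\,\tr[(\xi^B)^2]},
\end{equation*}
where I used $\tr[(\ident^A \otimes \xi^B)^2] = |A|\,\tr[(\xi^B)^2]$. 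Squaring and dividing by $\tr[(\xi^B)^2]$ (or observing the trivial case when it vanishes) yields the desired inequality. Note that \emph{positivity of $\xi^{AB}$ is not needed here}; only the factor of $|A|$ picked up from inserting $\ident^A$ is.

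For the upper bound $\tr[(\xi^{AB})^2] \leqslant |A|\,\tr[(\xi^B)^2]$, I would use the spectral decomposition $\xi^{AB} = \sum_\alpha \lambda_\alpha \ketbra{v_\alpha}^{AB}$ with $\lambda_\alpha \geqslant 0$ and $\{\ket{v_\alpha}\}$ orthonormal, and let $\rho_\alpha^B := \tr_A[\ketbra{v_\alpha}]$. Each $\rho_\alpha^B$ is a density operator whose rank is at most the Schmidt rank of $\ket{v_\alpha}$, which is at most $|A|$; hence its purity satisfies $\tr[(\rho_\alpha^B)^2] \geqslant 1/|A|$. Expanding
\begin{equation*}
\tr[(\xi^B)^2] \;=\; \sum_{\alpha,\beta} \lambda_\alpha \lambda_\beta \,\tr[\rho_\alpha^B \rho_\beta^B]
\end{equation*}
and dropping the off-diagonal terms (which are non-negative because $\lambda_\alpha,\lambda_\beta \geqslant 0$ and $\tr[\rho_\alpha^B \rho_\beta^B] \geqslant 0$ for PSD operators) gives $\tr[(\xi^B)^2] \geqslant \sum_\alpha \lambda_\alpha^2 \,\tr[(\rho_\alpha^B)^2] \geqslant \frac{1}{|A|}\sum_\alpha \lambda_\alpha^2 = \frac{1}{|A|}\tr[(\xi^{AB})^2]$.

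The lower-bound side is essentially routine; the real content is the upper bound, and the step I expect to require the most care is justifying that the off-diagonal cross terms $\tr[\rho_\alpha^B \rho_\beta^B]$ can be safely dropped. This relies crucially on the positivity of $\xi^{AB}$ (so that all $\lambda_\alpha \geqslant 0$), without which a naive application would fail; indeed, the upper bound \emph{is} false for general Hermitian $\xi^{AB}$, so any proof must use PSD somewhere, and this is the cleanest place to invoke it.
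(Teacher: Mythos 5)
Your proposal is correct. For the lower bound you and the paper use essentially the same Cauchy--Schwarz idea; the paper phrases it by tensoring in an extra copy of $\sfA$ and applying Cauchy--Schwarz to two operators that both have Hilbert--Schmidt norm $\sqrt{|A|\tr[(\xi^{AB})^2]}$, whereas you apply Cauchy--Schwarz directly to $\xi^{AB}$ and $\ident^A\otimes\xi^B$ and then cancel a factor of $\sqrt{\tr[(\xi^B)^2]}$; the two computations are equivalent, and neither actually needs positivity of $\xi^{AB}$ (both go through for Hermitian $\xi^{AB}$, since $\tr[\xi^{AB}(\ident^A\otimes\xi^B)]=\tr[(\xi^B)^2]\geqslant 0$ even if $\xi^B$ is only Hermitian).

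For the upper bound you take a genuinely different route. The paper proves the operator inequality $\xi^{AB}\leqslant |A|\,\ident^A\otimes\xi^B$ by writing $|A|\,\ident^A\otimes\xi^B=\sum_{i=1}^{|A|^2} U_i^A\cdot\xi^{AB}$ with the $U_i$ a Weyl basis (so the right-hand side is a sum of PSD operators one of which is $\xi^{AB}$ itself), and then pairs this with $\xi^{AB}$ under the Hilbert--Schmidt inner product. You instead spectrally decompose $\xi^{AB}=\sum_\alpha\lambda_\alpha\ketbra{v_\alpha}$, use that each reduced state $\rho_\alpha^B=\tr_A[\ketbra{v_\alpha}]$ has rank at most $|A|$ and hence purity $\tr[(\rho_\alpha^B)^2]\geqslant 1/|A|$, and drop the nonnegative cross terms $\lambda_\alpha\lambda_\beta\tr[\rho_\alpha^B\rho_\beta^B]$. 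Both arguments are valid, and both invoke positivity of $\xi^{AB}$ at exactly the right place (the paper needs it for the operator inequality; you need it for $\lambda_\alpha\geqslant 0$ so that the cross terms can be discarded). Your version is more elementary and self-contained, since it does not require the twirling identity for the Weyl group; the paper's version is more compact and exposes the clean operator inequality $\xi^{AB}\leqslant |A|\,\ident^A\otimes\xi^B$, which is a useful fact in its own right (it is the purity version of the bound $2^{-H_{\min}(A|B)}\leqslant|A|$).
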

\begin{proof}
	Letting $A'$ be a system isomorphic to $A$, we first prove the left-hand side:
	\begin{align}
		\tr\left[ {\xi^B}^2 \right] &= \tr\left[ \tr_A\left[ \xi^{AB} \right]^2 \right]\\
		&= \tr\left[ \tr_A\left[ \xi^{AB} \right] \tr_{A'}\left[ \xi^{A'B} \right] \right]\\
		&= \tr\left[ \xi^{AB} \left( \tr_{A'}\left[ \xi^{A'B}\right]  \otimes \ident^A \right) \right]\\
		&= \tr\left[ (\xi^{AB} \otimes \ident^{A'})(\xi^{A'B}  \otimes \ident^A ) \right]\\
		&\leqslant \sqrt{\tr \left[ (\xi^{AB} \otimes \ident^{A'})^2 \right] \tr\left[ (\xi^{A'B} \otimes \ident^{A})^2 \right] }\\
		&= \tr\left[ {\xi^{AB}}^2 \otimes \ident^{A'} \right]\\
		&= |A| \tr\left[ {\xi^{AB}}^2 \right]
	\end{align}
	where the inequality is due to an application of Cauchy-Schwarz. The right-hand side follows from the fact that $\xi^{AB} \leqslant |A| \ident^A \otimes \xi^B$. This can in turn be seen from the fact that $|A| \ident^A \otimes \xi^B = \sum_{i=1}^{|A|^2} U_i^A \cdot \xi^{AB}$, where the $U_i$'s are Weyl operators with $U_1 = \ident$.
\end{proof}

In the main proof, we will need to bound the trace distance between two states using the 2-norm. The following lemma will allow us to do this:
\begin{lem}\label{lem:pseudo-jensen-renato}
Let $M \in \LL(\sfA)$ be any operator and let $\sigma \in \Pos(\sfA)$ be a positive definite operator. Then,
\begin{equation}
\| M \|_1 \leqslant \sqrt{\tr[\sigma] \tr[\sigma^{-1/4} M \sigma^{-1/2} M\mdag \sigma^{-1/4}]}.
\end{equation}
In particular, if $M$ is Hermitian, then
\begin{equation}
\| M \|_1 \leqslant \sqrt{\tr[\sigma] \tr[(\sigma^{-1/4} M \sigma^{-1/4})^2]}.
\end{equation}
\end{lem}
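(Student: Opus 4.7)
The plan is to recognize this as a three-term Hölder inequality for Schatten norms applied to a suitable ``twist'' of $M$ by powers of $\sigma$. Since $\sigma$ is positive definite, it is invertible, so we may write
\[ M = \sigma^{1/4} \widetilde{M} \sigma^{1/4}, \qquad \widetilde{M} := \sigma^{-1/4} M \sigma^{-1/4}. \]
The claim then reduces to
\[ \| \sigma^{1/4} \widetilde{M} \sigma^{1/4} \|_1 \leqslant \sqrt{\tr[\sigma]} \cdot \| \widetilde{M} \|_2, \]
which I would obtain from Hölder's inequality $\|XYZ\|_1 \leqslant \|X\|_p \|Y\|_q \|Z\|_r$ (for $1/p + 1/q + 1/r = 1$) with $X = Z = \sigma^{1/4}$, $Y = \widetilde{M}$ and exponents $(p,q,r) = (4,2,4)$, using that $\|\sigma^{1/4}\|_4 = (\tr[\sigma^{1/4 \cdot 4}])^{1/4} = (\tr[\sigma])^{1/4}$.

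Next I would unpack $\|\widetilde{M}\|_2^2$. By definition of the Hilbert--Schmidt norm,
\[ \|\widetilde{M}\|_2^2 = \tr[\widetilde{M}\mdag \widetilde{M}] = \tr[\sigma^{-1/4} M\mdag \sigma^{-1/2} M \sigma^{-1/4}]. \]
To bring this into the exact form stated in the lemma, I would use the elementary identity $\tr[A\mdag A] = \tr[A A\mdag]$ applied to $A = \widetilde{M}$, giving
\[ \|\widetilde{M}\|_2^2 = \tr[\widetilde{M} \widetilde{M}\mdag] = \tr[\sigma^{-1/4} M \sigma^{-1/2} M\mdag \sigma^{-1/4}]. \]
Squaring the Hölder bound then yields the first inequality in the lemma. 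For the Hermitian case, once $M = M\mdag$ the twisted operator $\widetilde{M} = \sigma^{-1/4} M \sigma^{-1/4}$ is also Hermitian, so $\widetilde{M} \widetilde{M}\mdag = \widetilde{M}^2$ and $\|\widetilde{M}\|_2^2$ collapses to $\tr[(\sigma^{-1/4} M \sigma^{-1/4})^2]$.

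The only real obstacle is whether the three-term Schatten Hölder inequality counts as a black box in this thesis. If not, I would replace step one by a Cauchy--Schwarz argument for the Hilbert--Schmidt inner product: writing $\|M\|_1 = \tr[V\mdag M]$ using the polar decomposition $M = V|M|$, then inserting $\ident = \sigma^{1/4} \sigma^{-1/4}$ on both sides of $M$ and regrouping as $\tr[(\sigma^{1/4} V \sigma^{1/4})\mdag \cdot (\sigma^{-1/4} M \sigma^{-1/4})]$, Cauchy--Schwarz reduces the task to the auxiliary bound $\|\sigma^{1/4} V \sigma^{1/4}\|_2^2 \leqslant \tr[\sigma]$. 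This in turn follows from $V\mdag V \leqslant \ident$ by a second application of Cauchy--Schwarz, or from a standard matrix-monotonicity argument. Either route is short; the Hölder presentation is cleaner, so that is what I would write up first.
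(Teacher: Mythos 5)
Your primary (Hölder) argument is correct and takes a genuinely different route from the paper's proof. The paper instead starts from the variational characterization $\|M\|_1 = \max_U |\tr[UM]|$ (its Lemma~\ref{lem:tracenorm-maxu}), inserts factors $\ident = \sigma^{1/4}\sigma^{-1/4}$ to regroup the trace as $\tr[(\sigma^{1/4}U\sigma^{1/4})(\sigma^{-1/4}M\sigma^{-1/4})]$, applies the Hilbert--Schmidt Cauchy--Schwarz inequality, and then bounds the residual factor $\max_U \tr[\sigma^{1/2}U\sigma^{1/2}U\mdag]$ by $\tr[\sigma]$ via a second Cauchy--Schwarz. Your three-term Schatten--Hölder inequality $\|XYZ\|_1 \leqslant \|X\|_4 \|Y\|_2 \|Z\|_4$ with $X = Z = \sigma^{1/4}$ collapses both Cauchy--Schwarz steps into one and is cleaner, at the cost of invoking a slightly heavier black box that the thesis does not otherwise state or cite; the iteration of two-term Hölder needed to justify it is standard but would need to be spelled out. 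Your fallback route (polar decomposition plus two applications of Cauchy--Schwarz) is essentially the paper's own proof, since polar decomposition and the variational formula $\|M\|_1 = \max_U |\tr[UM]|$ are interchangeable here. In either case your accounting is sound: $\|\sigma^{1/4}\|_4 = (\tr[\sigma])^{1/4}$, the identity $\tr[\widetilde{M}\mdag\widetilde{M}] = \tr[\widetilde{M}\widetilde{M}\mdag] = \tr[\sigma^{-1/4}M\sigma^{-1/2}M\mdag\sigma^{-1/4}]$ matches the stated bound exactly, and the Hermitian specialization follows immediately since then $\widetilde{M}$ is Hermitian.
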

This is a slight generalization of Lemma 5.1.3 in \cite{renner-phd}; we give a different proof here for completeness:
\begin{proof}
\begin{align}
\| M \|_1 &= \max_{U} \left| \tr[UM] \right|\\
&= \max_U \left| \tr[(\sigma^{1/4} U \sigma^{1/4})(\sigma^{-1/4} M \sigma^{-1/4})] \right|\\
&\leqslant \max_U \sqrt{\tr[(\sigma^{1/4} U \sigma^{1/4})(\sigma^{1/4} U\mdag (\sigma^{1/4})] \tr\left[ \sigma^{-1/4} M \sigma^{-1/2}  M\mdag \sigma^{-1/4}\right]}\\
&= \sqrt{\max_U \tr[\sigma^{1/2} U \sigma^{1/2} U\mdag] \tr\left[ \sigma^{-1/4} M \sigma^{-1/2} M\mdag \sigma^{-1/4} \right]}\\
&= \sqrt{\tr[\sigma] \tr\left[ \sigma^{-1/4} M \sigma^{-1/2} M\mdag \sigma^{-1/4} \right]}
\end{align}
where the first equality is an application of Lemma \ref{lem:tracenorm-maxu} and the inequality results from an application of Cauchy-Schwarz, and the maximizations are over all unitaries on $A$. The last equality follows from
\begin{align*}
\max_U \tr[\sigma^{1/2} U \sigma^{1/2} U\mdag] &\leqslant \max_U \sqrt{\tr[\sigma] \tr[U \sigma^{1/2} U\mdag U \sigma^{1/2} U\mdag]}\\
&= \tr[\sigma]\\
&\leqslant \max_U \tr[\sigma^{1/2} U \sigma^{1/2} U\mdag].
\end{align*}
\end{proof}

We are now ready to prove the main theorem:

\begin{thm}\label{thm:bertha}
	Let $\rho^{AR}$ be a density operator, $\mathcal{T}^{A \rightarrow E}$ be any completely positive superoperator, and define $\omega^{A'E} := (\mathcal{T} \otimes \ident^{A'})(\Phi^{AA'})$. Then,
\begin{equation}\label{eqn:bertha-trd}
	\int_{\mbU(A)} \left\| \mathcal{T}(U \cdot \rho^{AR}) - \omega^E \otimes \rho^R \right\|_1 dU \leqslant 2^{-\demi H_2(A'|E)_{\omega} - \demi H_2(A|R)_{\rho}}
\end{equation}
where $\int \cdot dU$ denotes the integral over the Haar measure over unitaries $U^A$ acting on $A$.
\end{thm}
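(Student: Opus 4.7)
The plan is to reduce the trace-norm bound to a second-moment calculation via Jensen's inequality, evaluate it using the Haar integral, and match the resulting expression with the 2-entropies. First I would apply Lemma~\ref{lem:pseudo-jensen-renato} with $\sigma = \sigma^E \otimes \sigma^R$, for density operators $\sigma^E \in \DD(\sfE)$ and $\sigma^R \in \DD(\sfR)$ to be chosen at the end as the minimizers in the definitions of $H_2(A'|E)_\omega$ and $H_2(A|R)_\rho$. This yields $\|\zeta\|_1 \leqslant \| X \zeta X \|_2$ with $X = (\sigma^E)^{-1/4} \otimes (\sigma^R)^{-1/4}$ and $\zeta := \mathcal{T}(U \cdot \rho^{AR}) - \omega^E \otimes \rho^R$. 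Concavity of $\sqrt{\cdot}$ then reduces the problem to bounding the second moment $\mbE_U \tr[(X \zeta X)^2]$.

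Using $\mbE_U (U \cdot \rho^{AR}) = \pi^A \otimes \rho^R$, one checks that $\mbE_U \xi = \omega^E \otimes \rho^R$ for $\xi := \mathcal{T}(U \cdot \rho^{AR})$, so the cross terms in the expansion cancel and we are left with
\[ \mbE_U \tr[(X \zeta X)^2] = \mbE_U \tr[X^2 \xi X^2 \xi] - \tr[(X^2 (\omega^E \otimes \rho^R))^2]. \]
The swap trick (Lemma~\ref{lem:swap-trick}), together with the fact that $X^2 \otimes X^2$ commutes with $F^{ER,E'R'}$, rewrites the first term as $\tr[F^{ER,E'R'}(X^2 \otimes X^2) \mbE_U (\xi \otimes \xi)]$, and Lemma~\ref{lem:haar-integral} gives
\[ \mbE_U[(U \otimes U)(\rho \otimes \rho)(U \otimes U)\mdag] = \ident^{AA'} \otimes r + F^{AA'} \otimes s \]
for explicit operators $r, s \in \LL(\sfR \otimes \sfR')$ determined by the two $AA'$-invariants $\rho^R \otimes \rho^{R'}$ and $\tr_A[\rho^{AR}\rho^{AR'}]$.

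The four resulting traces are evaluated via two identities. On the $R$ side, the swap trick directly gives $\tr[F^{RR'}((\sigma^R)^{-1/2} \otimes (\sigma^R)^{-1/2}) (\rho^R \otimes \rho^R)] = \tr[((\sigma^R)^{-1/2}\rho^R)^2]$, and the partial-traced invariant produces $\tr[((\sigma^R)^{-1/2}\rho^{AR})^2]$. On the $E$ side, a short Choi-state calculation using $\omega^{A'E} = \frac{1}{|A|}\sum_{ij}\ket{i}\bra{j}^{A'} \otimes \mathcal{T}(\ket{i}\bra{j})$ shows
\[ \tr[F^{EE'}((\sigma^E)^{-1/2} \otimes (\sigma^E)^{-1/2})(\mathcal{T} \otimes \mathcal{T})(F^{AA'})] = |A|^2 \tr[((\sigma^E)^{-1/2}\omega^{A'E})^2]. \]
Denoting these quantities $\tau_1, \tau_2, \omega_1, \omega_2$ (with subscript $2$ for the bipartite joint-system version), everything assembles into the compact product
\[ \mbE_U \tr[(X \zeta X)^2] = \frac{(|A| \omega_2 - \omega_1)(|A| \tau_2 - \tau_1)}{|A|^2 - 1}. \]

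To finish, I would apply Lemma~\ref{lem:tr2-bounded} to the positive operator $(\sigma^E)^{-1/4} \omega^{A'E} (\sigma^E)^{-1/4}$ to obtain $\omega_1 \geqslant \omega_2/|A|$, hence $|A|\omega_2 - \omega_1 \leqslant (|A|^2-1)\omega_2/|A|$, and analogously on the $R$ side. Multiplying yields $\mbE_U \tr[(X \zeta X)^2] \leqslant \omega_2 \tau_2 (|A|^2-1)/|A|^2 \leqslant \omega_2 \tau_2$. Choosing $\sigma^E$ and $\sigma^R$ as the minimizers in the respective 2-entropies and taking a square root delivers the claimed bound. The main obstacle I expect is the Choi-isomorphism identity relating $(\mathcal{T} \otimes \mathcal{T})(F^{AA'})$ to a quadratic in $\omega^{A'E}$, together with the bookkeeping needed to see all the Haar-integral pieces collapse into the clean product factorization above.
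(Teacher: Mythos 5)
Your proposal is correct and follows essentially the same route as the paper's proof: pseudo-Jensen to pass to a weighted 2-norm, Jensen to reduce to the second moment, the swap trick plus the Haar second-moment formula (Lemma~\ref{lem:haar-integral}) to evaluate it, and Lemma~\ref{lem:tr2-bounded} to bound the result, ending with the infimum defining the 2-entropies. The only cosmetic difference is that you first factor the exact second-moment expression into the product $(|A|\omega_2-\omega_1)(|A|\tau_2-\tau_1)/(|A|^2-1)$ and then bound each factor, whereas the paper bounds the Haar-integral coefficients $\alpha,\beta$ directly — both routes are equivalent and rely on the same lemma.
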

\begin{proof}
Throughout the proof, we will denote with a prime the ``twin'' subsystems used when we take tensor copies of operators, and $F^S$ denotes a swap between $S$ and $S'$.

We first use Lemma \ref{lem:pseudo-jensen-renato}. Letting $\sigma^E$ and $\zeta^R$ be any normalized, positive definite density matrices on $E$ and $R$ respectively, we get:
\begin{multline}
\left\| \mathcal{T}(U \cdot \rho^{AB}) - \omega^E \otimes \rho^R \right\|_1\\
\begin{split}
 &\leqslant \sqrt{\tr\left[ \left( (\sigma^E \otimes \zeta^R)^{-1/4} (\mathcal{T}(U \cdot \rho^{AR}) - \omega^E \otimes \rho^R)(\sigma^E \otimes \zeta^R)^{-1/4}\right)^2 \right]}.
\end{split}
\end{multline}
Define the CP map $\tilde{\mathcal{T}}^{A \rightarrow E}$ as $\tilde{\mathcal{T}}(\xi) = {\sigma^E}^{-1/4} \mathcal{T}(\xi) {\sigma^E}^{-1/4}$, the state $\tilde{\rho}^{AR}$ as $\tilde{\rho}^{AR} = {\zeta^R}^{-1/4} \rho^{AR} {\zeta^R}^{-1/4}$, and the state $\tilde{\omega}^{A'E}$ as $\tilde{\omega}^{A'E} = \tilde{\mathcal{T}}(\Phi^{A'A})$. We then rewrite the above as
\begin{equation}
\left\| \mathcal{T}(U \cdot \rho^{AR}) - \omega^E \otimes \rho^R \right\|_1 \leqslant \sqrt{\tr\left[ \left( (\mathcal{\tilde{T}}(U \cdot \tilde{\rho}^{AR}) - \tilde{\omega}^E \otimes \tilde{\rho}^R)\right)^2 \right]}.
\end{equation}
Using Jensen's inequality, we can get
\begin{equation}\label{eqn:bertha-inter1}
\int \left\| \mathcal{T}(U \cdot \rho^{AR}) - \omega^E \otimes \rho^R \right\|_1 dU \leqslant \sqrt{\int \tr\left[ \left( (\mathcal{\tilde{T}}(U \cdot \tilde{\rho}^{AR}) - \tilde{\omega}^E \otimes \tilde{\rho}^R)\right)^2 \right] dU}.
\end{equation}
We now simplify the integral:
\begin{multline}
	\int \tr \left[ \left( \tilde{\mathcal{T}}(U \cdot \tilde{\rho}^{AR}) - \tilde{\omega}^E \otimes \tilde{\rho}^R \right)^2 \right] dU\\
	\begin{split}
		&= \int \tr\left[ \left( \tilde{\mathcal{T}}(U \cdot \tilde{\rho}^{AR}) \right)^2 \right]dU - 2 \int \tr\left[ \tilde{\mathcal{T}}(U \cdot \tilde{\rho}^{AR}) \left( \tilde{\omega}^E \otimes \tilde{\rho}^R \right) \right]dU + \tr\left[ \left( \tilde{\omega}^E \otimes \tilde{\rho}^R \right)^2 \right]\\
		&= \int \tr\left[ \left( \tilde{\mathcal{T}}(U \cdot \tilde{\rho}^{AR}) \right)^2 \right]dU - 2 \tr\left[ \tilde{\mathcal{T}}\left(\int U \cdot \tilde{\rho}^{AR}dU\right)  \left( \tilde{\omega}^E \otimes \tilde{\rho}^R \right) \right] + \tr\left[ \left( \tilde{\omega}^E \otimes \tilde{\rho}^R \right)^2 \right]\\
		&= \int \tr\left[ \left( \tilde{\mathcal{T}}(U \cdot {\tilde{\rho}}^{AR}) \right)^2 \right]dU - \tr\left[ (\tilde{\omega}^E)^2 \right] \tr\left[ (\tilde{\rho}^R)^2 \right].
	\end{split}
\end{multline}

We attack the first term as follows:
\begin{multline}\label{eqn:attack-first-term}
	\int \tr\left[ \left( \tilde{\mathcal{T}}(U \cdot \tilde{\rho}^{AR}) \right)^2 \right]dU\\
\begin{split}
	&= \int \tr\left[ \left( \tilde{\mathcal{T}}(U \cdot \tilde{\rho}^{AR})  \right)^{\otimes 2} F^{ER}\right] dU\\
	&= \int \tr\left[ \left( \tilde{\mathcal{T}}^{\otimes 2}(U^{\otimes 2} \cdot (\tilde{\rho}^{AR})^{\otimes 2})  \right) F^{ER}\right] dU\\
	&= \int \tr\left[ (\tilde{\rho}^{AR})^{\otimes 2}  \left( \left\{ {U\mdag}^{\otimes 2} \cdot (\tilde{\mathcal{T}}\mdag)^{\otimes 2}(F^E) \right\} \otimes F^R \right) \right] dU\\
	&= \tr\left[ (\tilde{\rho}^{AR})^{\otimes 2} \left( \int \left\{ {U\mdag}^{\otimes 2} \cdot  (\tilde{\mathcal{T}}\mdag)^{\otimes 2}(F^{E})  \right\} dU \otimes F^R  \right)\right].\\
\end{split}
\end{multline}
where we have used Lemma \ref{lem:swap-trick} in the first equality, and the definition of the adjoint of a superoperator in the third equality. We now compute the integral using Lemma \ref{lem:haar-integral}:
\begin{equation}
	\int {U\mdag}^{\otimes 2} \cdot (\tilde{\mathcal{T}}\mdag)^{\otimes 2}(F^{E}) dU = \alpha \ident^{AA'} + \beta F^{A}
\end{equation}
where $\alpha$ and $\beta$ satisfy the following equations:
\begin{align}
	\alpha |A|^2 + \beta |A| &= \tr\left[ (\tilde{\mathcal{T}}\mdag)^{\otimes 2} (F^{E})  \right]\\
	&= \tr\left[ F^E (\tilde{\mathcal{T}})^{\otimes 2}(\ident^{AA'}) \right]\\
	&= |A|^2 \tr\left[ F^E (\tilde{\omega}^E)^{\otimes 2} \right]\\
	&= |A|^2 \tr\left[ (\tilde{\omega}^E)^2 \right]
\end{align}
and
\begin{align}
	\alpha |A| + \beta |A|^2 &= \tr\left[ (\tilde{\mathcal{T}}\mdag)^{\otimes 2}(F^{E})  F^A \right]\\
	&=  \tr\left[ F^{E} (\tilde{\mathcal{T}})^{\otimes 2}(F^A) \right]\\
	&= |A|^2 \tr\left[ F^E \tr_{AA'} \left[ (\tilde{\omega}^{AE})^{\otimes 2} (F^A \otimes \ident^{EE'}) \right] \right]\\
	&= |A|^2 \tr\left[ (\ident^{AA'} \otimes F^E) (\tilde{\omega}^{AE})^{\otimes 2}(F^A \otimes \ident^{EE'}) \right]\\
	&= |A|^2 \tr\left[ F^{AE} (\tilde{\omega}^{AE})^{\otimes 2} \right]\\
	&= |A|^2 \tr\left[ (\tilde{\omega}^{A'E})^2 \right].
\end{align}
where, $\tilde{\omega}^{AE}$ is simply $\tilde{\omega}^{A'E}$ with $A$ and $A'$ reversed. In the third equality, we have used the fact that $|A| \tilde{\omega}^{AE}$ is a Choi-Jamio\l{}kowski \cite{cj-choi,cj-jamiolkowski} representation of $\tilde{\mathcal{T}}$; the fourth equality is due to the fact that the adjoint of the partial trace is tensoring with the identity.

Solving this system of equations yields
\begin{align}
	\alpha &= \tr\left[ (\tilde{\omega}^E)^2 \right] \left( \frac{|A|^2 - \frac{|A|\tr\left[ (\tilde{\omega}^{A'E})^2 \right]}{\tr\left[ (\tilde{\omega}^E)^2 \right]}}{|A|^2-1} \right)\\
	\beta &= \tr\left[ (\tilde{\omega}^{A'E})^2 \right] \left( \frac{|A|^2 - \frac{|A|\tr\left[ (\tilde{\omega}^E)^2 \right]}{\tr\left[ (\tilde{\omega}^{A'E})^2 \right]}}{|A|^2-1} \right).
\end{align}
By applying Lemma \ref{lem:tr2-bounded}, we can simplify this to $\alpha \leqslant \tr\left[ (\tilde{\omega}^E)^2 \right]$ and $\beta \leqslant \tr\left[ (\tilde{\omega}^{A'E})^2 \right]$. Substituting this into (\ref{eqn:attack-first-term}) and using Lemma \ref{lem:swap-trick} twice, and then substituting into (\ref{eqn:bertha-inter1}) yields
\begin{equation}
\int \left\| \mathcal{T}(U \cdot \rho^{AR}) - \omega^E \otimes \rho^R \right\|_1 dU \leqslant \sqrt{\tr\left[ (\tilde{\omega}^{A'E})^2 \right] \tr\left[ (\tilde{\rho}^{AR})^2 \right]}.
\end{equation}

We then get the theorem by using the definitions of $\tilde{\omega}$, $\tilde{\rho}$ and the definition of $H_2$.
\end{proof}

We now prove a version of the theorem that allows us to replace the $H_2$ in the upper bound by the smoothed versions of $H_2$. Among other things, this allows us to use the fully quantum AEP (Theorem \ref{thm:fully-quantum-aep}) and therefore to use the theorem directly on i.i.d.\ states and channels.
\begin{thm}\label{thm:smooth-bertha}
	Let $\rho^{AR}$ be a density operator, $\mathcal{T}^{A \rightarrow E}$ be any completely positive superoperator, let $\omega^{A'E} = (\mathcal{T} \otimes \ident^{A'})(\Phi^{AA'})$, and let $\varepsilon > 0$. Then,
\begin{equation}
	\int_{\mbU(A)} \left\| \mathcal{T}(U \cdot \rho^{AR}) - \omega^E \otimes \rho^R \right\|_1 dU \leqslant 2^{-\demi H_{2}^{\varepsilon}(A'|E)_{\omega} - \demi H_{2}^{\varepsilon}(A|R)_{\rho}} + 8 \varepsilon
\end{equation}
where $\int \cdot dU$ denotes the integral over the Haar measure on all unitaries $U^A$.
\end{thm}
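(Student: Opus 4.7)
The strategy is a double-smoothing reduction to the unsmoothed Theorem \ref{thm:bertha}. First, pick $\tilde{\rho}^{AR}\in\mfB(\rho,\varepsilon)$ and $\tilde{\omega}^{A'E}\in\mfB(\omega,\varepsilon)$ attaining the maxima defining $H_2^\varepsilon(A|R)_\rho$ and $H_2^\varepsilon(A'|E)_\omega$ respectively. The Fuchs--van de Graaf inequalities (Lemma \ref{lem:fuchs-vdg}) give both $\|\rho-\tilde{\rho}\|_1\leqslant 2\varepsilon$ and $\|\omega-\tilde{\omega}\|_1\leqslant 2\varepsilon$. Via the Choi--Jamio\l{}kowski isomorphism, let $\tilde{\mathcal{T}}^{A\to E}$ be the CP map whose Choi state is $\tilde\omega$, i.e.\ $(\tilde{\mathcal{T}}\otimes\ident^{A'})(\Phi^{AA'})=\tilde{\omega}^{A'E}$. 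Applying Theorem \ref{thm:bertha} to the pair $(\tilde{\mathcal{T}},\tilde{\rho})$ yields the principal term
\begin{equation*}
\int_{\mbU(A)} \left\| \tilde{\mathcal{T}}(U\cdot\tilde{\rho}^{AR}) - \tilde{\omega}^E\otimes\tilde{\rho}^R \right\|_1 dU \leqslant 2^{-\demi H_2^{\varepsilon}(A'|E)_{\omega}-\demi H_2^{\varepsilon}(A|R)_{\rho}}.
\end{equation*}

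Next, apply the triangle inequality under the integral, decomposing $\mathcal{T}(U\cdot\rho)-\omega^E\otimes\rho^R$ as the sum of three pieces: $[\mathcal{T}(U\cdot\rho)-\tilde{\mathcal{T}}(U\cdot\tilde{\rho})]$, $[\tilde{\mathcal{T}}(U\cdot\tilde{\rho})-\tilde{\omega}^E\otimes\tilde{\rho}^R]$ and $[\tilde{\omega}^E\otimes\tilde{\rho}^R-\omega^E\otimes\rho^R]$. The middle piece integrates to at most the bound just obtained. The third piece is $U$-independent; writing it as $(\tilde{\omega}^E-\omega^E)\otimes\tilde{\rho}^R+\omega^E\otimes(\tilde{\rho}^R-\rho^R)$ and using that the reduced states have trace at most one bounds it by $4\varepsilon$.

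It remains to bound the first piece $\int\|\mathcal{T}(U\cdot\rho)-\tilde{\mathcal{T}}(U\cdot\tilde{\rho})\|_1\,dU$ by $4\varepsilon$. I split it once more as $\|\mathcal{T}(U\cdot(\rho-\tilde{\rho}))\|_1+\|(\mathcal{T}-\tilde{\mathcal{T}})(U\cdot\tilde{\rho})\|_1$. The first summand is at most $\|\rho-\tilde{\rho}\|_1\leqslant 2\varepsilon$ by the contractivity of the trace norm on Hermitian inputs under trace-non-increasing CP maps. The second summand is the main obstacle: the naive estimate via the Choi--Jamio\l{}kowski correspondence carries an unwanted factor of $|A|$. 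The fix is to exploit the Haar average already present on the left-hand side: since $\mbE_U[U\cdot\tilde{\rho}^{AR}]=\pi^A\otimes\tilde{\rho}^R$, the $U$-averaged image $\mbE_U[(\mathcal{T}-\tilde{\mathcal{T}})(U\cdot\tilde{\rho})]=(\omega^E-\tilde{\omega}^E)\otimes\tilde{\rho}^R$ has trace norm at most $2\varepsilon$; a variant of the Haar-integral calculation driving the proof of Theorem \ref{thm:bertha}, adapted to the Hermiticity-preserving (but not necessarily CP) superoperator $\mathcal{T}-\tilde{\mathcal{T}}$, upgrades this to $\int\|(\mathcal{T}-\tilde{\mathcal{T}})(U\cdot\tilde{\rho})\|_1\,dU\leqslant 2\varepsilon$. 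Summing the three contributions gives the advertised additive error $2\varepsilon+2\varepsilon+4\varepsilon=8\varepsilon$.
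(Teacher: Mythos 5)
Your skeleton is right (smooth both arguments, invoke Theorem \ref{thm:bertha}, triangle inequality), but there is a genuine gap at exactly the step you flag as ``the main obstacle,'' and the proposed fix does not close it. You want to bound $\int\|(\mathcal{T}-\tilde{\mathcal{T}})(U\cdot\tilde{\rho})\|_1\,dU$, and you observe that the \emph{average} $\mbE_U[(\mathcal{T}-\tilde{\mathcal{T}})(U\cdot\tilde{\rho})]$ has small trace norm. But $\int\|f(U)\|_1\,dU \geqslant \|\int f(U)\,dU\|_1$ with potentially a large gap; moving the average inside the norm is only free when $f(U)$ is positive, because then $\|\cdot\|_1 = \tr[\cdot]$ is linear. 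For a generic Hermiticity-preserving $\mathcal{T}-\tilde{\mathcal{T}}$ this fails, and the ``variant of the Haar-integral calculation'' you appeal to does not rescue it: running the $2$-norm machinery of Theorem \ref{thm:bertha} on $\mathcal{T}-\tilde{\mathcal{T}}$ (or its CP pieces $\Delta_\pm$) produces a term $2^{-\demi H_2(A'|E)_{\Delta_\pm}-\demi H_2(A|R)_{\tilde\rho}}$ in which neither exponent is controlled by $\varepsilon$ — $\Delta_\pm$ has small trace but could have any conditional collision entropy, and $H_2(A|R)_{\tilde\rho}$ can be as negative as $-\log|A|$. So the additive error is not $O(\varepsilon)$ by this route.

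The missing idea is a \emph{one-sided truncation} of the smoother. Instead of using the $H_2$-optimal $\tilde{\omega}\in\mfB(\omega,\varepsilon)$ directly as the Choi state of $\tilde{\mathcal{T}}$, write $\tilde{\omega}-\omega = \Delta_+-\Delta_-$ with $\Delta_\pm\geqslant 0$ of orthogonal support and set $\widehat{\omega}':=\omega-\Delta_-$. Then $\widehat{\omega}'\leqslant\omega$ and $\widehat{\omega}'\leqslant\tilde{\omega}$ simultaneously. The first inequality, via Lemma \ref{lem:fiou2}, lets one realize $\widehat{\omega}'$ as $\tr_C[P\,U_{\mathcal{T}}\cdot\Phi^{AA'}]$ for some $0\leqslant P\leqslant\ident$ on the Stinespring environment, so the associated map $\widehat{\mathcal{T}}(\xi)=\tr_C[P\,U_{\mathcal{T}}\cdot\xi]$ is CP \emph{and dominated by} $\mathcal{T}$, hence $\mathcal{T}-\widehat{\mathcal{T}}$ is CP. That positivity is exactly what makes $\int\|(\mathcal{T}-\widehat{\mathcal{T}})(U\cdot\rho)\|_1\,dU = \tr[(\mathcal{T}-\widehat{\mathcal{T}})(\pi^A)\otimes\rho^R] = \tr[\omega-\widehat{\omega}']\leqslant 2\varepsilon$: trace is linear, so it commutes with the Haar integral. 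The second inequality $\widehat{\omega}'\leqslant\tilde{\omega}$ guarantees $H_2(A'|E)_{\widehat{\omega}'}\geqslant H_2(A'|E)_{\tilde{\omega}}=H_2^{\varepsilon}(A'|E)_{\omega}$, so no entropy is lost. Without this truncation device your estimate of the key cross term is unjustified, and I do not see how to repair the argument without effectively rediscovering it.

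(A secondary, smaller concern: the claim that $\|\mathcal{T}(U\cdot(\rho-\tilde{\rho}))\|_1\leqslant\|\rho-\tilde{\rho}\|_1$ invokes contractivity of $\mathcal{T}$, but the theorem only assumes $\mathcal{T}$ is CP, not trace-non-increasing. In the truncation approach this step is applied to $\widehat{\mathcal{T}}$ rather than $\mathcal{T}$, for which the needed operator bound $P^2\leqslant\ident$ is available by construction.)
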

\begin{proof}
	Let $U_{\mathcal{T}}^{A \rightarrow CE}$ be a Stinespring extension of $\mathcal{T}$, and let $\widehat{\omega}^{A'E}$ be such that $d_F(\widehat{\omega}, \omega) \leqslant \varepsilon$ and $H_{2}(A'|E)_{\widehat{\omega}} = H_{2}^{\varepsilon}(A'|E)_{\omega}$. Also, let $\widehat{\rho}^{AR}$ be such that $d_F(\widehat{\rho}, \rho) \leqslant \varepsilon$ and $H_2(A|R)_{\widehat{\rho}} = H_2^{\varepsilon}(A|R)_{\rho}$. Write $\widehat{\omega} - \omega = \Delta_{+} - \Delta_{-}$ where $\Delta_{\pm} \in \Pos(\sfA' \otimes \sfE)$ have orthogonal support. Since $d_F(\widehat{\omega}, \omega) \leqslant \varepsilon$, $\| \widehat{\omega} - \omega \|_1 \leqslant 2\varepsilon$ (see Lemma \ref{lem:fuchs-vdg}) and $\left\| \Delta_{\pm} \right\|_1 \leqslant 2\varepsilon$. We now define $\widehat{\omega}' := \omega - \Delta_{-}$. By the definition of $H_2$ and the fact that $\widehat{\omega}' \leqslant \widehat{\omega}$, we have that $H_2(A'|E)_{\widehat{\omega}'} \geqslant H_2(A'|E)_{\widehat{\omega}}$.

	Let $P^C \leqslant \ident^C$ be a positive semidefinite operator such that $\tr_C[P U_{\mathcal{T}} \cdot \Phi^{AA'}] = \widehat{\omega}'$ (whose existence is guaranteed by Lemma \ref{lem:fiou2}, since $\widehat{\omega}' \leqslant \omega$) and define $\widehat{\mathcal{T}}(\xi) = \tr_C[P U_{\mathcal{T}} \cdot \xi]$. Then, using the previous theorem, we get
\begin{multline*}
2^{-\demi H_{2}^{\varepsilon}(A'|E)_{\omega} - \demi H_{2}^{\varepsilon}(A|R)_{\rho}}\\
\begin{split}
&\geqslant \int_{\mbU(A)} \left\| \widehat{\mathcal{T}}(U \cdot \widehat{\rho}^{AR}) - \widehat{\omega}'^E \otimes \widehat{\rho}^R \right\|_1 dU\\
&\geqslant \int_{\mbU(A)} \left\| \widehat{\mathcal{T}}(U \cdot \rho^{AR}) - \omega^E \otimes \rho^R \right\|_1 dU - 6\varepsilon.\\
&\geqslant \int_{\mbU(A)} \left\| \mathcal{T}(U \cdot \rho^{AR}) - \omega^E \otimes \rho^R \right\|_1 dU - \int_{\mbU(A)} \left\| \mathcal{T}(U \cdot \rho^{AR}) - \widehat{\mathcal{T}}(U \cdot \rho^{AR}) \right\|_1 - 6\varepsilon.\\
\end{split}
\end{multline*}
We now deal with the second term above:
\begin{align*}
	\int_{\mbU(A)} \left\| \mathcal{T}(U \cdot \rho^{AR}) - \widehat{\mathcal{T}}(U \cdot \rho^{AR}) \right\|_1 dU &= \int \left\| \tr_C\left[ (\ident^C - {P^C}^2)(U_{\mathcal{T}}U \cdot \rho^{AR}) \right] \right\|_1 dU\\
	&= \int \tr\left[ (\ident^C - {P^C}^2)(U_{\mathcal{T}}U \cdot \rho^{AR}) \right] dU\\
	&= \tr\left[ (\ident^C - {P^C}^2) \left(U_{\mathcal{T}} \cdot \int U \cdot \rho^{AR} dU \right) \right]\\
	&= \tr\left[ (\ident^C - {P^C}^2) \left(U_{\mathcal{T}} \cdot \pi^A \otimes \rho^R \right) \right]\\
&= \tr\left[ \omega - \widehat{\omega}' \right]\\
&\leqslant 2\varepsilon,
\end{align*}
where the second equality follows from the fact that $P^C \leqslant \ident^C$. This results in:
\begin{equation*}
\int_{\mbU(A)} \left\| \mathcal{T}(U \cdot \rho^{AR}) - \omega^E \otimes \rho^R \right\|_1 dU \leqslant 2^{-\demi H_{2}^{\varepsilon}(A'|E)_{\omega} - \demi H_{2}^{\varepsilon}(A|R)_{\rho}} + 8\varepsilon
\end{equation*}
which concludes the proof.
\end{proof}

It is also possible to show that, with very high probability, the value of the left-hand side in Theorem \ref{thm:bertha} is very close to its expected value. This is shown in the next theorem. First, however, we must define the \emph{Lipschitz constant} of a function:

\begin{defin}[Lipschitz constant]
Let $f: \mfX \rightarrow \mfY$ be a function from the set $\mfX$ to the set $\mfY$ endowed with distance measures $d_{\mfX}$ and $d_{\mfY}$. Then, the Lipschitz constant of $f$ is defined as
\[ \sup_{x_1, x_2 \in \mfX} \frac{d_{\mfY}(f(x_1), f(x_2))}{d_{\mfX}(x_1,x_2)}. \]
If the above quantity is not bounded, the constant is not defined.
\end{defin}

\begin{thm}\label{thm:bertha-concentration}
In the scenario described in the statement of Theorem \ref{thm:bertha}, we have that
\begin{equation}
\Pr \left\{ \left\| \mathcal{T}(U \cdot \rho^{AR}) - \omega^E \otimes \rho^R \right\|_1  \geqslant 2^{-\demi H_2(A'|E)_{\omega} - \demi H_2(A|R)_{\rho}} + r \right\} \leqslant 2e^{-\frac{|A|r^2}{16K^2\|\rho^A\|_{\infty}}}
\end{equation}
where $K = \max \{ \| \mathcal{T}(X) \|_1 : X \in \Herm(\sfA), \|X\|_1 \leqslant 1 \}$, and where the probability is computed over the choice of $U$.
\end{thm}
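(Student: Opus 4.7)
The plan is to combine the expectation bound already established in Theorem~\ref{thm:bertha} with a concentration-of-measure argument on the unitary group $\mbU(A)$, in the spirit of Lévy's lemma. Concretely, let $f(U) := \left\| \mathcal{T}(U \cdot \rho^{AR}) - \omega^E \otimes \rho^R \right\|_1$. Theorem~\ref{thm:bertha} already tells us that $\mbE_U[f(U)] \leqslant 2^{-\demi H_2(A'|E)_{\omega} - \demi H_2(A|R)_{\rho}}$, so it suffices to show that $f$ is Lipschitz with constant $L \leqslant 2K\sqrt{\|\rho^A\|_{\infty}}$ with respect to the Hilbert--Schmidt distance on $\mbU(A)$, and then invoke Lévy's lemma on the unitary group, which states that an $L$-Lipschitz function $g: \mbU(n) \to \mbR$ satisfies $\Pr\{ g(U) - \mbE[g(U)] \geqslant t \} \leqslant 2\exp(-n t^2 / (4 L^2))$.

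To bound the Lipschitz constant, I would fix two unitaries $U, V \in \mbU(A)$ and write, using the triangle inequality for $\|\cdot\|_1$, the bound $|f(U) - f(V)| \leqslant \|\mathcal{T}(U \cdot \rho^{AR} - V \cdot \rho^{AR})\|_1$. The argument of $\mathcal{T}$ is Hermitian, so by the definition of $K$ we get $|f(U) - f(V)| \leqslant K \| U \cdot \rho^{AR} - V \cdot \rho^{AR}\|_1$. Next, decompose
\[ U \rho^{AR} U\mdag - V \rho^{AR} V\mdag = (U - V)\rho^{AR} U\mdag + V \rho^{AR} (U - V)\mdag, \]
and bound each summand by Hölder's inequality $\|XY\|_1 \leqslant \|X\|_2 \|Y\|_2$:
\[ \left\| (U - V) \rho^{AR} U\mdag \right\|_1 \leqslant \left\| (U - V)\sqrt{\rho^{AR}} \right\|_2 \cdot \left\| \sqrt{\rho^{AR}} U\mdag \right\|_2. \]
The second factor equals $\sqrt{\tr[\rho^{AR}]} = 1$, while the first factor squared is $\tr[(U-V)\mdag(U-V) \rho^A] \leqslant \|(U-V)\mdag(U-V)\|_1 \|\rho^A\|_{\infty} = \|U - V\|_2^2 \|\rho^A\|_{\infty}$, since $(U-V)$ acts only on $A$ so the partial trace over $R$ leaves $\rho^A$. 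The symmetric term is bounded identically, yielding $|f(U) - f(V)| \leqslant 2K \sqrt{\|\rho^A\|_{\infty}} \cdot \|U - V\|_2$, hence $L \leqslant 2K\sqrt{\|\rho^A\|_{\infty}}$.

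Finally, applying Lévy's lemma on $\mbU(|A|)$ with $n = |A|$, $L = 2K\sqrt{\|\rho^A\|_{\infty}}$, and $t = r$, and combining with the expectation bound, produces the stated probability bound $2 \exp\bigl(-|A| r^2 / (16 K^2 \|\rho^A\|_{\infty}) \bigr)$, with the constant $16 = 4 \cdot (2K\sqrt{\|\rho^A\|_{\infty}})^2 / (K^2 \|\rho^A\|_{\infty})$ coming out correctly.

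The main obstacle is the Lipschitz bound: one has to be careful to extract the square-root dependence on $\|\rho^A\|_{\infty}$ (rather than, say, a trivial bound of the form $2K$ that would give a useless exponent when $|A|$ is large), and to ensure that the Hermiticity of $U \rho^{AR} U\mdag - V \rho^{AR} V\mdag$ is invoked so that the definition of $K$ applies directly without losing a factor of $2$ from splitting into real and imaginary parts. A secondary technical point is picking the precise version of Lévy's lemma on $\mbU(n)$ (as opposed to the sphere $S^{2n-1}$) so that the constant $4$ in the exponent matches, which can be cited from standard references on concentration for compact groups.
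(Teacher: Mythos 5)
Your proof is correct and follows the same high-level architecture as the paper's---combine the expectation bound from Theorem~\ref{thm:bertha} with a Lipschitz bound on $f(U) = \| \mathcal{T}(U \cdot \rho^{AR}) - \omega^E \otimes \rho^R \|_1$ and then invoke concentration of measure on $\mbU(|A|)$ (the paper cites Corollary~4.4.28 of Anderson--Guionnet--Zeitouni, which is exactly the version of L\'evy's lemma on the unitary group you want). Where you differ, genuinely, is in how you extract the Lipschitz constant $2K\sqrt{\|\rho^A\|_\infty}$. The paper purifies $\rho^{AR}$ to a vector $\ket{\rho}^{ABR}$, applies the pure-state bound $\|\psi - \varphi\|_1 \leqslant 2 \|\ket{\psi} - \ket{\varphi}\|_2$ (Lemma~\ref{lem:onenorm-twonorm}), and then converts $\|(U - V)\ket{\rho}\|_2$ to $\|U - V\|_2 \sqrt{\|\rho^A\|_\infty}$ via the $\op$ transformation and Lemma~\ref{lem:norm-prod-matrices}; the factor of $2$ comes from the pure-state inequality. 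You instead telescope $U\rho U\mdag - V\rho V\mdag = (U - V)\rho U\mdag + V\rho(U - V)\mdag$ and bound each summand by Cauchy--Schwarz for Schatten norms, $\|XY\|_1 \leqslant \|X\|_2 \|Y\|_2$, which gives the same $\sqrt{\|\rho^A\|_\infty}$ dependence after a partial trace over $R$; your factor of $2$ comes from the two terms of the telescope. Your route is somewhat more elementary: it avoids the purification, the pure-state trace-to-Euclidean-norm lemma, and the $\op$/$\vect$ machinery entirely, staying at the level of standard matrix H\"older inequalities, and the two methods arrive at precisely the same constant. One small point both arguments leave implicit: passing from $\|\mathcal{T}(X)\|_1 \leqslant K\|X\|_1$ on $\Herm(\sfA)$ to the same bound for $\mathcal{T} \otimes \ident^R$ on $\Herm(\sfA \otimes \sfR)$ is valid here because $\mathcal{T}$ is completely positive, which forces $K = \|\mathcal{T}\mdag(\ident)\|_\infty$ and makes the Hermitian $1 \to 1$ norm stable under tensoring with the identity; it would be worth a remark since it is not true for arbitrary Hermiticity-preserving maps.
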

\begin{proof}
This is a corollary of Corollary 4.4.28 in \cite{zeitouni-random-matrices}, which states that, for a $c$-Lipschitz function $f : \mbU(A) \rightarrow \mbR$,
\begin{equation}
\Pr_U \left\{ \left| f(U) - \mbE f \right| \geqslant \delta \right\} \leqslant 2e^{-|A| \delta^2/4c^2}.
\end{equation}
We are interested in the function $f(U) = \left\| \mathcal{T}(U \cdot \rho^{AR}) - \omega^E \otimes \rho^R \right\|_1$; we therefore need to bound its Lipschitz constant. Let $\ket{\rho}^{ABR}$ be a purification of $\rho$ and, without loss of generality, assume $f(U) \geqslant f(V)$, and
\begin{align*}
f(U) - f(V) &= \left\| \mathcal{T}(U \cdot \rho^{AR}) - \omega^E \otimes \rho^R \right\|_1 - \left\| \mathcal{T}(V \cdot \rho^{AR}) - \omega^E \otimes \rho^R \right\|_1\\
&\leqslant \left\| \mathcal{T}(U \cdot \rho^{AR}) - \mathcal{T}(V \cdot \rho^{AR}) \right\|_1\\
&\leqslant K \left\| U \cdot \rho^{ABR} - V\cdot \rho^{ABR} \right\|_1\\
&\leqslant 2K\left\| (U - V) \ket{\rho}^{ABR} \right\|_2\\
&= 2K \left\| \op_{BR \rightarrow A}( (U-V) \ket{\rho}^{ABR}) \right\|_2\\
&= 2K\left\| (U - V) \op_{BR \rightarrow A}(\ket{\rho}^{ABR}) \right\|_2\\
&\leqslant 2K\left\| U - V \right\|_2 \left\| \op_{BR \rightarrow A}(\ket{\rho}^{ABR}) \right\|_{\infty}\\
&= 2K\left\| U - V \right\|_2 \sqrt{\left\| \rho^A \right\|_{\infty}}.
\end{align*}
where the third inequality comes from Lemma \ref{lem:onenorm-twonorm}, the last inequality is an application of Lemma \ref{lem:norm-prod-matrices}, and $\| \cdot \|_{\infty}$ denotes the largest singular value of a matrix. Hence, the Lipschitz constant of $f$ is upper bounded by $2K \sqrt{\| \rho^A \|_{\infty}}$ and the theorem follows.
\end{proof}

All of the constructions given in this section involve selecting unitary operators randomly according to the Haar measure. This wouldn't be very practical in real life: there is no guarantee that a matrix chosen this way could be implemented efficiently by a quantum circuit; in fact there is an extremely high chance that it wouldn't be. However, for most of the above theorems, there is a way out of this: apart from Theorem \ref{thm:bertha-concentration}, the Haar measure can be replaced in all of the above theorems by a \emph{unitary 2-design}, which is defined as follows:

\begin{defin}[Unitary 2-design]
We call a finite set of unitaries $\mfD \subset \mbU(A)$ a \emph{unitary 2-design} if
\[ \frac{1}{|\mfD|} \sum_{U \in \mfD} U^{\otimes 2} \cdot M = \int_{\mbU(A)} (U^{\otimes 2} \cdot M) dU \]
for every $M \in \LL(\sfA^{\otimes 2})$, where the integral is taken over the Haar measure on $\mbU(A)$.
\end{defin}

Since all of the theorems of this section with the exception of Theorem \ref{thm:bertha-concentration} only involve the Haar measure in integrals of this type, the Haar measure can be replaced by a unitary 2-design without affecting the rest of the theorem statements. An example of a unitary 2-design is the Clifford group (the group of unitaries that take Pauli operators to Pauli operators) \cite{quantum-data-hiding}.

For Theorem \ref{thm:bertha-concentration}, which makes use of the concentration properties of the Haar measure, we do not yet know how to replace the Haar measure by something constructive.

\section{Corollaries of the decoupling theorem}
As mentioned previously, many well-known results can be shown to be special cases of this theorem, including the Fully Quantum Slepian-Wolf theorem \cite{FQSW}, as well as state merging \cite{state-merging}. We present them here for completeness:

\begin{cor}[FQSW \cite{FQSW}]\label{cor:fqsw}
	Let $\rho^{AR}$ be a mixed state, and $\sfA = \sfA_1 \otimes \sfA_2$. Then, we have that
	\begin{equation}
		\int \left\| \tr_{A_2}\left[ U \cdot \rho^{AR} \right] - \pi^{A_1} \otimes \rho^R \right\|_1 dU \leqslant \sqrt{ \frac{|A_1|}{|A_2|} 2^{-H_2(A|R)_{\rho}}}
	\end{equation}
	where the integral is over the Haar measure on $\mbU(A)$, and $\pi^{A_1}$ denotes the completely mixed state on $A_1$.
\end{cor}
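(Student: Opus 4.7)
The plan is to view FQSW as a direct instance of the decoupling theorem with the channel $\mathcal{T}^{A \rightarrow E}$ chosen to be the partial trace over $A_2$, so that $E \cong A_1$. Under this choice, the left-hand side of Theorem~\ref{thm:bertha} becomes exactly the quantity $\|\tr_{A_2}[U \cdot \rho^{AR}] - \omega^{A_1} \otimes \rho^R\|_1$ integrated over the Haar measure, so all that remains is to identify $\omega^E$ and to bound $H_2(A'|E)_{\omega}$ from below.

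First I would compute $\omega^{A'E} = (\tr_{A_2} \otimes \ident^{A'})(\Phi^{AA'})$. Writing $\Phi^{AA'}$ in the product basis coming from the decomposition $\sfA = \sfA_1 \otimes \sfA_2$ and tracing out $A_2$ yields
\[ \omega^{A' A_1} = \frac{1}{|A|}\sum_{i_2}\ket{\psi_{i_2}}\bra{\psi_{i_2}}, \qquad \ket{\psi_{i_2}} := \sum_{i_1} \ket{i_1 i_2}^{A'} \ket{i_1}^{A_1}. \]
A direct calculation using $\braket{\psi_{i_2}}{\psi_{j_2}} = |A_1|\,\delta_{i_2 j_2}$ gives $\omega^{A_1} = \pi^{A_1}$ (so the target product state is exactly the one advertised in the corollary) and $\tr[(\omega^{A'A_1})^2] = 1/|A_2|$.

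Next I would upper bound $2^{-H_2(A'|E)_{\omega}}$ by plugging the single candidate $\sigma^E = \pi^{A_1}$ into the infimum appearing in the definition of $H_2$. Since $(\pi^{A_1})^{-1/2} = \sqrt{|A_1|}\,\ident^{A_1}$, this choice gives
\[ \tr\bigl[\bigl((\pi^{A_1} \otimes \ident^{A'})^{-1/2}\omega^{A'A_1}\bigr)^2\bigr] = |A_1| \tr[(\omega^{A'A_1})^2] = \frac{|A_1|}{|A_2|}, \]
so $2^{-H_2(A'|E)_{\omega}} \leqslant |A_1|/|A_2|$, i.e.\ $2^{-\demi H_2(A'|E)_{\omega}} \leqslant \sqrt{|A_1|/|A_2|}$.

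Finally, inserting this bound into Theorem~\ref{thm:bertha} gives
\[ \int \bigl\| \tr_{A_2}[U \cdot \rho^{AR}] - \pi^{A_1} \otimes \rho^R \bigr\|_1 \, dU \leqslant 2^{-\demi H_2(A'|E)_{\omega}} \cdot 2^{-\demi H_2(A|R)_{\rho}} \leqslant \sqrt{\frac{|A_1|}{|A_2|} \, 2^{-H_2(A|R)_{\rho}}}, \]
as required. No step looks hard; the only care needed is in the bookkeeping for $\omega^{A'E}$ and the evaluation of the $H_2$ infimum, where the maximally mixed choice is both natural and sharp enough to recover the FQSW rate exactly.
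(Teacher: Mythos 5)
Your proposal is correct and follows essentially the same route as the paper: apply Theorem~\ref{thm:bertha} with $\mathcal{T} = \tr_{A_2}$, identify $\omega^{A'A_1} = \tr_{A_2}[\Phi^{AA'}]$, and show $\tr[(\omega^{A'A_1})^2] = 1/|A_2|$. The only thing you spell out that the paper leaves implicit is the step of bounding $2^{-H_2(A'|A_1)_\omega} \leqslant |A_1|/|A_2|$ by testing the infimum with $\sigma^{A_1} = \pi^{A_1}$, which is exactly the right move.
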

\begin{proof}
	Consider the superoperator $\mathcal{T}^{A \rightarrow A_1}(\xi) = \tr_{A_2}[\xi]$ and define $\omega^{A'A_1} = \tr_{A_2}[\Phi^{AA'}]$, and then apply Theorem \ref{thm:bertha}.  It is easy to show that $\tr[{\omega^{A'A_1}}^2] = 1/|A_2|$, from which the result follows.
\end{proof}

\begin{cor}[State merging]\label{cor:state-merging}
	Let $\rho^{AR}$ be a mixed state, and let $\{ M_i^{A \rightarrow E} : M_i \in \LL(\sfA, \sfE), i \in \{1,\dots,n\} \}$ be a set of measurement operators (i.e.\ $\sum_i M_i\mdag M_i = \ident^A$) such that each $M_i$ is a rank-$|E|$ partial isometry. Then,
	\begin{equation}
		\int \sum_i \left\| M_i U \cdot \rho^{AR} - \frac{\pi^E}{n} \otimes \rho^R \right\|_1 dU \leqslant \sqrt{ |E| 2^{-H_2(A|R)_{\rho}}}
	\end{equation}
	where we integrate over $\mbU(A)$.
\end{cor}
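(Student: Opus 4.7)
The plan is to reduce the statement to Theorem~\ref{thm:bertha} by packaging the entire measurement $\{M_i\}$ into a single CP map $\mathcal{T}$ equipped with a classical register $X$ that records the outcome. Concretely, I would introduce an $n$-dimensional system $X$ with orthonormal basis $\{\ket{i}\}$ and define
\[ \mathcal{T}^{A \rightarrow EX}(\xi) := \sum_i (M_i \xi M_i\mdag) \otimes \ketbra{i}^X. \]
Because the $X$-register makes everything block-diagonal, $\left\| \sum_i A_i \otimes \ketbra{i}^X \right\|_1 = \sum_i \|A_i\|_1$. Hence, once the ideal state $\omega^{EX}$ appearing in Theorem~\ref{thm:bertha} is identified as $\sum_i (\pi^E/n) \otimes \ketbra{i}^X$, the left-hand side of that theorem becomes exactly the integrand of the corollary.

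The next step is to identify $\omega^{EX}$ and to record the auxiliary dimensional constraint $n|E|=|A|$. The constraint follows by noting that each $M_i\mdag M_i$ is a rank-$|E|$ projector on $\sfA$, so taking traces in $\sum_i M_i\mdag M_i = \ident^A$ forces $n|E|=|A|$. Using the fact that $M_i M_i\mdag = \ident^E$ (since $M_i$ is a rank-$|E|$ partial isometry), the $E$-marginal of each $(M_i\otimes\ident^{A'}) \Phi^{AA'} (M_i\mdag\otimes\ident^{A'})$ equals $M_i \pi^A M_i\mdag = \ident^E/|A| = \pi^E/n$. Summing over $i$ gives the claimed form of $\omega^{EX}$, which is in particular a normalized density operator.

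The final ingredient is the bound $H_2(A'|EX)_{\omega} \geqslant -\log|E|$, which combined with Theorem~\ref{thm:bertha} immediately produces the advertised right-hand side $\sqrt{|E| \, 2^{-H_2(A|R)_{\rho}}}$. I would derive it by plugging the ansatz $\sigma^{EX} := \omega^{EX}$ into the definition of $H_2$. Since $\omega^{EX}$ is full rank and proportional to $\ident^E$ on each classical slice, $(\omega^{EX})^{-1/2} = \sqrt{n|E|}\,\ident^{EX}$, and the trace in the definition reduces to $n|E|\sum_i \tr[(\omega_i^{A'E})^2]$, where $\omega_i^{A'E} := (M_i\otimes\ident^{A'})\Phi^{AA'}(M_i\mdag\otimes\ident^{A'})$. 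A direct expansion in the computational basis shows each $\omega_i^{A'E}$ is a rank-one sub-normalized operator of trace $|E|/|A| = 1/n$, so $\tr[(\omega_i^{A'E})^2] = 1/n^2$, and the whole expression telescopes to $|E|$.

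There is no real obstacle here: the heavy lifting was already done in Theorem~\ref{thm:bertha}, and this corollary is essentially an exercise in choosing the right $\mathcal{T}$ and evaluating a conditional $2$-entropy on a state that is highly structured by the classical flag. The only care required is the dimensional bookkeeping around $n|E|=|A|$, which must line up cleanly so that the conditional $2$-entropy evaluates to exactly $-\log|E|$ rather than drifting by a dimensional factor.
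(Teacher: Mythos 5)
Your proposal is correct and follows the same route as the paper: you build the superoperator $\mathcal{T}^{A\to EX}(\xi)=\sum_i \ketbra{i}^X\otimes M_i\cdot\xi$, observe that the block structure gives $\bigl\|\sum_i A_i\otimes\ketbra{i}^X\bigr\|_1=\sum_i\|A_i\|_1$, identify $\omega^{EX}=\pi^{EX}$, and then bound $H_2(A'\!\mid\!EX)_\omega\geqslant-\log|E|$; the paper phrases the last step as computing $\tr\bigl[(\omega^{A'EX})^2\bigr]=1/n$, which is the same calculation since plugging $\sigma^{EX}=\pi^{EX}$ into the definition of $H_2$ yields $n|E|\tr\bigl[(\omega^{A'EX})^2\bigr]=|E|$.
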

For simplicity, we do not consider the case where $|A|$ is not divisible by $|E|$; the extension to the general case is straightforward.
\begin{proof}
	Let $X$ be an $n$-dimensional subsystem, and let $\mathcal{T}^{A \rightarrow EX}$ be a superoperator such that
	\begin{equation}
		\mathcal{T}(\sigma^A) = \sum_i \ketbra{i}^X \otimes (M_i^{A \rightarrow E} \cdot \sigma^{A})
	\end{equation}
	and define the state $\omega^{A'EX} = \mathcal{T}(\Phi^{AA'})$. It can easily be shown that $\tr[{\omega^{A'EX}}^2] = 1/n$, from which we get
	\begin{equation}
		\int \left\| \sum_i \ketbra{i}^X \otimes (M_i U \cdot \rho^{AR}) - \pi^{EX} \otimes \rho^R \right\|_1 dU \leqslant \sqrt{\frac{|E||X|}{n} 2^{-H_2(A|R)_{\rho}}}
	\end{equation}
	and the result follows. 
\end{proof}

The next corollary is an intermediate lemma from the state merging paper \cite{state-merging}, and also forms the basis of \cite{lsd-decoupling}:
\begin{cor}[Random subspaces]\label{cor:random-subspace}
	Let $\rho^{AR}$ be a mixed state and let $V^{A \rightarrow E}$ be a fixed rank-$|E|$ partial isometry. Then,
	\begin{equation}
		\int \left\| \frac{|A|}{|E|} VU \cdot \rho^{AR} - \pi^E \otimes \rho^R \right\|_1 dU \leqslant \sqrt{|E| 2^{-H_2(A|R)_{\rho}}}.
	\end{equation}
\end{cor}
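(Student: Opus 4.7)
The plan is to apply Theorem \ref{thm:bertha} to a suitably chosen completely positive superoperator that packages the fixed partial isometry $V$ together with a dimension rescaling. Specifically, I would set $\mathcal{T}^{A \rightarrow E}(\xi) := \frac{|A|}{|E|} V \xi V\mdag$, which is manifestly CP (though not trace-preserving, which is fine since Theorem \ref{thm:bertha} only demands complete positivity). Then $\mathcal{T}(U \cdot \rho^{AR}) = \frac{|A|}{|E|} VU \cdot \rho^{AR}$, so the left-hand side of the corollary is exactly the left-hand side of the decoupling inequality (\ref{eqn:bertha-trd}). It remains to verify that the resulting $\omega^{A'E} = \mathcal{T}(\Phi^{AA'})$ has marginal $\omega^E = \pi^E$ and a conditional $2$-entropy of $H_2(A'|E)_\omega \geqslant -\log |E|$.

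For the marginal, tracing out $A'$ gives $\omega^E = \mathcal{T}(\pi^A) = \frac{1}{|E|} V V\mdag = \pi^E$, where the last equality uses that $V$ is a full-rank partial isometry so $VV\mdag = \ident^E$. For the $2$-entropy, note that $\omega^{A'E} = \frac{|A|}{|E|} \ketbra{\psi}^{EA'}$ with $\ket{\psi}^{EA'} = V\ket{\Phi}^{AA'}$, and a direct computation gives $\|\ket{\psi}\|^2 = \bra{\Phi} (V\mdag V \otimes \ident^{A'}) \ket{\Phi} = \tr[V\mdag V]/|A| = |E|/|A|$. Consequently $\tr[(\omega^{A'E})^2] = (|A|/|E|)^2 \cdot (|E|/|A|)^2 = 1$. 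Taking $\sigma^E = \pi^E$ as the trial state in the infimum defining $H_2(A'|E)_\omega$ yields
\[ \tr\!\left[ \left( (\pi^E \otimes \ident^{A'})^{-1/2} \omega^{A'E} \right)^2 \right] = |E| \, \tr\!\left[ (\omega^{A'E})^2 \right] = |E|, \]
so $H_2(A'|E)_\omega \geqslant -\log |E|$.

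Plugging these into Theorem \ref{thm:bertha} gives an upper bound of $2^{-\demi H_2(A'|E)_\omega - \demi H_2(A|R)_\rho} \leqslant \sqrt{|E|}\cdot 2^{-\demi H_2(A|R)_\rho} = \sqrt{|E| \, 2^{-H_2(A|R)_\rho}}$, which is precisely the claimed bound. There is no serious obstacle here; the only point requiring any care is choosing the normalization factor $|A|/|E|$ in $\mathcal{T}$ so that $\omega^E$ comes out to $\pi^E$ (rather than a subnormalized operator), after which the $2$-entropy calculation is a one-line application of the full-rank partial isometry property $V V\mdag = \ident^E$.
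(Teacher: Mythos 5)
Your proof is correct and follows exactly the same route as the paper: you choose the superoperator $\mathcal{T}(\xi) = \frac{|A|}{|E|} V \cdot \xi$ and apply Theorem \ref{thm:bertha}. The paper states this in one line (noting only that $\mathcal{T}(\Phi^{AA'})$ is a normalized pure state and leaving the entropy computation implicit), whereas you spell out the two needed facts — $\omega^E = \pi^E$ via $VV\mdag = \ident^E$, and $H_2(A'|E)_\omega \geqslant -\log|E|$ via the trial state $\sigma^E = \pi^E$ — both of which are right.
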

\begin{proof}
	Consider the superoperator $\mathcal{T}^{A \rightarrow E}$ such that $\mathcal{T}(\sigma^A) = \frac{|A|}{|E|}V \cdot \sigma^A$. The result follows immediately from Theorem \ref{thm:bertha} and the fact that $\mathcal{T}(\Phi^{AA'})$ is a pure state.
\end{proof}

One can also come up with interesting blends of the above. For instance, one can mix FQSW and the random subspaces theorem above. The operation $\mathcal{T}$ we will consider is the following: we apply a fixed unitary operator on Alice's share, then restrict her system to only a subspace by applying a fixed projector, and then trace out part of the remaining state. We call the post-restriction system $E$, which is divided into two shares $E_1$ and $E_2$; we trace out $E_2$ and Alice is left with only $E_1$. The result is the following protocol, which will be presented and shown to be essentially optimal in \cite{merging-berta-etal}:
\begin{cor}\label{cor:aieul}
	Let $\rho^{AR}$ be a density operator, and let $\mathcal{T}^{A \rightarrow E_1}$ be a completely positive superoperator such that
	\begin{equation*}
		\mathcal{T}(\sigma^A) = \frac{|A|}{|E|}\tr_{E_2}[V^{A \rightarrow E_1E_2} \cdot \sigma^A]
	\end{equation*}
	where $V$ is a partial isometry, and $\sfE = \sfE_1 \otimes \sfE_2$. Then,
\begin{equation}\label{eqn:aieul-trd}
	\int_{\mbU(A)} \left\| \mathcal{T}(U \cdot \rho^{AR}) - \pi^{E_1} \otimes \rho^R \right\|_1 dU \leqslant \sqrt{\frac{|E_1|}{|E_2|} 2^{-H_2(A|R)_{\rho}}}
\end{equation}
where $\int \cdot dU$ denotes the integral over the Haar measure on all unitaries $U^A$.
\end{cor}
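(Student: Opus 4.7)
The plan is to apply Theorem \ref{thm:bertha} directly, reducing the task to computing the marginal $\omega^{E_1}$ and the conditional 2-entropy $H_2(A'|E_1)_\omega$ of $\omega^{A'E_1} := (\mathcal{T} \otimes \ident^{A'})(\Phi^{AA'})$. The key structural observation is that $\mathcal{T} = \tr_{E_2} \circ \mathcal{T}'$ for the CP map $\mathcal{T}'(\sigma) = \tfrac{|A|}{|E|} V \cdot \sigma$, so that $\omega^{A'E_1E_2} := (\mathcal{T}' \otimes \ident^{A'})(\Phi^{AA'})$ is a \emph{pure} state on $A'E_1E_2$ (one verifies the trace is $1$ using $\bra{\Phi}(V\mdag V \otimes \ident^{A'})\ket{\Phi} = |E|/|A|$), and $\omega^{A'E_1} = \tr_{E_2}[\omega^{A'E_1E_2}]$.

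Next I would exploit that $V$ is a rank-$|E|$ partial isometry into a codomain of dimension $|E|$, so $VV\mdag = \ident^E$. This gives
\[ \omega^{E_1E_2} = \mathcal{T}'(\pi^A) = \frac{VV\mdag}{|E|} = \pi^{E_1E_2}, \]
whence $\omega^{E_1} = \pi^{E_1}$ (matching the product state appearing in the claim) and $\omega^{E_2} = \pi^{E_2}$. Purity of $\omega^{A'E_1E_2}$ then forces $\omega^{A'E_1}$ to have the same nonzero spectrum as its complementary marginal $\omega^{E_2}$, so $\tr[(\omega^{A'E_1})^2] = 1/|E_2|$.

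To conclude, I would lower-bound $H_2(A'|E_1)_\omega$ by evaluating the defining infimum at $\sigma^{E_1} = \pi^{E_1}$: since $(\pi^{E_1} \otimes \ident^{A'})^{-1/2} = \sqrt{|E_1|}\,\ident$, the expression inside the infimum takes value $|E_1| \cdot \tr[(\omega^{A'E_1})^2] = |E_1|/|E_2|$, hence $H_2(A'|E_1)_\omega \geqslant \log(|E_2|/|E_1|)$. Substituting this together with $\omega^{E_1} = \pi^{E_1}$ into Theorem \ref{thm:bertha} delivers precisely the advertised bound. The argument is essentially a bookkeeping exercise once the pure-state dilation is set up, and no serious obstacle is anticipated; the one point worth watching is that the full-rank condition on $V$ is exactly what forces $\omega^{E_1E_2}$ to be maximally mixed, and it is this symmetry that delivers the $\sqrt{|E_1|/|E_2|}$ prefactor.
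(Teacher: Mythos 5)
Your proposal is correct and follows the same route as the paper, which simply invokes Theorem~\ref{thm:bertha} and states the value $H_2(A'|E_1)_{\mathcal{T}(\Phi^{AA'})} = \log|E_2| - \log|E_1|$ without proof; you have filled in exactly that calculation (Stinespring the map, use purity of $\omega^{A'E_1E_2}$ together with $\omega^{E_1E_2}=\pi^{E_1E_2}$ to get $\tr[(\omega^{A'E_1})^2]=1/|E_2|$, then plug $\sigma^{E_1}=\pi^{E_1}$ into the defining infimum). Your observation that $V$ needs to be a full-rank partial isometry for $\omega^{E_1}$ to equal $\pi^{E_1}$ is a fair reading of the corollary's hypothesis, which the paper leaves implicit.
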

\begin{proof}
	Follows trivially from Theorem \ref{thm:bertha} and the calculation of $H_2(A'|E_1)_{\mathcal{T}(\Phi^{AA'})} = \log|E_2| - \log|E_1|$.
\end{proof}

\section{Quantum coding theorems via decoupling}
In this section, we use the theorems from Section \ref{sec:bertha} to derive a one-shot coding theorem for quantum channels. As explained earlier, our strategy will be to show that the complementary channel (i.e.\ the channel to the environment) completely breaks all correlations with a system that purifies the input. As a result, we get that Bob is able to reconstruct the message. 

We will consider the following problem: Alice and Bob share a pure state $\psi^{ABR}$; Alice holds $A$, Bob holds $B$, and $R$ is a reference system which purifies the state. Alice would like to send her share of the state to Bob through a single use of the quantum channel $\mathcal{N}^{A' \rightarrow C}$. To accomplish this, we need to find encoding and decoding superoperators $\mathcal{E}^{A \rightarrow A'}$ and $\mathcal{D}^{CB \rightarrow AB}$ such that
\begin{equation}\label{eqn:goal}
\left\| (\mathcal{D} \circ \mathcal{N} \circ \mathcal{E})(\psi) - \psi \right\|_1 \leqslant \varepsilon
\end{equation}

Note that Buscemi and Datta \cite{buscemi-datta} have considered a similar problem but without any $B$ system already at Bob's. The following generalizes their result to the case where Bob already has a share of the system:

\begin{thm}\label{thm:vanilla-channels-oneshot}
	Let $\psi^{ABR}$ be a pure state, $\mathcal{N}^{A' \rightarrow C}$ be any completely positive trace-preserving superoperator with Stinespring dilation $U_{\mathcal{N}}^{A' \rightarrow CE}$ and complementary channel $\bar{\mathcal{N}}^{A' \rightarrow E}$, let $\omega^{A''CE} = U_{\mathcal{N}} \cdot \sigma^{A''A'}$, where $\sigma$ is any pure state and $\sfA'' \cong \sfA'$ , and let $\varepsilon > 0$. Then, there exists an encoding partial isometry $V^{A \rightarrow A'}$ and a decoding superoperator $\mathcal{D}^{CB \rightarrow AB}$ such that
	\begin{equation*}
		\left\| \bar{\mathcal{N}}(V \cdot \psi^{AR}) - \omega^E \otimes \psi^R \right\|_1 \leqslant 2 \sqrt{\delta_1} + \delta_2
	\end{equation*}
	and
	\begin{equation*}
		\left\| (\mathcal{D} \circ \mathcal{N})(V \cdot \psi^{ABR}) - \psi^{ABR} \right\|_1 \leqslant 2\sqrt{2\sqrt{\delta_1} + \delta_2}
	\end{equation*}
	where
	\begin{align*}
		\delta_1 &= 3 \times 2^{\demi H_{\max}^{\varepsilon}(A)_{\psi} - \demi H_2^{\varepsilon}(A'')_{\omega}} + 24\varepsilon\\
		\delta_2 &= 3 \times 2^{-\demi H_2^{\varepsilon}(A''|E)_{\omega} - \demi H_2^{\varepsilon}(A|R)_{\psi}} + 24\varepsilon.
	\end{align*}
\end{thm}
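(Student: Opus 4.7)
The proof follows the standard decoupling-plus-Uhlmann paradigm. The core of the argument is the first claimed inequality; the second then follows from Uhlmann's theorem in the form of the corollary stated just after Theorem~\ref{thm:uhlmann}, with the outer square root arising from the Fuchs--van~de~Graaf inequality.

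The encoding $V^{A\rightarrow A'}$ will be constructed via a randomized procedure. Let $\tau := \op_{A''\rightarrow A'}(\ket{\sigma}^{A''A'})$, so that $\tau\tau\mdag = \sigma^{A'}$ and $\tau\mdag\tau$ is the transpose of $\sigma^{A''}$. Identifying $A$ with a subspace of $A''$ via a fixed isometry and letting $U$ be Haar-distributed on $A''$, an idealized (non-isometric) encoding is $\widetilde{V}_U := \sqrt{|A''|}\,\tau U$. A short computation shows that for the CP map $\mathcal{T}^{A''\rightarrow E}(\xi) := |A''|\,\bar{\mathcal{N}}(\tau\xi\tau\mdag)$, the reference state of Theorem~\ref{thm:smooth-bertha} applied to this $\mathcal{T}$ is exactly $\omega^{A''E}$. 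Applying Theorem~\ref{thm:smooth-bertha} with this $\mathcal{T}$ and input $\psi^{AR}$ (embedded into $A''R$) then yields
\[
    \int \bigl\|\bar{\mathcal{N}}(\widetilde{V}_U\cdot\psi^{AR}) - \omega^E\otimes\psi^R\bigr\|_1\,dU \leqslant 2^{-H_2^{\varepsilon}(A''|E)_{\omega}/2 - H_2^{\varepsilon}(A|R)_{\psi}/2} + 8\varepsilon,
\]
which accounts for the $\delta_2$ part of the first advertised error.

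The main technical obstacle is replacing the idealized $\widetilde{V}_U$ by a genuine partial isometry $V$, since $\widetilde{V}_U\mdag\widetilde{V}_U$ is only a projector in the special case that $\sigma^{A''}$ has flat spectrum. I would handle this by first smoothing $\sigma$ to a state with approximately flat typical support of dimension $\sim 2^{H_2^{\varepsilon}(A'')_{\omega}}$, and then invoking the random-subspace statement (Corollary~\ref{cor:random-subspace}) to argue that, for a typical $U$, the state $U\psi^A U\mdag$ lies essentially inside the flat support of the smoothed $(\sigma^{A''})^T$. The trace-distance error of this replacement is controlled by the ratio of the typical dimension $2^{H_{\max}^{\varepsilon}(A)_{\psi}}$ of $\psi^A$ to $2^{H_2^{\varepsilon}(A'')_{\omega}}$; on the flat region, $\widetilde{V}_U$ is, after a further scalar rescaling, essentially an isometry, so we may extract an honest partial isometry $V$ from it. The fidelity cost of this extraction is $\delta_1$, and Fuchs--van~de~Graaf converts it to a trace-distance contribution of $2\sqrt{\delta_1}$. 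A standard averaging argument then fixes a specific $V$ witnessing both bounds simultaneously, giving the first inequality with combined error $2\sqrt{\delta_1}+\delta_2$.

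For the second inequality, consider the pure state $U_{\mathcal{N}}V\ket{\psi}^{ABR}$ on $CEBR$, whose marginal on $ER$ is within $2\sqrt{\delta_1}+\delta_2$ of $\omega^E\otimes\psi^R$ by the first inequality. Since $\ket{\omega}^{A''CE}\otimes\ket{\psi}^{ABR}$ is a purification of $\omega^E\otimes\psi^R$, the corollary to Uhlmann's theorem supplies a partial isometry $W:CB\rightarrow A''CAB$ with
\[
    \bigl\|(WU_{\mathcal{N}}V)\ket{\psi}^{ABR} - \ket{\omega}^{A''CE}\otimes\ket{\psi}^{ABR}\bigr\|_1 \leqslant 2\sqrt{2\sqrt{\delta_1}+\delta_2}.
\]
Setting $\mathcal{D}(\xi) := \tr_{A''C}[W\xi W\mdag]$ and using contractivity of the partial trace under the trace norm yields the required CPTP decoder $\mathcal{D}^{CB\rightarrow AB}$ and completes the proof.
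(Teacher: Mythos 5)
Your overall architecture is right: decoupling gives the $\delta_2$ term, an isometrization step gives the $\delta_1$ term, and Uhlmann's theorem (in the form of the corollary after Theorem~\ref{thm:uhlmann}) yields the decoder; your treatment of that last step is essentially identical to the paper's. The gap is in the isometrization step, and it is a real one.

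The paper's trick, which you miss, is to apply the decoupling theorem \emph{a second time}, with the superoperator $\mathcal{E}^{A''\rightarrow G}(\xi)=|A''|\tr\bigl[\op_{A''\rightarrow A'}(\ket{\sigma})\cdot\xi\bigr]$ whose output system $G$ is one-dimensional. This directly bounds
\[
\int\bigl\||A''|\tr_{A'}[\op_{A''\rightarrow A'}(\ket{\sigma})UW\cdot\psi^{ABR}]-\psi^{BR}\bigr\|_1\,dU
\leqslant 2^{\demi H_{\max}^{\varepsilon}(A)_{\psi}-\demi H_2^{\varepsilon}(A'')_{\omega}}+8\varepsilon,
\]
i.e.\ the idealized non-isometric encoding approximately preserves the $BR$ marginal. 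Since $V\cdot\psi^{ABR}$ for any partial isometry $V^{A\rightarrow A'}$ purifies $\psi^{BR}$ exactly, the Uhlmann corollary immediately gives a $V$ with $\bigl\||A''|\op(\ket{\sigma})UW\cdot\psi^{ABR}-V\cdot\psi^{ABR}\bigr\|_1\leqslant 2\sqrt{\delta_1}$; Markov's inequality then fixes a single $U$ that works for both decoupling applications, which is exactly where the factors of $3$ and $24$ in $\delta_1$, $\delta_2$ come from. This is where the advertised $\delta_1$ is actually earned.

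Your proposed substitute is a typical-projector argument (smooth $\sigma$ to a flat state of dimension $\sim 2^{H_2^\varepsilon(A'')_\omega}$, then invoke Corollary~\ref{cor:random-subspace} to claim $U\psi^A U\mdag$ sits essentially inside its support). This does not work as stated. First, $H_2$ is a collision entropy, not a log-rank; a state with small $2^{-H_2}$ need not be trace-norm close to a flat state of dimension $2^{H_2}$, so the smoothing step is unjustified in the one-shot setting (and avoiding exactly this kind of typical-subspace workaround is a stated motivation for the decoupling theorem). Second, Corollary~\ref{cor:random-subspace} is a decoupling statement, not a containment statement: it says $\frac{|A|}{|E|}VU\cdot\rho^{AR}\approx\pi^E\otimes\rho^R$, whose trace implies $\tr[V\mdag V\,U\psi^AU\mdag]\approx|E|/|A|$, \emph{not} $\approx 1$; a Haar-random rotation of $\psi^A$ does not concentrate inside a fixed proper subspace. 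Third, even granting a flat $\sigma$, $\widetilde{V}_U\mdag\widetilde{V}_U = |A''|U\mdag(\sigma^{A''})^T U$ restricted to the image of $W$ is a full operator, not a scalar times a projector, so the ``scalar rescaling'' does not produce a partial isometry without further argument. You also never pin down a single $U$ good for both error terms; the Markov/union-bound step is needed and is not a formality. I would recommend replacing your isometrization paragraph with the double-decoupling-plus-Uhlmann argument.
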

Here, $\delta_1$ determines how closely the input $\psi^A$ can be made to fit the target input distribution $\omega^{A''}$, whereas $\delta_2$ depends on the difference between the amount of information that must be transmitted ($-H_2^{\varepsilon}(A|R)_{\psi}$) and the information-carrying capability of the channel ($H_2^{\varepsilon}(A''|E)_{\omega}$). See Figures \ref{fig:reg-1shot-real} and \ref{fig:reg-1shot-omega} for an illustration of the theorem.

\begin{figure}
    \centering
    \scalebox{1.0}{\input{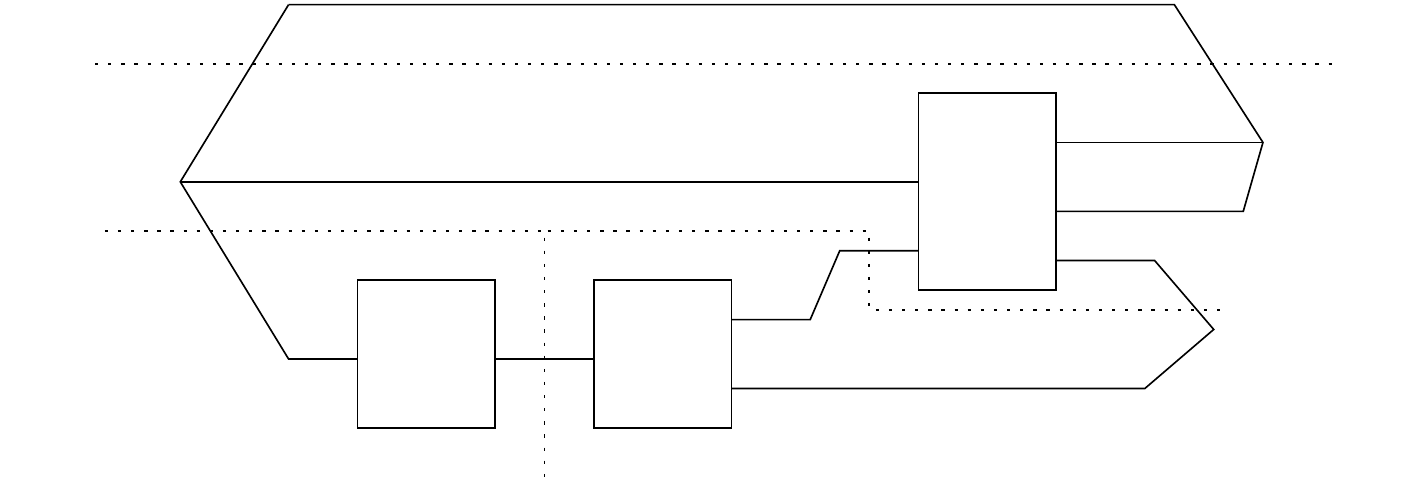_t}}
    \caption{Diagram illustrating Theorem \ref{thm:vanilla-channels-oneshot}. Each line represents a quantum system, boxes represent isometries, and the horizontal axis represents the passage of time. Lines joined together at either end of the diagram represent pure states. Alice used $V$ to encode her message $A$ into the input to the channel $A'$, and Bob uses the channel output $C$ together with the $B$ that he had since the beginning to decode $A$ (and $B$) back. The decoder also produces a system $F$ that purifies the environment.}
    \label{fig:reg-1shot-real}
\end{figure}

\begin{figure}
    \centering
    \scalebox{1.0}{\input{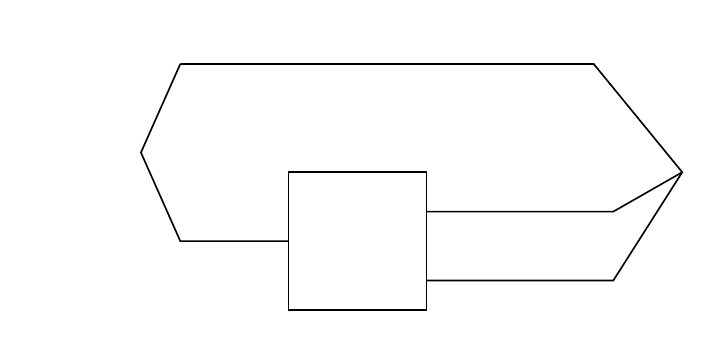_t}}
    \caption{Diagram illustrating the states $\sigma$ and $\omega$ in Theorem \ref{thm:vanilla-channels-oneshot}. Each line represents a quantum system, boxes represent isometries, and the horizontal axis represents the passage of time. Lines joined together at either end of the diagram represent pure states.}
    \label{fig:reg-1shot-omega}
\end{figure}

\begin{proof}
	Let $W^{A \rightarrow A''}$ be any full-rank partial isometry, and consider the superoperator $\mathcal{T}^{A'' \rightarrow E}$ defined as $\mathcal{T}(\xi) = |A''|\bar{\mathcal{N}}( \op_{A'' \rightarrow A'}(\ket{\sigma}) \cdot \xi)$. Theorem \ref{thm:smooth-bertha} then tells us that
	\begin{multline*}
		\int \left\| |A''|\bar{\mathcal{N}}( \op_{A'' \rightarrow A'}(\ket{\sigma})U^{A''}W \cdot \psi^{AR}) - \omega^E \otimes \psi^R \right\|_1 dU\\
		\begin{split}
			&= \int \left\| \mathcal{T}(UW \cdot \psi^{AR}) - \omega^{E} \otimes \psi^R \right\|_1 dU\\
			&\leqslant 2^{-\demi H_2^{\varepsilon}(A''|E)_{\omega} - \demi H_2^{\varepsilon}(A|R)_{\psi}} + 8\varepsilon.
		\end{split}
	\end{multline*}
	We must now prove that there exists a $U$ such that conjugating $\psi$ by $\sqrt{|A''|} \op_{A'' \rightarrow A'}(\ket{\sigma}) UW$ can be approximated by an isometry and for which the above inequality holds. For this, we use Theorem \ref{thm:smooth-bertha} again, with $\mathcal{E}^{A'' \rightarrow G}$, $\mathcal{E}(\xi) = |A''|\tr[\op_{A'' \rightarrow A'}(\ket{\sigma}) \cdot \xi]$ ($G$ is a dummy 1-dimensional system):
	\begin{align*}
		\int \left\| |A''|\tr_{A'}[\op_{A'' \rightarrow A'}(\ket{\sigma})UW \cdot \psi^{ABR}] - \psi^{BR} \right\|_1 dU &\leqslant 2^{-\demi H_2^{\varepsilon}(A|BR)_{\psi} - \demi H_2^{\varepsilon}(A''|G)_{\mathcal{E}(\Phi)}} + 8\varepsilon\\
		&\leqslant 2^{\demi H_{\max}^{\varepsilon}(A)_{\psi} - \demi H_2^{\varepsilon}(A'')_{\omega}} + 8\varepsilon
	\end{align*}
	where we have used Lemma \ref{lem:op-identity} to establish that $\mathcal{E}(\Phi) = \omega$. Now, by Markov's inequality (Lemma \ref{lem:markov}), we can choose a $U$ such that
	\begin{align*}
		\int \left\| |A''|\tr_{A'}[\op_{A'' \rightarrow A'}(\ket{\sigma})UW \cdot \psi^{ABR}] - \psi^{BR} \right\|_1 dU &\leqslant 3 \times 2^{\demi H_{\max}^{\varepsilon}(A)_{\psi} - \demi H_2^{\varepsilon}(A'')_{\omega}} + 24\varepsilon\\
		\int \left\| |A''|\bar{\mathcal{N}}( \op_{A'' \rightarrow A'}(\ket{\sigma})  U^{A''}W \cdot \psi^{AR}) - \omega^E \otimes \psi^R \right\|_1 dU &\leqslant 3 \times 2^{-\demi H_2^{\varepsilon}(A''|E)_{\omega} - \demi H_2^{\varepsilon}(A|R)_{\psi}} + 24\varepsilon.
	\end{align*}
	
	The first of these two inequalities allows us to use Uhlmann's theorem (Theorem \ref{thm:uhlmann}) to find the encoding isometry: there exists a $V^{A \rightarrow A'}$ such that
	\begin{align*}
		\left\| |A''|\op_{A'' \rightarrow A'}(\ket{\sigma})UW \cdot \psi^{ABR} - V \cdot \psi^{ABR} \right\|_1 &\leqslant 2\sqrt{3 \times 2^{\demi H_{\max}^{\varepsilon}(A)_{\psi} - \demi H_2^{\varepsilon}(A'')_{\omega}} + 24\varepsilon}\\
		&= 2 \sqrt{\delta_1}.
	\end{align*}
	Using the triangle inequality and the monotonicity of trace distance under CPTP maps, we get that
	\begin{equation}\label{eqn:vanilla-oneshot-decoupling}
		\left\| \bar{\mathcal{N}}(V \cdot \psi^{AR}) - \omega^{E} \otimes \psi^R \right\|_1 \leqslant 2\sqrt{\delta_1} + \delta_2.
	\end{equation}
	To finish, we use Uhlmann's theorem again on this last inequality to get the decoder: there exists a partial isometry $D^{CB \rightarrow FAB}$ such that
	\begin{equation*}
		\left\| DU_{\bar{\mathcal{N}}}V \cdot \psi^{ABR} - \xi^{EF} \otimes \psi^{ABR} \right\|_1 \leqslant 2\sqrt{2\sqrt{\delta_1} + \delta_2}
	\end{equation*}
	for some state $\xi^{EF}$. Finally, we trace over $EF$ to get the theorem.
\end{proof}

We now turn to the case of memoryless channels used to transmit arbitrary quantum information with some entanglement assistance. This consists of the following special situation: the state $\psi^{ABR}$ has the form $\Phi^{RM} \otimes \Phi^{\wtA B}$ where $M\wtA$ plays the role of $A$ from the previous theorem, and the channel is $\left( \mathcal{N}^{A' \rightarrow C} \right)^{\otimes n}$. We will say that a pair $(Q, E)$ is achievable if there exists a sequence of codes of length $n$, with encoders $\mathcal{E}_n^{M_n \wtA_n \rightarrow {A'}^n}$ and decoders $\mathcal{D}_n^{C^n B_n \rightarrow M_n}$, with $|M_n| = |R_n| = 2^{nQ}$, $|\wtA_n| = |B_n| = 2^{nE}$, such that
\begin{equation*}
	\lim_{n \rightarrow \infty} \left\| \left( \mathcal{D}_n \circ \mathcal{N}^{\otimes n} \circ \mathcal{E}_n \right)\left( \Phi^{R_nM_n} \otimes \Phi^{\wtA_n B_n} \right) - \Phi^{R_n M_n} \right\|_1 = 0.
\end{equation*}

A rate $Q$ is achievable for entanglement-assisted transmission if there exists an $E\geqslant0$ such that $(Q,E)$ is achievable, and it is achievable for unassisted transmission if $(Q,0)$ is achievable. The capacity region is the closure of the convex hull of all achievable points.

The achievability of the coherent information for unassisted transmission was proven with increasing standards of rigour by Lloyd \cite{lsd1}, Shor \cite{lsd2}, and Devetak \cite{lsd3}. Since then, several other proofs have been published; the proof given below shares some similarities with the one by Hayden, Horodecki, Yard and Winter \cite{lsd-decoupling}. The entanglement-assisted capacity was first given by Bennett, Shor, Smolin and Thapliyal \cite{BSST02}. Theorem \ref{thm:direct-vanilla-channels}, which interpolates between those two results, can also be obtained by time-sharing between the completely assisted and unassisted protocols.

\begin{thm}\label{thm:direct-vanilla-channels}
	For any quantum channel $\mathcal{N}^{A' \rightarrow C}$, any pure state $\sigma^{AA'}$ with $\sfA' \cong \sfA$, the rate pair $(Q,E)$ is achievable for quantum transmission with rate-limited entanglement assistance through $\mathcal{N}$ if
	\begin{align*}
		Q + E &< H(A)_{\sigma}   && \text{and}  & Q - E &< I(A \rangle C)_{\mathcal{N}(\sigma)}.
	\end{align*}
	As a corollary, if we do not limit the rate of entanglement assistance, $Q < \demi I(A;C)_{\mathcal{N}(\sigma)}$ is achievable.
\end{thm}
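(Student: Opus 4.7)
The plan is to derive this asymptotic result as a straightforward consequence of the one-shot coding Theorem \ref{thm:vanilla-channels-oneshot}, applied to the channel $\mathcal{N}^{\otimes n}$ with input ``template'' state $\sigma^{\otimes n}$. Concretely, I would set Alice's system to $A = M_n \wtA_n$ with $|M_n| = 2^{nQ}$ and $|\wtA_n| = 2^{nE}$, Bob's preshared system to $B = B_n$, the reference to $R = R_n$, and the input to the one-shot theorem to $\psi^{ABR} = \Phi^{R_n M_n} \otimes \Phi^{\wtA_n B_n}$. The encoder $V_n$ given by the theorem is a partial isometry $M_n\wtA_n \to {A'}^{n}$ (that is, a CPTP encoder $\mathcal{E}_n^{M_n\wtA_n \to {A'}^{n}}$), and after tracing its decoder over $\wtA_n B_n$ one obtains a CPTP decoder $\mathcal{D}_n^{C^n B_n \to M_n}$. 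Monotonicity of the trace distance under partial trace then converts the bound on $\|(\mathcal{D}\circ\mathcal{N}^{\otimes n})(V\cdot\psi^{ABR}) - \psi^{ABR}\|_1$ into the bound required by the definition of achievability.

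The core of the proof is to check that both error terms $\delta_1$ and $\delta_2$ of Theorem \ref{thm:vanilla-channels-oneshot} vanish as $n \to \infty$ under the two hypotheses. For $\delta_1$, note that $\psi^A = \pi^{M_n\wtA_n}$ is maximally mixed, so $H_{\max}^{\varepsilon}(A)_{\psi} \leqslant \log|M_n\wtA_n| = n(Q+E)$. On the channel side, since $H_{\min} \leqslant H_{2}$ and the input is i.i.d., the fully quantum AEP (Theorem \ref{thm:fully-quantum-aep}) gives $\tfrac{1}{n} H_2^{\varepsilon}({A''}^{n})_{\sigma^{\otimes n}} \to H(A)_{\sigma}$. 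Hence the exponent in $\delta_1$ is asymptotically $-\tfrac{n}{2}[H(A)_{\sigma} - (Q+E)]$, which is negative (and $\delta_1 \to 0$) precisely when $Q+E < H(A)_{\sigma}$.

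For $\delta_2$, the key computation is that on the product form $\psi^{AR} = \Phi^{R_n M_n} \otimes \pi^{\wtA_n}$ we have $H_2(A|R)_{\psi} = n(E-Q)$ (this follows from the explicit $H_2$ value on a maximally entangled state combined with the product behavior of $H_2$ under tensoring on an unconditioned factor). On the channel side, applying AEP to the pure state $\omega^{ACE} = U_\mathcal{N}\cdot \sigma^{AA'}$ yields $\tfrac{1}{n} H_2^{\varepsilon}({A''}^n | E^n)_{\omega^{\otimes n}} \to H(A|E)_{\omega}$, and purity of $\omega^{ACE}$ gives the duality $H(A|E)_{\omega} = -H(A|C)_{\omega} = I(A\rangle C)_{\mathcal{N}(\sigma)}$. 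The exponent in $\delta_2$ is therefore asymptotically $-\tfrac{n}{2}[(E-Q) + I(A\rangle C)_{\mathcal{N}(\sigma)}]$, which is negative iff $Q-E < I(A\rangle C)_{\mathcal{N}(\sigma)}$. The corollary follows by setting $E = H(A)_{\sigma} - Q - \eta$ for small $\eta>0$, so that condition one holds trivially and condition two becomes $2Q < H(A)_{\sigma} + I(A\rangle C)_{\mathcal{N}(\sigma)} - \eta = I(A;C)_{\mathcal{N}(\sigma)} - \eta$, and then letting $\eta \to 0$.

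The only non-routine step is bookkeeping the entropic quantities of the product input: one must pass through $H_{\min}^{\varepsilon}$ (the flavour for which the AEP is stated) and use $H_{\min} \leqslant H_2$ on the smoothed quantities, handle the $\pi^{\wtA_n}$ factor correctly (it contributes $+nE$ to the conditional entropy because it is maximally mixed and independent of $R_n$), and invoke the pure-state duality $H(A|E) = -H(A|C)$ to express the channel contribution as coherent information. Everything else reduces to pushing vanishing error through the reduction described in the first paragraph.
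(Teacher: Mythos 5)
Your proof is correct and follows essentially the same route as the paper: apply the one-shot Theorem \ref{thm:vanilla-channels-oneshot} to $\mathcal{N}^{\otimes n}$ with input $\Phi^{RM}\otimes\Phi^{\wtA B}$ and template $\sigma^{\otimes n}$, bound the four entropic quantities via the fully quantum AEP and direct computation on the maximally-entangled factors, and use pure-state entropy duality to turn $H(A|E)$ into $I(A\rangle C)$. The paper derives the corollary by adding the two constraints rather than by explicitly saturating $E$, but this is an equivalent presentation of the same step.
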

The first condition (that $Q+E < H(A)_{\sigma}$) says that both the quantum information to be transmitted and Alice's share of the EPR pairs must fit into the input to the channel, and the second condition says that the channel can carry $I(A \rangle C)$ qubits per transmission when no entanglement is used, but the rate can be ``boosted'' at the rate of one ebit per qubit until the first condition is saturated. If we saturate the first condition, then we get the entanglement-assisted capacity of $\demi I(A;C)_{\mathcal{N}(\sigma)}$.
\begin{proof}
	The proof essentially consists of using the previous theorem on $\mathcal{N}^{\otimes n}$ and using the fully quantum AEP (Theorem \ref{thm:fully-quantum-aep}) to bound the various conditional entropies. Let $U_{\mathcal{N}}^{A' \rightarrow CE}$ be a Stinespring dilation of $\mathcal{N}$, and let $R$ and $M$ be subsystems of dimension $2^{nQ}$, $M$ storing the quantum message Alice wants to transmit, and $R$ being its purifying system. Likewise, let $\wtA$ and $B$ be systems storing Alice's and Bob's part of the shared entanglement respectively, both of dimension $2^{nE}$. Now, the input state we will consider is $\psi^{M \wtA B R} = \Phi^{RM} \otimes \Phi^{\wtA B}$, where $M \wtA$ play the role of $A$ from the previous theorem. We are now in a position to use the previous theorem with $\psi$ as the input state, $\mathcal{N}^{\otimes n}$ as the channel, and $\omega^{ {A}^{n}C^nE^n} = U_{\mathcal{N}}^{\otimes n} \cdot \sigma^{\otimes n}$ to conclude that there exists an isometry $V^{M \wtA \rightarrow {A'}^n}$ and a CPTP map $\mathcal{D}^{C^nB \rightarrow M}$ such that
	\begin{equation*}
		\left\| (\mathcal{D} \circ \mathcal{N}^{\otimes n})(V \cdot \psi^{M\wtA BR}) - \psi^{ABR} \right\|_1 \leqslant 2\sqrt{2\sqrt{\delta_1} + \delta_2}
	\end{equation*}
	where
	\begin{align*}
             \delta_1 &= 3 \times 2^{\demi H_{\max}^{\varepsilon}(M \wtA)_{\psi} - \demi H_2^{\varepsilon}({A}^{n})_{\omega}} + 24\varepsilon\\
	     \delta_2 &= 3 \times 2^{-\demi H_2^{\varepsilon}({A}^{n}|E^n)_{\omega} - \demi H_2^{\varepsilon}(M \wtA|R)_{\psi}} + 24\varepsilon
	\end{align*}

	Now we simply need to ensure that both $\delta_1$ and $\delta_2$ go down to zero as $n \rightarrow \infty$. We then have that:
\begin{align}
H_{\max}^{\varepsilon}(M \wtA)_{\psi} &\leqslant nQ + nE\\
H_2^{\varepsilon}({A}^n)_{\omega} &\geqslant n\left[H(A)_{\sigma} - \Delta_1\right]\\
H_2^{\varepsilon}({A}^n|E^n)_{\omega} &\geqslant n\left[H(A|E)_{U_{\mathcal{N}} \cdot \sigma} - \Delta_2\right] = n\left[I(A \rangle C)_{\mathcal{N}(\sigma)} - \Delta_2\right]\\
H_2^{\varepsilon}(M \wtA|R)_{\psi} &\geqslant -nQ + nE.
\end{align}
where the $\Delta$ can be computed from the statement of the fully quantum AEP (Theorem \ref{thm:fully-quantum-aep}) and can be made arbitrarily close to zero for large enough $n$. Hence, we get that
\begin{align*}
	\delta_1 &= 3 \times 2^{\frac{n}{2}[Q + E - H(A)_{\sigma} + \Delta_1]} + 24\varepsilon\\
	\delta_2 &= 3 \times 2^{\frac{n}{2}[Q - E - I(A \rangle C)_{\mathcal{N}(\sigma)} + \Delta_2]} + 24\varepsilon.
\end{align*}

Hence, for any pair $(Q,E)$ such that $Q+E < H(A)_{\sigma}$ and $Q-E < I(A \rangle C)_{\mathcal{N}(\sigma)}$, there exists a protocol for which the error goes down to zero as $n \rightarrow \infty$.

To get the corollary on fully entanglement-assisted transmission, we simply add the two constraints to get $2Q < H(A)_{\sigma} + I(A \rangle C)_{\mathcal{N}(\sigma)} = I(A;C)_{\mathcal{N}(\sigma)}$ and hence $Q < \demi I(A;C)_{\mathcal{N}(\sigma)}$.
\end{proof}

\section{Destroying correlations by adding classical randomness}
In \cite{gpw04}, the authors discuss the following question: given a quantum state ${\rho^{AB}}^{\otimes n}$, how many bits of classical randomness must be added to it to turn it into a product state? By ``adding $k$ bits of randomness'' to a quantum state, we mean applying one of $2^k$ unitaries uniformly at random to either $A^n$ or $B^n$. They find a method such that, as long as $k \geqslant n[I(A;B)_{\rho} + \delta]$, the distance to a decoupled state goes to zero as $n \rightarrow \infty$. This theorem constitutes one of the first direct operational intepretations of the quantum mutual information for arbitrary density operators. We can recover both this result and a one-shot version of it from Theorem \ref{thm:vanilla-channels-oneshot}:

\begin{thm}
	Let $\rho^{AB} \in \DD(\sfA \otimes \sfB)$ be any quantum state. Then, for any $\varepsilon > 0$, there exists a set of $2^k$ unitaries $U_i^A$, with $k \leqslant 2H_{\max}^{\varepsilon}(A)_{\rho} + 4\log(1/\varepsilon) + 4$,  and a $\xi^A \in \DD(\sfA)$ such that
	\begin{equation*}
		\left\| 2^{-k} \sum_{i=1}^{2^k} U_i^A \cdot \rho^{AB} - \xi^A \otimes \rho^B \right\|_1 \leqslant 3 \times 2^{\demi[H_{\max}^{\varepsilon}(A)_{\rho} - H_2^{\varepsilon}(A|B)_{\rho} - k + 2\log(1/\varepsilon)]} + 2\sqrt{27 \varepsilon} + 24\varepsilon.
	\end{equation*}
\end{thm}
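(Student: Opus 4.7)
I would derive this decoupling bound from the one-shot channel coding theorem, Theorem~\ref{thm:vanilla-channels-oneshot}. The key observation is that the left-hand side has the flavor of a decoupling statement: after applying a random-unitary channel $\mathcal{M}(\sigma) = 2^{-k}\sum_i U_i \cdot \sigma$ on $A$, we are asking that the $A$-system become nearly uncorrelated with $B$. Introducing the Stinespring dilation $U_{\mathcal{M}}^{A \to AX}$, with environment $X$ of dimension $2^k$ recording the classical choice of unitary, and purifying $\rho^{AB}$ to $\ket{\rho}^{ABR}$, we obtain a pure state $\ket{\phi}^{AXBR}$ with $\phi^{AB} = \mathcal{M}(\rho^{AB})$ and, crucially, $\phi^B = \rho^B$ exactly since $\mathcal{M}$ acts only on $A$. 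The Uhlmann corollary of Theorem~\ref{thm:uhlmann} then implies that any bound $\|\phi^{XR} - \omega^X \otimes \rho^R\|_1 \leqslant \delta$ on the complementary reduction entails $\|\phi^{AB} - \xi^A \otimes \rho^B\|_1 \leqslant 2\sqrt{\delta}$ for some $\xi^A$: one chooses the canonical product purification $\ket{\omega}^{AX} \otimes \ket{\rho}^{BR}$ of $\omega^X \otimes \rho^R$, which is already product in the $A|B$ bipartition we care about.

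To obtain this complementary-side decoupling, I would apply Theorem~\ref{thm:vanilla-channels-oneshot} to $\psi^{ABR} = \rho^{ABR}$ with the channel $\mathcal{N}^{A' \to A'}$ (with $|A'| = |A|$) chosen as a random-unitary channel generated by $2^k$ base unitaries $\{W_i\}$ drawn from a unitary 2-design on $A$ (for instance the Clifford group), and with $\sigma^{A''A'}$ a maximally entangled state supported on a subspace of Schmidt rank $D$ to be chosen just above $2^{H_{\max}^{\varepsilon}(A)_{\rho}}$. Since $|A| = |A'|$, the encoding isometry $V$ produced by the theorem is a unitary, and setting $U_i := W_i V$ absorbs it into the ensemble, so that $\frac{1}{2^k}\sum_i U_i \cdot \rho^{AB} = \mathcal{N}(V \cdot \rho^{AB})$ and $\bar{\mathcal{M}}(\rho^{AR}) = \bar{\mathcal{N}}(V \cdot \rho^{AR})$. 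The theorem's bound $\|\bar{\mathcal{N}}(V \cdot \rho^{AR}) - \omega^E \otimes \rho^R\|_1 \leqslant 2\sqrt{\delta_1} + \delta_2$ then translates directly into a bound on $\|\phi^{XR} - \omega^X \otimes \rho^R\|_1$, and composing with the Uhlmann step of the previous paragraph gives the target inequality.

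The entropic bookkeeping then produces the stated constants. The 2-design property of $\{W_i\}$ together with the maximal entanglement of $\sigma$ gives $\omega^{A''E} \approx \pi_D^{A''} \otimes \pi^E$, so $H_2^{\varepsilon}(A'')_{\omega} \geqslant \log D$, and choosing $\log D \geqslant H_{\max}^{\varepsilon}(A)_{\rho} + 2\log(1/\varepsilon)$ drives $\delta_1 \leqslant 27\varepsilon$; passing through the outer square root produces the $2\sqrt{27\varepsilon}$ term. The second quantity $\delta_2$ acquires a factor of $2^{-k}$ from the $2^k$-dimensional environment: computing $H_2^{\varepsilon}(A''|E)_{\omega}$ with the optimizing $\tau^E = \pi^E$ shifts the expression by $\log|E| = k$. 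Combined with $H_2^{\varepsilon}(A|R)_{\rho^{AR}}$, which relates to the target $H_2^{\varepsilon}(A|B)_{\rho^{AB}}$ via the smooth-entropy duality $H_{\min}^{\varepsilon}(A|R) = -H_{\max}^{\varepsilon}(A|B)$ sandwiched between $H_{\min}^{\varepsilon} \leqslant H_2^{\varepsilon} \leqslant H_{\max}^{\varepsilon}$, this yields an exponent proportional to $H_{\max}^{\varepsilon}(A) - H_2^{\varepsilon}(A|B) - k + 2\log(1/\varepsilon)$, matching the statement. The upper bound $k \leqslant 2H_{\max}^{\varepsilon}(A) + 4\log(1/\varepsilon) + 4$ encodes the constraint $D \leqslant |A''| = |A|$ on the chosen subspace.

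The principal technical obstacle is the precise tracking of $H_2^{\varepsilon}(A''|E)_{\omega}$ as a function of $k$, since the final exponent arises from a delicate balance of three contributions: $H_{\max}^{\varepsilon}(A)$ from $\delta_1$, $-H_2^{\varepsilon}(A|B)$ from the reference side of $\delta_2$, and $-k$ from the environment dimension. A secondary delicate point is the passage from $H_2^{\varepsilon}(A|R)$ (which Theorem~\ref{thm:vanilla-channels-oneshot} naturally produces) to $H_2^{\varepsilon}(A|B)$ (which appears in the statement), since these live on complementary reductions of the pure $\ket{\rho}^{ABR}$; the bridge uses the smooth-entropy duality with a small $\varepsilon$-dependent slack already absorbed into the explicit $+2\log(1/\varepsilon)$ in the exponent.
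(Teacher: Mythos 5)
Your Uhlmann step is the gap. From $\|\phi^{XR}-\omega^X\otimes\rho^R\|_1\leqslant\delta$ for the pure state $\phi^{AXBR}$, the corollary to Theorem~\ref{thm:uhlmann} only produces an isometry $W^{AB\to AB}$ acting on the \emph{joint} purifying system, with $\|W\cdot\phi^{AB}-\omega^A\otimes\rho^B\|_1\leqslant 2\sqrt{\delta}$; it gives no reason for $W$ to factor as $W^A\otimes W^B$. Choosing the product purification $\ket{\omega}^{AX}\otimes\ket{\rho}^{BR}$ of the target does not fix this: two purifications of nearby states are related by an isometry on the purifying system, but that isometry can entangle $A$ with $B$, and a pure state on $AXBR$ can have $\phi^{XR}$ exactly product while $\phi^{AB}$ is strongly correlated (e.g.\ $\ket{\phi}=\ket{\mathrm{GHZ}}^{ABR}\otimes\ket{0}^X$). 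So decoupling on $XR$ does not transfer to decoupling on the complementary pair $AB$, and the target bound does not follow from your argument. Even if this step could be repaired, the extra Uhlmann pass would square-root the whole right-hand side, giving $2\sqrt{2\sqrt{\delta_1}+\delta_2}$ rather than the claimed $2\sqrt{\delta_1}+\delta_2$. Finally, purifying $\rho^{AB}$ to $\rho^{ABR}$ and taking Theorem~\ref{thm:vanilla-channels-oneshot}'s reference to be that $R$ puts $H_2^{\varepsilon}(A|R)_\rho$ into $\delta_2$, not $H_2^{\varepsilon}(A|B)_\rho$ as the statement requires; the duality you invoke relates $H_{\min}^{\varepsilon}(A|R)$ to $-H_{\max}^{\varepsilon}(A|B)$ and does not convert one conditional $2$-entropy into the other.

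The paper's proof avoids all of this by inverting the roles. It applies Theorem~\ref{thm:vanilla-channels-oneshot} with the random-unitary map $\mathcal{T}(\xi)=2^{-k}\sum_i V_i\cdot\xi$ playing the role of the \emph{complementary} channel $\bar{\mathcal{N}}$ (so $E=A$), and with $\rho^{AB}$ in the role of $\psi^{AR}$, i.e.\ the theorem's reference $R$ is your $B$. The theorem's first displayed bound is then literally $\|\mathcal{T}(U\cdot\rho^{AB})-\omega^A\otimes\rho^B\|_1\leqslant 2\sqrt{\delta_1}+\delta_2$, with $\delta_2$ already containing $H_2^{\varepsilon}(A|B)_\rho$, and no further Uhlmann step and no extra square root. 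One then sets $U_i:=V_iU$ with $V_i$ Weyl operators on a $D$-dimensional subspace --- this specific structure, not a generic $2^k$-element subset of a $2$-design, is what makes $\omega^{A''A}$ a flat mixture of $2^k$ orthogonal maximally entangled states and yields the exact evaluations $H_2(A'')_\omega=\log D$ and $H_2(A''|A)_\omega=k-\log D$ --- and chooses $\log D=H_{\max}^{\varepsilon}(A)_\rho+2\log(1/\varepsilon)$ to obtain the stated constants.
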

\begin{proof}
	Let $P^A$ be a projector onto a subspace of $\sfA$ of dimension $D \geqslant \sqrt{2^k}$, and let $\{ V^A_i \}_{i=1}^{2^k}$ be a set of Weyl operators (unitaries such that $\tr[V_i\mdag V_j] = 0$ for every $i \neq j$) on the support of $P^A$. Now, define the superoperator $\mathcal{T}^{A \rightarrow A}$ as
	\[ \mathcal{T}(\xi) = 2^{-k} \sum_{i=1}^{2^k} V_i \cdot \xi.  \]
	We can now apply Theorem \ref{thm:vanilla-channels-oneshot} with $\mathcal{T}$ playing the role of $\bar{\mathcal{N}}$, $\rho^{AB}$ as the input state and with input distribution $\sigma^{A''A} = \frac{|A|}{D}P^A \Phi^{A''A}P^A$ to get that there exists a unitary $U^A$ such that
	\begin{equation*}
		\left\| \mathcal{T}(U^A \cdot \rho^{AB}) - \omega^A \otimes \rho^B \right\|_1 \leqslant 2 \sqrt{\delta_1} + \delta_2
	\end{equation*}
	where $\omega^{A''A} = \mathcal{T}(\sigma^{A''A})$, and
	\begin{align*}
		\delta_1 &= 3 \times 2^{\demi H_{\max}^{\varepsilon}(A)_{\rho} - \demi \log D} + 24\varepsilon\\
		\delta_2 &= 3 \times 2^{-\demi H_{2}^{\varepsilon}(A|B)_{\rho} + \demi \log D - \demi k} + 24\varepsilon
	\end{align*}
	since $H_2(A'')_{\omega} = \log D$ and $H_2(A''|A)_{\omega} = -\log D + k$. In other words,
	\begin{equation*}
		\left\| 2^{-k} \sum_{i} V_iU^A \cdot \rho^{AB} - \omega^A \otimes \rho^B \right\|_1 \leqslant 2 \sqrt{\delta_1} + \delta_2.
	\end{equation*}
	We can now define $U_i^A := V_i^A U^A$ to get our desired set of unitaries. All that is left to do is to let $\log D = H_{\max}^{\varepsilon}(A)_{\rho} + 2\log(1/\varepsilon)$ to get that $\delta_1 = 27\varepsilon$, and hence, the theorem.
\end{proof}
One can also show using the fully quantum AEP (Theorem \ref{thm:fully-quantum-aep}) that, for an i.i.d.\ state ${\rho^{AB}}^{\otimes n}$, $H_{\max}^{\varepsilon}(A^n)_{\rho^{\otimes n}} - H_2^{\varepsilon}(A^n|B^n)_{\rho^{\otimes n}} \rightarrow n[H(A)_{\rho} - H(A|B)_{\rho}] = nI(A;B)_{\rho}$, which allows us to recover the theorem of \cite{gpw04}.

\chapter{Quantum channels with side information at the transmitter}
\label{chp:side-info}

\section{Introduction}\label{sec:side-info-intro}
Consider the following problem: we have a noisy quantum memory device that can store $n$ qubits and that contains a certain fraction of defective cells. The cells that do work can be modelled as depolarizing channels, but the defective ones always output $\ket{0}$. We can test which cells are defective before writing to the memory device, but this information is not necessarily available when reading from it. What is the best asymptotic rate at which we can store qubits reliably on this device? This problem can be generalized to any channel where the transmitter has access to side information about the channel state while the receiver does not.

The corresponding classical problem has been solved by Gel'fand and Pinsker in \cite{gelfand-pinsker}. They consider channels modelled as a conditional probability distribution $p_{Y|XS}(y|x,s)$, $x \in \mathcal{X}, s \in \mathcal{S}, y \in \mathcal{Y}$, where $x$, $y$ and $s$ represent the input, output and state of the channel respectively. The channel state is i.i.d.\ and distributed according to $p_S(s)$. The encoder has access to the entire sequence of channel states ahead of time whereas the decoder does not. They have shown that the capacity of such a channel is given by
\begin{equation}
	C = \max_{q_{USX} \in \mathcal{P}} \left[ I(U;Y) - I(U;S) \right]
	\label{eqn:gp-classique}
\end{equation}
where $\mathcal{P}$ is the set of all probability distributions on $\mathcal{U} \times \mathcal{X} \times \mathcal{S}$ such that the marginal on $\mathcal{S}$ is equal to $p_S(s)$; $\mathcal{U}$ is an arbitrary set that can be chosen such that $|\mathcal{U}| \leqslant |\mathcal{X}| + |\mathcal{S}|$. The mutual informations are computed for the distribution $p_{Y|XS} \cdot q_{USX}$.

Here we shall generalize this result to quantum channels and potentially quantum side-information using the methods developed in Chapter \ref{chp:decoupling}. Namely, we will prove that the entanglement-assisted quantum capacity of quantum channels with side information at the transmitter has the same form as (\ref{eqn:gp-classique}) and, a relatively rare fact in quantum information theory, has a single-letter converse. Along the way, we will prove a one-shot coding theorem as well as a coding theorem for quantum transmission with rate-limited entanglement assistance for such channels, both in the same spirit as those proven at the end of the last chapter.

\section{Definition of quantum channels with side information at the transmitter}

A quantum channel with side information at the transmitter is defined by a superoperator $\mathcal{N}^{A'S \rightarrow C}$ and a quantum state $\ket{\phi}^{SS'}$; this quantum state represents the side information. Alice has access to $S'$ and can input a state of her choice into $A'$. One way to view this is to say that Alice shares entanglement with the channel itself. This framework allows us to consider both quantum and classical side information about the channel in a unified manner.

To illustrate this, consider the example of the depolarizing channel with defects given in the introduction. For this case, we can choose $\ket{\phi}$ to be $\sqrt{p} \ket{00} + \sqrt{1-p} \ket{11}$. The superoperator $\mathcal{N}$ then measures the $S$ subsystem, and outputs $\ket{0}$ if the outcome is $0$. If the outcome is $1$, it applies the depolarizing channel to $A'$ and sends the output to Bob.

In this chapter, we will first be interested in the following one-shot task: Alice and Bob initially share the $A$ and $B$ parts of the state $\psi^{ABR}$, and Alice would like to use the channel $(\mathcal{N}^{A'S \rightarrow C}, \ket{\phi}^{SS'})$ to send $A$ to Bob. Hence, we want to ascertain the existence of an encoder $\mathcal{E}^{AS' \rightarrow A'}$ and decoder $\mathcal{D}^{CB \rightarrow AB}$ such that
\begin{equation*}
	\left\| \left( \mathcal{D} \circ \mathcal{N} \circ \mathcal{E} \right)\left( \psi^{ABR} \otimes \phi^{SS'} \right) - \psi^{ABR} \right\|_1 \leqslant \varepsilon
\end{equation*}
with a $\varepsilon$ small enough for our purposes.

We will then specialize our one-shot theorem to the i.i.d.\ case, in which the state $\psi^{M\wtA BR}$ has the form $\Phi^{RM} \otimes \Phi^{\wtA B}$ where $M\wtA$ plays the role of $A$, and the channel is $\mathcal{N}^{\otimes n}$. We will say that a pair $(Q, E)$ is achievable if there exists a sequence of codes of length $n$, with encoders $\mathcal{E}_n^{M_n \wtA_n {S'}^{n}\rightarrow {A'}^n}$ and decoders $\mathcal{D}_n^{C^n B_n \rightarrow M_n}$, with $|M_n| = |R_n| = 2^{nQ}$, $|\wtA_n| = |B_n| = 2^{nE}$, such that
\begin{equation*}
	\lim_{n \rightarrow \infty} \left\| \left( \mathcal{D}_n \circ \mathcal{N}^{\otimes n} \circ \mathcal{E}_n \right)\left( \Phi^{R_nM_n} \otimes \Phi^{\wtA_n B_n} \otimes (\phi^{SS'})^{\otimes n} \right) - \Phi^{R_n M_n} \right\|_1 = 0.
\end{equation*}

A rate $Q$ is achievable for entanglement-assisted transmission if there exists a $E\geqslant0$ such that $(Q,E)$ is achievable, and it is achievable for unassisted transmission if $(Q,0)$ is achievable. 

The capacity region is the closure of the convex hull of all achievable points.

The goal of this chapter is to establish the following theorems:
\begin{thm}
Let $(\mathcal{N}^{A'S \rightarrow C}, \ket{\phi}^{SS'})$ be a quantum channel with side-information at the transmitter, and let $\sigma^{AA'S}$ be any mixed state with $\sigma^S = \phi^S$. Then, any rate point $(Q,E)$ such that
\begin{align*}
Q + E &< H(A|S)_{\sigma}  &&\text{and} & Q - E &< I(A \rangle C)_{\mathcal{N}(\sigma)}
\end{align*}
is achievable for transmission with rate-limited entanglement assistance. As a corollary, any rate $Q$ such that $Q < \demi[ I(A;C)_{\mathcal{N}(\sigma)} - I(A;S)_{\sigma}]$ is achievable for entanglement-assisted transmission.
\end{thm}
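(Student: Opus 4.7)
The plan is to reduce the claim to a one-shot coding theorem and then apply the fully quantum AEP (Theorem \ref{thm:fully-quantum-aep}), following the template of Theorem \ref{thm:direct-vanilla-channels}. The central tool is Theorem \ref{thm:smooth-bertha}, applied twice to obtain one condition per rate constraint; the only genuinely new ingredient is the handling of the side-information register $S'$ through a careful choice of purification.

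First, I would purify the test state as $\ket{\tilde{\sigma}}^{AA'SS'}$, taking the purifying system of $\sigma^S$ to be precisely $S'$. This is legitimate because $\sigma^S = \phi^S$, and any alternative purification is related to this one by an isometry on the purifying register that can be absorbed into the encoder. Applying the Stinespring dilation $U_{\mathcal{N}}^{A'S \rightarrow CE}$ then produces a pure state $\ket{\omega}^{ACES'}$. The input state is $\psi^{M\wtA BR} = \Phi^{RM} \otimes \Phi^{\wtA B}$ with $|M|=|R|=2^{nQ}$ and $|\wtA|=|B|=2^{nE}$, and I would apply a Haar-random unitary to $M\wtA$ viewed as a subsystem of an ancillary $A^{\otimes n}$, in the spirit of the proof of Theorem \ref{thm:vanilla-channels-oneshot}.

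The two decoupling arguments proceed as follows. For the decoder, set $\mathcal{T}_{\mathrm{dec}}^{A \rightarrow ES'}(\xi) := |A|\tr_C\bigl[U_{\mathcal{N}} \op_{A \rightarrow A'SS'}(\ket{\tilde{\sigma}}) \cdot \xi\bigr]$, which satisfies $\mathcal{T}_{\mathrm{dec}}(\Phi^{A\tilde{A}}) = \omega^{\tilde{A}ES'}$; because $\omega^{ACES'}$ is pure, $H(\tilde{A}|ES')_{\omega} = I(A\rangle C)_{\mathcal{N}(\sigma)}$, and combining with $H_2(M\wtA|R)_{\psi} = n(E-Q)$ yields the condition $Q-E < I(A\rangle C)_{\mathcal{N}(\sigma)}$ after invoking the AEP. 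For the encoder realization, set $\mathcal{T}_{\mathrm{enc}}^{A \rightarrow S}(\xi) := |A|\tr_{A'S'}\bigl[\op_{A \rightarrow A'SS'}(\ket{\tilde{\sigma}}) \cdot \xi\bigr]$, which satisfies $\mathcal{T}_{\mathrm{enc}}(\Phi^{A\tilde{A}}) = \sigma^{\tilde{A}S}$; conditioning on $BR$ rather than $R$ (to preserve Bob's entanglement share along with the reference) gives $H_2(M\wtA|BR)_{\psi} = -n(Q+E)$, and since $H_2^{\varepsilon}(\tilde{A}^n|S^n)_{\tilde{\sigma}^{\otimes n}}/n \to H(A|S)_{\sigma}$ by the AEP, the bound vanishes precisely when $Q+E < H(A|S)_{\sigma}$.

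A Markov-inequality argument then produces a single unitary $U$ satisfying both bounds at the cost of a constant factor. Applying Uhlmann's theorem to the encoder decoupling yields a partial isometry $V^{M\wtA S' \rightarrow A'F}$ that realizes the virtual operation using exactly the $S'$ register Alice holds from $\phi^{SS'}$; applying Uhlmann's theorem again after the channel, to the decoder decoupling, produces the decoding map $\mathcal{D}^{C^n B \rightarrow M}$. The main obstacle is the bookkeeping for this first use of Uhlmann: the ``virtual'' $S'$ produced by $\op_{A \rightarrow A'SS'}(\ket{\tilde{\sigma}})$ must be consistently identified with the ``actual'' $S'$ Alice consumes from the side information, and it is the deliberate choice of purification in the first step that makes the Uhlmann isometry factor through an encoder of the correct signature $M\wtA S' \rightarrow A'$. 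Finally, the corollary $Q < \frac{1}{2}\bigl(I(A;C)_{\mathcal{N}(\sigma)} - I(A;S)_{\sigma}\bigr)$ follows by adding the two rate constraints at their saturation points, exactly matching the classical Gel'fand-Pinsker formula \eqref{eqn:gp-classique}.
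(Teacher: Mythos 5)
Your outline follows the paper's proof step by step: apply the smoothed decoupling theorem (Theorem~\ref{thm:smooth-bertha}) twice --- once for the decoder condition, once for the encoder condition --- use Markov's inequality to fix a single unitary satisfying both bounds, invoke Uhlmann twice, then pass to rates with the fully quantum AEP. The entropy bookkeeping is also correct: $H_2(M\wtA|R)_{\psi} = n(E-Q)$, $H_2(M\wtA|BR)_{\psi} = -n(Q+E)$, and the two remaining smooth quantities converge to $H(A|S)_{\sigma}$ and $I(A\rangle C)_{\mathcal{N}(\sigma)}$.

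The one point you single out as the ``genuinely new ingredient,'' however, is a misconception, and taking it literally would break the argument. You propose purifying the test state as $\ket{\tilde{\sigma}}^{AA'SS'}$ and then ``consistently identifying'' the virtual $S'$ emitted by $\op_{A \rightarrow A'SS'}(\ket{\tilde{\sigma}})$ with the actual $S'$ Alice consumes, crediting this deliberate purification choice for making Uhlmann produce an encoder of signature $M\wtA S' \rightarrow A'$. Neither part of this is right. First, the purifying register for $\sigma^{AA'S}$ may need dimension up to $|A||A'||S|$, far exceeding the fixed $|S'|$, so that particular purification is generally not even available. Second, the register emitted by $\op$ and the $S'$ Alice holds are and must remain independent: the former is an output of the purified encoder (the system it discards), the latter an input to it. What actually fixes the signature is simply that the encoder decoupling condition matches marginals on $SBR$; the target pure state $\ket{\psi}^{M\wtA BR} \otimes \ket{\phi}^{SS'}$ has complement $M\wtA S'$ relative to $SBR$ precisely because $\ket{\phi}^{SS'}$ is the fixed purification of $\phi^S$ supplied by the channel; and the virtual pure state has complement $A'D$, where $D$ is an arbitrary purifying register for $\sigma$. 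Uhlmann then gives $V^{M\wtA S' \rightarrow A'D}$, and tracing out $D$ is the encoder. Your stated signature $V^{M\wtA S' \rightarrow A'F}$ in the last paragraph is correct, but the mechanism driving it has nothing to do with how you purified $\sigma$: the paper simply calls this discard register $D$, and it neither needs to be, nor should be thought of as, a copy of $S'$.
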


\begin{thm}\label{thm:sideinfo-asst}
	The entanglement-assisted quantum capacity of a quantum channel with side information at the transmitter $( \mathcal{N}^{A'S \rightarrow C}, \ket{\phi}^{SS'})$ is
	\begin{equation}\label{eqn:cap}
		C = \sup_{\sigma} \left\{ \demi I(A;C)_{\omega} - \demi I(A;S)_{\sigma} \right\}.
	\end{equation}
	The supremum is taken over all mixed states of the form $\sigma^{AA'S}$ where $\sigma^S = \phi^S$ and $\omega^{AC} = \mathcal{N}^{A'S \rightarrow C}(\sigma^{AA'S})$. In other words, the previous theorem with an i.i.d.\ input distribution is optimal for coding for entanglement-assisted i.i.d.\ channels.
\end{thm}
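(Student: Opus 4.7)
The theorem splits into an achievability half and a converse half. \emph{Achievability} is an immediate corollary of the previous theorem in this chapter. For any $\sigma^{AA'S}$ with $\sigma^S = \phi^S$, that theorem produces all rate pairs $(Q,E)$ with $Q+E < H(A|S)_\sigma$ and $Q-E < I(A\rangle C)_{\mathcal{N}(\sigma)}$; adding the two inequalities and using $H(A)_\sigma = H(A)_{\mathcal{N}(\sigma)}$ (since $\mathcal{N}$ does not act on the reference) shows that $H(A|S)_\sigma + I(A\rangle C)_{\mathcal{N}(\sigma)} = I(A;C)_{\mathcal{N}(\sigma)} - I(A;S)_\sigma$, so any $Q < \demi[I(A;C)_{\mathcal{N}(\sigma)} - I(A;S)_\sigma]$ is achievable with sufficient entanglement assistance.

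The substantive work is the \emph{converse}. Consider a sequence of codes of block length $n$, rate $Q_n \to Q$, and trace-distance error $\varepsilon_n \to 0$. Let $\theta$ be the state on $R C^n B$ just before the decoder is applied, arising from $\Phi^{RM} \otimes \Phi^{\wtA B} \otimes (\phi^{SS'})^{\otimes n}$, and let $\theta'$ on $R\hat{M}$ be the post-decoding state. Since the encoder acts only on $(M, \wtA, {S'}^n)$ and the channel only on $({A'}^n, S^n)$, the subsystems $R$ and $B$ are never touched, and their initial product form gives $\theta^{RB} = \pi^R \otimes \pi^B$, whence $I(R;B)_\theta = 0$. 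An Alicki--Fannes--Winter inequality applied to $\theta'^{R\hat{M}} \approx \Phi^{R\hat{M}}$ followed by data processing yields
\begin{equation*}
2nQ_n \leq I(R;\hat{M})_{\theta'} + n\delta(\varepsilon_n) \leq I(R; C^n|B)_\theta + n\delta(\varepsilon_n)
\end{equation*}
with $\delta(\varepsilon_n) \to 0$. I then single-letterize $I(R; C^n|B)_\theta$ in the style of the classical Gel'fand--Pinsker converse. On the pure state obtained from isometric dilations of the encoder and of all $n$ channel uses, every register $R, B, C^n, S^n$ and the environments coexist simultaneously. The chain rule
\begin{equation*}
I(R; C_i | BC^{i-1}) = I(RS_{i+1}^n; C_i | BC^{i-1}) - I(S_{i+1}^n; C_i | BC^{i-1}R),
\end{equation*}
summed with the quantum Csisz\'ar sum identity $\sum_i I(S_{i+1}^n; C_i | BC^{i-1}R) = \sum_i I(C^{i-1}; S_i | BS_{i+1}^n R)$, together with $I(BS_{i+1}^n R; S_i) = 0$ (because $B$, $R$, and $S_{i+1}^n$ are initially independent of $S_i$), yields
\begin{equation*}
I(R; C^n|B)_\theta \leq \sum_{i=1}^n \bigl[I(A_i; C_i) - I(A_i; S_i)\bigr], \qquad A_i := RBC^{i-1}S_{i+1}^n.
\end{equation*}
A time-sharing step with a uniform classical index $J \in \{1,\dots,n\}$ absorbed into the reference then produces a single $\sigma^{AA'S}$ with $\sigma^S = \phi^S$ satisfying $2Q_n \leq I(A;C)_{\mathcal{N}(\sigma)} - I(A;S)_\sigma + \delta(\varepsilon_n)$, which in the $n \to \infty$ limit and after taking the supremum over $\sigma$ matches \eqref{eqn:cap}.

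The principal obstacle is the coherent bookkeeping required for the single-letterization: each $S_i$ is consumed by the $i$-th channel use, so making $S_i$ and $C_i$ coexist on a single pure state requires carrying around the Stinespring dilations of both the encoder and the channel throughout the argument. Once the intermediate states are in place, the quantum Csisz\'ar sum identity is a direct algebraic consequence of the entropic chain rule and goes through exactly as in the classical setting. A smaller concern is the dimension bound on the auxiliary reference $A$ needed to turn the supremum in \eqref{eqn:cap} into a maximum, which can be handled by a Carath\'eodory-style argument parallel to the bound $|U| \leq |X|+|S|$ in the classical proof of \cite{gelfand-pinsker}.
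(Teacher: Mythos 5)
Your achievability argument and the overall structure of the converse are the same as the paper's: you use the direct coding theorem, observe $I(R;B)=0$ for the pre-decoding state, arrive at $2nQ_n \lesssim I(R;C^n|B)_\theta$, and then single-letterize via the Gel'fand--Pinsker chain-rule decomposition with the identical auxiliary register $A_i = RBC^{i-1}S_{i+1}^n$ (the paper's $X(i)$). Where you invoke a ``quantum Csisz\'ar sum identity,'' the paper instead spells out a telescoping inequality (its equation (\ref{eqn:eq17})) on the family of intermediate states $\omega(i) = (\mathcal{N}^{\otimes i}\otimes\ident^{\otimes n-i})(\sigma)$, but these are the same calculation. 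However, your remark that the Csisz\'ar sum identity ``goes through exactly as in the classical setting'' understates what must be done: the two terms $I(A_i;C_i)$ and $I(A_i;S_i)$ live on \emph{different} states $\omega(i)$ and $\omega(i-1)$ because $S_i$ and $C_i$ never coexist, and the double-counting proof of the identity relies on the reduced state on $S_j C_i B C^{i-1} S_{j+1}^n R$ (for $i<j$) being unchanged by the intervening channel uses. That observation is precisely the content of the paper's bookkeeping --- it is not a routine import of the classical identity, and you should make the state-dependence explicit rather than leaving it to a remark about Stinespring dilations.

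The concrete error is your closing claim that a Carath\'eodory-style argument bounds the dimension of $A$ parallel to $|\mathcal{U}|\leqslant|\mathcal{X}|+|\mathcal{S}|$ in the classical Gel'fand--Pinsker proof. The paper explicitly discusses in Section~\ref{sec:si-discussion} that this fails in the quantum setting: the classical argument exploits the decomposition $H(A|B)=\sum_b p(b)H(A|B{=}b)$, which has no quantum analogue, and no dimension bound on $A$ is known. This is also why the theorem statement uses a supremum rather than a maximum. You should delete the claim. Happily it is not load-bearing: the paper's final step merely selects the largest term in the single-letter sum (rather than time-sharing or converting the supremum to a maximum) and notes that this term is already achievable by the direct part, which closes the converse without any cardinality bound.
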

This theorem also entails that the entanglement-assisted classical capacity of quantum channels with side information at the transmitter is
\begin{equation}
	C = \sup_{\sigma} \left\{ I(A;C)_{\mathcal{N}(\sigma)} - I(A;S)_{\sigma} \right\}
\end{equation}
via super-dense coding.

\section{Direct coding theorem}\label{sec:side-info-direct}
We begin with the one-shot coding theorem:

\begin{thm}\label{thm:side-info-channels-oneshot}
	Let $\psi^{ABR}$ be a pure state, $(\mathcal{N}^{A'S \rightarrow C},\ket{\phi}^{SS'})$ be any channel with side-information at the transmitter with $U_{\mathcal{N}}^{A'S \rightarrow CE}$ as Stinespring dilation, and let $\omega^{A''CED} = U_{\mathcal{N}} \cdot \sigma^{A''A'SD}$, where $\sigma$ is any pure state with $\sigma^S = \phi^S$. Then, there exists a encoding CPTP map $\mathcal{E}^{AS' \rightarrow A'}$ and a decoding CPTP map $\mathcal{D}^{CB \rightarrow AB}$ such that
	\begin{equation*}
		\left\| (\mathcal{D} \circ \mathcal{N} \circ \mathcal{E})(\psi^{ABR} \otimes \phi^{SS'})^{ABR} - \psi^{ABR} \right\|_1 \leqslant 2\sqrt{2\sqrt{\delta_1} + \delta_2}
	\end{equation*}
where
\begin{align*}
	\delta_1 &= 3 \times 2^{\demi H_{\max}^{\varepsilon}(A)_{\psi} - \demi H_2^{\varepsilon}(A''|S)_{\sigma}} + 24\varepsilon\\
	\delta_2 &= 3 \times 2^{-\demi H_{2}^{\varepsilon}(A''|ED)_{\omega} - \demi H_2^{\varepsilon}(A|R)_{\psi}} + 24\varepsilon.
\end{align*}
\end{thm}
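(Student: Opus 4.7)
The plan is to adapt the proof of the vanilla one-shot channel coding theorem (Theorem~\ref{thm:vanilla-channels-oneshot}) to incorporate the side information $\ket{\phi}^{SS'}$, using Uhlmann's theorem to convert Alice's share of $\phi$ into a coherent copy of the target input pure state $\sigma$. Since the hypothesis $\sigma^S=\phi^S$ says that $\ket{\sigma}^{A''A'SD}$ and $\ket{\phi}^{SS'}$ are purifications of the same state on $S$, Theorem~\ref{thm:uhlmann} provides an isometry $V^{S'\rightarrow A''A'D}$ with $V\ket{\phi}^{SS'}=\ket{\sigma}^{A''A'SD}$. Using $V$, I build the linear operator $M^{A''S'\rightarrow A'D}:=|A''|\,\bra{\Phi}^{A''\widetilde{A}''}\,V^{S'\rightarrow \widetilde{A}''A'D}$, which plays the role that $\sqrt{|A''|}\,\op_{A''\rightarrow A'}(\ket{\sigma})$ plays in the vanilla proof; a short computation using Lemma~\ref{lem:op-identity} verifies the key identity $M(\ket{\Phi}^{A''A''_{\mathrm{ref}}}\otimes\ket{\phi}^{SS'})=\ket{\sigma}^{A''_{\mathrm{ref}}A'SD}$.

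Next, with $W^{A\rightarrow A''}$ a fixed full-rank partial isometry, I apply the smooth decoupling theorem (Theorem~\ref{thm:smooth-bertha}) twice, Haar-averaging over random unitaries $U^A$ on $A$. For the first application, take the CP map $\mathcal{T}_1^{A''\rightarrow ED}(\xi)=(\bar{\mathcal{N}}^{A'S\rightarrow E}\otimes\ident^{D})[M(\xi\otimes\phi^{SS'})M\mdag]$ and run it on $\psi^{AR}$; by the key identity, its Choi state equals $\omega^{A''_{\mathrm{ref}}ED}$, the relevant marginal of the pure state $\omega^{A''CED}$ from the statement, so Theorem~\ref{thm:smooth-bertha} delivers the bound $2^{-\demi H_2^\varepsilon(A''|ED)_\omega-\demi H_2^\varepsilon(A|R)_\psi}+8\varepsilon$. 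For the second application, take $\mathcal{T}_2^{A''\rightarrow S}(\xi)=\tr_{A'D}[M(\xi\otimes\phi^{SS'})M\mdag]$ and run it on the purification $\psi^{ABR}$ with $BR$ as the reference; its Choi state is $\sigma^{A''_{\mathrm{ref}}S}$, and the pure-state duality $H_2^\varepsilon(A|BR)_\psi=-H_{\max}^\varepsilon(A)_\psi$ turns the bound into $2^{\demi H_{\max}^\varepsilon(A)_\psi-\demi H_2^\varepsilon(A''|S)_\sigma}+8\varepsilon$. Monotonicity of the trace norm under $\tr_S$ then certifies that the $BR$-marginal produced by $MUW$ is close to $\psi^{BR}$.

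To conclude, I apply Markov's inequality to pick a single $U$ for which both averaged bounds hold up to a multiplicative factor of $3$, producing the $\delta_1$ and $\delta_2$ of the statement. Uhlmann's theorem, applied to the $\mathcal{T}_2$-bound, upgrades $MUW$ into an honest CPTP encoder $\mathcal{E}^{AS'\rightarrow A'}$ within $2\sqrt{\delta_1}$ in trace norm of the action of $MUW$ on the extended state. Together with the $\mathcal{T}_1$-bound and the triangle inequality, this yields $\|\bar{\mathcal{N}}(\mathcal{E}(\psi^{AR}\otimes\phi^{SS'}))-\omega^{ED}\otimes\psi^R\|_1\leqslant 2\sqrt{\delta_1}+\delta_2$, i.e.\ $ED$ is nearly decoupled from $R$ after the encoding and channel. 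A second application of Uhlmann's theorem on this decoupling statement produces the decoder $\mathcal{D}^{CB\rightarrow AB}$, with the usual fidelity-to-trace-norm conversion supplying the outer $2\sqrt{\cdot}$ in the stated bound.

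The main obstacle I anticipate is verifying that the Choi states of $\mathcal{T}_1$ and $\mathcal{T}_2$ really reduce to the marginals of $\omega$ and $\sigma$ declared in the statement. This is subtle because $\phi^{SS'}$ enters as a ``half-external'' resource, with $S$ carried straight through to the channel rather than being a locally prepared ancilla; the $S$ register is never in Alice's hands, yet it controls the conditional entropies appearing in $\delta_1$. Once the key identity $M(\Phi^{A''A''_{\mathrm{ref}}}\otimes\phi^{SS'})=\sigma^{A''_{\mathrm{ref}}A'SD}$ is in place, the remaining arithmetic closely parallels the vanilla proof.
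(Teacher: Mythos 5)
Your proof is correct and follows essentially the same skeleton as the paper's: apply Theorem~\ref{thm:smooth-bertha} twice (once with $R$ as reference to establish decoupling of $ED$, once with $BR$ as reference to match the target input distribution), select a single $U$ by Markov's inequality, and then invoke Uhlmann's theorem twice to extract the encoding isometry and the decoder. The one genuinely different ingredient is the way you parameterize the nonphysical encoding map. The paper works directly with $\sqrt{|A''|}\,\op_{A''\to A'SD}(\ket{\sigma})$, an operator from $\sfA''$ alone to $\sfA'\otimes\sfS\otimes\sfD$ that \emph{creates} the $S$ register as part of its output; the match between the produced $S$-marginal and $\phi^S$ is then enforced by the second decoupling bound and absorbed into the first Uhlmann step. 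You instead fix an exact Uhlmann partial isometry $V^{S'\to\widetilde{A}''A'D}$ with $V\ket{\phi}^{SS'}=\ket{\sigma}^{\widetilde{A}''A'SD}$ and contract against $\bra{\Phi}^{A''\widetilde{A}''}$ to build $M^{A''S'\to A'D}$, treating $S$ as a passive bystander that is passed through to the channel unchanged. Since a CP map is determined by its Choi state and you verify that both constructions produce the Choi states $\omega^{A''ED}$ (for $\mathcal{T}_1$) and $\sigma^{A''S}$ (for $\mathcal{T}_2$), the resulting superoperators coincide with the paper's and the downstream bounds are identical. Your construction makes the physical role of the side-information purification more transparent; the paper's $\op$-based version is notationally lighter. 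Both are valid.

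Two small points worth tightening. First, the ``pure-state duality'' you invoke is really the chain $H_2^\varepsilon(A|BR)_\psi \geqslant H_{\min}^\varepsilon(A|BR)_\psi = -H_{\max}^\varepsilon(A)_\psi$, not an equality; the inequality points in the direction you need, and the paper applies the same estimate. Second, in the display $\|\bar{\mathcal{N}}(\mathcal{E}(\psi^{AR}\otimes\phi^{SS'}))-\omega^{ED}\otimes\psi^R\|_1\leqslant 2\sqrt{\delta_1}+\delta_2$ the target state refers to $D$, so $\mathcal{E}$ there must be read as the Stinespring isometry $V^{AS'\to A'D}$ rather than the traced-down CPTP map $\mathcal{E}^{AS'\to A'}$; after the second Uhlmann step one traces out $D$ together with the decoder's purifying registers to land on the stated bound. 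Neither point affects the validity of your argument.
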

Hence, to have a good code, one must ensure that both $\delta_1$ and $\delta_2$ are sufficiently small. Both of these quantities have natural interpretations: $\delta_1$ characterizes the difference between how ``big'' the message is ($H_{\max}(A)_{\psi}$) and how much space there is in the input to the channel ($H_{2}^{\varepsilon}(A''|S)_{\sigma}$), and $\delta_2$ depends on the difference between how hard the state is to transmit ($-H_2^{\varepsilon}(A|R)_{\psi}$) and how good the channel to the environment is at destroying correlations ($H_2^{\varepsilon}(A''|ED)_{\omega}$). See Figures \ref{fig:si-1shot-real} and \ref{fig:si-1shot-omega} for illustrations of the protocol.

\begin{figure}
    \centering
    \scalebox{0.9}{\input{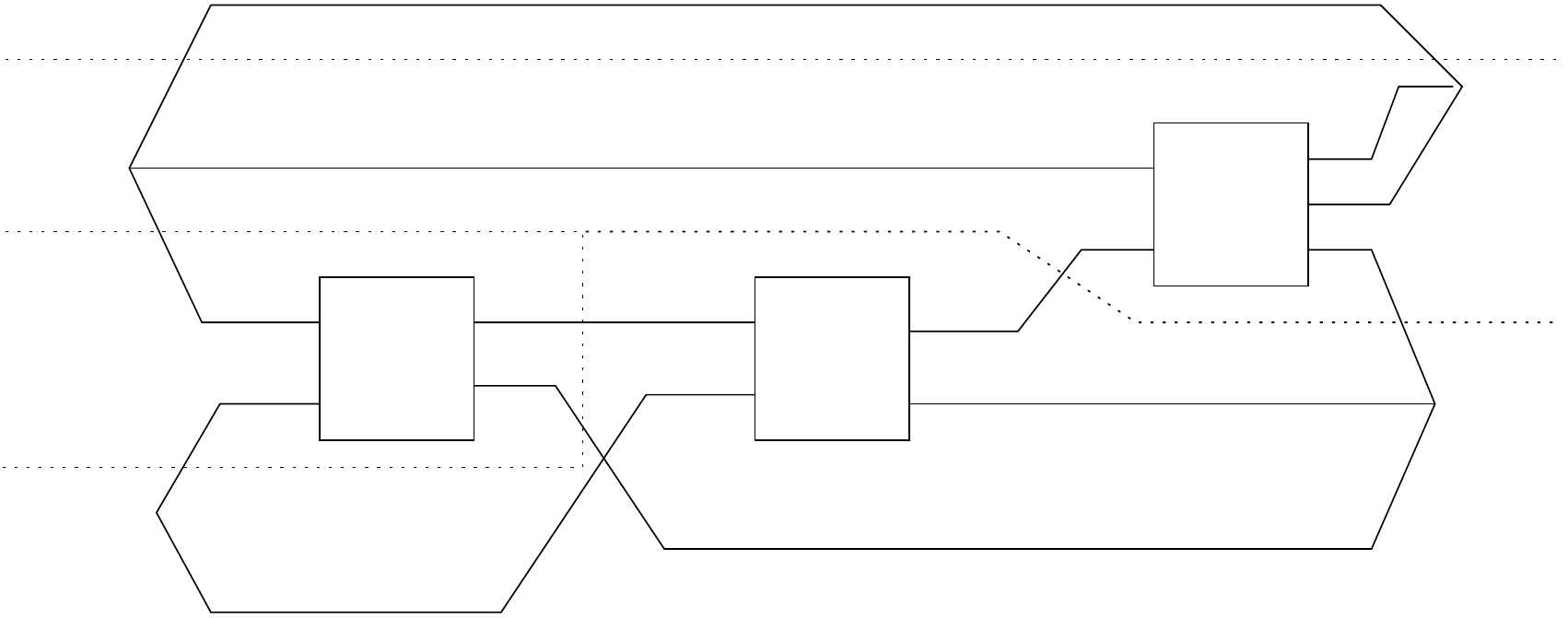_t}}
    \caption{Diagram illustrating Theorem \ref{thm:side-info-channels-oneshot}, with encoder, channel and decoder purified. Each line represents a quantum system, boxes represent isometries, and the horizontal axis represents the passage of time. Lines joined together at either end of the diagram represent pure states. $V$ represents Alice's encoder: she uses the side information $S'$ to encode the message $A$ into the channel input $A'$ and discards a system $D$. The decoder $D$ takes the channel output $C$ together with Bob's initial system $B$ and produces $A$ and $B$ as output; the result being close to the initial state $\psi$.}
    \label{fig:si-1shot-real}
\end{figure}

\begin{figure}
    \centering
    \scalebox{1.0}{\input{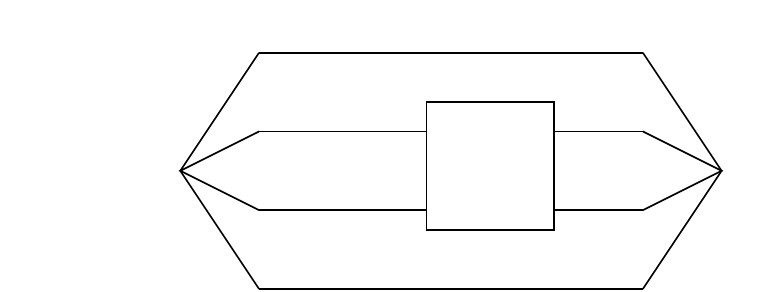_t}}
    \caption{Diagram illustrating the states $\omega$ and $\sigma$ which define the input distribution in Theorem \ref{thm:side-info-channels-oneshot}. Each line represents a quantum system, boxes represent isometries, and the horizontal axis represents the passage of time. Lines joined together at either end of the diagram represent pure states.}
    \label{fig:si-1shot-omega}
\end{figure}

\begin{proof}
	Let $W^{A \rightarrow A''}$ any full-rank partial isometry, and consider the superoperator $\mathcal{T}^{A'' \rightarrow ED}$ defined as
	\[\mathcal{T}(\xi) = |A''|\tr_{C}[U_{\mathcal{N}} \op_{A'' \rightarrow A'SD}(\ket{\sigma}) \cdot \xi].\]
	Theorem \ref{thm:smooth-bertha} then tells us that
	\begin{align*}
		\int \left\| \mathcal{T}(UW \cdot \psi^{AR}) - \omega^{ED} \otimes \psi^R \right\|_1 dU &\leqslant 2^{-\demi H_2^{\varepsilon}(A''|ED)_{\omega} - \demi H_2^{\varepsilon}(A|R)_{\psi}} + 8\varepsilon.
	\end{align*}
	We must now prove that there exists a $U$ such that conjugating $\psi$ by $\sqrt{|A''|}\op_{A'' \rightarrow A'SD}(\ket{\sigma})UW$ can be approximated by an isometry of the form $V^{AS' \rightarrow A'D}$ acting on $\psi^{ABR} \otimes \phi^{SS'}$ and for which the above inequality holds. For this, we use Theorem \ref{thm:smooth-bertha} again, with $\mathcal{E}^{A'' \rightarrow S}$, $\mathcal{E}(\xi) = |A''|\tr_{A'D}[\op_{A'' \rightarrow A'SD}(\ket{\sigma}) \cdot \xi]$:
	\begin{multline*}
		\int \left\| |A''|\tr_{A'D}[\op_{A'' \rightarrow A'SD}(\ket{\sigma}) UW \cdot \psi^{ABR}] - \psi^{BR} \otimes \phi^S \right\|_1 dU\\
	\begin{split}
		&\leqslant 2^{-\demi H_2^{\varepsilon}(A|BR)_{\psi} - \demi H_2^{\varepsilon}(A''|S)_{\omega}} + 8\varepsilon\\
		&\leqslant 2^{\demi H_{\max}^{\varepsilon}(A)_{\psi} - \demi H_2^{\varepsilon}(A''|S)_{\omega}} + 8\varepsilon
	\end{split}
	\end{multline*}
	where we have used Lemma \ref{lem:op-identity} to deduce that $\mathcal{E}(\Phi) = \omega$. We would like to have a $U^{A''}$ that satisfies both inequalities. We can do this using Markov's inequality (Lemma \ref{lem:markov}): there exists a $U^{A''}$ such that:
	\begin{align*}
		\int \left\| \mathcal{T}(UW \cdot \psi^{AR}) - \omega^{ED} \otimes \psi^R \right\|_1 dU &\leqslant 3 \times 2^{-\demi H_2^{\varepsilon}(A''|E)_{\omega} - \demi H_2^{\varepsilon}(A|R)_{\psi}} + 24\varepsilon\\
&= \delta_2
\end{align*}
and
\begin{multline*}
		\int \left\| |A''|\tr_{A'D}[\op_{A'' \rightarrow A'SD}(\ket{\sigma}) U \cdot \psi^{ABR}] - \psi^{BR} \otimes \phi^S \right\|_1 dU\\
\begin{split}
 &\leqslant 3 \times 2^{\demi H_{\max}^{\varepsilon}(A)_{\psi} - \demi H_2^{\varepsilon}(A''|S)_{\omega}} + 24\varepsilon\\
&= \delta_1.
\end{split}
	\end{multline*}

This last condition allows us to use Uhlmann's theorem (Theorem \ref{thm:uhlmann}) to find the encoding isometry: there exists a $V^{AS' \rightarrow A'D}$ such that
	\begin{equation*}
		\left\| |A''|\op_{A'' \rightarrow A'SD}(\ket{\sigma})UW \cdot \psi^{ABR} - V \cdot (\psi^{ABR} \otimes \phi^{SS'}) \right\|_1 \leqslant 2\sqrt{\delta_1}
	\end{equation*}
	Using the triangle inequality and the monotonicity of trace distance under superoperators, we get that
	\begin{equation*}
		\left\| (U_{\mathcal{N}}V \cdot (\psi^{AR} \otimes \phi^{SS'}))^{EDR} - \omega^{ED} \otimes \psi^R \right\|_1 \leqslant 2\sqrt{\delta_1} + \delta_2.
	\end{equation*}

Finally, we use Uhlmann's theorem a second time to get a decoding partial isometry $D^{CB \rightarrow ABG}$:
\begin{equation*}
\left\| \left(D U_{\mathcal{N}}V \cdot \left(\psi^{ABR} \otimes \phi^{SS'}\right)\right) - \xi^{GED} \otimes \psi^{ABR} \right\|_1 \leqslant 2\sqrt{2\sqrt{\delta_1} + \delta_2}
\end{equation*}
for some state $\xi^{GED}$. We then take a partial trace over $GED$ inside the trace distance to get the theorem.
\end{proof}

We now move on to the memoryless case:
\begin{thm}\label{thm:side-info-iid}
Let $(\mathcal{N}^{A'S \rightarrow C}, \ket{\phi}^{SS'})$ be a quantum channel with side-information at the transmitter, and let $\sigma^{AA'S}$ be any mixed state with $\sigma^S = \phi^S$. Then, any rate point $(Q,E)$ such that
\begin{align*}
Q + E &< H(A|S)_{\sigma}   && \text{and} & Q - E &< I(A \rangle C)_{\mathcal{N}(\sigma)}
\end{align*}
is achievable for transmission with rate-limited entanglement assistance. As a corollary, any rate $Q$ such that $Q < \demi[ I(A;S)_{\sigma} - I(A;C)_{\mathcal{N}(\sigma)}]$ is achievable for entanglement-assisted transmission.
\end{thm}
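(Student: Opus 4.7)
The plan is to derive the i.i.d.\ theorem as a direct application of the one-shot coding theorem (Theorem \ref{thm:side-info-channels-oneshot}) to the channel $\mathcal{N}^{\otimes n}$ together with the product side-information state $(\ket{\phi}^{SS'})^{\otimes n}$, exactly as was done in Theorem \ref{thm:direct-vanilla-channels} for channels without side information. The Stinespring dilation $U_{\mathcal{N}}^{\otimes n}$ of $\mathcal{N}^{\otimes n}$ plays the role of $U_{\mathcal{N}}$ in the one-shot theorem; the input distribution we use is the product state $\sigma^{\otimes n}$ whose purification $\sigma^{AA'SD}$ is lifted to $n$ copies. The state to transmit is $\psi^{M\wtA BR} = \Phi^{RM} \otimes \Phi^{\wtA B}$ with $|M|=|R|=2^{nQ}$ and $|\wtA|=|B|=2^{nE}$, with $M\wtA$ playing the role of the system $A$ from the one-shot theorem.

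The one-shot theorem then guarantees an encoder $\mathcal{E}^{M\wtA {S'}^n \to {A'}^n}$ and decoder $\mathcal{D}^{C^n B \to M\wtA B}$ with error bounded by $2\sqrt{2\sqrt{\delta_1}+\delta_2}$, where the exponents in $\delta_1,\delta_2$ are smooth conditional 2- and max-entropies evaluated on the $n$-fold state. The next step is to invoke the fully quantum AEP (Theorem \ref{thm:fully-quantum-aep}) to convert these into von Neumann quantities. Concretely, $H_{\max}^{\varepsilon}(M\wtA)_{\psi} \leqslant n(Q+E)$, $H_2^{\varepsilon}(A^n|S^n)_{\sigma^{\otimes n}} \geqslant n[H(A|S)_{\sigma} - \Delta_1]$, and for $\omega^{ACED} = U_{\mathcal{N}} \cdot \sigma^{AA'SD}$ (which is pure, so $H(A|ED)_\omega = -H(A|C)_\omega = I(A\rangle C)_{\mathcal{N}(\sigma)}$) we have $H_2^{\varepsilon}(A^n|E^nD^n)_{\omega^{\otimes n}} \geqslant n[I(A\rangle C)_{\mathcal{N}(\sigma)} - \Delta_2]$. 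For the message system, the marginal of $\Phi^{RM}\otimes\Phi^{\wtA B}$ on $M\wtA R$ is $\Phi^{RM}\otimes\pi^{\wtA}$, so $H_2^{\varepsilon}(M\wtA|R)_\psi \geqslant n(E-Q)$, and each $\Delta_i \to 0$ as $n\to\infty$.

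Substituting these bounds into the definitions of $\delta_1$ and $\delta_2$ yields
\begin{align*}
\delta_1 &\leqslant 3\times 2^{\frac{n}{2}[Q+E - H(A|S)_\sigma + \Delta_1]} + 24\varepsilon,\\
\delta_2 &\leqslant 3\times 2^{\frac{n}{2}[Q-E - I(A\rangle C)_{\mathcal{N}(\sigma)} + \Delta_2]} + 24\varepsilon,
\end{align*}
so picking $\varepsilon = \varepsilon_n$ decaying slowly with $n$ drives the total error to zero whenever $Q+E < H(A|S)_\sigma$ and $Q-E < I(A\rangle C)_{\mathcal{N}(\sigma)}$ strictly. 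For the corollary on fully entanglement-assisted transmission, add the two constraints and use the identity $H(A|S) + I(A\rangle C) = H(AS)-H(S) + H(C)-H(AC) = I(A;C)_{\mathcal{N}(\sigma)} - I(A;S)_\sigma$ (which holds because $\sigma^S = \phi^S$ is unchanged by the channel) to obtain $2Q < I(A;C)_{\mathcal{N}(\sigma)} - I(A;S)_\sigma$.

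There is really no substantive obstacle here, since the one-shot theorem was already proved in the required form and the AEP does all the asymptotic conversion work; the only thing to be slightly careful about is the bookkeeping for $H_2^{\varepsilon}(M\wtA|R)$ on the particular product-of-maximally-entangled form of $\psi$, and the observation that the purifying system $D$ of the fixed input $\sigma$ (which is inserted into the dilation picture but never actually needed by the protocol) contributes naturally to turn $H(A|ED)_\omega$ into the coherent information $I(A\rangle C)_{\mathcal{N}(\sigma)}$ via the purity of $\omega^{ACED}$.
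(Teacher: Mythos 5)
Your proof is correct and follows the same route as the paper's: apply the one-shot theorem (Theorem \ref{thm:side-info-channels-oneshot}) to $\mathcal{N}^{\otimes n}$ with input $\psi = \Phi^{RM}\otimes\Phi^{\wtA B}$ and input distribution $\sigma^{\otimes n}$, then convert the smooth 2- and max-entropies to von Neumann quantities via the fully quantum AEP, and finally add the two constraints to extract the entanglement-assisted rate. Two small remarks: your parenthetical justification of the identity $H(A|S)_{\sigma}+I(A\rangle C)_{\mathcal{N}(\sigma)}=I(A;C)_{\mathcal{N}(\sigma)}-I(A;S)_{\sigma}$ is slightly misplaced --- what matters is that $H(A)$ is the same on $\sigma$ and on $\mathcal{N}(\sigma)$ because the channel acts only on $A'S$ and not on $A$, not that $\sigma^S=\phi^S$ is preserved (the channel consumes $S$); and the corollary you derive, $2Q<I(A;C)_{\mathcal{N}(\sigma)}-I(A;S)_{\sigma}$, has the signs reversed relative to the wording of Theorem \ref{thm:side-info-iid} as printed, which appears to be a typo in the paper given that both the earlier statement in Section \ref{sec:side-info-intro} and Theorem \ref{thm:sideinfo-asst} agree with your version.
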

Again, the first condition corresponds to how closely we can make the input fit the target input distribution $\sigma$, and the second one is the limit imposed by the channel noise. Once again, we can trade ebits for qubits at a one-to-one ratio until we reach the limit imposed by the first condition.
\begin{proof}
	The proof essentially consists of using the previous theorem on $\mathcal{N}^{\otimes n}$ and using the fully quantum AEP (Theorem \ref{thm:fully-quantum-aep}) to bound the various conditional entropies. Let $R$ and $M$ be subsystems of dimension $2^{nQ}$, $M$ storing the quantum message Alice wants to transmit, and $R$ being its purifying system. Likewise, let $\wtA$ and $B$ be systems storing Alice's and Bob's parts of the shared entanglement respectively, both of dimension $2^{nE}$. Now consider the input state $\psi^{M \wtA B R} = \Phi^{RM} \otimes \Phi^{\wtA B}$, where $M \wtA$ plays the role of $A$ in the one-shot theorem. We now use the previous theorem with $\psi$ as the input state, $\mathcal{N}^{\otimes n}$ as the channel, and $\omega^{ {A}^nC^nE^nD^n} = \mathcal{N}^{\otimes n}(\sigma^{\otimes n})$ to conclude that there exist encoding and decoding CPTP maps $\mathcal{E}^{M \wtA S' \rightarrow {A'}^n}$ and $\mathcal{D}^{C^n B \rightarrow AB}$ such that
	\begin{equation*}
		\left\| \left(\mathcal{D} \circ \mathcal{N}^{\otimes n} \circ \mathcal{E}\right)\left(\psi^{M \wtA BR} \otimes (\phi^{SS'})^{\otimes n}\right) - \psi^{ABR} \right\|_1 \leqslant 2\sqrt{2\sqrt{\delta_1} + \delta_2}
\end{equation*}
with
\begin{align*}
	\delta_1 &= 3 \times 2^{\demi H_{\max}^{\varepsilon}(M \wtA)_{\psi} - \demi H_2^{\varepsilon}(A^n|S^n)_{\sigma^{\otimes n}}} + 24\varepsilon\\
	\delta_2 &= 3 \times 2^{-\demi H_{2}^{\varepsilon}(A^n|E^nD^n)_{\omega} - \demi H_2^{\varepsilon}(M \wtA|R)_{\psi}} + 24\varepsilon.
\end{align*}

We can bound all the entropic terms above. We have that:
\begin{align}
H_{\max}^{\varepsilon}(M \wtA)_{\psi} &\leqslant nQ + nE\\
H_2^{\varepsilon}(A^n|S^n)_{\sigma^{\otimes n}} &\geqslant n\left[H(A|S)_{\sigma} - \Delta_1\right]\\
H_2^{\varepsilon}(A^n|E^nD^n)_{\omega} &\geqslant n\left[H(A|ED)_{\mathcal{N}(\sigma)} - \Delta_2\right] = n\left[I(A \rangle C)_{\mathcal{N}(\sigma)} - \Delta_2\right]\\
H_2^{\varepsilon}(M \wtA|R)_{\psi} &\geqslant -nQ + nE
\end{align}
where the $\Delta$ can be computed from the statement of the fully quantum AEP (Theorem \ref{thm:fully-quantum-aep})) and can be made arbitrarily close to zero for large enough $n$. 

Hence, as long as $Q + E < H(A|S)_{\sigma} - \Delta_1$ and $Q - E < I(A \rangle C)_{\mathcal{N}(\sigma)} - \Delta_2$, both $\delta_1$ and $\delta_2$ go down to zero as $n$ grows. The first condition corresponds to the fact that the message qubits and Alice's share of the entanglement must fit in the input, and the second condition means that the transmission rate minus the amount of entanglement must not exceed the coherent information.

To get the entanglement-assisted rate, we can simply add the two inequalities so as to eliminate $E$. The result is that $2Q < H(A|S)_{\sigma} + I(A \rangle C)_{\mathcal{N}(\sigma)}$ and a simple calculation reveals this to be equivalent to $Q < \demi [I(A;C)_{\mathcal{N}(\sigma)} - I(A;S)_{\sigma}]$.

\end{proof}

\section{Optimality for entanglement-assisted coding}\label{sec:sideinfo-converse}
We shall now prove that the previous theorem is optimal for entanglement-assisted coding. In other words, for any achievable rate $Q$ for entanglement-assisted transmission, there exists a state $\sigma^{AA'S}$ as in Theorem \ref{thm:sideinfo-asst} for which $Q = \demi I(A;C)_{\mathcal{N}(\sigma)} - \demi I(A;S)_{\sigma}$. This part is essentially the same as in \cite{gelfand-pinsker}, with a few adaptations to the quantum case. In particular, one must pay close attention to which state the various mutual informations are defined on, since we will be dealing with states where only some fraction of the $n$ instances of the channel has been applied.

\begin{thm}\label{thm:side-info-single-letter}
	The entanglement-assisted quantum capacity of a quantum channel with side information at the transmitter $( \mathcal{N}^{A'S \rightarrow C}, \ket{\phi}^{SS'})$ is
	\begin{equation}
		C = \sup_{\sigma} \left\{ \demi I(A;C)_{\omega} - \demi I(A;S)_{\sigma} \right\}.
	\end{equation}
	The supremum is taken over all mixed states of the form $\sigma^{AA'S}$ where $\sigma^S = \phi^S$ and $\omega^{AC} = \mathcal{N}^{A'S \rightarrow C}(\sigma^{AA'S})$.
\end{thm}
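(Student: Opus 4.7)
The statement asserts an equality, so both $C \geq \sup_\sigma\{\demi I(A;C)_\omega - \demi I(A;S)_\sigma\}$ and the reverse inequality must be established. The lower bound is immediate from the corollary of Theorem~\ref{thm:side-info-iid}: for any fixed $\sigma^{AA'S}$ with $\sigma^S = \phi^S$, choosing the entanglement rate $E$ large enough to satisfy $Q+E < H(A|S)_\sigma$ shows that every $Q < \demi I(A;C)_{\mathcal{N}(\sigma)} - \demi I(A;S)_\sigma$ is achievable, and a supremum over $\sigma$ gives the direct part. The real work is the converse, which I now outline.

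Fix an achievable rate $Q$ and a sequence of $(n,Q,E_n)$ codes whose trace-distance error $\varepsilon_n$ vanishes. Let $\omega$ denote the state on $R B S^n C^n E^n$ produced after the $n$ channel uses but before decoding, and let $\tau^{R M_{\mathrm{out}}}$ be the state after decoding. Since $\|\tau - \Phi^{RM}\|_1 \leqslant \varepsilon_n$ and $I(R;M)_{\Phi^{RM}} = 2nQ$, an Alicki--Fannes--Winter continuity estimate gives $I(R;M_{\mathrm{out}})_\tau \geqslant 2nQ - n\delta_n$ with $\delta_n\to 0$, and data processing through the decoder yields $I(R;C^n B)_\omega \geqslant 2nQ - n\delta_n$. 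Two structural observations enable single-letterization: $I(R;B)_\omega = 0$, because $B$ begins maximally entangled with $\wtA$ (independent of $R$) and never interacts with $R$; and $I(R;S^n)_\omega = 0$, because the encoder acts only on $M\wtA (S')^n$, so $R$ and $S^n$ sit in disjoint, non-interacting subsystems throughout the protocol.

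Using $I(R;S^n)_\omega = 0$, I would rewrite $I(R;C^n B)_\omega = I(R;C^n B)_\omega - I(R;S^n)_\omega$ and chain-rule each term into telescoping sums $\sum_i I(R;C_i \mid C^{i-1} B)_\omega - \sum_i I(R;S_i \mid S_{i+1}^n)_\omega$. A quantum Csisz\'ar-type sum identity---derivable by repeated application of the chain rule and strong subadditivity---then rearranges the difference as $\sum_{i=1}^n \bigl[I(A_i;C_i)_\omega - I(A_i;S_i)_\omega\bigr]$, with auxiliary $A_i := R\,C^{i-1} B\,S_{i+1}^n$. Because $S^n$ is i.i.d.\ with each $S_i$ marginally in state $\phi^S$, each $\sigma_i^{A_i A'_i S_i}$ is a feasible state for the supremum in the capacity formula.

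To finish, introduce a classical time-sharing index $J$ uniform on $\{1,\dots,n\}$ and set $A := A_J J$, $S := S_J$, $A' := A'_J$, $C := C_J$ on the averaged state; this state remains feasible since its marginal on $S$ is still $\phi^S$. The sum collapses to $n\bigl[I(A;C)_\omega - I(A;S)_\sigma\bigr]$, yielding $2nQ \leqslant n\bigl[I(A;C) - I(A;S)\bigr] + n\delta_n$ and hence $Q \leqslant \demi I(A;C) - \demi I(A;S) + o(1)$. Taking $n \to \infty$ and the supremum over all feasible $\sigma$ completes the converse. The main obstacle I anticipate is the Csisz\'ar-type rearrangement step: while routine in the classical case, in the quantum setting one has to handle the conditional mutual informations carefully, and one must verify that the resulting auxiliary $A_i$---an ever-growing system encompassing the reference, past channel outputs, Bob's entanglement half, and future side-information purifications---may legitimately play the role of ``$A$'' in the single-letter formula without any dimension restriction being needed beyond what is already implicit in the supremum.
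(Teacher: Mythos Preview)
Your skeleton coincides with the paper's: the direct part via Theorem~\ref{thm:side-info-iid}, the lower bound $I(R;C^nB)\geqslant 2nQ-n\delta_n$ by continuity, the observation $I(RB;S^n)=0$, and the auxiliary $A_i=RB\,C^{i-1}S_{i+1}^n$ (identical to the paper's $X(i)$). The gap is at the step you yourself flag: the ``Csisz\'ar-type sum identity'' does not go through as stated.

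The classical Csisz\'ar identity requires all the systems---here $C^n$ and $S^n$---to live in a single joint state. In this problem $S$ is an \emph{input} to $\mathcal{N}^{A'S\to C}$ and is consumed in producing $C$; no state carries $C_i$ and $S_i$ simultaneously. Your notation already reveals the problem: you write $I(A_i;C_i)_\omega$ and $I(A_i;S_i)_\omega$ with the same subscript, but the first needs the $i$th channel applied while the second needs it not applied. The paper's fix is to introduce the intermediate states $\omega(i)=(\mathcal{N}^{\otimes i}\otimes\ident^{\otimes n-i})(\sigma)$ and prove the recursive \emph{inequality}
\[
I(Y(i);C^i)_{\omega(i)} - I(Y(i);S^i)_{\omega(i-1)} \leqslant \bigl[\text{same with }i-1\bigr] + \bigl[I(X(i);C_i)_{\omega(i)} - I(X(i);S_i)_{\omega(i-1)}\bigr],
\]
with $Y(i)=RB\,S_{i+1}^n$. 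The slack between the two sides turns out to be exactly $I(C^{i-1};C_i)_{\omega(i)}\geqslant 0$, established by expanding all six mutual informations via chain rules and using $I(A;B)-I(A;B|C)=I(A;C)-I(A;C|B)$. Summing the recursion from $i=2$ to $n$ telescopes into the desired single-letter bound. So the mechanism is Csisz\'ar-like in spirit but genuinely more delicate: one must track which of the states $\omega(0),\dots,\omega(n)$ each entropic term lives on, and the result is an inequality rather than an identity.

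A minor remark: the time-sharing register $J$ is unnecessary. Each $\sigma_i^{A_iA'_iS_i}$ (the relevant marginal of $\omega(i-1)$) has $\sigma_i^{S_i}=\phi^S$ and is therefore feasible for the supremum, so every summand is individually at most $2C$; the average bound follows immediately. The paper simply takes the largest term.
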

\begin{proof}
The achievability of this rate follows directly from Theorem \ref{thm:side-info-iid}. We therefore now need to prove that one cannot go above this rate. First, let $\mathcal{E}^{M_n \wtA_n {S'}^n \rightarrow {A'}_n}$ and $\mathcal{D}^{C^n B_n \rightarrow M_n B_n}$ be the encoder and the decoder respectively of an arbitrary code of block size $n$ with $\log|R_n| = \log|M_n| = nQ$ such that
\[ \left\| \left( \mathcal{D} \circ \mathcal{N}^{\otimes n} \circ \mathcal{E} \right)\left( \Phi^{R_n M_n} \otimes \Phi^{\wtA_n B_n} \otimes (\phi^{SS'})^{\otimes n} \right) - \Phi^{R_n M_n} \right\|_1 \leqslant \varepsilon, \]
let $\sigma^{R_n B_n {A'}^n S^n} = \mathcal{E}(\Phi^{R_n M_n} \otimes \Phi^{\wtA B_n} \otimes (\phi^{SS'})^{\otimes n})$ and $\omega^{R_n B_n C^n} = \mathcal{N}^{\otimes n}(\sigma)$. Then, by Fannes' inequality (Theorem \ref{thm:fannes}) and the monotonicity of the mutual information (see Section \ref{sec:entropy-properties}) we must have that
\begin{equation}
	I(R_n;C^{n} B_n)_{\omega} \geqslant 2n(Q - d(\varepsilon,n))
\end{equation}
where $d(\varepsilon,n) := \frac{3\varepsilon Q}{2} + \frac{3\varepsilon \log \varepsilon}{n}$.  Notice that
\begin{align}
	I(R_n;B_n C^{n})_{\omega} &= I(B_n;R_n)_{\omega} + I(R_n;C^{n}|B_n)_{\omega}\\
	&= I(R_n;C^{n}|B_n)_{\omega} \label{eqn:weak-converse-2}\\
	&\leqslant I(R_nB_n;C^{n})_{\omega}
\end{align}
where (\ref{eqn:weak-converse-2}) is due to the fact that $R_n$ and $B_n$ are independent. Combining this with $I(R_nB_n;S^{n})_{\sigma} = 0$, we have
\begin{equation}
	I(R_nB_n;C^{n})_{\omega} - I(R_nB_n;S^{n})_{\sigma} \geqslant 2n(Q - d(\varepsilon,n)).
\end{equation}
We will now introduce a few shorthands which will make the notation considerably less cumbersome: we will write $C^i$ instead of $C_1,\ldots,C_i$ and $C_i^j$ instead of $C_i,\ldots,C_j$, and likewise for $S$. Define also
\begin{align}
	X(i) &:= R_nB_n C^{i-1} S_{i+1}^n\\
	Y(i) &:= R_nB_n S_{i+1}^n
	\label{eqn:uv-defs}
\end{align}
Note that these are nothing more than groupings of subsystems. We also define the following sequence of states:
\begin{equation}
	\omega(i) := (\mathcal{N}^{\otimes i} \otimes \ident^{\otimes n-i})(\sigma)
	\label{eqn:sigma-omega-defs}
\end{equation}
In other words, $\omega(i)$ is the result of applying the first $i$ instances of the channel to the state $\sigma$.

We shall now prove the inequality
\begin{multline}
	I(R_nB_n;C^{n})_{\omega} - I(R_nB_n;S^{n})_{\sigma}\\
\leqslant \sum_{i=1}^n \left\{ I(X(i);C_i)_{\omega(i)} - I(X(i);S_i)_{\omega(i-1)} \right\}.
	\label{eqn:toprove}
\end{multline}
Since each term in this sum is of the form $I(A;C)_{\mathcal{N}(\sigma)} - I(A;S)_{\sigma}$ for some $\sigma^{AA'S}$, the highest term is achievable by the direct coding theorem and therefore there exists a state for which $Q \leqslant I(A;C)_{\mathcal{N}(\sigma)} - I(A;S)_{\sigma}$. This allows us to conclude the theorem.

We now proceed in exactly the same way as in \cite{gelfand-pinsker} to establish (\ref{eqn:toprove}): we consider the inequality
\begin{multline}
	I(Y(i);C^i)_{\omega(i)} - I(Y(i);S^i)_{\omega(i-1)}\\
\leqslant \left[ I(Y(i-1);C^{i-1})_{\omega(i-1)} - I(Y(i-1);S^{i-1})_{\omega(i-2)} \right]\\
	+ \left[ I(X(i);C_i)_{\omega(i)} - I(X(i);S_i)_{\omega(i-1)} \right].
	\label{eqn:eq17}
\end{multline}
Summing up all these inequalities from $i=2$ to $i=n$, we obtain (\ref{eqn:toprove}), since $Y(n) = R_nB_n$ and $Y(1) = X(1)$.

Now, to prove (\ref{eqn:eq17}), we use the following identities which follow from the definitions of $X(i)$ and $Y(i)$ and from basic properties of the mutual information:

\begin{align}
	I(Y(i);C^i)_{\omega(i)} &= I(Y(i);C^{i-1})_{\omega(i)} + I(Y(i);C_i|C^{i-1})_{\omega(i)}\\
	I(Y(i);S^i)_{\omega(i-1)} &= I(Y(i);S_i)_{\omega(i-1)} + I(Y(i);S^{i-1}|S_i)_{\omega(i-1)}\\
	I(Y(i-1);S^{i-1})_{\omega(i-1)} &= I(Y(i);S^{i-1}|S_i)_{\omega(i-1)}\\
	I(Y(i-1);C^{i-1})_{\omega(i-1)} &= I(S_i;C^{i-1})_{\omega(i-1)} + I(Y(i);C^{i-1}|S_i)_{\omega(i-1)}\\
	I(X(i);C_i)_{\omega(i)} &= I(C^{i-1};B_i)_{\omega(i)} + I(Y(i); C_i|C^{i-1})_{\omega(i)}\\
	I(X(i);S_i)_{\omega(i-1)} &= I(C^{i-1};S_i)_{\omega(i-1)}d + I(Y(i);S_i|C^{i-1})_{\omega(i-1)}.
\end{align}

Substituting these into (\ref{eqn:eq17}) and using the identity
\begin{equation}
	I(A;B) - I(A;B|C) = I(A;C) - I(A;C|B)
	\label{eqn:mutual-info-identity}
\end{equation}
which holds on any mixed state $\rho^{ABC}$, we get that the difference between the right-hand side and the left-hand side of (\ref{eqn:eq17}) is $I(C^{i-1};C_i)_{\omega(i)}$, which is always nonnegative. This concludes the proof.
\end{proof}

\section{Discussion}\label{sec:si-discussion}
This result further strengthens the parallel between classical information theory problems and their entanglement-assisted quantum counterparts. Indeed, the capacity formula (\ref{eqn:cap}) has the same form as the classical version (\ref{eqn:gp-classique}); the same phenomenon arises in the case of the entanglement-assisted capacities of regular point-to-point channels \cite{BSST02}, multiple-access channels \cite{qmac}, and, for the best coding theorem we know, broadcast channels (see Chapter \ref{chp:bcast}). It would be particularly interesting to have a systematic way in which classical coding theorems could be transformed into entanglement-assisted quantum protocols as it would enable us to import much larger classes of results from classical information theory into the quantum world.

Returning to our result, there is one remaining issue that one would like to solve in order to have a fully satisfactory characterization of the achievable rate region: we currently have no upper bound on the dimension of the $A$ system needed to achieve the capacity in expression (\ref{eqn:cap}). Thus, despite having a single-letter converse, we unfortunately do not have a way to compute the capacity. In the classical case, it is possible to use Carathéodory's theorem \cite{caratheodory} to bound the cardinality of $\mathcal{U}$ in the optimal input distribution. However, in the quantum case, this approach fails due to the fact that the quantum conditional entropy cannot in general be expressed as $H(A|B) = \sum_b p(b) H(A|B=b)$. On the other hand, there is little reason to believe that large dimensions are necessary to achieve the optimal rate, but we do not know how to prove that this is not the case. In fact, one encounters a very similar difficulty when trying to calculate the squashed entanglement \cite{squashed-ent} of a particular state since we have no bound on the size of the subsystem we need to condition on. We therefore leave this issue as an open problem.

One might also wonder about a related problem: whether the capacity can in general be achieved by optimizing only over pure states $\sigma^{AA'S}$. This would imply an upper bound on $|A|$. However, one can show that this cannot be the case: take, for example, a qubit-to-qubit channel which applies one of the four Pauli operations with equal probability, but where $S$ tells the transmitter which one of the four operations is applied. The capacity of such a channel is clearly one qubit per transmission. Suppose that this rate is achievable using a pure state $\sigma^{AA'S}$. Then, we must have $\demi I(A;C)_{\mathcal{N}(\sigma)} = 1$ (since $C$ is two-dimensional) and therefore $\demi I(A;S)_{\sigma} = 0$. However, this last equation together with the fact that $\sigma$ is pure implies that the purification of $S$ must be entirely in $A'$. This is impossible since $S$ is maximally mixed over a four-dimensional system whereas $A'$ is two-dimensional, and hence the optimal $\sigma$ cannot be pure.

\chapter{Quantum broadcast channels}\label{chp:bcast}

\section{Introduction}\label{sec:intro}

Discrete memoryless broadcast channels are channels with one sender and $n$ receivers, modelled using an input set $\mfX$, output sets $\mfY_1, \dots, \mfY_n$, and a probability transition matrix $p(y_1, \ldots, y_n|x)$. When the transmitter selects the input symbol $x_0 \in \mfX$, the output at the receivers is distributed according to $p(y_1,\dots,y_n|x=x_0)$. These can represent, for instance, a radio tower broadcasting a signal to many receivers, each of whom experiences different signal corruption due being closer or further away from the tower, or due to the proximity of buildings. There are many natural tasks that one may want to perform using these channels, such as sending common messages to all the users, sending separate information to each user, sending data to each user privately, or some combination of these tasks. Here we shall focus only on sending separate data to two different receivers that we will call Bob 1 and Bob 2.

One should note in passing that while this definition of broadcast channels is standard in electrical engineering, it may strike computer scientists (and particularly cryptographers) as bizarre. Indeed, computer scientists are used to defining broadcast channels as a \emph{task} to be performed: send the same message to multiple parties, with no notion of noise. Here we think of broadcast channels more as physical objects: a physical channel with one input and multiple outputs, with which we may want to perform a number of different tasks.

Broadcast channels were first introduced by Cover in \cite{cover1}, where he suggested that it may be possible to use them more efficiently than by timesharing between the different users. Since then, several results concerning broadcast channels have been found, such as the capacity of degraded broadcast channels (see, for example, \cite{coverthomas}). Furthermore, these results form the basis of many protocols that are currently used in real multiuser systems, such as cellphone networks.

The best achievable rate region known for general classical broadcast channels is due to Marton \cite{marton}: given a probability distribution $p(x, u_1, u_2) = p(u_1, u_2)p(x|u_1,u_2)$, the following rate region is achievable for the general two-user broadcast channel $p(y_1, y_2|x)$:
\begin{equation}
\label{eqn:marton}
\begin{split}
0 \leqslant R_1 &\leqslant I(U_1; Y_1)\\
0 \leqslant R_2 &\leqslant I(U_2; Y_2)\\
R_1 + R_2 &\leqslant I(U_1; Y_1) + I(U_2; Y_2) - I(U_1; U_2)
\end{split}
\end{equation}
It is conjectured that this characterizes the capacity region of general two-receiver broadcast channels, but despite considerable efforts, no one has been able to prove a converse theorem.

The quantum generalization of broadcast channels was first studied in \cite{allahverdyan-saakian} and \cite{YHD06} as part of a recent effort to develop a network quantum information theory \cite{horo1,horo2,YDH05,LOW06,netcoding1,winter-qmac,klimovitch-qmac,SVW05,bosonic}. In \cite{YHD06}, the authors derived three classes of results, the first one about channels with a classical input and quantum outputs, the second one about sending a common classical message while sending quantum information to one receiver, and the third about sending qubits to one receiver while establishing a GHZ state with the two receivers.

In this chapter, we study quantum broadcast channels using the general techniques developed in this thesis. We look at the case where Alice initially shares a tripartite state $\psi_1^{A_1 B_1 R_1}$ with Bob 1 and a reference, and would like to transfer her share $A_1$ to Bob 1 using the broadcast channel. She would simultaneously like to do the same with $\psi_2^{A_2 B_2 R_2}$ and Bob 2. We first give a one-shot theorem for this task, and then specialize it to the i.i.d.\ case (i.e.\ the channel has the form $\mathcal{N}^{\otimes n}$) in which $\psi_1$ consists of a maximally entangled pair between Alice and a reference, and separate maximally entangled pairs between Alice and Bob 1; the same goes for $\psi_2$ and Bob 2. This corresponds to transmitting qubits with rate-limited entanglement assistance to Bob 1 and Bob 2 simultaneously over the broadcast channel. When we use the maximum possible amount of entanglement assistance, we recover a quantum version of Marton's region. On the other hand, when no entanglement assistance at all is used, the rate region does not appear to have any independent constraint on the sum rate; the information going to Bob 1 and to Bob 2 appear to ``talk past each other''. Interestingly, it turns out that the same phenomenon occurs in the classical scenario of Gaussian multiple-antenna broadcast channels (a particular type of classical broadcast channels) with confidential messages \cite{llps09}. This is perhaps not so surprising, since private classical communication tends to be the closest classical parallel to quantum communication, in which one must inherently keep the information private from the environment.

We then prove a regularized converse for the fully entanglement-assisted case, and give an example of a channel for which the single-letter region is optimal.

\section{Definitions}
Here we define the various concepts needed for this chapter.

\begin{defin}[Quantum broadcast channel]
A quantum broadcast channel is a CPTP map with more than one subsystem as its output, and whose outputs are held by separate receivers.
\end{defin}

In the one-shot case, we will be interested in the following situation: the initial state is $\psi_1^{A_1 B_1 R_1} \otimes \psi_2^{A_2 B_2 R_2}$, where $A_1$ and $A_2$ are held by Alice, $B_1$ and $B_2$ by Bob 1 and Bob 2 respectively, and $R_1$ and $R_2$ are reference systems making the states pure. Alice wants to use the broadcast channel $\mathcal{N}^{A' \rightarrow C_1 C_2}$ to send $A_1$ to Bob 1 and $A_2$ to Bob 2 (of course, Bob 1 gets $C_1$ and Bob 2 gets $C_2$). Hence, we will need to assert the existence of an encoding superoperator $\mathcal{E}^{A_1 A_2 \rightarrow A'}$ and decoders $\mathcal{D}_1^{B_1 C_1 \rightarrow A_1 B_1}$ and $\mathcal{D}_2^{B_2 C_2 \rightarrow A_2 B_2}$ such that
\begin{equation*}
\left\| \left( (\mathcal{D}_1 \otimes \mathcal{D}_2) \circ \mathcal{N} \circ \mathcal{E}\right)\left(\psi_1^{A_1 B_1 R_1} \otimes \psi_2^{A_2 B_2 R_2}\right) - \psi_1^{A_1 B_1 R_1} \otimes \psi_2^{A_2 B_2 R_2} \right\|_1 \leqslant \delta
\end{equation*}
for some $\delta$ that we find suitably small. Note here that the two decoders $\mathcal{D}_1$ and $\mathcal{D}_2$ commute and can be applied in parallel.

In the i.i.d.\ case, we will want to use the broadcast channel $\mathcal{N}^{A' \rightarrow C_1 C_2}$ $n$ times, to transmit separate arbitrary quantum data to Bob 1 and to Bob 2, with separate preshared entanglement with Bob 1 and Bob 2. In other words, $\psi_1^{M_1 \wtA_1 B_1 R_1} =  \Phi^{R_1 M_1} \otimes \Phi^{\wtA_1 B_1}$ where $M_1 \wtA_1$ plays the role $A_1$; likewise for $\psi_2$. Now, for a given protocol for this task, we define the transmission rate to Bob 1 (resp. Bob 2) $Q_1$ (resp. $Q_2$) as $\frac{1}{n} \log|M_1|$ (resp. $\frac{1}{n} \log|M_2|$) and the entanglement consumption rate to Bob 1 (resp. Bob 2) as $E_1 = \frac{1}{n} \log|\wtA_1|$ (resp. $E_2 = \frac{1}{n} \log|\wtA_2|$).

We say that a four-tuple $(Q_1, Q_2, E_1, E_2)$ is achievable if there exists a sequence of encoders $\mathcal{E}_n^{M_{1,n} \wtA_{1,n} \rightarrow {A'^{n}}}$ and decoders $\mathcal{D}_{1,n}^{C_1^n B_{1,n} \rightarrow M_{1,n}}$ and $\mathcal{D}_{2,n}^{C_2^n B_{2,n} \rightarrow M_{2,n}}$, such that
\begin{equation*}
\lim_{n \rightarrow \infty} \left\| \left( (\mathcal{D}_{1,n} \otimes \mathcal{D}_{2,n}) \circ \mathcal{N}^{\otimes n} \circ \mathcal{E}_n \right)\left( \Phi^{R_1 M_1} \otimes \Phi^{\wtA_1 B_1} \otimes \Phi^{R_2 M_2} \otimes \Phi^{\wtA_2 \otimes B_2} \right) - \Phi^{R_1 M_1} \otimes \Phi^{R_2 M_2} \right\|_1 = 0
\end{equation*}
with
\begin{align*}
Q_1 &= \lim_{n \rightarrow \infty} \frac{1}{n} \log|M_{1,n}| = \lim_{n \rightarrow \infty} \frac{1}{n} \log|R_{1,n}|\\
Q_2 &= \lim_{n \rightarrow \infty} \frac{1}{n} \log|M_{2,n}| = \lim_{n \rightarrow \infty} \frac{1}{n} \log|R_{2,n}|\\
E_1 &= \lim_{n \rightarrow \infty} \frac{1}{n} \log|\wtA_{1,n}| = \lim_{n \rightarrow \infty} \frac{1}{n} \log|B_{1,n}|\\
E_2 &= \lim_{n \rightarrow \infty} \frac{1}{n} \log|\wtA_{2,n}| = \lim_{n \rightarrow \infty} \frac{1}{n} \log|B_{2,n}|.
\end{align*}

\section{Direct coding theorem}
We start by proving the one-shot version of the protocol. First, however, we prove a simple technical lemma:

\begin{lem}\label{lem:multidecoupling}
	If we have density operators $\rho^{ABC}, \sigma^A, \omega^{BC}, \tau^{AB}, \eta^C$ such that
\begin{align*}
\left\| \rho^{ABC} - \sigma^A \otimes \omega^{BC} \right\|_1 &\leqslant \varepsilon_1\\
\left\| \rho^{ABC} - \tau^{AB} \otimes \eta^C \right\|_1 &\leqslant \varepsilon_2
\end{align*}
then $\left\| \rho^{ABC} - \sigma^A \otimes \tau^B \otimes \eta^C \right\|_1 \leqslant 2\varepsilon_1 + \varepsilon_2$.
\end{lem}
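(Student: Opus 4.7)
The plan is to prove this by triangle inequality, essentially substituting one factorization into the other at the appropriate tensor-factor boundary. The key observation is that we want to replace $\omega^{BC}$ in the first bound with $\tau^B \otimes \eta^C$, which requires showing that $\omega^{BC}$ is close to $\tau^B \otimes \eta^C$.

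First, I would take partial traces of the two given inequalities over $A$. By monotonicity of the trace norm under CPTP maps (here, the partial trace), these give $\|\rho^{BC} - \omega^{BC}\|_1 \leqslant \varepsilon_1$ and $\|\rho^{BC} - \tau^B \otimes \eta^C\|_1 \leqslant \varepsilon_2$. Combining via the triangle inequality yields $\|\omega^{BC} - \tau^B \otimes \eta^C\|_1 \leqslant \varepsilon_1 + \varepsilon_2$.

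Next, I would chain together the first hypothesis with the inequality just obtained, using that tensoring with a fixed state $\sigma^A$ preserves trace norm:
\begin{align*}
\left\| \rho^{ABC} - \sigma^A \otimes \tau^B \otimes \eta^C \right\|_1 &\leqslant \left\| \rho^{ABC} - \sigma^A \otimes \omega^{BC} \right\|_1 + \left\| \sigma^A \otimes \omega^{BC} - \sigma^A \otimes \tau^B \otimes \eta^C \right\|_1 \\
&= \left\| \rho^{ABC} - \sigma^A \otimes \omega^{BC} \right\|_1 + \left\| \omega^{BC} - \tau^B \otimes \eta^C \right\|_1 \\
&\leqslant \varepsilon_1 + (\varepsilon_1 + \varepsilon_2) = 2\varepsilon_1 + \varepsilon_2.
\end{align*}

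There is no real obstacle here; the proof is just two applications of the triangle inequality together with the contractivity of the trace norm under partial trace. The asymmetric bound $2\varepsilon_1 + \varepsilon_2$ (rather than $\varepsilon_1 + 2\varepsilon_2$) arises because we chose to keep $\sigma^A$ from the first factorization and build $\tau^B \otimes \eta^C$ from the second; the roles could be swapped to get the other bound, so effectively $\min\{2\varepsilon_1 + \varepsilon_2, \varepsilon_1 + 2\varepsilon_2\}$ holds, but the stated form is what will be needed in the sequel.
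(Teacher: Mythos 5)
Your proof is correct and follows essentially the same route as the paper's: both use the triangle inequality together with contractivity of the trace norm under partial trace and the fact that tensoring with a normalized state preserves trace norm. You simply make the partial-trace step explicit up front and then chain, while the paper interleaves the same steps; your closing remark on the $\varepsilon_1 \leftrightarrow \varepsilon_2$ symmetry is a harmless bonus observation.
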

\vskip 2mm
\begin{proof}
\begin{align*}
\left\| \rho^{ABC} - \sigma^A \otimes \tau^B \otimes \eta^C \right\|_1 &\leqslant \left\| \rho^{ABC} - \sigma^A \otimes \omega^{BC} \right\|_1\\
&\quad + \left\| \sigma^A \otimes \omega^{BC} - \sigma^A \otimes \tau^B \otimes \eta^C \right\|_1\\
&= \varepsilon_1 + \left\| \omega^{BC} - \tau^B \otimes \eta^C \right\|_1\\
&\leqslant \varepsilon_1 + \left\| \omega^{BC} - \rho^{BC} \right\|_1 + \left\| \rho^{BC} - \tau^B \otimes \eta^C \right\|_1\\
&\leqslant 2\varepsilon_1 + \varepsilon_2
\end{align*}
where the first two inequalities are applications of the triangle inequality, and the equality is due to the fact that $\| A \|_1 = \| \sigma \otimes A \|_1$ for any operator $A$ and density matrix $\sigma$. 
\end{proof}

\begin{thm}\label{thm:broadcast-1shot}
For any quantum broadcast channel $\mathcal{N}^{A' \rightarrow C_1 C_2}$, any pair of pure quantum states $\psi_1^{A_1 B_1 R_1}$ and $\psi_2^{A_2 B_2 R_2}$, any pure quantum state $\sigma^{A_1'' A_2'' A'D}$ and any $\varepsilon > 0$, there exists an encoding superoperator $\mathcal{E}^{A_1 A_2 \rightarrow A'}$ and decoding superoperators $\mathcal{D}_1^{C_1 B_1 \rightarrow A_1 B_1}$ and $\mathcal{D}_2^{C_2 B_2 \rightarrow A_2 B_2}$ such that
\begin{multline*}
\left\| ( (\mathcal{D}_1 \otimes \mathcal{D}_2) \circ \mathcal{N} \circ \mathcal{E})(\psi_1^{A_1 B_1 R_1} \otimes \psi_2^{A_2 B_2 R_2}) - \psi_1^{A_1 B_1 R_1} \otimes \psi_2^{A_2 B_2 R_2} \right\|_1\\
 \leqslant 4 \sqrt{2\sqrt{\delta_{\enc}} + \delta_1} + 2 \sqrt{ 2\sqrt{\delta_{\enc}} + \delta_2}
\end{multline*}
where
\begin{equation*}
\delta_{\enc} = 4 \times 2^{\demi H_{\max}^{\varepsilon}(A_1)_{\psi_1} - \demi H_{\min}^{\varepsilon^2/20}(A''_1|A''_2)_{\sigma}} + 5 \times 2^{\demi H_{\max}^{\varepsilon}(A_2)_{\psi_2} - \demi H_{\min}^{\varepsilon}(A''_2)_{\sigma}} + 72\varepsilon
\end{equation*}
and
\begin{align*}
\delta_1 &\leqslant 4 \times 2^{-\demi H_{\min}^{\varepsilon^2/20}(A''_1|EDA''_2 C_2)_{U_{\mathcal{N}} \cdot \sigma} - \demi H_{\min}^{\varepsilon}(A_1|R_1)_{\psi_1}} + 32\varepsilon\\
\delta_2 &\leqslant 5 \times 2^{-\demi H_{\min}^{\varepsilon^2/16}(A''_2|EDA''_1 C_1)_{U_{\mathcal{N}} \cdot \sigma} - \demi H_{\min}^{\varepsilon}(A_2|R_2)_{\psi_2}} + 40\varepsilon.
\end{align*}
\end{thm}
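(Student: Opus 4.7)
The plan is to lift the decoupling-plus-Uhlmann strategy of Theorem~\ref{thm:vanilla-channels-oneshot} to the two-receiver setting. The candidate encoder is, up to normalization,
\[
\mathcal{E}(\xi) \;\propto\; \op_{A''_1 A''_2 \to A' D}\!\bigl(|\sigma\rangle\bigr)\,\bigl(U_1 W_1 \otimes U_2 W_2\bigr)\cdot \xi,
\]
where $W_i : A_i \to A''_i$ are fixed full-rank partial isometries, $U_1,U_2$ are Haar-random unitaries on $A''_1, A''_2$, and the $D$ register is absorbed into the environment. I would choose $U_1,U_2$ so that three properties hold simultaneously on the post-channel state: (i) tracing away $A'D$ after the encoding reproduces $\psi_1^{R_1 B_1} \otimes \psi_2^{R_2 B_2}$ in trace norm, which by Uhlmann's theorem guarantees the existence of an isometry $V^{A_1 A_2 \to A'}$ approximating $\mathcal{E}$; (ii) $R_1$ is decoupled from everything outside Bob~1's share $B_1 C_1$; (iii) symmetrically for $R_2$ and Bob~2's share $B_2 C_2$.

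Properties (ii) and (iii) are two direct applications of the smooth decoupling theorem (Theorem~\ref{thm:smooth-bertha}), using $H_{\min} \le H_2$ to downgrade the conclusion from $H_2$-entropies to min-entropies. For (ii), the relevant completely positive map is the one from $A''_1$ to $E D A''_2 C_2$ obtained by composing $\op(|\sigma\rangle)$ with the Stinespring dilation $U_{\mathcal{N}}$ and $\tr_{C_1}$; this produces the exponent $-\demi H_{\min}^{\varepsilon^2/20}(A''_1|E D A''_2 C_2) - \demi H_{\min}^{\varepsilon}(A_1|R_1)$ appearing in $\delta_1$, and the symmetric construction gives $\delta_2$. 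Property (i) is more delicate because $|\sigma\rangle$ generally carries correlations between $A''_1$ and $A''_2$, so averaging the two random unitaries independently does not factor the reduced state cleanly. I would therefore analyze (i) in two stages: first average over $U_2$ alone to embed $A_2$ into the marginal $A''_2$, yielding the term $5 \times 2^{\demi H_{\max}^{\varepsilon}(A_2)-\demi H_{\min}^{\varepsilon}(A''_2)}$; then condition on that choice and average over $U_1$ to embed $A_1$ into $A''_1$ given $A''_2$, yielding $4 \times 2^{\demi H_{\max}^{\varepsilon}(A_1)-\demi H_{\min}^{\varepsilon^2/20}(A''_1|A''_2)}$. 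The two stages are precisely why the smoothing parameters in $\delta_{\enc}$ and $\delta_1$ are asymmetric ($\varepsilon^2/20$ for the inner randomization, $\varepsilon$ for the outer one), and the constants $4, 5$ (and the $72\varepsilon$) track the Markov tail factors once a union bound is applied so that a single realization of $(U_1,U_2)$ satisfies all of (i)--(iii) within constant factors of their expectations.

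To close, Uhlmann's theorem applied to (i) yields an isometry $V$ with $\|V\cdot(\psi_1\otimes\psi_2)-\mathcal{E}(\psi_1\otimes\psi_2)\|_1 \le 2\sqrt{\delta_{\enc}}$. The triangle inequality then transports (ii) and (iii) to the state $(U_{\mathcal{N}} V)\cdot(\psi_1\otimes\psi_2)$, each at the additional cost of $2\sqrt{\delta_{\enc}}$. Because $\psi_2^{R_2 B_2}$ is already in tensor product with every register outside it, Lemma~\ref{lem:multidecoupling} combines this product-form fact with the Bob~1 decoupling bound to certify that the entire complement of $B_1 C_1$ is close to a product state with $R_1$; a further application of Uhlmann's theorem then produces the decoder $\mathcal{D}_1$ with error $2\sqrt{2\sqrt{\delta_{\enc}}+\delta_1}$, and symmetrically $\mathcal{D}_2$ with error $2\sqrt{2\sqrt{\delta_{\enc}}+\delta_2}$. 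Since $\mathcal{D}_1$ and $\mathcal{D}_2$ act on disjoint registers they may be applied in parallel, and a final triangle inequality across the two decoders yields the stated total error $4\sqrt{2\sqrt{\delta_{\enc}}+\delta_1}+2\sqrt{2\sqrt{\delta_{\enc}}+\delta_2}$. The chief obstacle in executing the plan, just as in classical Marton coding, is coordinating the two independent randomizations $U_1,U_2$ in the presence of $A''_1$--$A''_2$ correlations in $|\sigma\rangle$: this is exactly what forces the conditioning on $A''_2$ inside the min-entropy in $\delta_1$ (and on $A''_1$ in $\delta_2$), and what necessitates the two-stage encoding analysis rather than a naive product argument.
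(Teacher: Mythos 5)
Your overall architecture matches the paper's proof: a raw encoder of the form $\propto \op_{A''_1 A''_2 \to A' D}(\ket\sigma)(U_1 W_1 \otimes U_2 W_2)$, two decoupling conditions (one per receiver) plus one encoding/fitting condition, Markov's inequality to pick a single $(U_1,U_2)$ satisfying all of them, Uhlmann twice to extract the encoding isometry $W$ and the decoders, and Lemma~\ref{lem:multidecoupling} plus a final triangle inequality for the $4\sqrt{\cdot}+2\sqrt{\cdot}$ combination. The asymmetric two-stage treatment of the encoding condition (first embed $\psi_2$ into the $A''_2$ marginal of $\sigma$, then embed $\psi_1$ into $A''_1$ conditionally) is also what the paper does.

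However, there is one technical step you have glossed over, and it is the crux of the argument. When you apply Theorem~\ref{thm:smooth-bertha} to get the decoupling condition for Bob~1, the completely positive map acting on $A''_1$ must reproduce the \emph{actual protocol state}, so it is built not from $\sigma$ alone but from $\omega_1(U_2)$ (the state obtained by replacing the $A''_2$ marginal of $\sigma$ with $U_2 W_2 \psi_2$). Consequently the decoupling theorem delivers an exponent with $H_{\min}^{\varepsilon}(A''_1|EDR_2B_2C_2)_{U_{\mathcal{N}}\cdot\omega_1(U_2)}$ — a min-entropy on a \emph{random}, $U_2$-dependent state, not on the fixed $U_{\mathcal{N}}\cdot\sigma$. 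You instead write the decoupling map as $A''_1 \to EDA''_2C_2$ built directly from $\sigma$ and assert that it "produces" the fixed-state exponent in $\delta_1$; but that map applied to $\psi_1$ does not yield the protocol state (the protocol has $R_2 B_2$ in place of $A''_2$), so monotonicity of trace distance does not connect the two cleanly. The paper bridges this gap with a separate argument: letting $\tilde\sigma$ be the smoothed optimizer and $\mathcal{T}^{A''_2\to R_2 B_2}$ the substitution CP map, it bounds $\int d_F(\mathcal{T}(U_2\cdot\tilde\sigma),\mathcal{T}(U_2\cdot\sigma))\,dU_2 \leq 2\sqrt\varepsilon$ (using $\int \mathcal{T}(U_2\cdot\xi)\,dU_2 = \tr[\xi]\,\mathcal{T}(\pi)$ and $\|\tilde\sigma-\sigma\|_1\le 2\varepsilon$), which shows that \emph{on average over $U_2$} the random-state min-entropy is lower-bounded by a fixed-state min-entropy with degraded smoothing, $H_{\min}^\varepsilon(\cdots)_{\omega_1(U_2)} \ge H_{\min}^{\varepsilon^2/4}(\cdots)_\sigma$. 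This average-over-$U_2$ claim is then one more inequality fed into Markov alongside the others, which is exactly how the smoothing parameters $\varepsilon^2/20$ (five constraints on $U_2$) and $\varepsilon^2/16$ (four constraints on $U_1$) and the prefactors $4$, $5$ arise. Your attribution of $\varepsilon^2/20$ to an "inner vs.\ outer" randomization is gesturing in the right direction but is not the actual mechanism; without the fidelity-averaging transfer lemma you have no way to state the conclusion in terms of min-entropies of $\sigma$, which is what the theorem requires.
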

See Figure \ref{fig:broadcast-1shot-real} for an illustration of the purified version of the protocol.

\begin{figure}
    \centering
    \scalebox{0.9}{\input{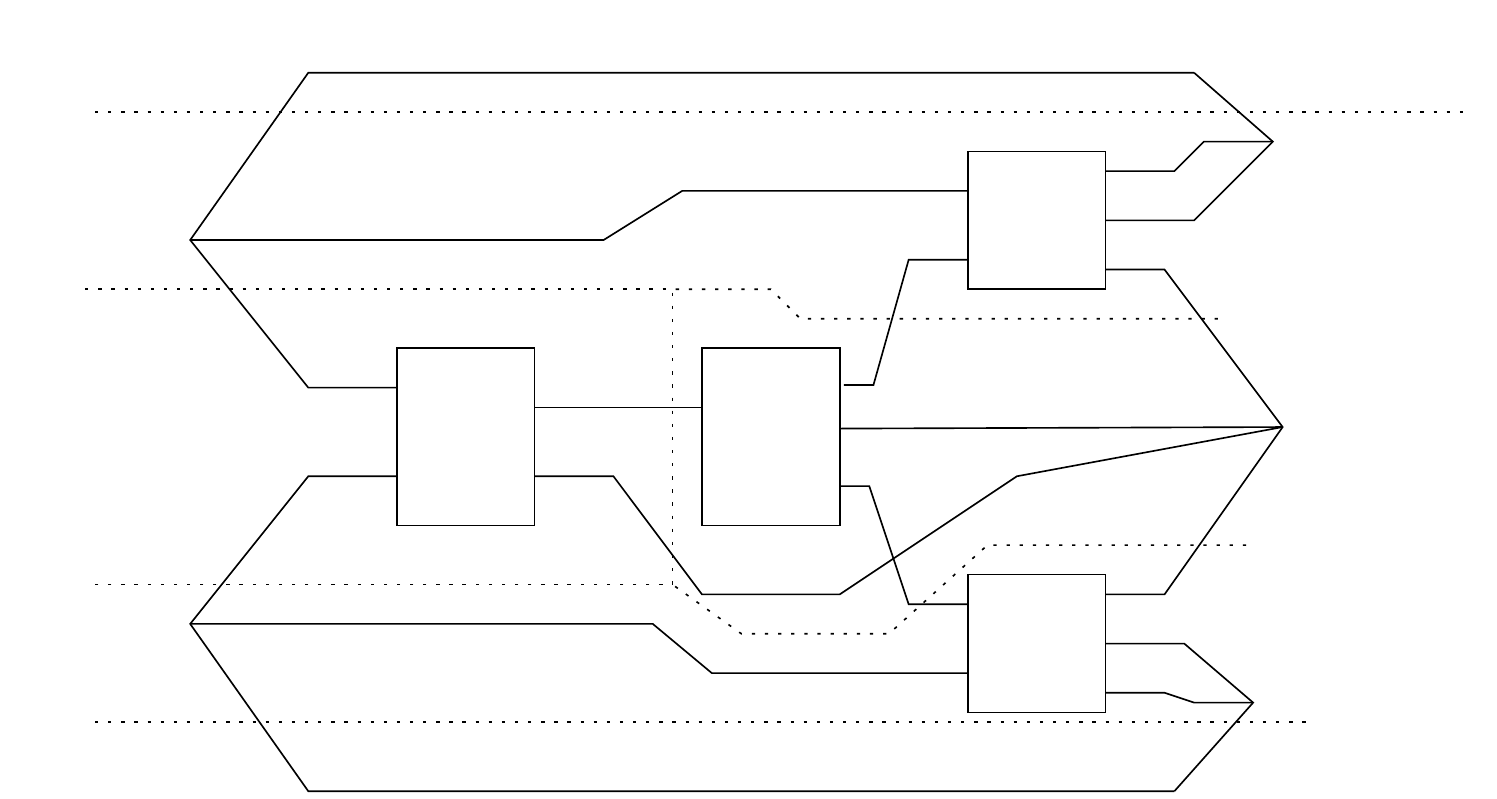_t}}
    \caption{Diagram illustrating Theorem \ref{thm:broadcast-1shot}, with encoder, channel and decoders purified. Each line represents a quantum system, boxes represent isometries, and the horizontal axis represents the passage of time. Lines joined together at either end of the diagram represent pure states. $W$ represents Alice's encoder: she encodes the messages $A_1$ and $A_2$ into the channel input $A'$ and discards a system $D$. The decoders $D_1$ and $D_2$ take the channel outputs $C_1$ and $C_2$ together with Bob 1 and 2's initial systems $B_1$ and $B_2$ and produce $A_1 B_1$ and $A_2B_2$ as output; the result being close to the initial state $\psi$.}
    \label{fig:broadcast-1shot-real}
\end{figure}

\begin{proof}
Let $U_{\mathcal{N}}^{A' \rightarrow C_1 C_2 E}$ and $W^{A_1 A_2 \rightarrow A'D}$ be Stinespring dilations of $\mathcal{N}$ and $\mathcal{E}$ respectively (it will turn out later that the encoder indeed only needs to discard a system of size $D$). To be able to assert the existence of the two decoders, we need to ensure that the two associated decoupling conditions are fulfilled. Those two conditions stipulate that there must exist states $\xi_1$ and $\xi_2$ such that
\begin{gather*}
\left\| \left( U_{\mathcal{N}} W \cdot (\psi_1 \otimes \psi_2) \right)^{R_1 R_2 C_2 B_2 ED} - \psi^{R_1} \otimes \xi_1^{R_2 C_2 B_2 ED} \right\|_1\\
\left\| \left( U_{\mathcal{N}} W \cdot (\psi_1 \otimes \psi_2) \right)^{R_2 R_1 C_1 B_1 ED} - \psi^{R_2} \otimes \xi_2^{R_1 C_1 B_1 ED} \right\|_1
\end{gather*}
are both appropriately small.

To ensure this, let $V_1^{A_1 \rightarrow A''_1}$ and $V_2^{A_2 \rightarrow A''_2}$ be any full-rank partial isometries, $\ket{\tilde{\psi}_1}^{A_1'' B_1 R_1} = V_1 \ket{\psi_1}^{A_1 B_1 R_1}$ and $\ket{\tilde{\psi}_2}^{A_2'' B_2 R_2} = V_2 \ket{\psi_2}^{A_2 B_2 R_2}$, and define the states
\begin{align*}
	\ket{\omega_1(U_2)}^{A''_1A'DR_2B_2} &= \sqrt{|A''_2|} \left( \op_{A''_2 \rightarrow A''_1 A' D}(\ket{\sigma}) U_2^{A''_2} \ket{\tilde{\psi}_2}^{A_2'' B_2 R_2} \right)\\
	\ket{\omega_2(U_1)}^{A''_2A'DR_1B_1} &= \sqrt{|A''_1|} \left( \op_{A''_1 \rightarrow A''_2 A' D}(\ket{\sigma}) U_1^{A''_1} \ket{\tilde{\psi}_1}^{A_1'' B_1 R_1} \right)
\end{align*}
We now use Theorem \ref{thm:smooth-bertha} to get
\begin{multline}\label{eqn:bcast-decoupling1}
	\int \left\| \left(|A''_1| U_{\mathcal{N}} \op_{A''_1 \rightarrow A'DR_2 B_2}(\ket{\omega_1(U_2)}) U_1 \cdot \tilde{\psi}_1^{A''_1 R_1 B_1}\right)^{R_1 EDR_2 B_2 C_2} - \psi_1^{R_1} \otimes \omega_1(U_2)^{C_2 ED R_2 B_2} \right\|_1 dU_1\\
 \leqslant 2^{-\demi H_{\min}^{\varepsilon}(A''_1|EDR_2 B_2 C_2)_{U_{\mathcal{N}} \cdot \omega_1(U_2)} - \demi H_{\min}^{\varepsilon}(A_1|R_1)_{\psi_1}} + 8\varepsilon
\end{multline}
and
\begin{multline}\label{eqn:bcast-decoupling2}
	\int \left\| \left(|A''_2| U_{\mathcal{N}} \op_{A''_2 \rightarrow A'DR_1 B_1}(\ket{\omega_2(U_1)}) U_2 \cdot \tilde{\psi}_2^{A''_2 R_2 B_2}\right)^{R_2 EDR_1 B_1 C_1} - \psi_2^{R_2} \otimes \omega_2(U_1)^{C_1 ED R_1 B_1} \right\|_1 dU_2\\
 \leqslant 2^{-\demi H_{\min}^{\varepsilon}(A''_2|EDR_1 B_1 C_1)_{U_{\mathcal{N}} \cdot \omega_2} - \demi H_{\min}^{\varepsilon}(A_2|R_2)_{\psi_2}} + 8\varepsilon
\end{multline}
Note that the first states on the left-hand side of both inequalities are actually the same state written differently, namely $|A''_1||A''_2| (U_{\mathcal{N}} \op_{A''_1 A''_2 \rightarrow A'D}(\ket{\sigma})(U_1 \otimes U_2) \cdot (\tilde{\psi}_1 \otimes \tilde{\psi}_2))$ (see Lemma \ref{lem:op-switcheroo}). This is close to what we need, but there are still two problems: the encoder in the above is not an isometry, and the smooth-min-entropies should be in terms of $U_{\mathcal{N}} \cdot \sigma$ rather than $U_{\mathcal{N}} \cdot \omega_1$ and $U_{\mathcal{N}} \cdot \omega_2$. To solve the first problem (and temporarily exacerbate the second!), we use Theorem \ref{thm:smooth-bertha} again to get
\begin{multline*}
	\int \left\| |A''_1| \left( \op_{A''_1 \rightarrow R_2 B_2 A'D}(\ket{\omega_1}) U_1 \cdot \tilde{\psi}_1^{A_1 B_1 R_1} \right)^{R_2 B_2 R_1 B_1} - \psi_1^{R_1B_1} \otimes \omega_1(U_2)^{R_2 B_2} \right\|_1 dU_1\\
\leqslant 2^{-\demi H_{\min}^{\varepsilon}(A''_1|R_2B_2)_{\omega_1(U_2)} - \demi H_{\min}^{\varepsilon}(A_1|R_1B_1)_{\psi_1}} + 8\varepsilon
\end{multline*}
and
\begin{multline*}
	\int \left\| |A''_2|\left( \op_{A''_2 \rightarrow A''_1 A'D}(\ket{\sigma}) U_2 \cdot \tilde{\psi}_2^{A''_2 B_2 R_2} \right)^{R_2 B_2} - \psi_2^{R_2 B_2} \right\|_1 dU_2\\
 \leqslant 2^{-\demi H_{\min}^{\varepsilon}(A''_2)_{\sigma} - \demi H_{\min}^{\varepsilon}(A_2|R_2B_2)_{\psi_2}} + 8\varepsilon.
\end{multline*}
Note that, in this last inequality, the first state on the left inside the trace norm is simply $\omega_1$; we can therefore use the triangle inequality to get
\begin{multline*}
	\int \left\| |A''_1| \left( \op_{A''_1 \rightarrow R_2 B_2 A'D}(\ket{\omega_1}) U_1 \cdot \tilde{\psi}_1^{A''_1 B_1 R_1} \right)^{R_2 B_2 R_1 B_1} - \psi_1^{R_1B_1} \otimes \psi_2^{R_2 B_2} \right\|_1 dU_1 dU_2\\
\leqslant \int 2^{-\demi H_{\min}^{\varepsilon}(A''_1|R_2B_2)_{\omega_1(U_2)} - \demi H_{\max}^{\varepsilon}(A_1)_{\psi_1}}dU_2  + 2^{-\demi H_{\min}^{\varepsilon}(A''_2)_{\sigma} - \demi H_{\max}^{\varepsilon}(A_2)_{\psi_2}} + 16\varepsilon 
\end{multline*}
This will allow us to solve our first problem: we will use Uhlmann's theorem on this last inequality to obtain our encoding isometry $W^{A_1 A_2 \rightarrow A'D}$. But before doing this, we will turn our attention to the second problem, namely that of bounding the min-entropies on $\omega_1(U_2)$ and $\omega_2(U_1)$.

There are three such problematic min-entropies: $H_{\min}^{\varepsilon}(A''_1|R_2B_2)_{\omega_1(U_2)}$ in this latest inequality, as well as $H_{\min}^{\varepsilon}(A''_1|EDR_2B_2C_2)_{U_{\mathcal{N}} \cdot \omega_2(U_1)}$ and $H_{\min}^{\varepsilon}(A''_2|EDR_1B_1C_1)_{U_{\mathcal{N}} \cdot \omega_1(U_2)}$ in Equations (\ref{eqn:bcast-decoupling1}) and (\ref{eqn:bcast-decoupling2}) respectively. We will first deal with the first one explicitly; the same technique applies to the other two.

Let $\tilde{\sigma}^{A''_1 A''_2 A'D}$ be a state such that $\| \tilde{\sigma} - \sigma \|_1 \leqslant 2\varepsilon$ and $H_{\min}(A''_1|A''_2)_{\tilde{\sigma}} = H_{\min}^{\varepsilon}(A''_1|A''_2)_{\sigma}$, and let $\mathcal{T}^{A''_2 \rightarrow R_2 B_2}$ be defined as $\mathcal{T}(\xi) = |A''_2| (\op_{A''_2 \rightarrow R_2 B_2}(\ket{\tilde{\psi}_2}) \cdot \xi)$ (and hence, $\omega_1(U_2) = \mathcal{T}(\sigma)$). Furthermore, let $\theta^{A''_2}$ be a positive semi-definite operator such that $\tilde{\sigma}^{A''_1 A''_2} \leqslant \ident^{A''_1} \otimes \theta^{A''_2}$, with $\tr[\theta] = 2^{-H_{\min}^{\varepsilon}(A''_1|A''_2)_{\sigma}}$. Then, since $\mathcal{T}$ is completely positive, we have that, for any $U_2^{A''_2}$,
\[ \mathcal{T}(U_2^{A''_2} \cdot \tilde{\sigma})^{A''_1 R_2 B_2} \leqslant \ident^{A''_1} \otimes \mathcal{T}(U_2^{A''_2} \cdot \theta)^{R_2 B_2}.\]
Hence, if we could be certain that $\mathcal{T}(U_2 \cdot \tilde{\sigma})^{A''_1 R_2 B_2}$ is within $\varepsilon$ in fidelity distance to $\mathcal{T}(U_2 \cdot \sigma)$, we would have shown that $H_{\min}^{\varepsilon}(A''_1|R_2 B_2)_{\omega_1(U_2)} \geqslant H_{\min}^{\varepsilon}(A''_1|A''_2)_{\sigma}$ for any $U_2$. Unfortunately, things are not so easy for us: we will instead have to show that, when averaging over $U_2$, the fidelity distance is not too bad. Note that first that $\int \mathcal{T}(U_2 \cdot \xi) dU_2 = \tr[\xi] \mathcal{T}(\pi^{A''_2})$. Likewise, letting $\tilde{\sigma} - \sigma = \Delta_{+} - \Delta_{-}$ with $\Delta_{\pm}$ positive semi-definite and having disjoint support, and with $\tr[\Delta_{\pm}] \leqslant 2\varepsilon$,
\begin{align*}
\int \| \mathcal{T}(U_2 \cdot \tilde{\sigma}) - \mathcal{T}(U_2 \cdot \sigma) \|_1 dU_2 &= \int \| \mathcal{T}(U_2 \cdot (\Delta_{+} - \Delta_{-}) \|_1 dU_2\\
&\leqslant \int \tr[\mathcal{T}(U_2 \cdot \Delta_{+})] + \tr[\mathcal{T}(U_2 \cdot \Delta_{-})] dU_2\\
&\leqslant 4\varepsilon
\end{align*}
and therefore 
\begin{equation}\label{eqn:df-bound-for-hmin}
\int d_F(\mathcal{T}(U_2 \cdot \tilde{\sigma}), \mathcal{T}(U_2 \cdot \sigma)) dU_2 \leqslant 2\sqrt{\varepsilon}.
\end{equation}
Hence, on average, the fidelity distance is not too bad and we conclude that, on average,
\[ H_{\min}^{2\sqrt{\varepsilon}}(A''_1|R_2 B_2)_{\omega_1(U_2)} \geqslant H_{\min}^{\varepsilon}(A''_1|A''_2)_{\sigma}. \]
Since we want a bound on $H_{\min}^{\varepsilon}(A''_1|R_2B_2)$, we can state this as
\[ H_{\min}^{\varepsilon}(A''_1|R_2B_2) \geqslant H_{\min}^{\varepsilon^2/4}(A''_1|A''_2)_{\sigma}. \]

We can also use the same trick on the other two smooth min-entropies. We now have three inequalities that we want $U_1$ to satisfy and four more inequalities that we need $U_2$ to satisfy. We can use Markov's inequality (see Lemma \ref{lem:markov}) on these to show that there exist a $U_1^{A''_1}$ and a $U_2^{A''_2}$ that satisfy all of them. The inequalities that must be satisfied by $U_1$ are the following:
\begin{multline}\label{eqn:bcast-decoupling-decoder1}
	\left\| |A''_1| \left(U_{\mathcal{N}} \op_{A''_1 \rightarrow A'DR_2 B_2}(\ket{\omega_1(U_2)}) U_1 \cdot \tilde{\psi}_1^{A''_1 R_1 B_1}\right)^{R_1 EDR_2 B_2 C_2} - \psi_1^{R_1} \otimes \omega_1^{ED R_2 B_2 C_2} \right\|_1\\
 \leqslant 4 \times 2^{-\demi H_{\min}^{\varepsilon}(A''_1|EDR_2 B_2 C_2)_{U_{\mathcal{N}} \cdot \omega_1(U_2)} - \demi H_{\min}^{\varepsilon}(A_1|R_1)_{\psi_1}} + 32\varepsilon
\end{multline}
\begin{multline}\label{eqn:bcast-decoupling-encoder1}
	\left\| |A''_1| \left( \op_{A''_1 \rightarrow R_2 B_2 A'D}(\ket{\omega_1(U_2)}) U_1 \cdot \tilde{\psi}_1^{A''_1 B_1 R_1} \right)^{R_2 B_2 R_1 B_1} - \psi_1^{R_1B_1} \otimes \omega_1^{R_2 B_2} \right\|_1\\
\leqslant 4 \times 2^{-\demi H_{\min}^{\varepsilon}(A''_1|R_2B_2)_{\omega_1(U_2)} - \demi H_{\min}^{\varepsilon}(A_1|R_1B_1)_{\psi_1}} + 32\varepsilon
\end{multline}
and
\begin{equation*}
2^{-H_{\min}^{\varepsilon}(A''_2|EDR_1B_1C_1)_{U_{\mathcal{N}} \cdot \omega_2(U_1)}} \leqslant 2^{-H_{\min}^{\varepsilon^2/16}(A''_2|EDA_1C_1)_{U_{\mathcal{N}} \cdot \sigma}}
\end{equation*} 
(The last one is actually a bound on a fidelity distance as in (\ref{eqn:df-bound-for-hmin})).

Likewise, there must exist a $U_2^{A''_2}$ such that
\begin{multline}\label{eqn:bcast-decoupling-decoder2}
	\int \left\| \left(|A''_2| U_{\mathcal{N}} \op_{A''_2 \rightarrow A'DR_1 B_1}(\ket{\omega_2(U_1)}) U_2 \cdot \tilde{\psi}_2^{A''_2 R_2 B_2}\right)^{R_2 EDR_1 B_1 C_1} - \psi_2^{R_2} \otimes \omega_2^{C_1 ED R_1 B_1} \right\|_1 dU_2\\
 \leqslant 5 \times 2^{-\demi H_{\min}^{\varepsilon}(A''_2|EDR_1 B_1 C_1)_{U_{\mathcal{N}} \cdot \omega_2} - \demi H_{\min}^{\varepsilon}(A_2|R_2)_{\psi_2}} + 40\varepsilon
\end{multline}
\begin{multline}\label{eqn:bcast-decoupling-encoder2}
	\int \left\| |A''_2|\left( \op_{A''_2 \rightarrow A''_1 A'D}(\ket{\sigma}) U_2 \cdot \tilde{\psi}_2^{A''_2 B_2 R_2} \right)^{R_2 B_2} - \psi_2^{R_2 B_2} \right\|_1 dU_2\\
 \leqslant 5 \times 2^{-\demi H_{\min}^{\varepsilon}(A''_2)_{\sigma} - \demi H_{\min}^{\varepsilon}(A_2|R_2B_2)_{\psi_2}} + 40\varepsilon.
\end{multline}
\begin{align}
\label{eqn:bcast-minent-bound1}2^{-H_{\min}^{\varepsilon}(A''_1|R_2B_2)_{\omega_1(U_2)}} &\leqslant 2^{-H_{\min}^{\varepsilon^2/20}(A''_1|A''_2)_{\sigma}}\\
2^{-H_{\min}^{\varepsilon}(A''_1|EDR_2B_2C_2)_{U_{\mathcal{N}} \cdot \omega_1(U_2)}} &\leqslant 2^{-H_{\min}^{\varepsilon^2/20}(A''_1|EDA''_2C_2)_{U_{\mathcal{N}} \cdot \sigma}}.
\end{align}

Now, we can combine Equations (\ref{eqn:bcast-decoupling-encoder1}), (\ref{eqn:bcast-decoupling-encoder2}), and (\ref{eqn:bcast-minent-bound1}) to get
\begin{multline}\label{eqn:bcast-decoupling-encoder}
	\left\| |A''_1| \left( \op_{A''_1 \rightarrow R_2 B_2 A'D}(\ket{\omega_1(U_2)}) U_1 \cdot \tilde{\psi}_1^{A''_1 B_1 R_1} \right)^{R_2 B_2 R_1 B_1} - \psi_1^{R_1B_1} \otimes \psi_2^{R_2 B_2} \right\|_1\\
\leqslant 4 \times 2^{-\demi H_{\min}^{\varepsilon^2/20}(A''_1|A''_2)_{\sigma} - \demi H_{\min}^{\varepsilon}(A_1|R_1B_1)_{\psi_1}}\\
+ 5 \times 2^{-\demi H_{\min}^{\varepsilon}(A''_2)_{\sigma} - \demi H_{\min}^{\varepsilon}(A_2|R_2B_2)_{\psi_2}} + 72\varepsilon.
\end{multline}
Using Uhlmann's theorem, we finally get our encoding isometry $W^{A_1 A_2 \rightarrow A'D}$:
\begin{multline}
	\left\| |A''_1| \left( \op_{A''_1 \rightarrow R_2 B_2 A'D}(\ket{\omega_1}) U_1 \cdot \tilde{\psi}_1^{A''_1 B_1 R_1} \right)^{R_2 B_2 R_1 B_1 A'D} - W \cdot (\psi_1^{A_1R_1B_1} \otimes \psi_2^{A_2R_2 B_2}) \right\|_1\\
\leqslant 2 \sqrt{\delta_{\enc}}
\end{multline}
where $\delta_{\enc}$ is defined as the right-hand side of (\ref{eqn:bcast-decoupling-encoder}). Finally, the two decoupling conditions (\ref{eqn:bcast-decoupling-decoder1}) and (\ref{eqn:bcast-decoupling-decoder2}) together with Uhlmann's theorem and Lemma \ref{lem:multidecoupling} yield the existence of the two decoders.
\end{proof}

We can now use this to prove an i.i.d.\ version for both entanglement-assisted and unassisted coding:
\begin{thm}\label{thm:bcast-iid}
Let $\mathcal{N}^{A' \rightarrow C_1 C_2}$ be a quantum broadcast channel with Stinespring extension $U_{\mathcal{N}}^{A' \rightarrow C_1 C_2 E}$, let $\sigma^{A_1 A_2 A'D}$ be any pure state, and define $\rho^{A_1 A_2 C_1 C_2 ED} = U_{\mathcal{N}} \cdot \sigma$. Then, all rates satisfying
\begin{align}
\nonumber 0 \leqslant Q_1 + E_1 &< H(A_1)_{\rho} & Q_1 - E_1 &< I(A_1 \rangle C_1)_{\rho}\\
\label{eqn:bcast-iid-region} 0 \leqslant Q_2 + E_2 &< H(A_2)_{\rho} & Q_2 - E_2 &< I(A_2 \rangle C_2)_{\rho}\\
\nonumber Q_1 + E_1 + Q_2 + E_2 &< H(A_1A_2)_{\rho}
\end{align}
are achievable for quantum transmission with rate-limited entanglement assistance through $\mathcal{N}$. In particular, if we allow $E_1$ and $E_2$ to be maximized (corresponding to fully entanglement-assisted coding), we get a quantum version of Marton's region:
\begin{align*}
Q_1 &< \demi I(A_1;C_1)_{\rho}\\
Q_2 &< \demi I(A_2;C_2)_{\rho}\\
Q_1 + Q_2 &< \demi I(A_1;C_1)_{\rho} + \demi I(A_2;C_2)_{\rho} - \demi I(A_1;A_2)_{\rho}.
\end{align*}
\end{thm}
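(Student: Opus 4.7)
The plan is to apply the one-shot broadcast theorem (Theorem~\ref{thm:broadcast-1shot}) to $\mathcal{N}^{\otimes n}$ with test input $\sigma^{\otimes n}$, and convert the resulting smooth-entropy bounds to von Neumann rates via the fully quantum AEP (Theorem~\ref{thm:fully-quantum-aep}). Concretely, I would take $\psi_i = \Phi^{R_iM_i}\otimes\Phi^{\wtA_iB_i}$, with $M_i\wtA_i$ playing the role of ``$A_i$'' in the one-shot statement, $|M_i|=2^{nQ_i}$ and $|\wtA_i|=2^{nE_i}$.

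The entropies on the message side are immediate from the maximally entangled structure: $H_{\max}^{\varepsilon}(M_i\wtA_i)_{\psi_i}\leqslant n(Q_i+E_i)$ and $H_{\min}^{\varepsilon}(M_i\wtA_i|R_i)_{\psi_i}\geqslant n(E_i-Q_i)$. On the test side, Theorem~\ref{thm:fully-quantum-aep} gives $\frac{1}{n}H_{\min}^{\varepsilon}(A_1^n|A_2^n)_{\sigma^{\otimes n}}\to H(A_1|A_2)_{\sigma}$ and $\frac{1}{n}H_{\min}^{\varepsilon}(A_2^n)_{\sigma^{\otimes n}}\to H(A_2)_{\sigma}$; for the post-channel state $\rho^{\otimes n}$, purity of $\rho$ on $A_1A_2C_1C_2ED$ gives $H(A_1|EDA_2C_2)_{\rho}=-H(A_1|C_1)_{\rho}=I(A_1\rangle C_1)_{\rho}$, so the AEP yields $\frac{1}{n}H_{\min}^{\varepsilon}(A_1^n|E^nD^nA_2^nC_2^n)_{\rho^{\otimes n}}\to I(A_1\rangle C_1)_{\rho}$, and likewise with indices swapped. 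The smoothing parameters $\varepsilon^2/20,\varepsilon^2/16$ that appear inside Theorem~\ref{thm:broadcast-1shot} still go to zero, so the AEP still applies. Substituting into $\delta_{\enc},\delta_1,\delta_2$, the overall error of the one-shot theorem vanishes as $n\to\infty$ whenever
\begin{equation*}
Q_1+E_1<H(A_1|A_2)_{\rho},\qquad Q_2+E_2<H(A_2)_{\rho},\qquad Q_i-E_i<I(A_i\rangle C_i)_{\rho}.
\end{equation*}
Since $H(A_1|A_2)+H(A_2)=H(A_1A_2)$ and $H(A_1|A_2)\leqslant H(A_1)$ by subadditivity, these carve out one corner of the polytope~\eqref{eqn:bcast-iid-region}. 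Applying the one-shot theorem again with the roles of $A_1$ and $A_2$ exchanged in the encoder step yields the opposite corner, with $H(A_2|A_1)$ in place of $H(A_2)$; time-sharing between the two coding schemes then fills in the entire claimed rate region.

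The main obstacle is essentially the bookkeeping required to check that the asymmetric form of the one-shot encoder bound (which privileges one of the two receivers in the decoupling argument) really collapses, after the AEP and time-sharing, to the symmetric three-constraint region of~\eqref{eqn:bcast-iid-region}. The Marton corollary then follows by eliminating $E_1,E_2$ from the region: the individual bounds $Q_i<\demi I(A_i;C_i)_{\rho}$ come from feasibility of $\max(0,Q_i-I(A_i\rangle C_i))<E_i\leqslant H(A_i)-Q_i$, and the sum bound $Q_1+Q_2<\demi[I(A_1;C_1)_{\rho}+I(A_2;C_2)_{\rho}-I(A_1;A_2)_{\rho}]$ follows by combining $E_i>Q_i-I(A_i\rangle C_i)$ with $E_1+E_2<H(A_1A_2)-Q_1-Q_2$, together with the pure-state identity $H(A_1A_2)+I(A_1\rangle C_1)+I(A_2\rangle C_2)=I(A_1;C_1)+I(A_2;C_2)-I(A_1;A_2)$, a direct expansion from the definitions.
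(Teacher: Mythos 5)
Your proposal matches the paper's proof step for step: apply the one-shot broadcast theorem to $\mathcal{N}^{\otimes n}$ with $\sigma^{\otimes n}$, convert the smooth conditional entropies via the fully quantum AEP (using the purity of $\rho$ to rewrite $H(A_1|EDA_2C_2)=I(A_1\rangle C_1)$), obtain the two asymmetric corners with $H(A_1|A_2)$ and $H(A_2|A_1)$, and time-share to fill the region; the Marton corollary follows by eliminating $E_1,E_2$ using $H(A_i)+I(A_i\rangle C_i)=I(A_i;C_i)$ and $H(A_1)+H(A_2)-I(A_1;A_2)=H(A_1A_2)$, exactly as in the paper. The only minor variations are presentational (you phrase the elimination of $E_i$ as a feasibility interval rather than simply adding inequalities, and you explicitly note that the $\varepsilon^2/20,\varepsilon^2/16$ smoothing parameters still vanish), and neither changes the substance of the argument.
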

\begin{proof}
The proof is little more than applying the previous theorem together with the fully quantum AEP (Theorem \ref{thm:fully-quantum-aep}). Consider using the previous theorem on $\mathcal{N}^{\otimes n}$ with input distribution $\sigma^{\otimes n}$ and with transmission and entanglement consumption rates $Q_1$, $Q_2$, $E_1$ and $E_2$. Let $R_1$ and $M_1$ be systems of dimension $2^{nQ_1}$, with $M_1$ representing the quantum information that Alice wants to send to Bob 1, with $R_1$ being the system that purifies it. Furthermore, let $\wtA_1$ and $B_1$ be systems of dimension $2^{nE_1}$ representing Alice's and Bob 1's halves of the preshared entanglement. Replicate all these definitions with subscript 2 for Bob 2. Then, we define $\psi_1 = \Phi^{R_1 M_1} \otimes \Phi^{\wtA_1 B_1}$ and $\psi_2 =  \Phi^{R_2 M_2} \otimes \Phi^{\wtA_2 B_2}$, where $M_1 \wtA_1$ and $M_2 \wtA_2$ play the roles of $A_1$ and $A_2$ from the previous theorem.

To get an error that goes down to zero as $n \rightarrow \infty$, we need to ensure that $\delta_{\enc}$, $\delta_1$ and $\delta_2$ all go down to zero as $n\rightarrow \infty$. By the fully quantum AEP and using the fact that $H_{\max}(A_1)_{\psi_1} = n(Q_1 + E_1)$ and $H_{\max}(A_2)_{\psi_2} = n(Q_2 + E_2)$, $\delta_{\enc}$ goes down to zero if
\begin{align*}
Q_1 + E_1 &< H(A_1|A_2)_{\rho}\\
Q_2 + E_2 &< H(A_2)_{\rho}.
\end{align*}
Likewise, using the fact that $H_{\min}(A_1|R_1)_{\psi_1} = n(E_1 - Q_1)$ and $H_{\min}(A_2|R_2)_{\psi_2} = n(E_2 - Q_2)$, we get that $\delta_1$ goes down to zero if
\[ Q_1 - E_1 < H(A_1|EDA_2C_2)_{\rho} = I(A_1 \rangle C_1)_{\rho} \]
and $\delta_2$ goes to zero if
\[ Q_2 - E_2 < H(A_2|EDA_1C_1)_{\rho} = I(A_2 \rangle C_2)_{\rho}. \]
By switching the roles of Bob 1 and Bob 2, we can also get any rate in the region
\begin{align*}
Q_1 + E_1 &< H(A_1)_{\rho}   & Q_1 - E_1 &< I(A_1 \rangle C_1)_{\rho}\\
Q_2 + E_2 &< H(A_2|A_1)_{\rho}   & Q_2 - E_2 &< I(A_2 \rangle C_2)_{\rho}.
\end{align*}
Taking convex combinations of points in these two regions (which corresponds to timesharing between different protocols) yields the region in the theorem statement.

To get the fully entanglement-assisted region, we simply take linear combinatinons of the various inequalities to get constraints only on $Q_1$ and $Q_2$:
\begin{align*}
Q_1 &< \demi H(A_1)_{\rho} + \demi I(A_1 \rangle C_1)_{\rho} = \demi I(A_1; C_1)_{\rho}\\
Q_2 &< \demi H(A_2)_{\rho} + \demi I(A_2 \rangle C_2)_{\rho} = \demi I(A_2; C_2)_{\rho}\\
Q_1 + Q_2 &< \demi \left[ H(A_1 A_2)_{\rho} + I(A_1 \rangle C_1)_{\rho} + I(A_2 \rangle C_2)_{\rho} \right]\\
&= \demi \left[ H(A_1 A_2)_{\rho} - H(A_1|C_1)_{\rho} - H(A_2|C_2)_{\rho} \right]\\
&= \demi \left[ H(A_1)_{\rho} + H(A_2)_{\rho} - I(A_1;A_2)_{\rho} - H(A_1|C_1)_{\rho} - H(A_2|C_2)_{\rho} \right]\\
&= \demi \left[ I(A_1;C_1)_{\rho} + I(A_2; C_2)_{\rho} - I(A_1;A_2)_{\rho} \right].
\end{align*}
\end{proof}

\section{Regularized converse}
The rate region for the case given in Theorem \ref{thm:bcast-iid} is indeed the capacity for quantum transmission with rate-limited entanglement assistance of quantum broadcast channels provided we regularize over many uses of the channel. It is important to remember, however, that regions defined by very different formulas can nonetheless agree after regularization, so the following theorem should be understood to be only a very weak characterization of the capacity.
\begin{thm}\label{thm:reg-converse}
    The capacity region for rate-limited quantum transmission of a quantum broadcast channel $\mathcal{N}^{A' \rightarrow C_1 C_2}$ is the convex hull of the union of all rate points $(Q_1, Q_2, E_1, E_2)$ satisfying
\begin{align}
	\nonumber Q_1 + E_1 &\leqslant \frac{1}{n}H(A_1)_{\psi}  & Q_1 - E_1 &\leqslant \frac{1}{n} I(A_1 \rangle C_1^{n})_{\psi}\\
	\label{eqn:bcast-converse-region} Q_2 + E_2 &\leqslant \frac{1}{n}H(A_2)_{\psi}  & Q_2 - E_1 &\leqslant \frac{1}{n} I(A_2 \rangle C_2^{n})_{\psi}\\
 \nonumber & Q_1 + Q_2 + E_1 + E_2 \leqslant \frac{1}{n}H(A_1 A_2)_{\psi}
\end{align}
for some state of the form $\ket{\psi}^{A_1 A_2 C_1^{n} C_2^{n} D E^{n}} = {U^{\otimes n}_{\mathcal{N}}}\ket{\phi}^{A_1 A_2 {A'}^{n} D}$, where  $\ket{\phi}$ is a pure state.
\end{thm}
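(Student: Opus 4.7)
The plan is to establish containment in both directions. For achievability, apply Theorem~\ref{thm:bcast-iid} to the super-channel $\mathcal{N}^{\otimes n}$ with Stinespring dilation $U_{\mathcal{N}}^{\otimes n}$: for any pure $\ket{\phi}^{A_1 A_2 (A')^n D}$, setting $\ket{\psi} = U_{\mathcal{N}}^{\otimes n}\ket{\phi}$, Theorem~\ref{thm:bcast-iid} yields achievability of a rate quadruple $(nQ_1, nE_1, nQ_2, nE_2)$ \emph{per use of} $\mathcal{N}^{\otimes n}$ satisfying the inequalities (\ref{eqn:bcast-iid-region}) on $\psi$. Interpreted per use of $\mathcal{N}$, these rescale to exactly (\ref{eqn:bcast-converse-region}) with the $\tfrac{1}{n}$ factors, so every point in the union is achievable; the convex hull is then achieved by standard time-sharing between protocols of possibly different block sizes.

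For the converse, the plan is to show that any achievable rate $(Q_1, Q_2, E_1, E_2)$ lies in the closure of the union. Fix $\delta > 0$ and choose $n$ large enough that there is a code $(\mathcal{E}_n, \mathcal{D}_{1,n}, \mathcal{D}_{2,n})$ with rates at least $(Q_1-\delta, Q_2-\delta, E_1, E_2)$ and total trace-distance error at most $\delta$. Let $U_{\mathcal{E}}^{M_1 \wtA_1 M_2 \wtA_2 \to (A')^n D}$ be a Stinespring dilation of the encoder and define
\[ \ket{\phi}^{A_1 A_2 (A')^n D} := U_{\mathcal{E}}\, \ket{\Phi}^{R_1 M_1}\ket{\Phi}^{\wtA_1 B_1}\ket{\Phi}^{R_2 M_2}\ket{\Phi}^{\wtA_2 B_2}, \]
under the identifications $A_1 = R_1 B_1$ and $A_2 = R_2 B_2$ (both untouched by $U_{\mathcal{E}}$), and set $\ket{\psi} = U_{\mathcal{N}}^{\otimes n}\ket{\phi}$, which has the required form. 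Because $R_1, B_1, R_2, B_2$ are mutually uncorrelated product states and the encoder/channel act only on $M_1\wtA_1 M_2\wtA_2/(A')^n$, one has $H(A_i)_\psi = H(R_i) + H(B_i) = n(Q_i + E_i)$ and $H(A_1 A_2)_\psi = \sum_i n(Q_i + E_i)$, so the first three inequalities of (\ref{eqn:bcast-converse-region}) are saturated.

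The crux is verifying the two coherent-information bounds. Consider $I(A_1\rangle C_1^n)_\psi = -H(R_1 B_1 | C_1^n)_\psi$ and expand with the chain rule: $H(R_1 B_1 | C_1^n) = H(R_1 | B_1 C_1^n) + H(B_1 | C_1^n)$. Subadditivity gives $H(B_1|C_1^n) \leq H(B_1) = nE_1$. For the first term, data processing of the quantum conditional entropy under the CPTP map $\mathcal{D}_{1,n}^{C_1^n B_1 \to M_1}$ yields $H(R_1 | B_1 C_1^n)_\psi \leq H(R_1 | M_1)_{\mrm{out}}$; since the output state on $R_1 M_1$ is $\delta$-close to $\Phi^{R_1 M_1}$ (for which the conditional entropy is $-nQ_1$), Fannes' inequality gives $H(R_1|M_1)_{\mrm{out}} \leq -nQ_1 + g(\delta, n)$ with $g(\delta,n)/n \to 0$ as $\delta \to 0$. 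Combining, $\tfrac{1}{n} I(A_1\rangle C_1^n)_\psi \geq Q_1 - E_1 - g(\delta,n)/n$, and symmetrically for receiver~2. Thus $(Q_1 - O(\delta), Q_2 - O(\delta), E_1, E_2)$ lies in the union, and since $\delta$ was arbitrary, the achievable point lies in the closure, hence in the convex hull.

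The main obstacle is purely the bookkeeping of the Fannes-type continuity error: one must verify that the additive slack in $H(R_1|M_1)_{\mrm{out}}$ inherits the scaling $\delta \log|M_1| + h(\delta) = O(\delta n Q_1) + h(\delta)$, so that the per-channel-use error $g(\delta,n)/n$ vanishes as $\delta \to 0$ uniformly in $n$. Beyond this continuity estimate and the correct identification of $A_1, A_2, D$ in the purified encoder picture, the argument is a standard regularized weak-converse built from data processing of coherent information and subadditivity.
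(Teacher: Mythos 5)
Your proof is correct and follows essentially the same route as the paper: achievability is immediate from Theorem~\ref{thm:bcast-iid}, and the converse proceeds by identifying $A_1 = R_1 B_1$, $A_2 = R_2 B_2$, bounding the recovered mutual/conditional entropies via Fannes-type continuity and data processing, and using subadditivity (your $H(B_1|C_1^n) \leqslant H(B_1)$ is exactly the paper's $H(C_1^n B_1) \leqslant H(C_1^n) + H(B_1)$) to pull out the $E_1$ term. The only cosmetic difference is that you phrase the estimate through the chain rule for conditional entropy while the paper runs the identical bookkeeping through $I(R_1;C_1^n B_1)$, and you make the Stinespring dilation of the encoder (needed for $\psi$ to have the form demanded by the theorem) explicit where the paper leaves it implicit.
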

\begin{proof}
    It is immediate from Theorem \ref{thm:bcast-iid} that the region is achievable. We now prove the converse.

    Suppose that $(Q_1, Q_2, E_1, E_2)$ is an achievable four-tuple. That means that there exists a sequence of codes of length $n$ with these rates and with error rate going to 0 as $n \rightarrow \infty$. Consider the code of block size $n$ in this sequence. Let $\psi = \Phi^{R_1 M_1} \otimes \Phi^{\wtA_1 B_1} \otimes \Phi^{R_2 M_2} \otimes \Phi^{\wtA_2 B_2}$ be the input state as in Theorem \ref{thm:bcast-iid}, $\mathcal{E}^{M_1 M_2 \wtA_1\wtA_2 \rightarrow A'^{n}}$ be the encoding superoperator, and let $\rho^{R_1 R_2 C_1^{n} C_2^{n} B_1 B_2 E^n} = U_{\mathcal{N}}^{\otimes n} \cdot \mathcal{E}(\psi)$. We will evaluate entropic quantities with respect to $\rho$.

    Given that Bob 1 must be able to recover a system which purifies $R_1$ from $C_1^{n}$ and $B_1$, we have by Fannes' inequality (Theorem \ref{thm:fannes}) and the monotonicity of the mutual information (see Section \ref{sec:entropy-properties}) that $I(R_1; C_1^{n} B_1) \geqslant 2nQ_1 - n\delta_n$, where $\delta_n \rightarrow 0$ as $n \rightarrow \infty$, and likewise for Bob 2. We also have
\begin{align*}
    2nQ_1 - n\delta_n &\leqslant I(R_1; C_1^{n}B_1)\\
    &= H(R_1) + H(C_1^{n} B_1) - H(R_1 C_1^{n} B_1)\\
    &\leqslant H(R_1) + H(C_1^{n}) + H(B_1) - H(R_1 C_1^{n} B_1)\\
    &= nQ_1 + nE_1 + H(C_1^{n}) - H(R_1 C_1^{n} B_1)\\
    &= nQ_1 + nE_1 + I(R_1 B_1 \rangle C_1^{n})
\end{align*}
where the second line follows from subadditivity, and the third line from the definition of $R_1$ and $B_1$. Hence, if we identify $R_1B_1$ as $A_1$ and likewise for Bob 2, we get
\begin{align}
    Q_1 - E_1 &\leqslant \frac{1}{n} I(A_1 \rangle C_1^{n}) + \delta_n\\
    Q_2 - E_2 &\leqslant \frac{1}{n} I(A_2 \rangle C_2^{n}) + \delta_n
\end{align}
where $\delta_n \rightarrow 0$ as $n \rightarrow \infty$. Since $Q_1 + E_1 = \frac{1}{n} H(A_1)$, $Q_2 + E_2 = \frac{1}{n} H(A_2)$, and $H(A_1A_2) = H(A_1) + H(A_2)$ by construction, this rate point is clearly inside the region in Equation (\ref{eqn:bcast-converse-region}), and it follows that this is indeed the capacity of the channel.
\end{proof}

While one might conjecture that Theorem \ref{thm:reg-converse} characterizes the capacity region of a broadcast channel for quantum transmission with rate-limited entanglement assistance even with the restriction $n=1$, this is false even in the special case of unassisted quantum transmission through a channel with a single receiver \cite{superadd-qcap}. It may however be the true capacity for the fully entanglement-assisted case, but there is no reason to believe that this would be any easier to prove than to prove that Marton's region is optimal in the classical case.

\section{Single-letter example}
In the classical case, the simplest example of a broadcast channel for which Marton's region is optimal is a deterministic channel, i.e.\ a channel where the outputs are completely determined by the inputs. Similarly, we can show that our rate region is optimal for entanglement-assisted quantum transmission through classical deterministic channels. This is perhaps unsurprising since entanglement would be highly unlikely to help classical transmission through a classical channel, but it nonetheless provides an example for which our theorem is optimal.

We say that $\mathcal{N}^{A' \rightarrow C_1 C_2}$ is a classical deterministic broadcast channel if there exist two deterministic functions $f_1:\{1,\ldots,|A'|\} \rightarrow \{1,\ldots,|C_1|\}$ and $f_2:\{1,\ldots,|A'|\} \rightarrow \{1,\ldots,|C_2|\}$ such that $U_{\mathcal{N}}\ket{i} = \ket{f_1(i)}^{C_1} \otimes \ket{f_2(i)}^{C_2} \otimes \ket{i}^E$ for some fixed orthonormal bases on $A'$, $C_1$, $C_2$ and $E$. We claim that any rate point that can be achieved for such a channel is a convex combination of rates that can be achieved via our coding method with input states of the form $\varphi^{A_1 A_2 A'} = \sum_{i = 1}^{|A'|} p_i \ketbra{f_1(i)}^{A_1} \otimes \ketbra{f_2(i)}^{A_2} \otimes \ketbra{i}^{A'}$ for some probability distribution $\{p_i\}$. To prove this, we first need the following observation:

\begin{lem}\label{lem:single-letter}
	Let $f : \{1,\ldots,|D|\} \rightarrow \{1,\ldots,|B|\}$ be a function, and $\ket{\xi}^{ABCD}$ be $\sum_i \alpha_i \ket{\mu_i}^A \otimes \ket{f(i)}^B \otimes \ket{\nu_i}^C \otimes \ket{i}^D$, where $\ket{\mu_i}$ and $\ket{\nu_i}$ are any pure states, and $\ket{i}$ and $\ket{f(i)}$ represent $i$ and $f(i)$ encoded in a standard bases on $D$ and $B$ respectively. Then, $I(A;B)_{\xi} \leqslant H(B)_{\xi}$.
\end{lem}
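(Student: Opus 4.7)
The plan is to reduce the claim to the standard fact that the conditional von Neumann entropy $H(B|A)$ is nonnegative when the $B$-system is classical, and then observe that in the state at hand $B$ is indeed classical because it is a deterministic function of the classical register $D$ that is traced out.

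First I would compute the reduced state $\xi^{AB}$. Since $\{\ket{i}\}$ is an orthonormal basis on $D$, tracing out $D$ on $\ket{\xi}\bra{\xi}^{ABCD}$ enforces $i=j$ on the sum, yielding
\[ \xi^{ABC} \;=\; \sum_i |\alpha_i|^2\, \ketbra{\mu_i}^A \otimes \ketbra{f(i)}^B \otimes \ketbra{\nu_i}^C. \]
Tracing out $C$ and grouping the indices $i$ by the value $b = f(i)$ then gives the classical--quantum decomposition
\[ \xi^{AB} \;=\; \sum_b p_b\, \rho_b^A \otimes \ketbra{b}^B, \qquad p_b \;=\; \sum_{i:f(i)=b} |\alpha_i|^2,\quad \rho_b^A \;=\; \frac{1}{p_b}\sum_{i:f(i)=b} |\alpha_i|^2\, \ketbra{\mu_i}^A, \]
with $\xi^B = \sum_b p_b \ketbra{b}^B$ diagonal in the standard basis.

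Next I would invoke the standard entropy identity for classical--quantum states, $H(AB)_{\xi} = H(\{p_b\}) + \sum_b p_b\, H(\rho_b)$, together with the concavity upper bound $H(A)_{\xi} = H\!\left(\sum_b p_b \rho_b\right) \leqslant H(\{p_b\}) + \sum_b p_b\, H(\rho_b)$. Subtracting gives $H(B|A)_{\xi} = H(AB)_{\xi} - H(A)_{\xi} \geqslant 0$, and therefore
\[ I(A;B)_{\xi} \;=\; H(B)_{\xi} - H(B|A)_{\xi} \;\leqslant\; H(B)_{\xi}, \]
which is the claim.

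There is no real obstacle here: the only subtle point is that the vectors $\ket{\mu_i}$ need not be orthogonal and the coefficients $\alpha_i$ need not be real, so one cannot read off the reduced states by inspection; the orthogonality of $\{\ket{i}\}^D$ (and of $\{\ket{b}\}^B$, via $f$) is what forces $\xi^{AB}$ to be classical--quantum in $B$ and delivers the result.
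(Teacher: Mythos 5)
Your proof is correct, and it takes a slightly different route from the paper's. The paper reaches $H(B|A)_\xi \geqslant 0$ by invoking strong subadditivity together with the observation that, since $B$ is a deterministic function of the classical index held in $D$, the (classical) conditional entropy $H(B|D)_\xi$ vanishes; one gets $H(B|A)_\xi \geqslant H(B|AD)_\xi = 0$ and concludes. You instead work directly with the reduced state $\xi^{AB}$: after tracing out $D$ (whose standard basis kills the cross terms) and $C$, you exhibit $\xi^{AB}$ explicitly as a classical--quantum state with $B$ classical, and then use the standard cq-entropy identity $H(AB) = H(\{p_b\}) + \sum_b p_b H(\rho_b)$ together with the mixture upper bound $H(A) \leqslant H(\{p_b\}) + \sum_b p_b H(\rho_b)$ to conclude $H(B|A)_\xi \geqslant 0$. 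Both arguments are sound and both boil down to the nonnegativity of the conditional entropy of a classical register; yours is somewhat more elementary (it avoids invoking strong subadditivity and makes the cq structure of $\xi^{AB}$ explicit), while the paper's version is terser given the entropy inequalities already available in the text. One tiny cosmetic point: when you write $\rho_b^A = \frac{1}{p_b}\sum_{i:f(i)=b}|\alpha_i|^2\ketbra{\mu_i}$, you should restrict to $b$ with $p_b>0$ (terms with $p_b=0$ simply drop out of the sum), but this does not affect the argument.
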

\begin{proof}
	The lemma follows from the observation that because of the structure of $\xi^{AB}$ and strong subadditivity (see Section \ref{sec:entropy-properties}), $H(B|A) \geqslant H(B|D)$. The latter is a classical conditional entropy and is therefore never negative, which means that $I(A;B)_{\xi} = H(B)_{\xi} - H(B|A)_{\xi} \leqslant H(B)_{\xi}$.
\end{proof}

Armed with this, we can now show the following:
\begin{thm}
	Let $\mathcal{N}^{A' \rightarrow C_1 C_2}$ be a classical deterministic channel. Then, the capacity region for entanglement-assisted quantum transmission on this channel is the same as the achievable rate region given by Theorem \ref{thm:bcast-iid}.
\end{thm}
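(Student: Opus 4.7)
The achievability direction follows immediately from Theorem~\ref{thm:bcast-iid}. For the distinguished input $\varphi^{A_1 A_2 A'} = \sum_i p_i \ketbra{f_1(i)}^{A_1} \otimes \ketbra{f_2(i)}^{A_2} \otimes \ketbra{i}^{A'}$, a direct calculation on a natural purification shows $I(A_1;C_1) = H(C_1)$, $I(A_2;C_2) = H(C_2)$, and $I(A_1;A_2) = I(C_1;C_2)$, where $C_1, C_2$ denote classical random variables with distributions induced by $\{p_i\}$ via $f_1, f_2$. Substituting into the fully entanglement-assisted region of Theorem~\ref{thm:bcast-iid} produces the constraints $Q_1 \leqslant \demi H(C_1)$, $Q_2 \leqslant \demi H(C_2)$, $Q_1+Q_2 \leqslant \demi H(C_1 C_2)$, and closure under convex combinations is built into the definition of the capacity region.

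For the converse, the plan is to start from the regularized converse (Theorem~\ref{thm:reg-converse}) and single-letterize using the deterministic structure. In the entanglement-assisted setting, adding the relevant inequalities of Theorem~\ref{thm:reg-converse} pairwise to eliminate $E_1, E_2$ shows that any achievable $(Q_1, Q_2)$ must satisfy, for some block length $n$ and some pure input $\ket{\phi}^{A_1 A_2 {A'}^n D}$,
\[ 2Q_1 \leqslant \tfrac{1}{n} I(A_1;C_1^n)_{\psi}, \qquad 2Q_2 \leqslant \tfrac{1}{n} I(A_2;C_2^n)_{\psi}, \]
\[ 2(Q_1+Q_2) \leqslant \tfrac{1}{n}\bigl[I(A_1;C_1^n)_{\psi} + I(A_2;C_2^n)_{\psi} - I(A_1;A_2)_{\psi}\bigr], \]
where $\ket{\psi} = U_{\mathcal{N}}^{\otimes n}\ket{\phi}$.

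The crucial structural observation is that for a deterministic channel the state $\ket{\psi}$ takes the form $\sum_{\mathbf{i}} \alpha_{\mathbf{i}} \ket{\mu_{\mathbf{i}}}^{A_1 A_2 D} \otimes \ket{f_1(\mathbf{i})}^{C_1^n} \otimes \ket{f_2(\mathbf{i})}^{C_2^n} \otimes \ket{\mathbf{i}}^{E^n}$, which is an $n$-letter version of the hypothesis of Lemma~\ref{lem:single-letter}. Two applications of that lemma yield $I(A_1; C_1^n)_{\psi} \leqslant H(C_1^n)_{\psi}$ and $I(A_2; C_2^n)_{\psi} \leqslant H(C_2^n)_{\psi}$. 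For the sum-rate term, set $Z := C_1^n C_2^n$, which is a classical register (its marginal is diagonal in the computational basis, since tracing out $E^n$ kills coherences between distinct $\mathbf{i}$). Monotonicity of the mutual information gives $I(A_1;C_1^n) + I(A_2;C_2^n) - I(A_1;A_2) \leqslant I(A_1;Z) + I(A_2;Z) - I(A_1;A_2)$, and then combining strong subadditivity $H(A_1 Z) + H(A_2 Z) \geqslant H(A_1 A_2 Z) + H(Z)$ with $H(A_1 A_2 Z) \geqslant H(A_1 A_2)$ (which holds because $Z$ is classical, so $H(Z|A_1A_2) \geqslant 0$) collapses the right-hand side to $H(Z) = H(C_1^n C_2^n)$.

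The final step is to apply subadditivity of the von~Neumann entropy to obtain $H(C_1^n C_2^n)_\psi \leqslant \sum_{j=1}^n H(C_{1,j} C_{2,j})_\psi$, and similarly for the individual marginal entropies, noting that each single-letter entropy depends only on the diagonal of the marginal input state on ${A'}_j$, i.e., on a classical probability distribution $p_j$ on $\{1,\ldots,|A'|\}$. This exhibits $(Q_1, Q_2)$ as the $j$-average of single-letter rate points of the claimed form $(\demi H(C_1)_{p_j}, \demi H(C_2)_{p_j})$ with sum-rate constraint $\demi H(C_1 C_2)_{p_j}$, placing $(Q_1, Q_2)$ in the convex hull of the single-letter region and finishing the converse. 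I expect the sum-rate bound to be the main technical obstacle: the individual-rate bounds and the achievability side are essentially direct substitutions, whereas the sum-rate bound requires assembling monotonicity, strong subadditivity, and classicality of the outputs in just the right order.
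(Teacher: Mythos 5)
Your proof is correct and follows essentially the same route as the paper's: achievability by direct substitution of the distinguished input into Theorem~\ref{thm:bcast-iid}, and a converse that starts from the regularized converse (Theorem~\ref{thm:reg-converse}), reduces each mutual information to a channel-output entropy via the deterministic structure (Lemma~\ref{lem:single-letter}, or the equivalent classicality-of-$Z$ and strong-subadditivity argument you use for the sum rate), and single-letterizes by subadditivity of the von~Neumann entropy. One minor difference: you carry the $-I(A_1;A_2)$ term through from the eliminated region rather than setting it to zero, whereas the paper invokes $I(A_1;A_2)_\psi=0$ --- a fact established in the proof of Theorem~\ref{thm:reg-converse} but not in its statement --- so your version is slightly more self-contained, though both lead to the same bound $I(A_1A_2;C_1^nC_2^n)\leqslant H(C_1^nC_2^n)$.
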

\begin{proof}
	According to the regularized converse theorem (Theorem \ref{thm:reg-converse}), for any achievable rate point $(Q_1,Q_2)$, there exists a state $\ket{\psi}^{A_1 A_2 C_1^n C_2^n E^n D} = U_{\mathcal{N}}^{\otimes n}\ket{\varphi}^{A_1 A_2 A'^n D}$ such that $Q_1 = \frac{1}{2n} I(A_1;C_1^n)_{\psi} + \delta_n$, $Q_2 = \frac{1}{2n} I(A_2;C_2^n)_{\psi} + \delta_n$, where $\delta_n \geqslant 0$, and $I(A_1;A_2)_{\psi} = 0$. Let $C_{1,i}$ and $C_{2,i}$ be the $i$th copies of $C_1$ and $C_2$ in $C_1^n$ and $C_2^n$, and, for each $i$, let $\psi_i^{A_1 A_2 C_1 C_2} = \sum_{jk} \ket{jkjk}\bra{jk} \psi^{C_{1,i} C_{2,i}} \ket{jk} \bra{jkjk}$, where the $\bra{jkjk}\ket{jk}$ are defined in the classical basis on $C_{1,i}$ and $C_{2,i}$ and in some fixed basis on $A_1, A_2, C_1$ and $C_2$. Then, we can bound the individual rates as follows:

\begin{align}
	Q_1 &\leqslant \frac{1}{2n} I(A_1;C_1^n)_{\psi} + \delta_n\\
	&\leqslant \frac{1}{2n} H(C_1^n)_{\psi} + \delta_n\\
	&\leqslant \frac{1}{2n} \sum_i H(C_{1,i})_{\psi} + \delta_n\\
	&= \frac{1}{2n} \sum_i H(C_{1})_{\psi_i} + \delta_n\\
	&= \frac{1}{n} \sum_i \frac{1}{2}I(A_1;C_1)_{\psi_i} + \delta_n \label{eqn:single-letter-terms}
\end{align}
and likewise for $Q_2$. The second inequality is due to Lemma \ref{lem:single-letter}, with the roles of the $B$ and $D$ subsystems in the lemma played by $C_1^n$ and $E^n$ respectively, and the third inequality makes use the subadditivity of the von Neumann entropy.

We can now do the same thing for the sum rate:
\begin{align}
	\nonumber Q_1 + Q_2 &= \frac{1}{2n} \left\{ I(A_1;C_1^n)_{\psi} + I(A_2;C_2^n)_{\psi} \right\} + 2\delta_n\\
	\nonumber &= \frac{1}{2n} \left\{ H(A_1)_{\psi} + H(A_2)_{\psi} - H(A_1|C_1^n)_{\psi} - H(A_1;C_2^n)_{\psi} \right\} + 2\delta_n\\
	\nonumber &\leqslant \frac{1}{2n} \left\{ H(A_1A_2)_{\psi} - H(A_1A_2|C_1^nC_2^n)_{\psi} \right\} + 2\delta_n\\
	\nonumber &= \frac{1}{2n} I(A_1A_2;C_1^nC_2^n)_{\psi} + 2\delta_n\\
	\label{eqn:single-letter-terms-sumrate} &\leqslant \frac{1}{2n} H(C_1^nC_2^n)_{\psi} + 2\delta_n\\
	\nonumber &\leqslant \frac{1}{2n} \sum_i H(C_{1,i}C_{2,i})_{\psi} + 2\delta_n\\
	\nonumber &= \frac{1}{2n} \sum_i H(C_{1}C_{2})_{\psi_i} + 2\delta_n\\
	\nonumber &= \frac{1}{n} \sum_i \frac{1}{2} \left\{ H(C_1)_{\psi_i} + H(C_2)_{\psi_i} - I(C_1;C_2)_{\psi_i} \right\} + 2\delta_n\\
	\nonumber &= \frac{1}{n} \sum_i \frac{1}{2} \left\{ I(A_1;C_1)_{\psi_i} + I(A_2;C_2)_{\psi_i} - I(A_1;A_2)_{\psi_i} \right\} + 2\delta_n 
\end{align}
where, in the first inequality, we have made use of the fact that $A_1$ and $A_2$ are independent and of the standard inequality $H(AB|CD) \leqslant H(A|C) + H(B|D)$, and the last equality follows from the special form of the $\psi_i$'s.

Since every $i$ in equations (\ref{eqn:single-letter-terms}) and (\ref{eqn:single-letter-terms-sumrate}) corresponds to a rate which is achievable via Theorem \ref{thm:bcast-iid}, this concludes the proof.
\end{proof}

\section{Discussion}\label{sec:bcast-discussion}
We have exhibited and analyzed a new protocol for quantum communication with rate-limited entanglement assistance through quantum broadcast channels. Our protocol achieves the following rate region for every mixed state $\sigma^{A_1 A_2 A'}$:
\begin{equation}
\begin{split}
0 \leqslant Q_1 &\leqslant \frac{1}{2} I(A_1; C_1)_\rho\\
0 \leqslant Q_2 &\leqslant \frac{1}{2} I(A_2; C_2)_\rho\\
Q_1 + Q_2 &\leqslant \frac{1}{2} \left[ I(A_1;C_1)_\rho + I(A_2;C_2)_\rho - I(A_1;A_2)_\rho \right]
\end{split}
\end{equation}
where $\rho^{A_1 A_2 C_1 C_2 E} = U_{\mathcal{N}}^{A' \rightarrow C_1 C_2 E} \cdot \sigma^{A_1 A_2 A'}$.

The corresponding rate region (Equation (\ref{eqn:bcast-iid-region})) is very similar to Marton's region for classical broadcast channels (Equation (\ref{eqn:marton})) \cite{marton}; except for the factors of $1/2$, the two expressions are formally identical. In fact, for classical channels, the rates for entanglement-assisted quantum communication found here can be achieved directly using teleportation between the senders and the receiver, with the classical communication required by teleportation transmitted using Marton's protocol. From this point of view, our results can be viewed as a direct generalization of Marton's region to quantum channels.

Therefore, once again, it is the entanglement-assisted version of the quantum capacity that bears the strongest resemblance to its classical counterpart. The same is true for both the regular point-to-point quantum channel \cite{BSST02} and the quantum multiple-access channel \cite{qmac, state-merging} and, of course, the quantum channels with side information at the transmitter that were discussed in the last chapter. In both those cases, the known achievable rate regions for entanglement-assisted quantum communication are identical to their classical counterparts.  This collection of similarities suggests a fundamental question. To what extent does the addition of free entanglement make quantum information theory similar to classical information theory? 

Of course, the lack of a single-letter converse for Marton's region and, by extension, for our region, leaves open the possibility that the analogy might break down for a new, better broadcast region that remains to be discovered. A first step towards eliminating that uncertainty could be to find a better characterization of the quantum regions we have presented here. The presence of the ``discarded'' system $D$ in Theorem \ref{thm:bcast-iid} is equivalent to optimizing over all mixed states $\phi^{A_1 A_2 A'}$ rather than only over pure states. This is not required for most theorems in quantum information theory, but we have not found a way to prove the regularized converse without allowing for the possibility of mixed states. We leave it as an open problem to determine whether it is possible to demonstrate a converse theorem that does not require allowing mixed states.

Finally, for the unassisted case, it is very interesting to note the absence of an independent constraint on the sum-rate. However, we already know that this region is suboptimal even for channels with a single receiver. It would therefore be desirable to know whether this holds for the true capacity region and whether there is an underlying principle that explains this phenomenon.

\chapter{Locking classical information in quantum states}\label{chp:locking}

One particularly shocking feature of quantum information is the ``information locking'' effect that one sometimes observes. At the general level, it consists of a system in which one encodes classical data into a quantum system with two parts, one part being a very large ``cyphertext'', and the other being a very small key. The strange phenomenon is that it is possible to set up the system in such a way that, given the large portion, one can get almost no information about the classical data by measuring the cyphertext, whereas the key allows one to ``unlock'' this information. This may seem at first somewhat unsurprising, since this is what classical cryptographic systems aim to do, but it must be stressed that this is at the information theory level: the distribution on the classical data given the large portion is almost the same as the prior distribution even if the key is much smaller than the message. Classical encryption cannot achieve this at all: the distribution on the message given the cyphertext is vastly different from the prior distribution unless the key is as large as the message.

Information locking schemes have already been shown to exist by \cite{dhlst03} and \cite{HLSW03}. In \cite{dhlst03}, the authors construct a scheme by encoding the classical information in one of two mutually unbiased bases, and the one-bit classical key simply tells which basis it's encoded in. Without the key, the Shannon entropy about the message is approximately half of the entropy of the message; the key therefore increases the information the receiver has by the same amount.

In \cite{HLSW03}, the authors look at a protocol where one encodes classical information in the computational basis, and then applies one of a few (logarithmic in the number of possible messages) fixed unitaries. The classical key tells which unitary was applied. If the unitaries are chosen according to the Haar measure, then locking occurs with high probability.

In both of these papers, locking was defined in terms of the \emph{accessible information} between the cyphertext and the message, which defined as follows:
\begin{defin}[Accessible information]
Let $\rho^{AB} \in \DD(\sfA \otimes \sfB)$ be a quantum state. Then, the accessible information $I_{\acc}(A;B)$ is defined as
\[ I_{\acc}(A;B)_{\rho} := \sup_{\mathcal{A}, \mathcal{B}} I(X;Y)_{(\mathcal{A} \otimes \mathcal{B})(\rho)}, \]
where $\mathcal{A}^{A \rightarrow X}$ and $\mathcal{B}^{B \rightarrow Y}$ are measurement superoperators, and the supremum is taken over all possible superoperators. In other words, the accessible information is the largest possible mutual information between the results of measurements made on $A$ and $B$.
\end{defin}

Locking was said to occur when the difference in accessible information with and without the key was larger than the size of the key. Here we will instead use the trace distance between the joint distribution of the measurement results and the message and the product of their marginals. This will imply a bound on the mutual information via the Alicki-Fannes inequality (Lemma \ref{lem:alicki-fannes}). 

We now give the formal definition of locking that we will use:
\begin{defin}
	Let $C$ and $K$ be two quantum systems. We call a set of quantum states $\{ \rho_m^{CK} : m \in \{ 1,\dots,N \} \}$ an $\varepsilon$-locking scheme if $\| \rho_i^{CK} - \rho_j^{CK} \|_1 = 2$ whenever $i \neq j$, and for any complete measurement superoperator $\mathcal{M}^{C \rightarrow X}$, we have that
\[
\left\| \mathcal{M}(\omega^{MC}) - \mathcal{M}(\pi^C) \otimes \omega^M \right\|_1 \leqslant \varepsilon
\]
where $\omega^{MC} = \frac{1}{N} \sum_{i=1}^N \ketbra{i}^M \otimes \rho_i^C$ and $\pi^C$ denotes the completely mixed state on $C$.
\end{defin}
In other words, a set of states is a locking scheme if the states are perfectly distinguishable when one has both the cyphertext $C$ and the key $K$, whereas a measurement on $C$ alone yields practically no information about which state was present. Note that the restriction to complete measurements is a natural one, since, if the goal is to maximize the information about the message, keeping a quantum residue is of no use: it can never hurt to measure it until nothing is left.

Note that it is impossible to achieve this classically without making the key almost as long as the message. One can see this by considering that fact that, if one only needs to know an extra $\log K$ bits to reconstruct the message, our probability distribution of the message given the cyphertext must always be supported on at most $K$ distinct messages; such a distribution must necessarily be far away from the uniform distribution over all messages unless $K$ is nearly equal to the number of messages.

The scheme we will construct here is a special case of this model. We consider a scheme where we encode classical information in the computational basis of a quantum system, apply a fixed unitary, and split the system into two components, a large one ($C$) that becomes the cyphertext, and a small one ($K$) that becomes the key.

Note also that an $\varepsilon$-locking scheme also automatically implies locking of the accessible information:
\begin{lem}
Let $\{ \rho^{CK}_{m} : m \in \{1,\dots,N\} \}$ be an $\varepsilon$-locking scheme, and let $\omega^{MC} = \frac{1}{N} \sum^N_{i=1} \ketbra{i}^M \otimes \rho^C_i$. Then,
\[I_{\acc}(M;C)_{\omega} \leqslant \varepsilon \log N + 2\eta (1-\varepsilon) + 2\eta (\varepsilon),\]
where $\eta(x) := -x \log x$ and $\eta(0) = 0$.
\end{lem}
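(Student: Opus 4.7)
The plan is to reduce the accessible information bound directly to the locking property via a continuity (Fannes-type) argument on the conditional entropy.

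First I would observe that, since the $M$ register is already classical in the computational basis, the supremum in the definition of $I_{\acc}(M;C)_{\omega}$ is attained (or approached) by taking Alice's measurement on $M$ to be the canonical basis measurement and optimizing only over complete measurement superoperators $\mathcal{M}^{C \rightarrow X}$ on $C$. This reduces the task to uniformly bounding $I(M;X)_{\sigma}$ for $\sigma^{MX} = (\mathcal{I}^M \otimes \mathcal{M})(\omega^{MC})$, where $\mathcal{M}$ is an arbitrary complete measurement superoperator.

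Next, I would introduce the auxiliary state $\tau^{MX} = \omega^M \otimes \mathcal{M}(\pi^C)$. Two observations make this useful. First, by the definition of an $\varepsilon$-locking scheme, $\| \sigma^{MX} - \tau^{MX} \|_1 \leqslant \varepsilon$. Second, $\tau^{MX}$ is a product state, so $I(M;X)_{\tau} = 0$, and the marginals on $M$ agree: $\sigma^M = \tau^M = \omega^M$. Writing $I(M;X) = H(M) - H(M|X)$, the agreement of the $M$-marginals yields $I(M;X)_{\sigma} = I(M;X)_{\sigma} - I(M;X)_{\tau} = H(M|X)_{\tau} - H(M|X)_{\sigma}$.

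The main step is then a Fannes-type continuity bound for conditional entropy. Applying such an inequality to the classical states $\sigma^{MX}$ and $\tau^{MX}$, which differ by at most $\varepsilon$ in trace distance and for which the $M$-system has dimension $N$, yields
\[
| H(M|X)_{\sigma} - H(M|X)_{\tau} | \leqslant \varepsilon \log N + 2\eta(\varepsilon) + 2\eta(1-\varepsilon),
\]
which is exactly the stated bound after taking the supremum over $\mathcal{M}$. The only real subtlety is matching the constants in the chosen form of Fannes' inequality to the stated bound; using the form $|H(\rho)-H(\sigma)|\leqslant \varepsilon \log d + \eta(\varepsilon)+\eta(1-\varepsilon)$ on both $H(MX)$ and $H(X)$ (noting the $H(M)$ terms cancel) and combining via $I(M;X)=H(M)+H(X)-H(MX)$ recovers it cleanly.
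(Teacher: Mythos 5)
Your main line of attack — reduce to a classical–classical state $\sigma^{MX}=(\mathcal{I}^M\otimes\mathcal{M})(\omega^{MC})$, compare it to the product state $\tau^{MX}=\omega^M\otimes\mathcal{M}(\pi^C)$ via the locking condition $\|\sigma-\tau\|_1\leqslant\varepsilon$, and then control $I(M;X)_\sigma=H(M|X)_\tau-H(M|X)_\sigma$ by continuity of \emph{conditional} entropy — is exactly the paper's proof, which it states as a one-line ``direct application of the Alicki--Fannes inequality.'' The reduction to measuring $M$ in the computational basis is fine (measuring a classical register is lossless, and any other $\mathcal{A}$ is a post-processing of it), and $\sigma^M=\tau^M=\omega^M$ since $\mathcal{M}$ is trace-preserving.

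The closing ``subtlety,'' however, does not recover the bound at all. If you instead apply \emph{unconditional} Fannes to $H(MX)$ and $H(X)$ and combine via $I(M;X)=H(M)+H(X)-H(MX)$, the resulting error terms scale as $\varepsilon\log(|M||X|)$ and $\varepsilon\log|X|$, and these do \emph{not} cancel: you are left with an explicit $\varepsilon\log|X|$ dependence. The register $X$ is the output of an arbitrary complete measurement on $C$, whose alphabet size is at least $|C|$ and is not otherwise bounded — and in locking $|C|$ is exponentially larger than $N$, so such a bound is useless. This is precisely the reason the conditional Alicki--Fannes inequality is needed rather than plain Fannes: its error term depends only on $\log|M|=\log N$ and is blind to the dimension of the conditioning system $X$. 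So the first route you describe is correct and matches the paper; the proposed ``clean'' alternative is a genuine gap and should be dropped. (As a minor aside, the Alicki--Fannes inequality as quoted in the paper actually gives a prefactor of $4\varepsilon\log N$, not $\varepsilon\log N$; you inherit this looseness from the lemma statement itself and it is not specific to your argument.)
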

\begin{proof}
Direct application of the Alicki-Fannes inequality (Lemma \ref{lem:alicki-fannes}).
\end{proof}

\section{The locking scheme}
Our information locking scheme is straightforward. To encode $N$ equiprobable classical messages, we embed them via a random partial isometry into a system $CK$ of total dimension at least $N$; $C$ constitutes the cyphertext and $K$ constitutes the key. The key is therefore itself a quantum state; if one prefers a scheme with a classical key as was done in \cite{dhlst03, HLSW03}, one can simply perfectly encrypt the quantum key with a $2\log K$-bit classical key and make the encrypted quantum key part of the cyphertext; the classical key is then the locking key.

To prove that this works, let $\{ \ket{\psi_m} : 1 \leqslant m\leqslant N \}$ be any set of orthonormal pure states in $CK$. We would like to prove that there exists a $U^{CK}$ such that $\{ U^{CK} \ket{\psi_m} \}$ is a good locking scheme. To do this, we will consider the state $\rho^{MCK} = \frac{1}{N}\sum_{x=1}^N \ketbra{m}^M \otimes \ketbra{\psi_m}^{CK}$ and the expression
\[ \left\| \mathcal{M}(\tr_K[U \cdot \rho^{MCK}]) - \mathcal{M}(\pi^C) \otimes \rho^M \right\|_1 \]
for a $U^{CK}$ chosen according to the Haar measure and an arbitrary $\mathcal{M}$. We will show that the average is sufficiently small and that the distribution is sufficiently concentrated around the mean value to ensure that there exists a $U$ that makes this expression small for every $\mathcal{M}$.

\begin{thm}
	Let $\rho^{MCK}$ be a state of the form $\rho^{MCK} = \sum_{m=1}^N \ketbra{m}^M \otimes \rho_m^{CK}$. Then, there exists a $U^{CK}$ such that for every measurement superoperator $\mathcal{M}^{C \rightarrow X}$,
	\begin{equation*}
		\left\| \mathcal{M}(\tr_K[U \cdot \rho^{MCK}]) - \mathcal{M}(\pi^C) \otimes \rho^M \right\|_1 \leqslant 7\varepsilon
	\end{equation*}
as long as $N \geqslant \frac{8\sqrt 2}{\varepsilon}$, $\varepsilon \leqslant e^{-2}$, and $|K| \geqslant \frac{32}{\varepsilon} \sqrt{\log\left( \frac{4N^2}{\varepsilon}\right) \ln(1/\varepsilon) }$.
\end{thm}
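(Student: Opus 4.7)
The strategy, as sketched in the paragraph above, is first-moment-plus-concentration-plus-net. For a fixed complete measurement $\mathcal{M}^{C \to X}$ with rank-one Kraus operators $M_x = \ketb{\phi_x}{\mu_x}$, let $f_{\mathcal{M}}(U) := \| \mathcal{M}(\tr_K[U \cdot \rho^{MCK}]) - \mathcal{M}(\pi^C) \otimes \rho^M \|_1$. Since $\rho^{MCK}$ is classical in $M$, this reduces to the total variation distance between the joint distributions on $(m,x)$, and by convexity of the trace distance in $\mathcal{M}$ it suffices to control $f_{\mathcal{M}}$ for the extreme (i.e.\ projective) measurements on $C$.

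First, I would bound $\mathbb{E}_U[f_{\mathcal{M}}(U)]$ via Theorem~\ref{thm:bertha} applied with $A = CK$, reference $R = M$, and superoperator $\mathcal{T}(\sigma) = \mathcal{M}(\tr_K[\sigma])$. A direct evaluation of the relevant $2$-entropies---$H_2(CK|M)_\rho = \log N$ for the classical-quantum state with $N$ equiprobable pure orthonormal conditional states, and $H_2((CK)'|X)_\omega = \log|K|$ for the Choi state of $\mathcal{T}$ when $\mathcal{M}$ is projective---yields
\[
\mathbb{E}_U[f_{\mathcal{M}}(U)] \leqslant 2^{-\tfrac{1}{2}\log|K| - \tfrac{1}{2}\log N} = \frac{1}{\sqrt{|K|N}},
\]
which is well under $\varepsilon$ by the hypothesized lower bound on $|K|$.

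Second, I would establish concentration of $f_{\mathcal{M}}$ around its mean following the template of Theorem~\ref{thm:bertha-concentration}. The inequality $\sum_x |\bra{\mu_x} A \ket{\mu_x}| \leqslant \|A\|_1$, valid for any Hermitian $A$ and POVM $\{\ketbra{\mu_x}\}$ (split $A = A_+ - A_-$ and use $\sum_x \ketbra{\mu_x} = \ident$), together with $\|U\cdot\ketbra{\psi} - V\cdot\ketbra{\psi}\|_1 \leqslant 2\|(U-V)\ket{\psi}\|_2$ and Cauchy-Schwarz across the $N$ orthonormal messages (using $\sum_m \ketbra{\psi_m}^{CK} \leqslant \ident$), yields Lipschitz constant $L = 2/\sqrt{N}$ for $f_{\mathcal{M}}$ on $\mbU(CK)$. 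Corollary 4.4.28 of \cite{zeitouni-random-matrices} then gives
\[
\Pr_U\left[f_{\mathcal{M}}(U) \geqslant \mathbb{E}_U f_{\mathcal{M}}(U) + r\right] \leqslant 2\exp\!\left(-\frac{|CK|Nr^2}{16}\right).
\]

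Finally, a union bound over an $\eta$-net $\mathcal{N}_\eta$ of projective measurements, together with a Lipschitz-in-$\mathcal{M}$ estimate that absorbs the discretization error into the $7\varepsilon$ budget, gives $\Pr_U[\exists \mathcal{M} : f_{\mathcal{M}}(U) > 7\varepsilon] \leqslant 2|\mathcal{N}_\eta|\exp(-|CK|N\varepsilon^2/c)$ for a suitable constant $c$; this is strictly less than $1$ under the hypothesized bounds on $N$, $\varepsilon$ and $|K|$ provided $\log|\mathcal{N}_\eta|$ is comparable to $\log(N^2/\varepsilon)$. The main obstacle is precisely this last step: the naive $\eta$-net over projective measurements on $\mathbb{C}^{|C|}$ has log-cardinality $\sim |C|^2\log(1/\eta)$, which is too large to match the claimed $|C|$-independent bound, so one must instead exploit that each $\tr_K[U\cdot\rho_m^{CK}]$ has rank at most $|K|$ and that only measurements distinguishable on the $O(N|K|)$-dimensional union of these supports matter; cutting the effective net-dimension this way and carefully matching constants should reproduce the stated $|K| \gtrsim \sqrt{\log(4N^2/\varepsilon)\ln(1/\varepsilon)}/\varepsilon$.
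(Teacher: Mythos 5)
Your first two steps (first-moment bound and Gaussian concentration) are correct and match the paper's strategy. One small slip: $H_2(CK|M)_\rho = 0$, not $\log N$ — the conditionals $\rho_m^{CK}$ are pure, and taking $\sigma^M=\pi^M$ in the definition gives $\tr[(N^{1/2}\rho)^2]=1$. This changes the bound on the mean to $1/\sqrt{|K|}$ rather than $1/\sqrt{|K|N}$, which is still comfortably below $\varepsilon$, so the conclusion of step one survives. Your Lipschitz computation and the resulting concentration bound $\Pr[f_\mathcal{M} \geqslant \mathbb{E}f_\mathcal{M}+r]\leqslant 2\exp(-|CK|Nr^2/16)$ are fine and match the paper's Lemma.

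The third step, however, contains a genuine gap. The assertion ``by convexity of the trace distance in $\mathcal{M}$ it suffices to control $f_\mathcal{M}$ for the extreme (i.e.\ projective) measurements'' is false. The extreme points of the convex set of rank-one POVMs on $\mathbb{C}^{|C|}$ are not the projective measurements: by Davies' characterization, extreme rank-one POVMs are those whose outcome projectors are linearly independent, and these can have up to $|C|^2$ outcomes. So the union bound must cover a net over rank-one POVMs with up to $|C|^2$ outcomes, whose log-cardinality is $\sim |C|^3\log(1/\eta)$, not $|C|^2\log(1/\eta)$. Plugging that into your concentration bound with $|C|\approx N/|K|$ forces $|K|\gtrsim (N\log(1/\eta))^{1/3}/\varepsilon^{2/3}$, which is polynomially worse than the claimed $\sqrt{\log N}/\varepsilon$ scaling. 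Your proposed repair — exploiting the rank of $\tr_K[U\cdot\rho_m^{CK}]$ — does not address this: the measurement acts on $C$, and its outcome vectors need not live anywhere near the supports of the ciphertexts, so there is no useful low-dimensional subspace to restrict to.

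The missing idea, and the one the paper actually uses, is a sampling reduction. Define an $(n,k)$-quasi-measurement to be a collection of $n$ equally weighted rank-one projectors on $C$ satisfying the density bound $\frac{|C|}{n}\sum_x\ketbra{\psi_x}\leqslant k\ident^C$. A fixed complete measurement $\mathcal{M}$ with POVM elements $\alpha_x\ketbra{\psi_x}$ induces a probability distribution over pure states (pick $\ketbra{\psi_x}$ with probability $\alpha_x/|C|$). The operator Chernoff bound (Lemma \ref{lem:operator-chernoff}) shows that $n$ i.i.d.\ samples from this distribution form an $(n,2)$-quasi-measurement with probability $1-2|C|e^{-\Omega(n/|C|)}$, and $f_\mathcal{M}(U)$ can be written exactly as an expectation of the corresponding quasi-measurement quantity over such samples, so the general measurement is controlled by quasi-measurements. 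The key gain is that this reduces the effective number of outcomes from $|C|^2$ (for extreme POVMs) to $n\sim |C|\log(|C|^2/\varepsilon)$ (for the Chernoff bound), so the net is a set of $n$-element sequences from an $\varepsilon$-net over pure states in $C$, with log-cardinality $\sim 2n|C|\ln(5/\varepsilon)\sim |C|^2\log(|C|^2/\varepsilon)\log(1/\varepsilon)$. That saving of a factor $|C|/\log|C|$ in the exponent is exactly what brings the required $|K|$ down to $\frac{32}{\varepsilon}\sqrt{\log(4N^2/\varepsilon)\ln(1/\varepsilon)}$, and is the central step your proposal does not supply.
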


To prove this, we will first consider $\mathcal{M}$'s of a very specific form that will allow us to take a union bound:

\begin{defin}[Quasi-measurement]
	We call a superoperator $\mathcal{M}^{C \rightarrow X}$ an $(n,k)$-quasi-measurement if it is of the form $\mathcal{M}(\sigma) = \frac{|C|}{n} \sum_{x=1}^n \ket{x}\bra{\psi_x} \sigma \ket{\psi_x}\bra{x}$ where the $\ket{x}$ are orthonormal, and $\frac{|C|}{n}\sum_{x=1}^n  \ketbra{\psi_x} \leqslant k \ident^C$. 
\end{defin}

The starting point will be the following concentration of measure result:
	
\begin{lem}
	Let $\mathcal{M}$ be an $(n,k)$-quasi-measurement. Then,
\begin{multline*}
	\Pr_U \left\{ \left\| \mathcal{M}(\tr_K[U \cdot \rho^{MCK}]) - \mathcal{M}(\pi^C) \otimes \rho^M \right\|_1 \geqslant 2^{\demi \log k - \demi \log|K|} + r \right\}\\
\leqslant 2e^{-N^2 r^2/16k^2}
\end{multline*}
\end{lem}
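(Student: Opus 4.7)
The plan is to split the trace norm expression into its mean over the Haar measure plus the fluctuation around the mean, and to bound these two contributions using the two main results from Section~\ref{sec:bertha}: the decoupling theorem (Theorem~\ref{thm:bertha}) for the mean, and its concentration counterpart (Theorem~\ref{thm:bertha-concentration}) for the deviation. Define the completely positive superoperator $\mathcal{T}^{CK \rightarrow X}$ by $\mathcal{T}(\sigma) := \mathcal{M}(\tr_K[\sigma])$, so that the quantity of interest becomes $\| \mathcal{T}(U \cdot \rho^{MCK}) - \mathcal{T}(\pi^{CK}) \otimes \rho^M \|_1$, which is exactly the type of expression controlled by Theorems~\ref{thm:bertha} and~\ref{thm:bertha-concentration} with reference system $R = M$ and environment $E = X$.

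For the mean, Theorem~\ref{thm:bertha} gives the bound $2^{-\demi H_2((CK)'|X)_\omega - \demi H_2(CK|M)_\rho}$, where $\omega^{(CK)'X} = (\mathcal{T} \otimes \ident^{(CK)'})(\Phi^{CK(CK)'})$. The second entropy vanishes since $\rho^{MCK}$ is classical on $M$ with pure conditional states $\ketbra{\psi_m}^{CK}$. For the first, I would expand $\omega$ explicitly: using the Choi--Jamio\l{}kowski representation, $\omega = \sum_x \ketbra{x}^X \otimes A_x$ where each $A_x$ is proportional (up to a factor involving $\|\psi_x\|^2$) to a rank-$|K|$ projector on $(CK)'$. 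A direct computation of $\tr[A_x^2]$, combined with the operator inequality $\frac{|C|}{n}\sum_x \ketbra{\psi_x} \leqslant k \ident^C$ that defines an $(n,k)$-quasi-measurement, then controls $H_2((CK)'|X)_\omega$ from below and produces the claimed expected-value term $2^{\demi \log k - \demi \log |K|}$.

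For the concentration, I would apply Theorem~\ref{thm:bertha-concentration} to the function $U \mapsto \| \mathcal{T}(U \cdot \rho^{MCK}) - \mathcal{T}(\pi^{CK}) \otimes \rho^M \|_1$. Two quantities must be estimated: the constant $K = \max\{ \|\mathcal{T}(X)\|_1 : X \in \Herm, \|X\|_1 \leqslant 1\}$, and $\|\rho^{CK}\|_\infty$. For the first, the partial trace is contractive in trace norm, while the quasi-measurement expands trace norm by at most a factor of $k$: splitting $\sigma = \sigma_+ - \sigma_-$ into positive and negative parts, one has $\sum_x |\bra{\psi_x}\sigma\ket{\psi_x}| \leqslant \tr[(\sum_x \ketbra{\psi_x})(\sigma_+ + \sigma_-)] \leqslant (kn/|C|)\|\sigma\|_1$, which yields $\|\mathcal{M}(\sigma)\|_1 \leqslant k \|\sigma\|_1$ and hence $K \leqslant k$. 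For the second, orthonormality of the $\{\ket{\psi_m}\}$ makes $\rho^{CK} = \frac{1}{N}\sum_m \ketbra{\psi_m}$ proportional to a rank-$N$ projector, so $\|\rho^{CK}\|_\infty = 1/N$. Plugging these into Theorem~\ref{thm:bertha-concentration} and using $|CK| \geqslant N$ yields the exponential tail $2 e^{-N^2 r^2/(16 k^2)}$.

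The main obstacle will be the explicit computation of $H_2((CK)'|X)_\omega$. Unlike the Lipschitz estimate and the reduction to Theorem~\ref{thm:bertha-concentration}, which follow almost mechanically, getting the correct constant in the mean bound requires selecting the optimal $\sigma^X$ in the infimum defining the conditional 2-entropy and then carefully exploiting the quasi-measurement operator inequality rather than only its trace consequence $\sum_x \|\psi_x\|^2 \leqslant kn$. Once that calculation is in place, the lemma follows by adding the mean and deviation bounds and observing that the argument is uniform in the quasi-measurement $\mathcal{M}$ (so the same $U$ need not work for all $\mathcal{M}$ --- this only matters once we pass to a union bound in the eventual theorem).
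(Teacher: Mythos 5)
Your approach matches the paper's proof: the lemma there is also presented as a direct application of Theorem~\ref{thm:bertha-concentration}, with exactly the same ingredients you identify (the bound $H_2(CK|M)_\rho = 0$ from the cq-structure with orthonormal $\ket{\psi_m}$, the computation of $H_2((CK)'|X)_\omega$ from the Choi form of $\omega$, the Lipschitz bound $K \leqslant k$ via splitting into positive and negative parts, and $\|\rho^{CK}\|_\infty = 1/N$ together with $|CK|\geqslant N$ for the tail). One small correction to your forecast of where the difficulty lies: the bound $H_2((CK)'|X)_\omega \geqslant \log|K|-\log k$ does not require exploiting the full operator inequality $\frac{|C|}{n}\sum_x\ketbra{\psi_x}\leqslant k\ident^C$; once you substitute $\sigma^X=\omega^X$ into the infimum defining the conditional $2$-entropy, only the trace consequence $\sum_x\|\psi_x\|^2\leqslant kn$ enters, and the genuine operator inequality is needed only for the Lipschitz estimate $K\leqslant k$.
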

\begin{proof}
	The lemma is a direct application of Theorem \ref{thm:bertha-concentration}. We first show that $\max \{ \| \mathcal{M}(\tr_K[X]) \|_1 : X \in \Herm(\sfA), \| X \|_1 \leqslant 1 \} \leqslant k$. Let $X \in \Herm(\sfA)$; then,
\begin{align*}
\left\| \mathcal{M}(\tr_K(X)) \right\|_1 &= \frac{|C|}{n} \left\| \sum_x \ket{x} \bra{\psi_x} \tr_K(X) \ket{\psi_x} \bra{x} \right\|_1\\
&= \frac{|C|}{n} \sum_x \left| \bra{\psi_x} \tr_K(X) \ket{\psi_x} \right|\\
&\leqslant \frac{|C|}{n} \sum_x \bra{\psi_x} |\tr_K(X)| \ket{\psi_x} \\
&= \frac{|C|}{n} \tr \left[ \sum_x \ketbra{\psi_x} |\tr_K(X)|\right] \\
&\leqslant k \| \tr_K(X) \|_1\\
&\leqslant k \| X \|_1
\end{align*}
where the first inequality follows from the matrix inequality $-|Y| \leqslant Y \leqslant |Y|$, which holds for any Hermitian $Y$. Next, let $\omega^{C'K'X} = \mathcal{M}(\tr_K[\Phi^{C'K'CK}])$; we will show that $H_2(C'K'|X)_{\omega} \geqslant -\log k + \log|K|$. Since $\mathcal{M}$ is a quasi-measurement, $\omega$ has the form $\omega^{C'K'X} = \sum_{x=1}^n \alpha_x \ketbra{x}^X \otimes \pi^{K'} \otimes (\psi_x^{T})^{C'}$, where $\alpha_x = \frac{|C|}{n}\bra{\psi_x} \pi \ket{\psi_x}$. Then, $(\omega^X)^{-1/4} = \sum_x \alpha_x^{-1/4} \ketbra{x}$, and we have that
\begin{align*}
2^{-H_2(C'K'|X)_{\omega}} &\leqslant \tr\left[ \left( (\omega^X)^{-1/4} \omega^{C'K'X} (\omega^X)^{-1/4} \right)^2 \right]\\
&= \tr\left[ \sum_x \alpha_x \ketbra{x} \otimes (\pi^{K'})^2 \otimes (\psi_x^{C'})^T \right]\\
&= \frac{1}{|K|}\sum_x \frac{|C|}{n}\bra{\psi_x} \pi \ket{\psi_x}\\
&= \frac{1}{|K|n} \tr \left[ \sum_x \ketbra{\psi_x} \right]\\
&= \frac{k}{|K|}
\end{align*}
We combine this with the fact that $H_2(CK|M)_{\rho} = 0$ to get the lemma.
\end{proof}

At this point, we would like to take a union bound over all possible quasi-measurements to be able to say that there is a nonzero probability that the trace distance above is small for every quasi-measurement $\mathcal{M}$. We do this by introducing an $\varepsilon$-net (see Definition \ref{def:nets}) $\mfN$ over $C$, with $|\mfN| \leqslant \left( \frac{5}{\varepsilon} \right)^{2|C|}$. For any $(n,k)$-quasi-measurement $\mathcal{M}^{C \rightarrow X}$ of the form $\mathcal{M}(\sigma) = \sum_x \alpha_x \ket{x}\bra{\psi_x}\sigma\ket{\psi_x}\bra{x}$, define $\mathcal{M}_{\mfN}$ as $\mathcal{M}_{\mfN}(\sigma) = \sum_x \alpha_x \ket{x}\bra{\psi'_x}\sigma\ket{\psi'_x}\bra{x}$, where $\ket{\psi'_x}$ is the state in $\mfN$ closest to $\ket{\psi_x}$. 

Given a sequence $( \ket{\varphi_x} : 1 \leqslant x \leqslant n, \ket{\varphi_x} \in \mfN )$, we say that is it $\varepsilon$-close to an $(n,k)$-quasi-measurement if there exists an $(n,k)$-quasi-measurement $\mathcal{M}(\sigma) = \frac{|A|}{n} \sum_{x=1}^n \ket{x}\bra{\psi_x}\sigma\ket{\psi_x}\bra{x}$ such that $\| \psi_x - \varphi_x \|_1 \leqslant \varepsilon$ for all $x$. Furthermore, given a sequence $q = \{ \ket{\psi_x} : 1 \leqslant x \leqslant n \}$, we define $\mathcal{M}_q^{C \rightarrow X}$ as $\mathcal{M}_q(\sigma) = \frac{|C|}{n}\sum_{x=1}^n \ket{x}\bra{\psi_x}\sigma \ket{\psi_x}\bra{x}$.

We can now take the desired union bound:

\begin{lem}\label{lem:quasi-measurement-concentration}
	Let $\mfQ \subseteq \mfN^n$ be the set of all sequences of $n$ elements of $\mfN$ that are $\varepsilon$-close to an $(n,k)$-quasi-measurement. Then,
	\begin{multline*}
		\Pr_U \left\{ \exists q \in \mfQ : \left\| \mathcal{M}_q(\tr_K[U \cdot \rho^{MCK}]) - \mathcal{M}_q(\pi^C) \otimes \rho^M \right\|_1 \geqslant 2^{\demi \log k - \demi \log|K|} + 2\varepsilon + r \right\}\\
		\leqslant 2 e^{2n\ln(5/\varepsilon)|C| - N^2r^2/16k^2}
	\end{multline*}
	and therefore, as long as $2n\ln(5/\varepsilon)|C| - N^2r^2/16k^2 < -\ln 2$, there exists a $U$ such that all $q \in \mfQ$ satisfy the above inequality.
\end{lem}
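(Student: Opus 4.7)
The strategy is a standard ``$\varepsilon$-net plus concentration'' argument: every $q \in \mfQ$ is by definition close to a true $(n,k)$-quasi-measurement, so I can apply the previous lemma to that nearby measurement, absorb a small perturbation, and take a union bound over the finite set $\mfQ$. Since $\mfQ \subseteq \mfN^n$ and $|\mfN| \leqslant (5/\varepsilon)^{2|C|}$, I immediately get $|\mfQ| \leqslant (5/\varepsilon)^{2n|C|} = e^{2n|C|\ln(5/\varepsilon)}$, which is exactly the combinatorial factor that appears in the final exponent.

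For each $q = (\ket{\varphi_x})_{x=1}^n \in \mfQ$, fix (once and for all) an $(n,k)$-quasi-measurement $\tilde{\mathcal{M}}_q(\sigma) = \frac{|C|}{n}\sum_x \ket{x}\bra{\psi_x}\sigma\ket{\psi_x}\bra{x}$ with $\|\psi_x - \varphi_x\|_1 \leqslant \varepsilon$ for every $x$; such a choice exists precisely because $q \in \mfQ$. Applying the previous lemma to $\tilde{\mathcal{M}}_q$ yields
\[ \Pr_U\!\left\{\bigl\|\tilde{\mathcal{M}}_q(\tr_K[U\cdot\rho^{MCK}]) - \tilde{\mathcal{M}}_q(\pi^C)\otimes\rho^M\bigr\|_1 \geqslant 2^{\demi\log k - \demi\log|K|} + r\right\} \leqslant 2e^{-N^2 r^2/16 k^2}. \]
A direct calculation gives $\mathcal{M}_q(\pi^C) = \tilde{\mathcal{M}}_q(\pi^C) = \frac{1}{n}\sum_x \ketbra{x}$, since plugging in $\pi^C = \ident^C/|C|$ returns $\bra{\cdot}\pi^C\ket{\cdot} = 1/|C|$ on any unit vector, killing all dependence on the $\ket{\varphi_x}$'s and $\ket{\psi_x}$'s. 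I then plan to prove the perturbation estimate $\|\mathcal{M}_q(\tr_K[U\cdot\rho^{MCK}]) - \tilde{\mathcal{M}}_q(\tr_K[U\cdot\rho^{MCK}])\|_1 \leqslant 2\varepsilon$; combined with the triangle inequality, this promotes the above concentration statement from $\tilde{\mathcal{M}}_q$ to $\mathcal{M}_q$ at the cost of an extra $2\varepsilon$ inside the event. A union bound over $q \in \mfQ$ finally gives
\[ \Pr_U\{\exists q \in \mfQ : \text{bound fails}\} \leqslant |\mfQ| \cdot 2e^{-N^2 r^2/16k^2} \leqslant 2e^{2n|C|\ln(5/\varepsilon) - N^2 r^2/16 k^2}, \]
and the last sentence of the lemma is then the probabilistic method: whenever the exponent is less than $-\ln 2$, the failure probability is strictly below $1$ and a good $U$ exists.

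The main obstacle is the perturbation estimate itself. A naive bound of $\|\mathcal{M}_q(\sigma) - \tilde{\mathcal{M}}_q(\sigma)\|_1$ that only uses $\|\ketbra{\varphi_x} - \ketbra{\psi_x}\|_1 \leqslant \varepsilon$ term-by-term picks up a factor of $|C|$ from the prefactor $|C|/n$, which is too weak. To obtain the clean $2\varepsilon$ bound I would expand each $\Delta_x := \ketbra{\varphi_x} - \ketbra{\psi_x}$ as a rank-one update $\ket{e_x}\bra{\psi_x} + \ket{\psi_x}\bra{e_x} + \ketbra{e_x}$ with $\|e_x\|_2 \leqslant \varepsilon$, then apply Cauchy--Schwarz in $x$ and exploit the quasi-measurement bound $\frac{|C|}{n}\sum_x \ketbra{\psi_x} \leqslant k \ident^C$ together with the fact that the marginal on $C$ of $\tr_K[U\cdot\rho^{MCK}]$ is of the form $\sum_m p_m \tr_K[U\cdot\rho_m^{CK}]$, for which $\bra{\psi_x}\sigma\ket{\psi_x}$ is typically of order $1/|C|$ when $U$ is Haar-typical. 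Either argument trades away the offending $|C|$, and the fortunate exact equality $\mathcal{M}_q(\pi^C) = \tilde{\mathcal{M}}_q(\pi^C)$ removes the need to repeat the analysis on the product side, cutting the total slack to $2\varepsilon$ as advertised.
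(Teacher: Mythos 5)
Your overall strategy---union bound over the net, appeal to the single-quasi-measurement concentration lemma for a nearby true quasi-measurement, absorb the perturbation---is exactly the paper's, and your combinatorial bookkeeping ($|\mfQ|\leqslant(5/\varepsilon)^{2n|C|}$, failure probability $\leqslant 2e^{2n|C|\ln(5/\varepsilon)-N^2r^2/16k^2}$) is correct. Your observation that $\mathcal{M}_q(\pi^C)=\tilde{\mathcal{M}}_q(\pi^C)$ (both equal $\frac{1}{n}\sum_x\ketbra{x}$) is also right and is a clean way to see that the product side contributes nothing to the perturbation; the paper reaches the same conclusion implicitly through the fact that $\xi_x:=\psi_x-\psi'_x$ is traceless, so $\tr[\xi_x\pi^C]=0$.

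The genuine gap is the perturbation estimate $\|\mathcal{M}_q(\zeta^{MC})-\tilde{\mathcal{M}}_q(\zeta^{MC})\|_1\leqslant 2\varepsilon$ for $\zeta^{MC}:=\tr_K[U\cdot\rho^{MCK}]$, which you correctly flag as ``the main obstacle'' but never close. The route you sketch---rank-one expansion of $\Delta_x$, Cauchy--Schwarz over $x$, the operator bound $\frac{|C|}{n}\sum_x\ketbra{\psi_x}\leqslant k\ident^C$, and the claim that $\bra{\psi_x}\zeta_m\ket{\psi_x}$ is ``typically of order $1/|C|$ when $U$ is Haar-typical''---is both heavier than necessary and, crucially, rests on an unquantified typicality assertion that would itself require a separate concentration argument over $U$, undermining the whole point of having a bound that is \emph{deterministic} in $U$ so that a union bound can be taken. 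The paper's actual argument is much shorter and needs no randomness of $U$: since $\zeta^{MC}$ is a cq-state, write $\zeta^{MC}=\sum_m p_m\ketbra{m}\otimes\zeta_m^C$, so $\tr_C[\xi_x\zeta^{MC}]=\sum_m p_m\tr[\xi_x\zeta_m]\ketbra{m}$ and its trace norm is $\sum_m p_m|\tr[\xi_x\zeta_m]|$. Diagonalize the traceless rank-two Hermitian $\xi_x=\lambda_x(\ketbra{v_+}-\ketbra{v_-})$ with $2\lambda_x=\|\xi_x\|_1\leqslant\varepsilon$; then $|\tr[\xi_x\zeta_m]|\leqslant\lambda_x(\bra{v_+}\zeta_m\ket{v_+}+\bra{v_-}\zeta_m\ket{v_-})$, and summing over $m$ with weights $p_m$ turns the right-hand side into $\lambda_x(\bra{v_+}\zeta^C\ket{v_+}+\bra{v_-}\zeta^C\ket{v_-})\leqslant\|\xi_x\|_1\,\|\zeta^C\|_\infty$. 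Using $\zeta^C=\pi^C$ gives $\|\tr_C[\xi_x\zeta^{MC}]\|_1\leqslant\varepsilon/|C|$, and the prefactor $|C|/n$ multiplied by the $n$ terms cancels the $|C|$ exactly, giving a total perturbation of at most $\varepsilon\leqslant 2\varepsilon$. No Cauchy--Schwarz over $x$, no use of the constant $k$, and no appeal to Haar-typicality is required for this step; all one needs is the cq structure of $\zeta^{MC}$ and its reduced state on $C$.
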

\begin{proof}
	Let $\mathcal{M}_q(\sigma) = \frac{|C|}{n} \sum_x \ket{x}\bra{\psi_x} \sigma \ket{\psi_x}\bra{x}$ and let $\mathcal{M}'(\sigma) = \frac{|C|}{n} \sum_x \ket{x}\bra{\psi'_x} \sigma \ket{\psi'_x}\bra{x}$ be an $(n,k)$-quasi-measurement that is $\varepsilon$-close to $q$. Furthermore, let $\xi_x = \psi_x - \psi'_x$; clearly, for each $x$, $\| \xi_x \|_1 \leqslant \varepsilon$. Then, given any cq-state $\zeta^{MC}$ with $\zeta^C = \pi^C$, we have that
	\begin{multline*}
		\left\| \mathcal{M}(\zeta^{MC} - \zeta^M \otimes \zeta^C) \right\|_1\\
	\begin{split}
		&\leqslant \frac{|C|}{n}\sum_{x=1}^n \left\| \tr_C[\psi_x (\zeta^{MC} - \zeta^M \otimes \zeta^C )] \right\|_1\\
		&= \frac{|C|}{n}\sum_{x=1}^n \left\| \tr_C[(\psi'_x + \xi_x) (\zeta^{MC} - \zeta^M \otimes \zeta^C )] \right\|_1\\
		&\leqslant \frac{|C|}{n}\sum_{x=1}^n \left( \left\| \tr_C[\psi'_x (\zeta^{MC} - \zeta^M \otimes \zeta^C )] \right\|_1 + \left\| \tr_C[\xi_x (\zeta^{MC} - \zeta^M \otimes \zeta^C )] \right\|_1 \right)\\
		&\leqslant \frac{|C|}{n}\sum_{x=1}^n \left( \left\| \tr_C[\psi'_x (\zeta^{MC} - \zeta^M \otimes \zeta^C )] \right\|_1 + \left\| \tr_C[\xi_x \zeta^{MC}] \right\|_1 + \left\| \tr_C[\xi_x (\zeta^M \otimes \zeta^C) ] \right\|_1 \right)\\
		&\leqslant \frac{|C|}{n}\sum_{x=1}^n \left( \left\| \tr_C[\psi'_x (\zeta^{MC} - \zeta^M \otimes \zeta^C )] \right\|_1 + \frac{2\varepsilon}{|C|} \right)\\
		&= \left\| \mathcal{M}'(\zeta^{MC} - \zeta^M \otimes \zeta^C) \right\|_1 + 2\varepsilon
	\end{split}
	\end{multline*}
	Now, we have that $|\mfQ| \leqslant |\mfN^n| \leqslant \left( \frac{5}{\varepsilon} \right)^{2n|C|}$. Hence, by the union bound and Lemma \ref{lem:quasi-measurement-concentration}, we get the lemma.
\end{proof}

We need to use this to get a bound on general measurement superoperators. The idea will be to imagine that, given any measurement operator, we perform $n$ independent measurements on $n$ i.i.d.\ copies of $\rho$. The operator Chernoff bound (Lemma \ref{lem:operator-chernoff}) will then ensure that the resulting sequence of measurement results is an $(n,k)$-quasi-measurement with high probability.

\begin{lem}\label{lem:locking-operator-chernoff}
	Let $\mathcal{M}^{C \rightarrow X}$ be any complete measurement superoperator, with $\mathcal{M}(\pi) = \sum_x \alpha_x \ket{x}\bra{\psi_x}\pi\ket{\psi_x}\bra{x}$, and consider the operator-valued random variable $Y$ which takes the value $\ketbra{\psi_x}$ with probability $\alpha_x \bra{\psi_x} \pi \ket{\psi_x} = \alpha_x/|C|$. Then, $n$ i.i.d.\ copies of $Y$ will fail to be an $(n,k)$-quasi-measurement with probability at most $2|C| e^{-n(k-1)^2/|C|2\ln 2}$.
\end{lem}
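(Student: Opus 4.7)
The plan is to view the statement as a direct application of the operator Chernoff bound (Lemma \ref{lem:operator-chernoff}). Let $Y_1,\ldots,Y_n$ be $n$ i.i.d.\ copies of $Y$, and write $Y_i = \ketbra{\psi_{x_i}}$ for the sampled pure states. The associated superoperator $\mathcal{M}_q$ with $q = (\ket{\psi_{x_1}},\ldots,\ket{\psi_{x_n}})$ automatically has the form required of an $(n,k)$-quasi-measurement, so the only remaining condition to verify is the operator inequality
\[
  \frac{|C|}{n}\sum_{i=1}^n Y_i \;\leq\; k\,\ident^C .
\]

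First I would compute $\mathbb{E}[Y]$. Since $\mathcal{M}$ is a \emph{complete} measurement superoperator, its rank-one Kraus operators are (up to the classical register) $\sqrt{\alpha_x}\bra{\psi_x}$, and the trace-preservation condition $\sum_x \alpha_x\ketbra{\psi_x} = \ident^C$ must hold. Since each $\ket{\psi_x}$ is a unit vector, we have $\bra{\psi_x}\pi^C\ket{\psi_x} = 1/|C|$, so
\[
  \mathbb{E}[Y] \;=\; \sum_x \alpha_x \bra{\psi_x}\pi\ket{\psi_x}\,\ketbra{\psi_x} \;=\; \frac{1}{|C|}\sum_x \alpha_x \ketbra{\psi_x} \;=\; \frac{\ident^C}{|C|}.
\]
Hence the average of the $Y_i$'s has expectation $\ident^C/|C|$, whose minimum eigenvalue is $1/|C|$.

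Next I would invoke the operator Chernoff bound on the $Y_i$'s, which satisfy $0 \leq Y_i \leq \ident^C$ (each being a rank-one projector). Setting the relative deviation parameter equal to $k-1$, the condition $\frac{1}{n}\sum_i Y_i \leq k\,\mathbb{E}[Y] = \frac{k}{|C|}\ident^C$ is equivalent to the quasi-measurement condition $\frac{|C|}{n}\sum_i Y_i \leq k\,\ident^C$. The Ahlswede--Winter bound with dimension $d = |C|$ and minimum eigenvalue $a = 1/|C|$ then yields a failure probability of at most
\[
  2\,|C|\,\exp\!\left(-\frac{n(k-1)^2}{|C|\cdot 2\ln 2}\right),
\]
which is exactly the stated estimate, with the factor of $2$ coming from the two-sided form of the operator Chernoff inequality.

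There is essentially no obstacle here beyond recognizing the correct shape of the statement: one must only identify $\mathbb{E}[Y]$ with the completely mixed state $\pi^C$ (using the trace-preservation of $\mathcal{M}$) and check that the constraint $\frac{|C|}{n}\sum_i \ketbra{\psi_{x_i}} \leq k\ident^C$ rescales to the standard $(1+\varepsilon)\mathbb{E}[Y]$ form expected by the operator Chernoff bound. Every other aspect — the rank-one form of $\mathcal{M}_q$ and the orthonormality of the classical labels $\ket{x}$ — is automatic from the sampling procedure.
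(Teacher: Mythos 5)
Your proof is correct and takes essentially the same route as the paper: both identify $\mathbb{E}[Y] = \pi^C$ (via trace-preservation of the complete measurement) and apply the operator Chernoff bound with $\eta = k-1$, $\alpha = 1/|C|$. The paper's own proof is terse — it simply asserts that $Y$ satisfies the hypotheses with $\mathbb{E} Y = \pi^C$ — whereas you have supplied the verification explicitly, but the argument is the same.
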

\begin{proof}
	$Y$ fulfills all the conditions for the operator Chernoff bound (Lemma \ref{lem:operator-chernoff}) to apply, with $\mbE Y = \pi^C$. This yields
\[ \Pr \left\{ \frac{1}{n} \sum_{j=1}^n Y_j \nleqslant k\pi \right\} \leqslant 2|C| e^{-n(k-1)^2/|C|2\ln 2} \]
and the probability on the left is an upper bound on the probability that the sequence $Y_1,\dots, Y_n$ is not an $(n,k)$-quasi-measurement.
\end{proof}

Putting all the pieces together, we finally get the main theorem of this section:

\begin{thm}
	There exists a $U^{CK}$ such that for all measurement operators $\mathcal{M}^{C \rightarrow X}$,
	\begin{equation*}
		\left\| \mathcal{M}(\tr_K[U \cdot \rho^{MCK}]) - \mathcal{M}(\pi^C) \otimes \rho^M \right\|_1 \leqslant 7\varepsilon
	\end{equation*}
as long as $N \geqslant \frac{8\sqrt 2}{\varepsilon}$, $\varepsilon \leqslant e^{-2}$, and $|K| \geqslant \frac{32}{\varepsilon} \sqrt{\log\left( \frac{4N^2}{\varepsilon}\right) \ln(1/\varepsilon) }$.
\end{thm}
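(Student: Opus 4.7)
The strategy is to combine the two preceding lemmas via a discretization-plus-sampling argument. First, fix parameters $n$ and $k$ (to be chosen below) and let $\mathfrak{N}$ be an $\varepsilon$-net for the set of pure states on $C$ of size $|\mathfrak{N}| \leqslant (5/\varepsilon)^{2|C|}$. By Lemma~\ref{lem:quasi-measurement-concentration}, provided $2n \ln(5/\varepsilon)|C| - N^2 r^2/16k^2 < -\ln 2$ for some $r$ to be chosen, there exists a single $U^{CK}$ such that
\[
\bigl\| \mathcal{M}_q(\tr_K[U\cdot \rho^{MCK}]) - \mathcal{M}_q(\pi^C)\otimes \rho^M \bigr\|_1 \leqslant 2^{(\log k - \log|K|)/2} + 2\varepsilon + r
\]
\emph{simultaneously} for every sequence $q\in \mathfrak{Q}$. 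Fix such a $U$ from now on and write $\Theta = \tr_K[U\cdot \rho^{MCK}] - \pi^C\otimes \rho^M$, so the goal is to bound $\|\mathcal{M}(\Theta)\|_1$ for an arbitrary complete measurement $\mathcal{M}^{C\to X}$.

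Now, given such $\mathcal{M}$ with Kraus form $\mathcal{M}(\sigma)=\sum_x \alpha_x \ket{x}\bra{\psi_x}\sigma\ket{\psi_x}\bra{x}$, let $\mathcal{M}_{\mathfrak{N}}$ be its net-rounding, so that $\|\mathcal{M}(\Theta)\|_1 \leqslant \|\mathcal{M}_{\mathfrak{N}}(\Theta)\|_1 + 2\varepsilon$ by the triangle argument used in the proof of Lemma~\ref{lem:quasi-measurement-concentration}. I would then draw $Y_1,\dots,Y_n$ iid from the distribution that outputs $\ketbra{\psi_x'}$ (the net-rounded vectors) with probability $\alpha_x/|C|$ and form the sampled quasi-measurement $\mathcal{M}_q$ for $q=(Y_1,\dots,Y_n)$. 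A direct computation shows
\[
\mathbb{E}_q\, \|\mathcal{M}_q(\Theta)\|_1 = |C|\sum_x\frac{\alpha_x}{|C|}\bigl\|\tr_C[\ketbra{\psi_x'}\Theta]\bigr\|_1 = \|\mathcal{M}_{\mathfrak{N}}(\Theta)\|_1,
\]
so it suffices to upper bound the expectation on the left. Splitting according to whether $q\in\mathfrak{Q}$ or not,
\[
\mathbb{E}_q\,\|\mathcal{M}_q(\Theta)\|_1 \leqslant \bigl(2^{(\log k - \log|K|)/2}+2\varepsilon+r\bigr) + 2\,\Pr[q\notin\mathfrak{Q}],
\]
since $\|\mathcal{M}_q(\Theta)\|_1\leqslant 2$ when $q$ corresponds to a sub-measurement (namely when $\frac{|C|}{n}\sum_i\ketbra{\psi'_{x_i}}\leqslant\ident$, which certainly holds whenever it is even bounded by $k\ident$, and we can replace the estimate by a crude bound of $2k$ in the failure regime).

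All that remains is to choose parameters so that each of the five error contributions (net-rounding $2\varepsilon$, rounding inside the concentration $2\varepsilon$, concentration slack $r$, decoupling term $2^{(\log k-\log|K|)/2}$, and Chernoff failure) is at most $\varepsilon$. Choosing $k=1+\varepsilon$ (or thereabouts), $r=\varepsilon$, $n=\Theta(|C|\varepsilon^{-2}\log(|C|/\varepsilon))$, and then imposing $|K|\geqslant k^2 \varepsilon^{-2}$ kills the decoupling term, while the constraint from Lemma~\ref{lem:quasi-measurement-concentration} becomes $N^2\varepsilon^2 \gtrsim n|C|\log(1/\varepsilon)\cdot k^2$; combined with the operator Chernoff bound of Lemma~\ref{lem:locking-operator-chernoff} which requires $n\gtrsim |C|\log(|C|/\varepsilon)/(k-1)^2$, the hypotheses $N\geqslant 8\sqrt{2}/\varepsilon$ and $|K|\geqslant \tfrac{32}{\varepsilon}\sqrt{\log(4N^2/\varepsilon)\ln(1/\varepsilon)}$ appearing in the statement should emerge after careful bookkeeping. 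The main obstacle I anticipate is precisely this bookkeeping: the five error terms interact through $k$ and $n$, and one must check that the stated mild conditions on $N$ and $|K|$ alone are sufficient (with the dependence on $|C|$ cancelling), rather than picking up a hidden $|C|$-dependence from the net exponent $2n|C|\ln(5/\varepsilon)$.
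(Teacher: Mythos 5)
Your proposal follows the paper's proof essentially step for step, with one reordering: you round to the $\varepsilon$-net $\mfN$ first and then sample, phrasing the key step as $\mbE_q \|\mathcal{M}_q(\Theta)\|_1 = \|\mathcal{M}_{\mfN}(\Theta)\|_1$, whereas the paper samples the \emph{original} rank-one effects $\ketbra{\psi_x}$ and performs the net-rounding inside the conditional expectation (absorbing it into the $2\varepsilon \to 4\varepsilon$ adjustment). The two decompositions are equivalent, and yours is slightly cleaner to write down. One small caveat with your ordering: the rounded operators $\ketbra{\psi'_x}$ sampled with weights $\alpha_x/|C|$ no longer average exactly to $\pi^C$, so Lemma~\ref{lem:locking-operator-chernoff} cannot be applied to them directly; you have to apply the Chernoff bound to the \emph{unrounded} sequence sharing the same indices $x_1,\dots,x_n$, observe that if the unrounded sequence is an $(n,k)$-quasi-measurement then the rounded sequence is $\varepsilon$-close to one and hence lies in $\mfQ$. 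This is implicit in the paper's ordering and worth making explicit in yours.

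The genuine gap is in the parameter choice $k = 1 + \varepsilon$, which you flag as tentative. With $k - 1 \sim \varepsilon$ the operator Chernoff exponent $n(k-1)^2/|C|$ forces $n \gtrsim |C|\varepsilon^{-2}\log(|C|/\varepsilon)$; feeding this through the net-counting constraint $2n\ln(5/\varepsilon)|C| < N^2 r^2/16k^2$ with $r = \varepsilon$ gives $|C|^2 \lesssim N^2\varepsilon^4/\log^2$, hence $|K| = N/|C| \gtrsim \varepsilon^{-2}\sqrt{\log \cdot \log}$ — a factor $1/\varepsilon$ worse than the $\varepsilon^{-1}\sqrt{\cdot}$ in the theorem statement. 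The paper instead takes $k = 2$, a constant bounded away from $1$, so the Chernoff constraint relaxes to $n \gtrsim |C|\log(|C|/\varepsilon)$ with no $\varepsilon^{-2}$, and the $|C|$-dependence cancels exactly as you hoped. (Also note that the decoupling term $2^{\frac{1}{2}(\log k - \log|K|)} = \sqrt{k/|K|}$ requires $|K| \geqslant k/\varepsilon^2$, not $k^2/\varepsilon^2$ — a typo in your parameter discussion.) So the ``careful bookkeeping'' you anticipate is indeed where the content lies, and as written the proposal would establish a weaker theorem with $|K| \sim \varepsilon^{-2}$ rather than the claimed $\varepsilon^{-1}$; choosing $k$ a constant $\geqslant 2$ is what makes the advertised key size come out.
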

\begin{proof}
	Let $\mathcal{M}^{C \rightarrow X}$ be any complete measurement superoperator of the form $\mathcal{M}(\sigma) = \sum_x \alpha_x \ket{x}\bra{\psi_x}\sigma\ket{\psi_x}\bra{x}$, and define $Y$ to be the operator-valued RV which takes value $\psi_x$ with probability $\alpha_x/|C|$. Let $Q$ be the event that $Y_1,\dots,Y_n$ is an $(n,k)$-quasi-measurement, where the $Y_i$ are i.i.d.\ with the same distribution as $Y$. Now, assuming $U$ fulfills the requirements of Lemma \ref{lem:quasi-measurement-concentration}, we have that
	\begin{multline*}
		\left\| \mathcal{M}(\tr_K[U \cdot \rho^{MCK}]) - \mathcal{M}(\pi^C) \otimes \rho^M \right\|_1\\
	\begin{split}
		&= \sum_x \alpha_x \left\| \tr_C[\psi_x (\tr_K[U \cdot \rho^{MCK}] - \pi^C \otimes \rho^M)] \right\|_1\\
		&= |C| \mbE_Y \left\| \tr_C[Y (\tr_K[U \cdot \rho^{MCK}] - \pi^C \otimes \rho^M)] \right\|_1\\
		&= \frac{|C|}{n} \mbE_{Y_1,\dots,Y_n} \sum_{i=1}^n \left\| \tr_C[Y_i (\tr_K[U \cdot \rho^{MCK}] - \pi^C \otimes \rho^M)] \right\|_1\\
		&= \frac{|C|}{n} \Pr\{ Q \} \mbE \left[ \left. \sum_{i=1}^n \left\| \tr_C[Y_i (\tr_K[U \cdot \rho^{MCK}] - \pi^C \otimes \rho^M)] \right\|_1 \right| Q \right]\\
		&\hspace{5mm} + \frac{|C|}{n} \Pr\{ \bar{Q} \} \mbE \left[ \left. \sum_{i=1}^n \left\| \tr_C[Y_i (\tr_K[U \cdot \rho^{MCK}] - \pi^C \otimes \rho^M)] \right\|_1 \right| \bar{Q} \right]\\
		&\leqslant 2^{\demi \log k - \demi \log|K|} + 4\varepsilon + r + 4|C|^2 e^{-n(k-1)^2/|C|2\ln 2}
	\end{split}
	\end{multline*}
	In the above, we have bounded the first conditional expectation using Lemma \ref{lem:quasi-measurement-concentration}, with the $2\varepsilon$ going to $4\varepsilon$ due to the fact that, by definition, any $(n,k)$-quasi-measurement is $\varepsilon$-close to element of $\mfQ$. The second conditional expectation was simply upper bounded by $2n$ (i.e.\ each trace distance in the sum cannot exceed 2) and we used Lemma \ref{lem:locking-operator-chernoff} to bound $\Pr\{ \bar{Q} \}$.

	All that is left to do is to choose the various constants such that  $2n\ln (5/\varepsilon) |C| - N^2r^2/16k^2 < -\ln 2$ as imposed by Lemma \ref{lem:quasi-measurement-concentration}, and such that $4|C|^2 e^{-n(k-1)^2/|C|2\ln 2} \leqslant \varepsilon$. Setting $k=2$ and $r=\varepsilon$ and doing a few simple computations yields that this is possible as long as
\[ n \leqslant \frac{N^2 \varepsilon^2}{512 |C| \ln(1/\varepsilon)} \]
and
\[ n \geqslant 2|C| \log \frac{4|C|^2}{\varepsilon} \]
given that $N \geqslant \frac{8\sqrt{2}}{\varepsilon}$ and $\varepsilon \leqslant e^{-2}$. It follows that choosing $K$ such that
\[ |K| \geqslant \frac{32}{\varepsilon} \sqrt{\log \left( \frac{4N^2}{\varepsilon}  \right)\ln(1/\varepsilon)}\]
suffices to ensure that there exists a $U^{CK}$ such that
	\begin{equation*}
		\left\| \mathcal{M}(\tr_K[U \cdot \rho^{MCK}]) - \mathcal{M}(\pi^C) \otimes \rho^M \right\|_1 \leqslant 7\varepsilon
	\end{equation*}

\end{proof}

\section{Implications for the security of quantum protocols against quantum adversaries}
When designing quantum cryptographic protocols, it is often necessary to show that a quantum adversary (``Eve'') is left with only a negligible amount of information on some secret string. An initial attempt at formalizing this idea is to say that, at the end of the protocol, regardless of what measurement Eve makes on her quantum system, the mutual information between her measurement result and the secret string is at most $\varepsilon$ (in other words, her accessible information about the message is at most $\varepsilon$). This was often taken as the security definition for quantum key distribution, usually implicitly by simply not considering that the adversary might keep quantum data at the end of the protocol \cite{lc99,sp00,NC2000,gl03,lca05} (see also discussion in \cite{bhlmo05,RK04,krbm07}). In \cite{krbm07}, it is shown that this definition of security is inadequate, precisely because of possible locking effects. Indeed, this security definition does not exclude the possibility that Eve, upon gaining partial knowledge of $S$ after the end of the protocol, could then gain more by making a measurement on her quantum register that depends on the partial information that she has learned. In \cite{krbm07}, the authors exhibit a (somewhat contrived) quantum key distribution protocol which generates a secret $n$-bit key such that, if Eve learns the first $n-1$ bits, she can then learn the remaining bit by measuring her own quantum register. 

The locking scheme presented above allows us to demonstrate a much more spectacular failure of this security definition. We will show that there exists a quantum key distribution protocol that ensures that an adversary has negligible accessible information about the final key, but with which an adversary can recover the entire key upon learning only a very small fraction of it.

\subsection{Description of the protocol}
We will derive this protocol by taking a protocol that is truly secure, and then making Alice send a locked version of the secret string directly to Eve. We will be able to prove that regardless of what measurement Eve makes on her state, she will learn essentially no information on the string, but of course, she only needs to learn a tiny amount of information to unlock what Alice sent her. More precisely, let $P$ be a quantum key distribution protocol such that, at the end of its execution, Alice and Bob share an $n$-bit string, and Eve has a quantum state representing everything that she has managed to learn about the string. We will also assume that $P$ is a truly secure protocol: the string together with Eve's quantum state can be represented as a quantum state $\sigma^{SE}$ such that $\| \sigma^{SE} - \pi^S \otimes \sigma^E \|_1 \leqslant \varepsilon$, where $S$ is a quantum register holding the secret string, and $E$ is Eve's quantum register. Now, we will define the protocol $P'$ to be the following quantum key distribution protocol: Alice and Bob first run $P$ to generate a string $s$ of length $n$, and then Alice splits $s$ into two parts: the first part $s_k$ is of size $\log\left( 32/\varepsilon \sqrt{\log\left( \frac{4\cdot 2^{2n}}{\varepsilon} \right) \ln(1/\varepsilon)} \right)$, and the second part $s_c$ contains the rest of the key. Alice then uses the classical key $s_k$ to create a quantum state in register $C$ that contains a locked version of $s_c$ and sends the system $C$ to Eve.

How secure is $P'$? It is clearly very insecure, since, if Eve ever ends up learning $s_k$ (via a known plaintext attack, for instance), she can then completely recover $s_c$. However, the next theorem shows that, right after the execution of $P'$, Eve cannot make any measurement that will reveal information about the key. In particular, $P'$ satisfies the requirement that Eve's accessible information on the key be very low.

\begin{thm}
Let $P$ and $P'$ be quantum key distribution protocols as defined as above, and let $\rho^{CES}$ be the state at the end of the execution of $P'$: $S$ contains the $n$-bit string $s$, $E$ is Eve's quantum register after the execution of $P$, and $C$ contains the locked version of $s_c$ that Alice sent to Eve. Then, for any measurement superoperator $\mathcal{M}^{CE \rightarrow X}$, there exists a state $\xi^X$ such that
\[ \left\| \mathcal{M}(\rho^{CES}) - \xi^X \otimes \pi^S \right\|_1 \leqslant 2\varepsilon. \]
This also entails that
\[ I_{\acc}(S;CE) \leqslant 2\varepsilon n + 2\eta(1-2\varepsilon) + 2\eta(2\varepsilon) \]
via the Alicki-Fannes inequality (Lemma \ref{lem:alicki-fannes}).
\end{thm}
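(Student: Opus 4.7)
The plan is to combine the security of $P$ with the locking theorem via a triangle inequality argument. First, I would invoke the hypothesis that $P$ is secure, namely $\| \sigma^{SE} - \pi^S \otimes \sigma^E \|_1 \leqslant \varepsilon$. Alice's additional step in $P'$ amounts to a CPTP map $\mcal{L}^{S \to SC}$ that keeps $S$ intact and creates the locked cyphertext $C$ from the classical string $s$. Applying $\mcal{L}$ to both the real joint state $\rho^{SE}$ and the ideal state $\pi^S \otimes \sigma^E$, monotonicity of the trace distance under CPTP maps yields
\[
\left\| \rho^{SCE} - \omega^{SC} \otimes \sigma^E \right\|_1 \leqslant \varepsilon,
\]
where $\omega^{SC} := \mcal{L}(\pi^S)$ is the ideal locked state corresponding to a uniformly random key string.

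Next, I would apply Eve's arbitrary measurement $\mcal{M}^{CE \to X}$ to both sides. By monotonicity we retain $\| \mcal{M}(\rho^{SCE}) - \mcal{M}(\omega^{SC} \otimes \sigma^E) \|_1 \leqslant \varepsilon$. The second term can be rewritten as $(\ident^S \otimes \mcal{M}_\sigma)(\omega^{SC})$, where $\mcal{M}_\sigma^{C \to X}(\eta) := \mcal{M}(\eta \otimes \sigma^E)$ is a perfectly valid measurement superoperator acting on $C$ alone. This effectively reduces the multi-party problem to the setting handled by the locking theorem of this chapter.

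The core step is to invoke that locking theorem with the full string $s$ (not just $s_c$) as the classical message register $M$, and with parameters chosen so that the right-hand side is at most $\varepsilon$. This produces a fixed state $\xi^X := \mcal{M}_\sigma(\pi^C) = \mcal{M}(\pi^C \otimes \sigma^E)$ such that
\[
\left\| (\ident^S \otimes \mcal{M}_\sigma)(\omega^{SC}) - \xi^X \otimes \pi^S \right\|_1 \leqslant \varepsilon.
\]
Combining the two $\varepsilon$-bounds through the triangle inequality then gives the claimed estimate $\| \mcal{M}(\rho^{CES}) - \xi^X \otimes \pi^S \|_1 \leqslant 2\varepsilon$, and the accessible-information statement follows immediately from Lemma~\ref{lem:alicki-fannes} as already indicated.

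The point requiring the most care is that the locking step above must decouple the \emph{entire} register $S = S_K S_C$ from Eve's measurement rather than only $S_C$. The way to reconcile this with the description of $P'$ is to regard the ``classical key'' $s_k$ as serving only to quantum-one-time-pad the quantum key of the underlying random-isometry locking scheme, while the locked message itself is taken to be the full $n$-bit string $s$; the parameter choice for $|s_k|$ in the theorem statement is exactly what the locking theorem demands of $|K|$ when $N = 2^n$. With that identification the application of the locking theorem is direct, and the remainder of the proof is a routine chain of triangle inequalities.
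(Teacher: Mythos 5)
Your chain of estimates reproduces the paper's proof step for step: start from the security bound $\|\rho^{SE}-\pi^S\otimes\rho^E\|_1\le\varepsilon$ for $P$, push it through the locking preparation map $\mathcal{C}$ and then through $\mathcal{M}$ by monotonicity, reinterpret $\mathcal{M}(\cdot\otimes\sigma^E)$ as a measurement $\mathcal{M}_\sigma$ acting on $C$ alone so that the locking property applies, and finish with the triangle inequality. This matches Equations~(\ref{eqn:qkd1})--(\ref{eqn:qkd3}) exactly. Your choice $\xi^X=\mathcal{M}_\sigma(\pi^C)$ is in fact slightly more faithful to the $\varepsilon$-locking definition (whose target is $\mathcal{M}(\pi^C)\otimes\omega^M$) than the paper's $\xi^X=\mathcal{M}(\tr_S[\mathcal{C}(\pi^S)]\otimes\rho^E)$.

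The one place where your argument genuinely departs from the paper is the final paragraph. You correctly flag that the locking step must decouple Eve's outcome from \emph{all} of $S=S_kS_c$ rather than just from the locked plaintext $s_c$; the paper glosses over this, asserting $\|\mathcal{N}(\mathcal{C}(\pi^S))-\mathcal{N}(\tr_S[\mathcal{C}(\pi^S)])\otimes\pi^S\|_1\le\varepsilon$ ``by the definition of an $\varepsilon$-locking scheme'' even though that definition literally only speaks to the message register. However, the remedy you propose --- treat the full $n$-bit string $s$ as the locked message $M$ while $s_k$ doubles as a prefix of $M$ and the pad key --- does not actually close the gap. The main locking theorem controls measurements on $C$ only \emph{after $K$ has been traced out}, whereas Eve in $P'$ holds and jointly measures $C$ together with $K_{\enc}=\mathrm{Pad}_{s_k}(K)$, a register that is correlated both with $s_c$ (through the entanglement of $K$ with $C$ inside $U\ket{\psi_{s_c}}$) and with $s_k$ (through the pad). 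One still needs a lemma to the effect that one-time-padding the quantum key of an $\varepsilon$-locking scheme with a uniformly random $s_k$ and handing over the enlarged cyphertext $(C,K_{\enc})$ remains $\varepsilon$-locking against the joint $(s_k,s_c)$; this is implicit in the paper's informal remark about converting the quantum key to a classical key, but neither your proposal nor the paper's proof actually supplies it. Since this is a shared omission rather than a flaw peculiar to your argument, your proof is faithful to the paper's route; be aware though that your re-interpretation of the message register as all of $s$ changes the construction in a way the paper's description of $P'$ does not support, and on its own it does not resolve the difficulty you correctly identified.
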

\begin{proof}
From the definition of $P$, we have that
\begin{equation}
\left\| \rho^{ES} - \pi^S \otimes \rho^E \right\|_1 \leqslant \varepsilon.
\end{equation}
Now, let $\mathcal{C}^{S \rightarrow CS}$ be a superoperator that takes a classical string in $S$, splits it into $s_k$ and $s_c$, creates a locked version of $s_c$ with $s_k$ as the key into the quantum system $C$, and leaves the classical string in $S$ unchanged; this is simply the operation that Alice performs when preparing $C$ for Eve. The above inequality, combined with the monotonicity of the trace distance under CPTP maps yields
\begin{equation}
\left\| \rho^{CES} -  \mathcal{C}(\pi^S) \otimes \rho^E \right\|_1 \leqslant \varepsilon
\label{eqn:qkd1}
\end{equation}
and hence, for any measurement superoperator $\mathcal{M}^{CE \rightarrow X}$,
\begin{equation}
\left\| \mathcal{M}(\rho^{CES}) -  \mathcal{M}(\mathcal{C}(\pi^S) \otimes \rho^E) \right\|_1 \leqslant \varepsilon
\label{eqn:qkd2}
\end{equation}
Consider now the expression $\mathcal{M}^{CE \rightarrow X}(\mathcal{C}(\pi^S) \otimes \rho^E)$: it can be viewed as a measurement on the $C$ system of $\mathcal{C}^{S \rightarrow CS}(\pi^S)$ alone that is implemented by creating the state $\rho^E$ and then measuring $\mathcal{M}^{CE \rightarrow X}$. Furthermore, note that, by the definition of an $\varepsilon$-locking scheme, we have that, for every measurement superoperator $\mathcal{N}^{C \rightarrow X}$,
\begin{equation}
\left\| \mathcal{N}(\mathcal{C}(\pi^S)) - \mathcal{N}(\tr_S[\mathcal{C}(\pi^S)]) \otimes \pi^S \right\|_1 \leqslant \varepsilon.
\end{equation}
Applying this to $\mathcal{M}^{CE \rightarrow X}(\mathcal{C}(\pi^S) \otimes \rho^E)$, we get that
\begin{equation}
\left\| \mathcal{M}(\mathcal{C}(\pi^S) \otimes \rho^E) -  \mathcal{M}(\tr_S[\mathcal{C}(\pi^S)] \otimes \rho^E) \otimes \pi^S \right\|_1 \leqslant \varepsilon.
\label{eqn:qkd3}
\end{equation}
We now use the triangle inequality on Equations (\ref{eqn:qkd2}) and (\ref{eqn:qkd3}) to obtain
\begin{equation}
\left\| \mathcal{M}(\rho^{CES}) - \mathcal{M}(\tr_S[\mathcal{C}(\pi^S)] \otimes \rho^E) \otimes \pi^S \right\|_1 \leqslant 2\varepsilon
\end{equation}
which yields the theorem with $\xi^X := \mathcal{M}(\tr_S[\mathcal{C}(\pi^S)] \otimes \rho^E)$.
\end{proof}

Hence, we have shown that requiring that Eve's accessible information on the generated key be low is not an adequate definition of security for quantum key distribution. We have exhibited a protocol $P'$ which guarantees low accessible information and yet is clearly insecure due to locking effects.

\section{Discussion}
The essence of the locking phenomenon is that it is possible to possess \emph{purely} quantum information about a classical message: the cyphertext by itself must contain a lot of information about the message, since only a tiny key is required to get the message, but none of it can be considered classical, since no measurement succeeds in extracting this information. This phenomenon has particular importance in cryptography: it highlights the need to consider an adversary having access to quantum memory, since it is possible for a protocol to ensure that no adversary has any classical information about a particular string while having a lot of quantum information about it. The adversary then needs only a very small amount of additional information to unlock his quantum information. This essentially means that security definitions in cryptography must take quantum information into account to be composable in the physical world.

The main improvement of this work over previous locking schemes is the fact that locking is defined in terms of a trace distance between measurement outputs rather than in terms of accessible information. This is strictly stronger, and has a more compelling interpretation: measurements made on a locked message cannot be distinguished with more than negligible probability from data generated independently of the message. Furthermore, it demonstrates the failure of cryptographic security definitions based on measurement results even more flagrantly: previous results \cite{krbm07} showed that there exists a quantum key distribution protocol that produces an $n$-bit key about which no adversary can obtain significant information through a measurement, but for which there can exist a quantum adversary who, upon learning the first $n-1$ bits of the key, can then learn the last one by measuring his quantum data. In this work, the quantum adversary only needs to get $\polylog(n)$ bits on the key before being able to reconstruct the entire key, rather than $n-1$ bits.

\chapter{Conclusion}\label{chp:conclusion}
In this thesis, we have developed a set of mathematical tools to solve quantum information theory problems within a unified framework. These tools are based on the idea of \emph{decoupling}: in the quantum world, ensuring that two systems are uncorrelated implies that both of these systems are completely correlated with a third system that purifies the state that they are holding. Hence, the problem of information transmission, which can be viewed as the problem of establishing perfect correlation between a sender and a receiver, can be solved by \emph{destroying} correlation between the sender and an ``environment'' system that purifies the global state. Chapter \ref{chp:decoupling} presents this concept in detail and gives a general theorem that allows us to ensure that two systems are decorrelated. This theorem analyzes the following situation: we have a quantum channel $\mathcal{T}^{A \rightarrow E}$ and a quantum state $\rho^{AR}$, we apply a unitary $U$ on the $A$ system of $\rho$ (a unitary chosen at random uniformly over the set of all unitaries works on average)  and then we send $A$ into the input of $\mathcal{T}$. The result is that the quality of decorrelation only depends on two parameters: one that indicates how easy it is to decorrelate the state and the other that measures how good the channel is at decorrelating. Several different versions of this theorem are presented to adapt it to different uses.

The rest of the thesis then goes on to apply these tools to more concrete information theory problems, allowing us to obtain new theorems as well as many of the most important theorems in the field, often in a more general form. These include the best known achievable rate for quantum transmission through quantum channels and the entanglement-assisted capacities of quantum channels for classical and quantum transmission. It also allowed us to come up with hitherto unknown coding theorems on quantum channels with side-information at the transmitter, as well as quantum broadcast channels.

In all of these cases, the coding theorems followed the same pattern: we first obtain a theorem that applies to a single use of a channel, with the quality of transmission depending on various min- and max-entropies. We then specialize these theorems to the case where the ``single channel'' in question is actually $n$ copies of the same channel, yielding an asymptotic result. In this process, the min- and max-entropies are bounded using the fully quantum asymptotic equipartition property, and turn out to become von Neumann entropies.

We end the thesis in Chapter \ref{chp:locking} with an application of decoupling of a slightly different flavour: locking classical information in quantum states. This involves encoding a classical message into two quantum systems: a large one (the cyphertext) that is almost as large as the message itself, and a very small one (the key). The encoding has the property that, given only the cyphertext, no measurement can yield any significant amount of information about the message, even though the cyphertext and the key together provide full information about the message. In contrast with previous work on locking, the definition of locking used here involves a trace distance between two classical distributions: results of a measurement made on a locked message, and measurement results generated independently of the message; this is both a stronger condition and has the clear operational interpretation that a locked message is virtually indistinguishable from a random state when a measurement is made.

\section{Open problems and future research directions}
There are several open problems and possible research projects that arise out of the results presented in this thesis. Here are some of them:

\textbf{A constructive version of the locking scheme:} The results presented in this thesis involve a random unitary chosen according to the Haar measure in some way. This yields proofs that certain protocols exist, but does not directly give a way of actually constructing them. For almost all of the results in this thesis, however, one can replace the Haar measure by a unitary 2-design, since only the second moment of the Haar measure is necessary for the proofs. But there is one exception: the information locking scheme of Chapter \ref{chp:locking}. This is because its proof relies not only on the second moment of the Haar distribution, but also on its concentration properties (Theorem \ref{thm:bertha-concentration}). Indeed, Theorem \ref{thm:bertha-concentration} states that, in the main decoupling theorem (Theorem \ref{thm:bertha}), not only do we get good decoupling on average when choosing a unitary randomly, but also that this holds with overwhelmingly high probability. Statements of this nature abound in quantum information theory and its applications are far from being limited to information locking: it can be used to show the existence of completely entangled subspaces \cite{generic-entanglement}, to prove the existence of counterexamples to the additivity conjecture \cite{hastings-nonadditivity}. In all of these cases (as well as locking), it would be of great interest to have explicit, constructive examples. Finding a constructive version of Theorem \ref{thm:bertha-concentration} (or perhaps something slightly more general) would most likely achieve this for all of the problems mentioned.

\textbf{Min-entropy bounds for larger classes of states:} For all of the channel coding problems shown in this document, we have proceeded as follows: we first gave a general one-shot coding theorem, and then we used it to give a theorem for memoryless channels. To do this, we used the fully quantum asymptotic equipartition property (Theorem \ref{thm:fully-quantum-aep}, \cite{tcr08}) to get a bound on the smooth min-entropy of i.i.d.\ states. If we had a way to similarly bound the smooth min-entropy of a larger class of states, we could apply it to a larger class of channels, such as various types of channels with memory.

\textbf{Optimality of the one-shot coding theorems:} The various one-shot coding theorems presented were left without converses. However, it seems likely that they are, in fact, optimal, at least for some particular input distributions. Some special cases have already been shown to be optimal, such as state merging \cite{merging-berta-etal}.

\textbf{Systematically relating classical information theory and quantum information theory with free entanglement:} In quantum Shannon theory, it has very often proven to be the case that the quantum problems that bear the strongest resemblance to their classical counterparts are those in which the various participants share entanglement before the protocol starts and are allowed to use it to improve the performance of the protocol. This is the case for information transmission through a regular channel: Shannon showed that the mutual information gives the capacity of a classical channel; and it turns out that the quantum mutual information characterizes the \emph{entanglement-assisted} capacity of quantum channels. This is also true for channels with side-information at the transmission (see Chapter \ref{chp:side-info}) as well as broadcast channels (see Chapter \ref{chp:bcast}). In all of these cases, we get essentially identical expressions for the capacities (or achievable rate regions) in the classical and in the quantum case. This suggests that there might be a general principle at work relating the two. Such a principle would allow us to automatically import large classes of results from the extremely vast body of work in classical information theory directly into quantum information theory. It is not clear at this point, however, to what extent this principle would apply, or what the most appropriate definitions would be.

\bibliographystyle{alpha}
\bibliography{these}

\newcommand{\etalchar}[1]{$^{#1}$}
\begin{thebibliography}{BOHL{\etalchar{+}}05}

\bibitem[ADHW06]{FQSW}
Anura Abeyesinghe, Igor Devetak, Patrick Hayden, and Andreas Winter.
\newblock The mother of all protocols: Restructuring quantum information's
  family tree.
\newblock 2006.
\newblock quant-ph/0606225.

\bibitem[AF04]{alicki-fannes}
Robert Alicki and Mark Fannes.
\newblock Continuity of quantum mutual information.
\newblock {\em Journal of Physics A: Mathematical and General}, 37(5):L55--L57,
  2004.
\newblock quant-ph/0312081.

\bibitem[AGZ09]{zeitouni-random-matrices}
Greg~W. Anderson, Alice Guionnet, and Ofer Zeitouni.
\newblock {\em An Introduction to Random Matrices}.
\newblock Cambridge University Press, 2009.
\newblock http://www.wisdom.weizmann.ac.il/~zeitouni/cupbook.pdf.

\bibitem[AS00]{allahverdyan-saakian}
Armen~E. Allahverdyan and David~B. Saakian.
\newblock Broadcast of classical information through a quantum channel.
\newblock {\em Europhysics Letters}, 6(50):718--723, 2000.

\bibitem[AW02]{ahlswede-winter}
Rudolf Ahlswede and Andreas Winter.
\newblock Strong converse for identification via quantum channels.
\newblock {\em IEEE Transactions on Information Theory}, 48(3):569--579, 2002.
\newblock quant-ph/0012127.

\bibitem[BCR09]{merging-berta-etal}
Mario Berta, Matthias Christandl, and Renato Renner.
\newblock A conceptually simple proof of the quantum reverse shannon theorem.
\newblock 2009.
\newblock arXiv:0912.3805.

\bibitem[BD09]{buscemi-datta}
Francesco Buscemi and Nilanjana Datta.
\newblock The quantum capacity of channels with arbitrarily correlated noise.
\newblock 2009.
\newblock arXiv:0902.0158.

\bibitem[BDH{\etalchar{+}}06]{reverse-shannon}
Charles Bennett, Igor Devetak, Aram Harrow, Peter Shor, and Andreas Winter.
\newblock The quantum reverse {S}hannon theorem.
\newblock In preparation, 2006.

\bibitem[Ber08]{diploma-berta}
Mario Berta.
\newblock Single-shot quantum state merging.
\newblock Diploma Thesis, ETH Zürich, 2008.

\bibitem[Bha96]{bhatia}
Rajendra Bhatia.
\newblock {\em Matrix Analysis}.
\newblock Springer-Verlag, 1996.

\bibitem[BOHL{\etalchar{+}}05]{bhlmo05}
Michael Ben-Or, Michal Horodecki, Debbie Leung, Dominic Mayers, and Jonathan
  Oppenheim.
\newblock The universal composable security of quantum key distribution.
\newblock {\em Second Theory of Cryptography Conference, TCC 2005},
  3378:386--406, 2005.
\newblock quant-ph/0409078.

\bibitem[BSST02]{BSST02}
Charles~H. Bennett, Peter~W. Shor, John~A. Smolin, and Ashish~V. Thapliyal.
\newblock Entanglement-assisted capacity of a quantum channel and the reverse
  {S}hannon theorem.
\newblock {\em IEEE Trans. Info. Theory}, 48:10:2637--2655, 2002.
\newblock quant-ph/0106052.

\bibitem[Car07]{caratheodory}
Constantin Carathéodory.
\newblock Über den {V}ariabilitätsbereich der {K}oeffizienten von
  {P}otenzreihen, die gegebene {W}erte nicht annehmen.
\newblock {\em Math. Ann.}, 64:95--115, 1907.

\bibitem[CG88]{chor-goldreich}
Benny Chor and Oded Goldreich.
\newblock Unbiased bits from sources of weak randomness and probabilistic
  communication complexity.
\newblock {\em SIAM Journal on Computing}, 17:230--261, April 1988.

\bibitem[Cho75]{cj-choi}
Man-Duen Choi.
\newblock Completely positive linear maps on complex matrices.
\newblock {\em Linear algebra and its applications}, (10):285, 1975.

\bibitem[Cov72]{cover1}
Thomas Cover.
\newblock Broadcast channels.
\newblock {\em IEEE Transactions on Information Theory}, 18:2--14, 1972.

\bibitem[CS06]{collins-sniady}
Beno\^it Collins and Piotr \'Sniady.
\newblock Integration with respect to the {H}aar measure on unitary, orthogonal
  and symplectic group.
\newblock {\em Communications in Mathematical Physics}, 264:773--795, 2006.
\newblock math-ph/0402073.

\bibitem[CT91]{coverthomas}
Thomas Cover and Joy Thomas.
\newblock {\em Elements of Information Theory}.
\newblock John-Wiley and Sons, 1991.

\bibitem[CW04]{squashed-ent}
Matthias Christandl and Andreas Winter.
\newblock Squashed entanglement - an additive entanglement measure.
\newblock {\em J. Math. Phys.}, 45(3):829--840, 2004.

\bibitem[DD]{sec-entropique}
Simon~Pierre Desrosiers and Frédéric Dupuis.
\newblock Quantum entropic security and approximate quantum encryption.
\newblock {\em IEEE Transactions on Information Theory}.
\newblock to appear. arXiv:0707.0691.

\bibitem[Dev05]{lsd3}
Igor Devetak.
\newblock The private classical capacity and quantum capacity of a quantum
  channel.
\newblock {\em IEEE Trans. Info. Theory}, (51(1):44), 2005.
\newblock quant-ph/0304127.

\bibitem[DH06a]{horo1}
Maciej Demianowicz and Pawel Horodecki.
\newblock Capacity regions for multiparty quantum channels.
\newblock 2006.
\newblock quant-ph/0603112.

\bibitem[DH06b]{horo2}
Maciej Demianowicz and Pawel Horodecki.
\newblock Quantum channel capacities: multiparty communication.
\newblock {\em Phys. Rev. A}, 74(042336), 2006.
\newblock quant-ph/0603106.

\bibitem[DHL{\etalchar{+}}04]{dhlst03}
David~P. DiVincenzo, Micha\l{} Horodecki, Debbie~W. Leung, John~A. Smolin, and
  Barbara~M. Terhal.
\newblock Locking classical correlation in quantum state.
\newblock {\em Phys. Rev. Lett.}, (92, 067902), 2004.
\newblock quant-ph/0303088.

\bibitem[DHL09]{dhl09}
Frédéric Dupuis, Patrick Hayden, and Ke~Li.
\newblock A father protocol for quantum broadcast channels.
\newblock {\em IEEE Transactions on Information Theory}, 2009.
\newblock to appear, quant-ph/0612155.

\bibitem[DHW03]{mother-father}
Igor Devetak, Aram Harrow, and Andreas Winter.
\newblock A family of quantum protocols.
\newblock {\em Phys. Rev. Lett.}, 93(230504), 2003.
\newblock quant-ph/0308044.

\bibitem[DHW05]{DHW05}
Igor Devetak, Aram Harrow, and Andreas Winter.
\newblock A resource framework for quantum {S}hannon theory.
\newblock 2005.
\newblock quant-ph/0512015.

\bibitem[DLT02]{quantum-data-hiding}
David DiVincenzo, Debbie Leung, and Barbara Terhal.
\newblock Quantum data hiding.
\newblock {\em IEEE Trans. Inf. Theory}, 48(3):580--598, 2002.
\newblock quant-ph/0103098.

\bibitem[DSS98]{superadd-qcap}
David~P. DiVincenzo, Peter~W. Shor, and John~A. Smolin.
\newblock {Quantum-chJohn A. Smolinof very noisy channels}.
\newblock {\em Phys. Rev. A}, 57:830--839, February 1998.
\newblock quant-ph/9706061.

\bibitem[Dup09]{gpquantique}
Frédéric Dupuis.
\newblock The capacity of quantum channels with side information at the
  transmitter.
\newblock {\em Proceedings of the 2009 IEEE International Symposium on
  Information Theory}, 2009.
\newblock arXiv:0805.3352.

\bibitem[DW05]{DW05}
Igor Devetak and Andreas Winter.
\newblock Distillation of secret key and entanglement from quantum states.
\newblock {\em Proc. R. Soc. Lond. A}, (461):207--237, 2005.
\newblock quant-ph/0306078.

\bibitem[DY06]{redistribution}
Igor Devetak and Jon Yard.
\newblock The operational meaning of quantum conditional information.
\newblock 2006.
\newblock quant-ph/0612050.

\bibitem[EPR35]{epr}
Albert Einstein, Boris Podolsky, and Nathan Rosen.
\newblock Can quantum-mechanical description of physical reality be considered
  complete?
\newblock {\em Physical Review}, (47):777, 1935.

\bibitem[Fan73]{fannes}
Mark Fannes.
\newblock A continuity property of the entropy density for spin lattices.
\newblock {\em Commun. Math. Phys.}, 31:291--294, 1973.

\bibitem[FvdG99]{fuchs-vdg}
Christopher Fuchs and Jeroen van~de Graaf.
\newblock Cryptographic distinguishability measures for quantum-mechanical
  states.
\newblock {\em IEEE Trans. Info. Theory}, 45(4):1216--1227, May 1999.
\newblock quant-ph/9712042.

\bibitem[GL03]{gl03}
Daniel Gottesman and Hoi-Kwong Lo.
\newblock Proof of security of quantum key distribution with two-way classical
  communications.
\newblock {\em IEEE Transactions on Information Theory}, 49(2):457--475, 2003.
\newblock quant-ph/0105121.

\bibitem[GP80]{gelfand-pinsker}
Sergei~I. Gel'fand and Mark~S. Pinsker.
\newblock Coding for channel with random parameters.
\newblock {\em Problems of Control and Information Theory}, 9:19--31, 1980.

\bibitem[GPW05]{gpw04}
Berry Groisman, Sandu Popescu, and Andreas Winter.
\newblock On the quantum, classical and total amount of correlations in a
  quantum state.
\newblock {\em Phys Rev A}, 72(032317), 2005.
\newblock quant-ph/0410091.

\bibitem[GS07]{bosonic}
Saikat Guha and Jeffrey~H. Shapiro.
\newblock Classical information capacity of the bosonic broadcast channel.
\newblock {\em Proceedings of the 2007 International Symposium on Information
  Theory}, pages 1896--1900, 2007.
\newblock arXiv:0704.1901.

\bibitem[Has09]{hastings-nonadditivity}
Matthew~B. Hastings.
\newblock A counterexample to the additivity of minimum output entropy.
\newblock {\em Nature Physics}, 5(255), 2009.
\newblock arXiv:0809.3972.

\bibitem[Hay06]{hayden-fqsw-2entropy}
Patrick Hayden.
\newblock Unpublished, 2006.

\bibitem[HDW08]{qmac}
Min-Hsiu Hsieh, Igor Devetak, and Andreas Winter.
\newblock Entanglement-assisted capacity of quantum multiple-access channels.
\newblock {\em IEEE Transactions on Information Theory}, 54--7:3078--3090, July
  2008.
\newblock quant-ph/0511228.

\bibitem[Hel69]{helstrom}
Carl~W. Helstrom.
\newblock Quantum detection and estimation theory.
\newblock {\em Journal of Statistical Physics}, 1(2):231--252, June 1969.

\bibitem[HHYW08]{lsd-decoupling}
Patrick Hayden, Micha\l{} Horodecki, Jon Yard, and Andreas Winter.
\newblock A decoupling approach to the quantum capacity.
\newblock {\em Open Syst. Inf. Dyn.}, (15):7--19, 2008.
\newblock quant-ph/0702005.

\bibitem[HIN{\etalchar{+}}07]{netcoding1}
Masahito Hayashi, Kazuo Iwama, Harumichi Nishimura, Rudy Raymond, and Shigeru
  Yamashita.
\newblock Quantum network coding.
\newblock {\em Proceedings of the 24th Annual Symposium on Theoretical Aspects
  of Computer Science 2007}, pages 610--621, 2007.
\newblock quant-ph/0601088.

\bibitem[HLSW04]{HLSW03}
Patrick Hayden, Debbie Leung, Peter Shor, and Andreas Winter.
\newblock Randomizing quantum states: Constructions and applications.
\newblock {\em Comm. Math. Phys.}, 250(2):371--391, 2004.
\newblock quant-ph/0307104.

\bibitem[HLW06]{generic-entanglement}
Patrick Hayden, Debbie Leung, and Andreas Winter.
\newblock Aspects of generic entanglement.
\newblock {\em Comm. Math. Phys.}, 265(1):95--117, 2006.
\newblock quant-ph/0407049.

\bibitem[HOW07]{state-merging}
Micha\l{} Horodecki, Jonathan Oppenheim, and Andreas Winter.
\newblock Quantum state merging and negative information.
\newblock {\em Comm. Math. Phys.}, (269, 107), 2007.
\newblock quant-ph/0512247.

\bibitem[HV94]{han-verdu}
Te~Sun Han and Sergio Verdú.
\newblock A general formula for channel capacity.
\newblock {\em IEEE Transactions on Information Theory}, 40(4):1147--1157,
  1994.

\bibitem[Jam72]{cj-jamiolkowski}
Andrzej Jamio\l{}kowski.
\newblock Linear transformations which preserve trace and positive
  semidefiniteness of operators.
\newblock {\em Reports on Mathematical Physics}, 3(4):275--278, 1972.

\bibitem[Kit97]{diamond-norm}
Alexei Kitaev.
\newblock Quantum computations: algorithms and error correction.
\newblock {\em Russian Mathematical Surveys}, 52(6):1191--1249, 1997.

\bibitem[Kli01]{klimovitch-qmac}
Gleb Klimovitch.
\newblock On the classical capacity of a quantum multiple access channel.
\newblock {\em Proc. IEEE Intern. Sympos. Info. Theory}, page 278, 2001.

\bibitem[KRBM07]{krbm07}
Robert König, Renato Renner, Andor Bariska, and Ueli Maurer.
\newblock Locking of accessible information and implications for the security
  of quantum cryptography.
\newblock {\em Phys. Rev. Lett.}, 98(140502), 2007.
\newblock quant-ph/0512021.

\bibitem[KRS09]{min-max-entropy}
Robert König, Renato Renner, and Christian Schaffner.
\newblock The operational meaning of min- and max-entropy.
\newblock {\em IEEE Transactions on Information Theory}, 55(9), 2009.
\newblock arXiv:0807.1338.

\bibitem[KW03]{tema-con-variazioni}
Dennis Kretschmann and Reinhard~F. Werner.
\newblock Tema con variazioni: quantum channel capacity.
\newblock 2003.
\newblock quant-ph/0311037.

\bibitem[LC99]{lc99}
Hoi-Kwong Lo and Hoi-Fung Chau.
\newblock Unconditional security of quantum key distribution over arbitrarily
  long distances.
\newblock {\em Science}, 283(5410):2050--2056, 1999.
\newblock quant-ph/9803006.

\bibitem[LCA05]{lca05}
Hoi-Kwong Lo, Hoi-Fung Chau, and M.~Ardehali.
\newblock Efficient quantum key distribution scheme and proof of its
  unconditional security.
\newblock {\em Journal of Cryptology}, 18(133), 2005.
\newblock quant-ph/0011056.

\bibitem[Llo96]{lsd1}
Seth Lloyd.
\newblock Capacity of the noisy quantum channel.
\newblock {\em Phys. Rev. A}, (55:1613), 1996.
\newblock quant-ph/9604015.

\bibitem[LLPS09]{llps09}
Ruoheng Liu, Tie Liu, H.~Vincent Poor, and Shlomo Shamai.
\newblock Multiple-input multiple-output {G}aussian broadcast channels with
  confidential messages.
\newblock 2009.
\newblock arXiv:0903.3786.

\bibitem[LOW06]{LOW06}
Debbie Leung, Jonathan Oppenheim, and Andreas Winter.
\newblock Quantum network communication -- the butterfly and beyond.
\newblock 2006.
\newblock quant-ph/0608223.

\bibitem[LR73]{strong-subadditivity}
Elliot~H. Lieb and Mary~Beth Ruskai.
\newblock Proof of the strong subadditivity of quantum-mechanical entropy.
\newblock {\em Journal of Mathematical Physics}, 14:1938--1941, 1973.

\bibitem[Mar79]{marton}
Katalin Marton.
\newblock A coding theorem for the discrete memoryless broadcast channel.
\newblock {\em IEEE Transactions on Information Theory}, IT-25:306--311, 1979.

\bibitem[NC00]{NC2000}
Michael~A. Nielsen and Isaac~L. Chuang.
\newblock {\em Quantum computation and quantum information}.
\newblock Cambridge University Press, New York, NY, USA, 2000.

\bibitem[Pau02]{paulsen-operator-theory}
Vern Paulsen.
\newblock {\em Completely bounded maps and operator algebras}.
\newblock Cambridge University Press, 2002.

\bibitem[Ren05]{renner-phd}
Renato Renner.
\newblock {\em Security of quantum key distribution}.
\newblock PhD thesis, ETH Zurich, 2005.
\newblock quant-ph/0512258.

\bibitem[RK05]{RK04}
Renato Renner and Robert K\"onig.
\newblock Universally composable privacy amplification against quantum
  adversaries.
\newblock {\em Second Theory of Cryptography Conference, TCC 2005},
  3378:407--425, 2005.
\newblock quant-ph/0403133.

\bibitem[RW04]{smoothing}
Renato Renner and Stefan Wolf.
\newblock Smooth {R}ényi entropy and applications.
\newblock {\em Proceedings of the 2004 International Symposium on Information
  Theory}, 2004.

\bibitem[RWW06]{single-serving}
Renato Renner, Stefan Wolf, and Jürg Wullschleger.
\newblock The single-serving channel capacity.
\newblock {\em Proceedings of the 2006 International Symposium on Information
  Theory}, 2006.

\bibitem[Sch95]{schumi95}
Benjamin Schumacher.
\newblock Quantum coding.
\newblock {\em Phys. Rev. A}, (51):2738--2747, 1995.

\bibitem[Sha48]{shannon}
Claude~Elwood Shannon.
\newblock A mathematical theory of communication.
\newblock {\em Bell System Technical Journal}, 27:379--423, 1948.

\bibitem[Sho02]{lsd2}
Peter Shor.
\newblock The quantum channel capacity and coherent information.
\newblock {\em Lecture notes, {MSRI} workshop on quantum computation}, 2002.
\newblock
  \url{http://www.msri.org/publications/ln/msri/2002/quantumcrypto/shor/1/}.

\bibitem[SP00]{sp00}
Peter Shor and John Preskill.
\newblock Simple proof of security of the bb84 quantum key distribution
  protocol.
\newblock {\em Phys. Rev. Lett.}, 85, 2000.
\newblock quant-ph/0003004.

\bibitem[SVW05]{SVW05}
John~A. Smolin, Frank Verstraete, and Andreas Winter.
\newblock Entanglement of assistance and multipartite state distillation.
\newblock {\em Phys. Rev. A}, 72(5):052317, Nov 2005.
\newblock quant-ph/0505038.

\bibitem[SY08]{smith-yard}
Graeme Smith and Jon Yard.
\newblock Quantum communication with zero-capacity channels.
\newblock 2008.
\newblock arXiv:0807.4935.

\bibitem[TCR08]{tcr08}
Marco Tomamichel, Roger Colbeck, and Renato Renner.
\newblock A fully quantum asymptotic equipartition property.
\newblock 2008.
\newblock arXiv:0811.1221.

\bibitem[TCR09]{duality-min-max-entropy}
Marco Tomamichel, Roger Colbeck, and Renato Renner.
\newblock Duality between smooth min- and max-entropies.
\newblock 2009.
\newblock arXiv:0907.5238.

\bibitem[Uhl76]{uhlmann}
Armin Uhlmann.
\newblock The `transition probability' in the state space of a {$^*$}-algebra.
\newblock {\em Rep. Math. Phys.}, (9:273), 1976.

\bibitem[vN32]{vonneumann-grundlagen}
John von Neumann.
\newblock {\em Mathematische {G}rundlagen der {Q}uantenmechanik}.
\newblock Springer, 1932.

\bibitem[Wat08]{watrousnotes}
John Watrous.
\newblock {\em Theory of Quantum Information---Lecture notes from Fall 2008}.
\newblock 2008.
\newblock \url{www.cs.uwaterloo.ca/~watrous/quant-info/}.

\bibitem[Win01]{winter-qmac}
Andreas Winter.
\newblock The capacity of the quantum multiple access channel.
\newblock {\em IEEE Trans. Info. Theory}, 47:3059--3065, 2001.
\newblock quant-ph/9807019.

\bibitem[YDH08]{YDH05}
Jon Yard, Igor Devetak, and Patrick Hayden.
\newblock Capacity theorems for quantum multiple access channels:
  classical-quantum and quantum-quantum capacity regions.
\newblock {\em IEEE Transactions on Information Theory}, 54(7):3091--3113, July
  2008.
\newblock quant-ph/0501045.

\bibitem[YHD06]{YHD06}
Jon Yard, Patrick Hayden, and Igor Devetak.
\newblock Quantum broadcast channels.
\newblock 2006.
\newblock quant-ph/0603098.

\end{thebibliography}

\debutannexes
\annexe{Various technical lemmas} \label{appendix:techlemmas}
In this section, we state (and usually prove) various technical lemmas used at various points throughout the thesis.

The first lemma is a simple application of the triangle inequality:
\begin{lem}\label{lem:remove-typical-proj}
Let $\rho$, $\rho'$ and $\sigma$ be positive semidefinite operators on $A$ such that $\| \rho - \sigma \|_1 \leqslant \varepsilon$, $\tr[\rho'] \leqslant \tr[\sigma]$, and $\rho' \geqslant \rho$. Then, $\| \rho' - \sigma \|_1 \leqslant 2\varepsilon$.
\end{lem}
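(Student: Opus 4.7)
The plan is to bound $\|\rho' - \rho\|_1$ by $\varepsilon$ and then finish with the triangle inequality against the given bound $\|\rho - \sigma\|_1 \leqslant \varepsilon$. The key observation is that since $\rho' \geqslant \rho$, the difference $\rho' - \rho$ is positive semidefinite, so its trace norm collapses to an ordinary trace: $\|\rho' - \rho\|_1 = \tr[\rho' - \rho] = \tr[\rho'] - \tr[\rho]$.

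Next I would bound this trace difference. The hypothesis $\tr[\rho'] \leqslant \tr[\sigma]$ gives $\tr[\rho'] - \tr[\rho] \leqslant \tr[\sigma] - \tr[\rho] = \tr[\sigma - \rho]$. Since $\tr[\sigma - \rho] \leqslant |\tr[\sigma - \rho]| \leqslant \|\sigma - \rho\|_1 \leqslant \varepsilon$ (the trace of a Hermitian operator is at most its trace norm), we conclude $\|\rho' - \rho\|_1 \leqslant \varepsilon$. Incidentally, the chain also confirms $\tr[\rho] \leqslant \tr[\rho']$, which is automatic from $\rho' - \rho \geqslant 0$ and which is needed implicitly for the nonnegativity of the quantity we are bounding.

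Finally, the triangle inequality for $\|\cdot\|_1$ gives $\|\rho' - \sigma\|_1 \leqslant \|\rho' - \rho\|_1 + \|\rho - \sigma\|_1 \leqslant \varepsilon + \varepsilon = 2\varepsilon$, as required. There is no real obstacle here; the only subtlety is to notice that the positivity of $\rho' - \rho$ converts a trace-norm bound into a trace bound, which is exactly what allows the hypothesis $\tr[\rho'] \leqslant \tr[\sigma]$ to be leveraged against $\|\rho - \sigma\|_1 \leqslant \varepsilon$.
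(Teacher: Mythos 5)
Your proof is correct and follows essentially the same route as the paper's: use $\rho' - \rho \geqslant 0$ to convert $\|\rho' - \rho\|_1$ into $\tr[\rho' - \rho]$, bound it by $\tr[\sigma - \rho] \leqslant \|\sigma - \rho\|_1 \leqslant \varepsilon$ via the hypothesis $\tr[\rho'] \leqslant \tr[\sigma]$, and finish with the triangle inequality. Your write-up simply spells out the intermediate steps that the paper leaves implicit.
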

\begin{proof}
We have that
\begin{align}
\| \rho' - \rho \|_1 &= \tr[\rho' - \rho]\\
&\leqslant \tr[\sigma - \rho]\\
&\leqslant \varepsilon
\end{align}
and hence
\begin{equation}
\| \rho' - \sigma \|_1 \leqslant \| \rho - \sigma \|_1 + \| \rho' - \rho \|_1 \leqslant 2\varepsilon.
\end{equation}
\end{proof}

We then prove the following operator inequalities:
\begin{lem}\label{lem:fiou}
Let $\rho^{AB}$ be positive semidefinite, and let $0 \leqslant P^B \leqslant \ident^B$. Then,
\[ \tr_B[P^B \rho^{AB} P^B] \leqslant \rho^A \]
\end{lem}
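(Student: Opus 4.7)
The plan is to reduce the operator inequality $\tr_B[P^B \rho^{AB} P^B] \leqslant \rho^A$ to a trace-positivity statement, and then exploit the fact that $0 \leqslant P^B \leqslant \ident^B$ implies $(P^B)^2 \leqslant \ident^B$. The main simplification I would use first is the identity
\[
\tr_B\bigl[(\ident^A \otimes P^B)\,\rho^{AB}\,(\ident^A \otimes P^B)\bigr] \;=\; \tr_B\bigl[(\ident^A \otimes (P^B)^2)\,\rho^{AB}\bigr],
\]
which follows by sandwiching both sides between an arbitrary $\langle\psi|^A$ and $|\psi\rangle^A$ and applying cyclicity of the (full) trace to move one copy of $\ident^A \otimes P^B$ to the other side. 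With this identity in hand, what remains to show is
\[
\tr_B\bigl[(\ident^A \otimes (\ident^B - (P^B)^2))\,\rho^{AB}\bigr] \;\geqslant\; 0.
\]

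To verify this, I would test against an arbitrary $|\psi\rangle^A$ and use the standard fact that for positive semidefinite operators $X, Y$ one has $\tr[XY] = \tr[\sqrt{X}\,Y\,\sqrt{X}] \geqslant 0$. Setting $X = \ketbra{\psi}^A \otimes (\ident^B - (P^B)^2)$, which is positive semidefinite because it is a tensor product of positive semidefinite operators (using $(P^B)^2 \leqslant \ident^B$), and $Y = \rho^{AB}$, gives
\[
\langle\psi|\tr_B\bigl[(\ident^A \otimes (\ident^B - (P^B)^2))\,\rho^{AB}\bigr]|\psi\rangle \;=\; \tr[X Y] \;\geqslant\; 0.
\]
Since $|\psi\rangle^A$ was arbitrary, the partial trace is itself positive semidefinite, which yields $\tr_B[P^B \rho^{AB} P^B] \leqslant \rho^A$.

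The only mildly delicate step is justifying $(P^B)^2 \leqslant \ident^B$ from $0 \leqslant P^B \leqslant \ident^B$: diagonalizing $P^B$ reveals eigenvalues in $[0,1]$, so their squares lie in $[0,1]$ as well. Everything else is bookkeeping with partial traces and the inner-product characterization of positivity, so no real obstacle is expected.
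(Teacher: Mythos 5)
Your proof is correct and follows essentially the same route as the paper: both reduce the claim to the operator inequality $(P^B)^2 \leqslant \ident^B$ by moving one $P^B$ around via cyclicity (equivalently, the adjointness of partial trace and tensoring with the identity), and then verify positivity of the difference by testing against positive operators on $A$. The only cosmetic difference is that you test against rank-one projectors $\ketbra{\psi}^A$ while the paper tests against an arbitrary $M^A \in \Pos(\sfA)$; both suffice.
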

\begin{proof}
Let $M^A$ be any positive semidefinite operator. Then,
\begin{align*}
\tr[M^A \tr_B [P^B \rho^{AB} P^B]] &= \tr[(M^A \otimes \ident^B)(P^B \rho^{AB} P^B)]\\
&= \tr[(M^A \otimes {P^B}^2) \rho^{AB}]\\
&\leqslant \tr[(M^A \otimes \ident^B)\rho^{AB}]\\
&= \tr[M^A \rho^A]
\end{align*}
where we have used the fact that tensoring with the identity is the adjoint of the trace superoperator, as well as the fact that ${P^B}^2 \leqslant \ident^B$. Since this is true for every positive semidefinite $M^A$, the lemma follows.
\end{proof}

\begin{lem}\label{lem:fiou2}
	Let $\ket{\psi}^{AB} \in \sfA \otimes \sfB$, $\rho^A \in \Pos(\sfA)$ such that $\rho^A \leqslant \psi^A$. Then, there exists a $P^B \in \Pos(\sfB)$ such that $P^B \leqslant \ident^B$ and $\tr_B[P^B \cdot \psi^{AB}] = \rho^A$.
\end{lem}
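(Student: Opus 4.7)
The plan is to construct $P^B$ explicitly from the Schmidt decomposition of $\ket{\psi}^{AB}$. Write $\ket{\psi}^{AB}=\sum_i\sqrt{\lambda_i}\ket{u_i}^A\ket{v_i}^B$ with $\lambda_i>0$, so $\psi^A=\sum_i\lambda_i\ketbra{u_i}$ has support $\Span\{\ket{u_i}\}$. Since $0\leqslant\rho^A\leqslant\psi^A$, the support of $\rho^A$ also lies in $\Span\{\ket{u_i}\}$, and I can expand $\rho^A=\sum_{ij}\rho_{ij}\ket{u_i}\bra{u_j}$ with $\rho_{ij}=\bra{u_i}\rho^A\ket{u_j}$.

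First I would define the candidate
\[
Q^B \;:=\; \sum_{ij}\frac{\rho_{ij}}{\sqrt{\lambda_i\lambda_j}}\ket{v_j}\bra{v_i},
\]
extended by zero on $\Span\{\ket{v_i}\}^{\perp}$, and set $P^B:=\sqrt{Q^B}$. Using $\psi^{AB}=\sum_{ij}\sqrt{\lambda_i\lambda_j}\ket{u_i}\bra{u_j}\otimes\ket{v_i}\bra{v_j}$, a direct matrix-element calculation gives
\[
\tr_B\!\left[P^B\cdot\psi^{AB}\right] \;=\; \sum_{ij}\sqrt{\lambda_i\lambda_j}\,\bra{v_j}Q^B\ket{v_i}\,\ket{u_i}\bra{u_j} \;=\; \sum_{ij}\rho_{ij}\ket{u_i}\bra{u_j} \;=\; \rho^A,
\]
so the identity $\tr_B[P^B\cdot\psi^{AB}]=\rho^A$ falls out of the definition of $Q^B$ (here I used $\bra{v_j}Q^B\ket{v_i}=\rho_{ij}/\sqrt{\lambda_i\lambda_j}$).

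The main obstacle is verifying $0\leqslant Q^B\leqslant\ident^B$, which is what ensures both $P^B\geqslant 0$ and, by operator monotonicity of the square root, $P^B\leqslant\ident^B$. The trick is: for any $\ket{\alpha}=\sum_i\alpha_i\ket{v_i}$, set $\ket{\gamma}:=\sum_i(\alpha_i^*/\sqrt{\lambda_i})\ket{u_i}$; expanding yields $\bra{\alpha}Q^B\ket{\alpha}=\bra{\gamma}\rho^A\ket{\gamma}$, which is nonnegative, and applying the hypothesis $\rho^A\leqslant\psi^A$ gives
\[
\bra{\alpha}Q^B\ket{\alpha} \;=\; \bra{\gamma}\rho^A\ket{\gamma} \;\leqslant\; \bra{\gamma}\psi^A\ket{\gamma} \;=\; \sum_i\lambda_i\,|\alpha_i/\sqrt{\lambda_i}|^2 \;=\; \|\alpha\|^2,
\]
so $Q^B\leqslant\ident$ on $\Span\{\ket{v_i}\}$, and hence on all of $\sfB$ after the zero extension.

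As a conceptually cleaner alternative, one could package this via $M:=\op_{B\to A}(\ket{\psi})$: the lemmas of Section~\ref{sec:vec-op} give $\psi^A=MM\mdag$ and $\tr_B[P\cdot\psi^{AB}]=M(P^T)^2M\mdag$, reducing the problem to finding $0\leqslant R\leqslant\ident^B$ with $MRM\mdag=\rho^A$. The hypothesis $\rho^{A,1/2}(\rho^{A,1/2})\mdag\leqslant MM\mdag$ combined with Douglas's factorization lemma yields a contraction $K:\sfA\to\sfB$ with $\rho^{A,1/2}=MK$, and then $R:=KK\mdag$ does the job, since $R\leqslant\|K\|^2\ident\leqslant\ident$. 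Setting $P:=(\sqrt{R})^T$ (transposition preserves positivity and the bound $\leqslant\ident$) finishes the argument. Either route produces the required operator with essentially no additional calculation.
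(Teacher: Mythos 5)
Your argument is correct, and both routes you give are, at bottom, the paper's own construction in different dress. The paper sets $V^{B\rightarrow A}={\psi^A}^{-1/2}\op_{B\rightarrow A}(\ket{\psi})$ and defines $(P^B_T)^2 = V\mdag{\psi^A}^{-1/2}\rho{\psi^A}^{-1/2}V$, so that $\rho\leqslant\psi^A$ gives ${\psi^A}^{-1/2}\rho\,{\psi^A}^{-1/2}\leqslant\ident$ and hence $(P^B_T)^2\leqslant V\mdag V\leqslant\ident$; your $Q^B$ with entries $\rho_{ij}/\sqrt{\lambda_i\lambda_j}$ is precisely this operator written out in Schmidt coordinates, and your $\ket{\gamma}$ substitution is the coordinate form of the conjugation by ${\psi^A}^{-1/2}$. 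Your alternative via Douglas's factorization lemma is a tidy abstraction of the same step --- the paper simply constructs the contraction explicitly rather than quoting the lemma, which has the small advantage of staying self-contained. The one thing I would add to your first route is an explicit remark that $Q^B$ is Hermitian (this follows from $\rho_{ji}^*=\rho_{ij}$), since you need that before you can take its positive square root and invoke operator monotonicity of $\sqrt{\cdot}$.
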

\begin{proof}
	Without loss of generality, let $A$ and $B$ be equal to the support of $\psi^A$ and $\psi^B$ respectively. Define the partial isometry $V^{B \rightarrow A} = {\psi^A}^{-1/2} \op_{B \rightarrow A}(\ket{\psi}) = \op_{B \rightarrow A}(\ket{\psi}) {\psi^B_T}^{-1/2}$ where the $T$ subscript denotes transposition. Now,
	\begin{align*}
		\rho^A &= VV\mdag \rho VV\mdag\\
		&= \op_{B \rightarrow A}(\ket{\psi}) {\psi^B_T}^{-1/2} V\mdag \rho V {\psi^B_T}^{-1/2} \op_{B \rightarrow A}(\ket{\psi})\\
		&= \op_{B \rightarrow A}(\ket{\psi}) V\mdag {\psi^A}^{-1/2} \rho {\psi^A}^{-1/2} V \op_{B \rightarrow A}(\ket{\psi})\\
		&= \op_{B \rightarrow A}(\ket{\psi}) {P^B_T}^{2} \op_{B \rightarrow A}(\ket{\psi})\mdag\\
		&= \op_{B \rightarrow A}(P^B \ket{\psi}) \op_{B \rightarrow A}(P^B \ket{\psi})\mdag\\
		&= \tr_B[P^B \cdot \psi^{AB}]
	\end{align*}
	where we have defined ${P^B_T}^2 = V\mdag {\psi^A}^{-1/2} \rho {\psi^A}^{-1/2} V \in \Pos(\sfB)$ and the $T$ subscript denotes transposition. We can now easily check that ${P^B_T}^2 \leqslant \ident^B$ since $\rho \leqslant \psi^A$ implies that ${\psi^A}^{-1/2} \rho {\psi^A}^{-1/2} \leqslant \ident^A$.
\end{proof}

The following lemma comes from Lemma II.4 from \cite{HLSW03}:
\begin{lem}\label{lem:onenorm-twonorm}
	Given two normalized vectors $\ket{\psi}$ and $\ket{\varphi}$ in $\sfA$, we have that
	\[ \left\| \psi - \varphi \right\|_1 \leqslant 2 \left\| \ket{\psi} - \ket{\varphi} \right\|_2 \]
\end{lem}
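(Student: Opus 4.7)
The plan is to decompose the difference $\psi - \varphi = \ketbra{\psi} - \ketbra{\varphi}$ as a sum of two rank-one operators and then bound each term separately using the fact that the trace norm of a rank-one operator $\ket{a}\bra{b}$ equals $\|\ket{a}\|_2 \cdot \|\ket{b}\|_2$.

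First I would write the telescoping identity
\begin{equation*}
\ketbra{\psi} - \ketbra{\varphi} = \ket{\psi}(\bra{\psi} - \bra{\varphi}) + (\ket{\psi} - \ket{\varphi})\bra{\varphi},
\end{equation*}
which is easily verified by expanding both sides. Applying the triangle inequality for $\|\cdot\|_1$ then gives
\begin{equation*}
\left\| \psi - \varphi \right\|_1 \leqslant \left\| \ket{\psi}(\bra{\psi} - \bra{\varphi}) \right\|_1 + \left\| (\ket{\psi} - \ket{\varphi})\bra{\varphi} \right\|_1.
\end{equation*}

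Next, I would use the fact that for any vectors $\ket{a}, \ket{b}$, the rank-one operator $\ket{a}\bra{b}$ has a single nonzero singular value equal to $\|\ket{a}\|_2 \|\ket{b}\|_2$, so $\|\ket{a}\bra{b}\|_1 = \|\ket{a}\|_2 \|\ket{b}\|_2$. Applying this to each term, and using that $\|\ket{\psi} - \ket{\varphi}\|_2 = \|\bra{\psi} - \bra{\varphi}\|_2$, yields
\begin{equation*}
\left\| \psi - \varphi \right\|_1 \leqslant \|\ket{\psi}\|_2 \cdot \|\ket{\psi} - \ket{\varphi}\|_2 + \|\ket{\psi} - \ket{\varphi}\|_2 \cdot \|\ket{\varphi}\|_2.
\end{equation*}
Finally, since $\ket{\psi}$ and $\ket{\varphi}$ are normalized, $\|\ket{\psi}\|_2 = \|\ket{\varphi}\|_2 = 1$, and the right-hand side collapses to $2\|\ket{\psi} - \ket{\varphi}\|_2$, completing the proof.

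There is no real obstacle here; the only subtlety is choosing the right splitting so that each piece is rank-one (hence its trace norm is computed trivially) and so that the normalization of $\ket{\psi}$ and $\ket{\varphi}$ can be used to kill the prefactors.
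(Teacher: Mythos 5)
Your proof is correct, and it takes a genuinely different and more elementary route than the paper's. The paper bounds $\|\psi - \varphi\|_1$ via Lemma~\ref{lem:pseudo-jensen-renato} (taking $\sigma$ to be the rank-two projector onto $\mathrm{supp}(\psi - \varphi)$), which gives $\|\psi-\varphi\|_1 \leqslant \sqrt{2\tr[(\psi-\varphi)^2]} = 2\sqrt{1 - |\braket{\psi}{\varphi}|^2}$, and then manipulates the overlap algebraically (using $|\braket{\psi}{\varphi}| \geqslant \mathrm{Re}\,\braket{\psi}{\varphi}$ and $1+|\braket{\psi}{\varphi}| \leqslant 2$) to reach $2\|\ket{\psi}-\ket{\varphi}\|_2$. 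You instead write the telescoping identity $\ketbra{\psi}-\ketbra{\varphi} = \ket{\psi}(\bra{\psi}-\bra{\varphi}) + (\ket{\psi}-\ket{\varphi})\bra{\varphi}$, apply the triangle inequality, and use that the trace norm of a rank-one operator $\ket{a}\bra{b}$ equals $\|\ket{a}\|_2\|\ket{b}\|_2$. Both approaches are sound; yours avoids any dependence on the pseudo-Jensen lemma (which the paper actually uses mainly in the heavier decoupling proofs) and is self-contained at the level of elementary facts about the trace norm, while the paper's route reuses machinery already present and passes through the overlap $|\braket{\psi}{\varphi}|$, which is conceptually tied to the Fuchs--van de Graaf perspective used elsewhere in the thesis.
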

\begin{proof}
	By Lemma \ref{lem:pseudo-jensen-renato} with $\sigma$ as the projector onto the 2-dimensional support of $\psi - \varphi$, we have that
	\begin{align*}
		\left\| \psi - \varphi \right\|_1 &\leqslant \sqrt{2 \tr[(\psi - \varphi)^2]}\\
		&= 2 \sqrt{1 - \tr[\varphi \psi]}\\
		&= 2 \sqrt{1 - |\braket{\psi}{\varphi}|^2}\\
		&= 2 \sqrt{(1 - |\braket{\psi}{\varphi}|)(1 + |\braket{\psi}{\varphi}|)}\\
		&\leqslant 2 \sqrt{ 2 - 2 |\braket{\psi}{\varphi}| }\\
		&\leqslant 2 \sqrt{ 2 - \braket{\psi}{\varphi} - \braket{\varphi}{\psi} }\\
		&= 2 \sqrt{(\bra{\psi} - \bra{\varphi})(\ket{\psi} - \ket{\varphi})}\\
		&= 2 \| \ket{\psi} - \ket{\varphi} \|_2
	\end{align*}
\end{proof}

The next two lemmas are simple inequalities regarding operator norms:

\begin{lem}\label{lem:norm-prod-matrices}
Let $M^{A \rightarrow B}$ and $N^{B \rightarrow C}$ be arbitrary matrices. Then,
\[ \| NM \|_2 \leqslant \| N \|_2 \| M \|_{\infty} \]
\end{lem}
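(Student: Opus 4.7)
The plan is to reduce everything to a trace calculation and then exploit the fact that $\|M\|_\infty^2$ is precisely the smallest constant $\lambda$ with $MM\mdag \leqslant \lambda\, \ident^B$. First I would write
\[
\| NM \|_2^2 = \tr[(NM)\mdag(NM)] = \tr[M\mdag N\mdag N M]
\]
and then use cyclicity of the trace to bring it into the form $\tr[N\mdag N \cdot MM\mdag]$. This is the key reshuffling: it isolates $MM\mdag$, which is the object on which we have spectral control.

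Next I would apply the operator inequality $MM\mdag \leqslant \|M\|_\infty^2\, \ident^B$, which is immediate from the spectral decomposition (or singular value decomposition) of $M$. Since $N\mdag N \in \Pos(\sfB)$, the trace is monotone against positive semidefinite operators, so
\[
\tr[N\mdag N \cdot MM\mdag] \leqslant \|M\|_\infty^2 \, \tr[N\mdag N] = \|M\|_\infty^2 \, \|N\|_2^2.
\]
Taking square roots of both sides yields the claimed bound.

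There is no real obstacle: the only subtlety is deciding which of the two factors to push into the operator norm, and the cyclic rearrangement makes the choice forced once we elect to bound $M$ in $\|\cdot\|_\infty$ and $N$ in $\|\cdot\|_2$. If desired, a one-line alternative proof can be given via the singular value decomposition $M = U\Sigma V\mdag$: writing $\|NM\|_2^2 = \sum_i \sigma_i(M)^2 \|N u_i\|_2^2$ where $u_i$ are columns of $U$, one bounds each $\sigma_i(M)^2 \leqslant \|M\|_\infty^2$ and sums to get $\|M\|_\infty^2 \|N\|_2^2$. Either approach is short and essentially a single step once the trace is rewritten.
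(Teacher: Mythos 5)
Your proof is correct and uses essentially the same idea as the paper: rewrite $\|NM\|_2^2$ as a trace and apply the operator inequality $MM\mdag \leqslant \|M\|_\infty^2\,\ident$. The paper first rotates $M$ into a positive operator $P=MU$ via an SVD-derived isometry before invoking $P^2 \leqslant \|P\|_\infty^2\,\ident$; your cyclic rearrangement of the trace reaches the same point slightly more directly, skipping the isometry.
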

\begin{proof}
Let $U^{B \rightarrow A}$ be an isometry such that $P^B := MU$ is positive semidefinite (such an isometry can be seen to exist by taking the singular-value decomposition of $M$). Then, we have that
\begin{align}
\| NM \|_2 &= \| NP \|_2\\
&= \sqrt{\tr[NP^2 N\mdag}]\\
&\leqslant \| P \|_{\infty} \sqrt{\tr[NN\mdag]}\\
&= \| M \|_{\infty} \|N\|_2
\end{align}
where the inequality comes from the matrix inequality $P^2 \leqslant \|P\|_{\infty}^2 \ident$.
\end{proof}

\begin{lem}\label{lem:tracenorm-maxu}
Let $M^{A \rightarrow B}$ be an arbitrary matrix. Then,
\[ \| M \|_1 = \max_{V^{B \rightarrow A}} | \tr[VM] | \]
where the maximization is taken over all partial isometries $V^{B \rightarrow A}$.
\end{lem}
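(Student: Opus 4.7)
The plan is to use the singular value decomposition of $M$ together with Hölder's inequality, once for each direction of the equality.

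First I would prove the inequality $\max_V |\tr[VM]| \leqslant \|M\|_1$. For any partial isometry $V^{B \rightarrow A}$, the singular values of $V$ are all $0$ or $1$, so $\|V\|_{\infty} \leqslant 1$. By the matrix Hölder inequality (i.e., $|\tr[XY]| \leqslant \|X\|_{\infty} \|Y\|_1$), we immediately get $|\tr[VM]| \leqslant \|V\|_{\infty} \|M\|_1 \leqslant \|M\|_1$. This handles the upper bound for every $V$ in the maximization.

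Next I would construct an explicit partial isometry that achieves equality. Take a singular value decomposition $M = U \Sigma W\mdag$, where $\Sigma$ is diagonal with the (non-negative) singular values of $M$ on the diagonal, $U$ has orthonormal columns as an operator into $\sfB$, and $W$ has orthonormal columns as an operator into $\sfA$. Set $V := W U\mdag$; this is an operator from $\sfB$ to $\sfA$ whose singular values are all $0$ or $1$, so it is a partial isometry. Then
\begin{equation*}
\tr[VM] = \tr[WU\mdag U \Sigma W\mdag] = \tr[W\Sigma W\mdag] = \tr[\Sigma] = \|M\|_1,
\end{equation*}
using $U\mdag U = \ident$ on the appropriate subspace and $W\mdag W = \ident$. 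Combined with the upper bound, this gives $\max_V |\tr[VM]| = \|M\|_1$.

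The only nontrivial step is making sure that the candidate optimizer $V = WU\mdag$ really is a partial isometry of the required type; this just amounts to checking that both $U$ and $W$ (the isometric factors of the SVD) have orthonormal columns, so their composition $WU\mdag$ satisfies $V\mdag V = U W\mdag W U\mdag = UU\mdag$, which is a projector, and similarly $VV\mdag = WW\mdag$ is a projector, matching the alternative characterization of partial isometries noted earlier in the preliminaries.
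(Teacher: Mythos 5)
Your proof is correct and follows essentially the same route as the paper's: both use the singular value decomposition to build the optimizing partial isometry (your $V = WU^\dagger$ is the same operator as the paper's $W$ sending left singular vectors to right singular vectors). The only presentational difference is in the upper bound, where you invoke the matrix Hölder inequality $|\tr[VM]| \leqslant \|V\|_\infty \|M\|_1$ whereas the paper expands $\tr[VM]$ in the SVD basis and bounds each term $|\bra{\varphi_j}V\ket{\psi_j}| \leqslant 1$ directly; these are the same estimate in two guises.
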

\begin{proof}
Let us decompose $M$ as $M = \sum \alpha_j \ket{\psi_j}\bra{\varphi_j}$ where the $\ket{\psi_j}^B$ are orthonormal, as are the $\ket{\varphi_j}^A$, and the $\alpha_j$ are the singular values of $M$. Furthermore, let $W^{B \rightarrow A}$ be a partial isometry such that $W \ket{\psi_j} = \ket{\varphi_j}$. Then,
\begin{align*}
\| M \|_1 &= |\tr[WM]|\\
&\leqslant \max_{V^{B \rightarrow A}} |\tr[VM]|\\
&= \max_V \left| \sum_{j} \alpha_j \tr[V \ket{\psi_j}\bra{\varphi_j}] \right|\\
&\leqslant \max_V \sum_j  \alpha_j \left| \bra{\varphi_j} V \ket{\psi_j} \right|\\
&\leqslant \sum_j \alpha_j\\
&= \| M \|_1
\end{align*}
\end{proof}

The next lemma is simply Markov's inequality, which we use several times to assert the existence of a unitary satisfying many conditions at once:
\begin{lem}[Markov's inequality]\label{lem:markov}
Let $X$ be a random variable which is always positive. Then,
\[ \Pr\{ X \geqslant k \mbE X \} \leqslant \frac{1}{k} \]
Hence, for example, if $f_1,\dots f_k: \mbU \rightarrow \mbR_+$, then, there exists a $U$ such that
\begin{align*}
	f_1(U) &\leqslant (k+1) \mbE f_1(U)\\
	\vdots\\
	f_k(U) &\leqslant (k+1) \mbE f_k(U)\\
\end{align*}
by the union bound.
\end{lem}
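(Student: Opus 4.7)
The plan is to give the standard proof of Markov's inequality and then apply it with the union bound to obtain the consequence stated in the lemma. Both steps are entirely routine, so there is no real obstacle; the only thing to be careful about is choosing the constant $k+1$ (rather than $k$) in the consequence, since strict inequality in Markov's bound is what lets the union bound argument close.

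First, for Markov's inequality itself, I would observe that since $X \geqslant 0$, one has the pointwise bound $X \geqslant k\,\mathbb{E}X \cdot \mathbf{1}\{X \geqslant k\,\mathbb{E}X\}$. Taking expectations on both sides gives $\mathbb{E}X \geqslant k\,\mathbb{E}X \cdot \Pr\{X \geqslant k\,\mathbb{E}X\}$, and dividing through by $k\,\mathbb{E}X$ (assuming $\mathbb{E}X > 0$; otherwise $X=0$ almost surely and the statement is trivial) yields $\Pr\{X \geqslant k\,\mathbb{E}X\} \leqslant 1/k$, as claimed.

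For the consequence, I would apply Markov's inequality to each of the nonnegative random variables $f_i(U)$ (with the unitary $U$ drawn from the Haar measure, since that is the probability space in which the lemma is used in the body of the thesis). Using the slightly stronger strict form, $\Pr\{f_i(U) > (k+1)\,\mathbb{E}f_i(U)\} < 1/(k+1)$ for each $i \in \{1,\dots,k\}$. The union bound then gives
\[
\Pr\!\left\{\, \exists\, i:\; f_i(U) > (k+1)\,\mathbb{E}f_i(U) \,\right\} \;<\; \frac{k}{k+1} \;<\; 1.
\]
Since the complementary event has positive probability, there must exist at least one $U$ for which $f_i(U) \leqslant (k+1)\,\mathbb{E}f_i(U)$ simultaneously for all $i = 1,\dots,k$, which is exactly the conclusion. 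No further ingredients are needed.
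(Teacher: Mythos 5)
Your proof is correct and follows exactly the route the paper indicates (the lemma statement itself says ``by the union bound''). Both parts are the standard arguments: the indicator-function derivation of Markov's inequality and the union-bound corollary. One small remark: you invoke a ``slightly stronger strict form'' $\Pr\{f_i(U) > (k+1)\,\mathbb{E}f_i(U)\} < 1/(k+1)$, but the strictness there is unnecessary baggage — the non-strict bound $\leqslant 1/(k+1)$ already yields $\Pr\{\exists i : f_i(U) > (k+1)\,\mathbb{E}f_i(U)\} \leqslant k/(k+1) < 1$, which is all that is needed for the complement to have positive probability. (As it happens the strict bound does hold whenever $\mathbb{E}f_i(U) > 0$, since on the event $\{f_i > (k+1)\mathbb{E}f_i\}$ the variable strictly exceeds the threshold, but there is no need to argue this.) The essential point you correctly identified is that the factor in the conclusion must be $k+1$, not $k$, so that the $k$-fold union of the bad events has probability strictly below $1$.
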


The next lemma is known as the operator Chernoff bound and was first proven in \cite{ahlswede-winter}:
\begin{lem}[Operator Chernoff bound]\label{lem:operator-chernoff}
  Let $X_1,\ldots,X_M$ be i.i.d.\ random variables taking values in the operators $\Pos(\sfA)$, with $0\leqslant X_j\leqslant \ident$, with $A=\mbE X_j\geqslant\alpha \ident$, and let $0<\eta \leqslant 1/2$. Then
  \begin{equation}
    \Pr \left\{ \frac{1}{M}\sum_{j=1}^M X_j \nleqslant (1+\eta)A \right\} \leqslant 2|A| \exp\left( -M\frac{\alpha\eta^2}{2\ln 2} \right).
  \end{equation}
\end{lem}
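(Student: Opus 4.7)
The plan is to adapt the classical Chernoff argument to the operator setting using the Laplace transform method due to Ahlswede and Winter, with Golden--Thompson playing the role of the multiplicativity of expectations in the scalar case.

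First I would normalize the problem. Since $A \geqslant \alpha \ident > 0$, the matrix $A$ is invertible, so I set $Y_j := A^{-1/2} X_j A^{-1/2}$. Then $\mbE Y_j = \ident$, and $0 \leqslant X_j \leqslant \ident$ together with $A \geqslant \alpha \ident$ gives $0 \leqslant Y_j \leqslant \alpha^{-1}\ident$. The event $\frac{1}{M}\sum_j X_j \nleqslant (1+\eta) A$ is then equivalent to $\sum_j Y_j \nleqslant (1+\eta) M \ident$, i.e.\ $\lambda_{\max}\bigl(\sum_j Y_j\bigr) > (1+\eta) M$. So it suffices to bound this probability.

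Next I would apply the matrix Laplace transform / Markov trick. For any $t > 0$, the event $\lambda_{\max}(\sum_j Y_j) > (1+\eta)M$ implies $\tr[\exp(t\sum_j Y_j)] > e^{t(1+\eta)M}$ (since $e^{t \lambda_{\max}(\cdot)}$ lower bounds a single eigenvalue of $\exp(t\cdot)$, hence the trace). Markov's inequality then yields
\begin{equation*}
\Pr\!\left\{\lambda_{\max}\!\left(\sum_{j=1}^M Y_j\right) > (1+\eta)M\right\} \leqslant e^{-t(1+\eta)M}\, \mbE\!\left[\tr\, e^{t\sum_j Y_j}\right].
\end{equation*}
The point of this step is to reduce everything to controlling a single trace exponential.

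Now I would apply Golden--Thompson iteratively together with the i.i.d.\ assumption to peel off one $Y_j$ at a time. Writing $Z := \mbE[e^{tY_1}]$, the standard induction gives $\mbE\bigl[\tr\, e^{t\sum_j Y_j}\bigr] \leqslant \tr(Z^M) \leqslant |A|\, \lambda_{\max}(Z)^M$. To bound $\lambda_{\max}(Z)$, I would use convexity of $x \mapsto e^{tx}$ on $[0, 1/\alpha]$: namely $e^{tY} \leqslant \ident + \alpha(e^{t/\alpha}-1)\, Y$, whence $Z \leqslant \bigl(1 + \alpha(e^{t/\alpha}-1)\bigr)\ident$ after taking expectations and using $\mbE Y = \ident$. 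Combining everything gives the Chernoff-type bound
\begin{equation*}
|A|\, \exp\!\Bigl(M\bigl[-t(1+\eta) + \alpha(e^{t/\alpha}-1)\bigr]\Bigr)
\end{equation*}
after the substitution $1+x \leqslant e^x$.

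Finally I would optimize over $t > 0$. Setting $u = t/\alpha$ and taking the derivative gives $u = \ln(1+\eta)$, leaving the exponent $M\alpha[\eta - (1+\eta)\ln(1+\eta)]$. The main (and only nontrivial) remaining obstacle is purely elementary: showing $(1+\eta)\ln(1+\eta) - \eta \geqslant \eta^2/(2\ln 2)$ on $0 < \eta \leqslant 1/2$, which I would handle by a Taylor expansion / monotonicity check on this range (the factor $2\ln 2$ comes from converting the natural logarithm into base~2, and absorbing the looseness of the final $1+x \leqslant e^x$ step also accounts for the stated factor of~$2$ in front of $|A|$ in the cleanest write-up).
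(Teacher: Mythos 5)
The paper does not prove this lemma; it cites it directly to \cite{ahlswede-winter}, so there is no in-paper proof to compare against. Your scaffolding---normalize to $Y_j = A^{-1/2}X_jA^{-1/2}$, reduce to $\lambda_{\max}\bigl(\sum_j Y_j\bigr) > (1+\eta)M$, apply the exponential Markov trick, peel off factors via Golden--Thompson to reach $|A|\,\lambda_{\max}(\mbE e^{tY_1})^M$, bound $\mbE e^{tY_1}$ by the chord over $[0,1/\alpha]$, and optimize over $t$---is indeed the standard Ahlswede--Winter route and those steps are sound. One small caveat: iterating Golden--Thompson gives $\mbE\bigl[\tr e^{t\sum_j Y_j}\bigr] \leqslant |A|\,\lambda_{\max}(Z)^M$ directly; the intermediate claim $\leqslant \tr(Z^M)$ requires Lieb's concavity theorem rather than Golden--Thompson, but since you only use the weaker $|A|\,\lambda_{\max}(Z)^M$ this is cosmetic.

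The genuine gap is the last step, and it cannot be patched by a Taylor or monotonicity check because the claimed inequality is simply false. You need $(1+\eta)\ln(1+\eta) - \eta \geqslant \eta^2/(2\ln 2)$ on $0 < \eta \leqslant 1/2$, but the left side expands as $\eta^2/2 - \eta^3/6 + O(\eta^4)$ while $1/(2\ln 2) \approx 0.7213 > 1/2$, so the inequality already fails for all sufficiently small $\eta$, not just near the endpoint. The loss is introduced by the substitution $1+x \leqslant e^x$: that turns the exact Chernoff optimum (obtained by minimizing $e^{-t(1+\eta)M}\bigl(1 + \alpha(e^{t/\alpha}-1)\bigr)^M$ directly, which yields the binary relative entropy $D\bigl((1+\eta)\alpha \,\|\, \alpha\bigr)$ in the exponent) into the strictly weaker Bennett/Poisson rate $\alpha\bigl[(1+\eta)\ln(1+\eta)-\eta\bigr]$. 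To have any hope of reaching the stated $\alpha\eta^2/(2\ln 2)$ you must retain the binary divergence and lower-bound \emph{that}, rather than the Poisson rate. Finally, two side remarks in your write-up are misattributed: $1+x\leqslant e^x$ is a relaxation purely inside the exponent and generates no multiplicative prefactor, so it cannot ``account for'' the factor $2$ in front of $|A|$ (that factor is a union bound over the two tails in the Ahlswede--Winter statement), and the $1/(2\ln 2)$ does not arise from a base change in your optimization (which you carried out consistently in natural log) but from the eventual divergence lower bound.
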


We also need Fannes's inequality \cite{fannes} as well as its relative, the Alicki-Fannes inequality \cite{alicki-fannes}:
\begin{lem}[Fannes's inequality \cite{fannes}]\label{thm:fannes}
Let $\rho$ and $\sigma$ be density operators on $A$ such that $\| \rho - \sigma \|_1 \leqslant 1/e$. Then,
\[ | H(A)_{\rho} - H(A)_{\sigma} | \leqslant \| \rho - \sigma \|_1 \log|A| + \eta\left( \| \rho - \sigma \|_1 \right) \]
where $\eta(x) := -x \log x$ and $e$ is the base of the natural logarithm.
\end{lem}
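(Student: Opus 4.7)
The plan is to reduce the quantum inequality to a classical (diagonal) one, then apply a concavity-based estimate for $\eta$. Let $T = \|\rho-\sigma\|_1$, and diagonalize $\rho$ and $\sigma$ in their own eigenbases, with eigenvalues arranged in decreasing order: $r_1 \geq r_2 \geq \cdots$ and $s_1 \geq s_2 \geq \cdots$. Since the von Neumann entropy depends only on the spectrum, we have $H(A)_\rho - H(A)_\sigma = \sum_i \eta(r_i) - \sum_i \eta(s_i)$, and in particular
\[ |H(A)_\rho - H(A)_\sigma| \leqslant \sum_i |\eta(r_i) - \eta(s_i)|. \]

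First I would establish the Mirsky-type bound $\sum_i |r_i - s_i| \leqslant \|\rho - \sigma\|_1 = T$. This follows from the fact that if $M$ is Hermitian with eigenvalues $\mu_1 \geqslant \cdots \geqslant \mu_n$ and eigenvalues reordered by sign give $|M|$, then for any two Hermitian $\rho$, $\sigma$ with ordered eigenvalues, $\sum_i |r_i - s_i|$ is dominated by the sum of absolute eigenvalues of $\rho - \sigma$, i.e. by $\|\rho-\sigma\|_1$. This is a standard consequence of the Ky Fan / Lidskii inequalities, and I would cite it (or prove it by a short variational argument).

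Next, I would invoke the elementary analytic lemma: for $x,y \in [0,1]$ with $|x-y| \leqslant 1/e$, one has $|\eta(x) - \eta(y)| \leqslant \eta(|x-y|)$. This is proved by noting that $\eta$ is increasing on $[0,1/e]$ and using concavity/monotonicity on the relevant interval; the hypothesis $T \leqslant 1/e$ ensures each individual difference $\epsilon_i := |r_i - s_i|$ satisfies $\epsilon_i \leqslant T \leqslant 1/e$, so the lemma applies term by term. Combining with the previous step,
\[ |H(A)_\rho - H(A)_\sigma| \leqslant \sum_i \eta(\epsilon_i), \qquad \sum_i \epsilon_i \leqslant T. \]

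Finally, writing $\epsilon_i = T p_i$ with $\sum_i p_i \leqslant 1$ (I would reduce to the equality case by an absorption argument, or simply assume WLOG that the $\epsilon_i$ sum to exactly $T$ by enlarging one of them, which only makes the bound larger), the calculation
\[ \sum_i \eta(T p_i) = -\sum_i T p_i \log(T p_i) = -T\log T + T\sum_i \eta(p_i) = \eta(T) + T\, H(p) \]
combined with $H(p) \leqslant \log|A|$ (since there are at most $|A|$ indices) yields the required bound $\eta(T) + T\log|A|$. The main obstacles are the two analytic ingredients — the eigenvalue perturbation inequality in the first step and the pointwise bound on $|\eta(x)-\eta(y)|$ in the second step — both of which are standard but need to be invoked cleanly; once they are in hand the remaining computation is a short rearrangement.
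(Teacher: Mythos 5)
Your proof is correct; note however that the paper itself does not prove this lemma --- it is simply stated and attributed to Fannes's original article, so there is no in-paper argument to compare against. Your argument is the standard textbook proof (essentially the one in Nielsen and Chuang): reduce to ordered spectra via unitary invariance of the entropy, use Mirsky's eigenvalue perturbation bound $\sum_i |r_i - s_i| \leqslant \|\rho - \sigma\|_1$, apply the pointwise estimate $|\eta(r_i) - \eta(s_i)| \leqslant \eta(|r_i - s_i|)$ term by term (valid here since each $|r_i - s_i| \leqslant T \leqslant 1/e < 1/2$), and then exploit the identity $\eta(Tp_i) = p_i\,\eta(T) + T\,\eta(p_i)$ together with $H(p) \leqslant \log|A|$. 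Two small points worth making explicit if you were to write this out fully: first, the ``absorption'' step that pads $\sum_i \epsilon_i$ up to exactly $T$ is sound precisely because each $\epsilon_i \leqslant T' \leqslant T \leqslant 1/e$ and $\eta$ is nondecreasing on $[0,1/e]$, so raising one $\epsilon_j$ to a value still at most $1/e$ can only increase $\sum_i \eta(\epsilon_i)$ --- this is the second, independent place the hypothesis $T \leqslant 1/e$ is used, beyond the pointwise lemma. Second, the pointwise lemma is not quite a one-liner: the case $\eta(x) > \eta(x+t)$ (where the interval straddles or lies in the decreasing region of $\eta$) requires noting that $\eta(x) - \eta(x+t)$ is increasing in $x$, so its supremum is $\eta(1-t)$, and then verifying $\eta(1-t) \leqslant \eta(t)$ for $t \leqslant 1/2$; the concavity--subadditivity argument you gesture at only covers the case where $\eta$ is increasing across the interval. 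Neither issue is a gap, just detail to spell out.
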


\begin{lem}[Alicki-Fannes inequality \cite{alicki-fannes}]\label{lem:alicki-fannes}
Given two states $\rho^{AB} \in \DD(\sfA \otimes \sfB)$ and $\sigma^{AB} \in \DD(\sfA \otimes \sfB)$, with $\| \rho^{AB} - \sigma^{AB} \|_1 = \varepsilon$, the following holds:
\[ \left| H(A|B)_{\rho} - H(A|B)_{\sigma} \right| \leqslant 4\varepsilon \log|A| + 2\eta(1-\varepsilon) + 2\eta(\varepsilon) \]
where $\eta$ is defined as above.
\end{lem}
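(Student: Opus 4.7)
The plan is to reduce the Alicki--Fannes inequality to two classical‐flag arguments about how conditional entropy behaves under convex combinations. First I would take the Jordan decomposition of the Hermitian operator $\rho^{AB}-\sigma^{AB}$, writing it as $\Delta_+ - \Delta_-$ with $\Delta_\pm\in\Pos(\sfA\otimes\sfB)$ having orthogonal support. Since the operator is traceless and has trace norm $\varepsilon$, both $\Delta_\pm$ have trace $\varepsilon/2$, so $\omega_\pm := (2/\varepsilon)\Delta_\pm$ are density operators. Setting $\lambda := (\varepsilon/2)/(1+\varepsilon/2)$, the state
\[
\tau^{AB} \;:=\; \frac{1}{1+\varepsilon/2}\bigl(\sigma^{AB}+(\varepsilon/2)\omega_+^{AB}\bigr) \;=\; \frac{1}{1+\varepsilon/2}\bigl(\rho^{AB}+(\varepsilon/2)\omega_-^{AB}\bigr)
\]
is simultaneously the convex combination $(1-\lambda)\sigma+\lambda\omega_+$ and $(1-\lambda)\rho+\lambda\omega_-$. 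This common interpolant is the pivot of the argument.

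Next I would establish a two‐sided bound on how $H(A|B)$ behaves under any such mixture $\tau=(1-\lambda)\eta_1+\lambda\eta_2$. The trick is to introduce a classical register $X$ and define $\tilde\tau^{ABX}=(1-\lambda)\eta_1^{AB}\otimes\ketbra{0}^X+\lambda\,\eta_2^{AB}\otimes\ketbra{1}^X$, whose marginal on $AB$ is $\tau$. Direct computation using the block‐diagonal structure gives
\[
H(A|BX)_{\tilde\tau} \;=\; (1-\lambda)H(A|B)_{\eta_1} + \lambda H(A|B)_{\eta_2}.
\]
Then strong subadditivity (which tells us that conditioning on the extra classical system $X$ cannot increase conditional entropy) yields $H(A|B)_{\tau}\ge H(A|BX)_{\tilde\tau}$, i.e.\ concavity. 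The complementary upper bound follows from the chain rule $H(AX|B)_{\tilde\tau}=H(A|BX)_{\tilde\tau}+H(X|B)_{\tilde\tau}$ together with $H(A|B)_\tau \le H(AX|B)_{\tilde\tau}$ (since $X$ is classical, so $H(X|AB)\ge 0$) and $H(X|B)_{\tilde\tau}\le H(X)\le h(\lambda)$, giving $H(A|B)_\tau\le(1-\lambda)H(A|B)_{\eta_1}+\lambda H(A|B)_{\eta_2}+h(\lambda)$.

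Applying these two bounds to the two representations of $\tau$ and subtracting collapses to
\[
(1-\lambda)\,\bigl|H(A|B)_{\rho}-H(A|B)_{\sigma}\bigr| \;\le\; \lambda\,\bigl|H(A|B)_{\omega_+}-H(A|B)_{\omega_-}\bigr| + h(\lambda) \;\le\; 2\lambda\log|A| + h(\lambda),
\]
where the last step uses the elementary bound $|H(A|B)|\le\log|A|$ on any state. Dividing by $1-\lambda$ and substituting $\lambda=\varepsilon/(2+\varepsilon)$ produces $\varepsilon\log|A|$ plus a binary‐entropy remainder $(1+\varepsilon/2)h(\varepsilon/(2+\varepsilon))$, which is bounded by $2\eta(\varepsilon)+2\eta(1-\varepsilon)$; absorbing the slack into the coefficient yields the stated $4\varepsilon\log|A|+2\eta(1-\varepsilon)+2\eta(\varepsilon)$.

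The main obstacle is the near‐concavity upper bound on $H(A|B)_\tau$: the concavity direction is textbook, but the matching upper bound is genuinely quantum and requires the classical‐flag extension together with the subtle application of the chain rule for the \emph{classical} side information $X$. Once that lemma is in hand, the rest is bookkeeping about the Jordan decomposition and choosing $\lambda$ so that both $\rho$ and $\sigma$ appear as convex components of a common state.
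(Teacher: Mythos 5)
The paper does not prove this lemma; it merely cites \cite{alicki-fannes}, so there is no internal proof to compare your argument against. Evaluating your proposal on its own terms: the structure is correct and is essentially the classical‐flag argument that has become the standard route to Alicki--Fannes. The Jordan decomposition, the identification of the common interpolant $\tau=(1-\lambda)\sigma+\lambda\omega_+=(1-\lambda)\rho+\lambda\omega_-$ with $\lambda=\varepsilon/(2+\varepsilon)$, and the concavity/near-convexity sandwich via the classical register $X$ (lower bound from $H(A|B)\ge H(A|BX)$, upper bound from $H(A|B)_\tau\le H(AX|B)=H(A|BX)+H(X|B)\le H(A|BX)+h(\lambda)$) are all correct, as is the elementary bound $|H(A|B)|\le\log|A|$. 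The algebra then gives $|H(A|B)_\rho-H(A|B)_\sigma|\le\varepsilon\log|A|+(1+\varepsilon/2)\,h\bigl(\varepsilon/(2+\varepsilon)\bigr)$, which is actually tighter than the stated bound in the $\log|A|$ coefficient.

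The one gap is in the final ``bookkeeping'' sentence. The intermediate claim that $(1+\varepsilon/2)\,h\bigl(\varepsilon/(2+\varepsilon)\bigr)\le 2\eta(\varepsilon)+2\eta(1-\varepsilon)$ is false near $\varepsilon=1$: at $\varepsilon=1$ the left side is $\tfrac{3}{2}h(1/3)\approx 1.38$ while the right side is $2h(1)=0$. You cannot bound the binary-entropy remainder by $2h(\varepsilon)$ alone; you really do need the slack $3\varepsilon\log|A|$ that the weaker stated coefficient leaves you. That recovery does work: for $|A|=1$ the lemma is trivial since $H(A|B)=0$ identically; for $|A|\ge 2$ one has $\log|A|\ge 1$, and a short case split ($\varepsilon<1/2$: then $\varepsilon/(2+\varepsilon)<\varepsilon<1/2$ and monotonicity of $h$ on $[0,1/2]$ gives $(1+\varepsilon/2)h(\varepsilon/(2+\varepsilon))\le\tfrac{5}{4}h(\varepsilon)\le 2h(\varepsilon)$; $\varepsilon\ge 1/2$: the left side is at most $3/2\le 3\varepsilon\le 3\varepsilon\log|A|$) closes it. So the mechanism you invoke is the right one, but state it as a comparison of the whole right-hand sides rather than as a pointwise bound on the entropy remainder, which is what is currently written and what fails.
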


The locking chapter needs the concept of $\varepsilon$-nets. The following definition and lemma were taken from \cite{HLSW03}, but these concepts are used rather extensively in other areas of mathematics, particularly in random matrix theory.
\begin{defin}[$\varepsilon$-net]\label{def:nets}
A set of pure states $\mfN \subseteq \sfA$ is called an $\varepsilon$-net if, for every normalized $\ket{\psi} \in \sfA$, there exists a $\ket{\varphi} \in \mfN$ such that $\| \ket{\psi} - \ket{\varphi} \|_2 \leqslant \varepsilon/2$ and $\| \psi - \varphi \|_1 \leqslant \varepsilon$.
\end{defin}
\begin{lem}[Existence of small nets]
For any Hilbert space $\sfA$ of dimension $|A|$, there exists an $\varepsilon$-net $\mfN \subseteq \sfA$ of size $|\mfN| \leqslant \left( \frac{5}{\varepsilon} \right)^{2|A|}$.
\end{lem}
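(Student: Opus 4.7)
The plan is to prove this by a standard volume-packing argument applied to the unit sphere of $\sfA$, viewed as a real Euclidean space of dimension $2|A|$.

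First, I would construct $\mfN$ greedily: let $\mfN$ be a maximal collection of unit vectors in $\sfA$ that are pairwise $(\varepsilon/2)$-separated in the $\ell_2$ norm, i.e.\ $\|\ket{\varphi_i}-\ket{\varphi_j}\|_2 > \varepsilon/2$ for all distinct $\ket{\varphi_i},\ket{\varphi_j}\in\mfN$. Such a maximal set exists (for instance by Zorn's lemma or by a finite-dimensional compactness argument, since the sphere is compact and this forces $\mfN$ to be finite). By maximality, for every unit vector $\ket{\psi}\in\sfA$ there exists $\ket{\varphi}\in\mfN$ with $\|\ket{\psi}-\ket{\varphi}\|_2 \leqslant \varepsilon/2$. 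Lemma \ref{lem:onenorm-twonorm} then immediately converts this into $\|\psi-\varphi\|_1 \leqslant 2\|\ket{\psi}-\ket{\varphi}\|_2 \leqslant \varepsilon$, which is exactly the $\varepsilon$-net condition of Definition \ref{def:nets}.

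Next, I would bound $|\mfN|$ by a standard volume comparison. Identify $\sfA$ with $\mbR^{2|A|}$ via its real and imaginary parts; the induced Euclidean norm on $\mbR^{2|A|}$ agrees with the $\ell_2$ norm on $\sfA$. For each $\ket{\varphi_i}\in\mfN$, consider the open Euclidean ball $B(\ket{\varphi_i},\varepsilon/4)$ of radius $\varepsilon/4$ around $\ket{\varphi_i}$. By the $(\varepsilon/2)$-separation, these balls are pairwise disjoint. Since each $\ket{\varphi_i}$ is a unit vector, all these balls lie inside the big ball $B(0,1+\varepsilon/4)$. Comparing volumes in $\mbR^{2|A|}$ (where the volume of a ball of radius $r$ scales as $r^{2|A|}$),
\begin{equation*}
|\mfN|\left(\frac{\varepsilon}{4}\right)^{2|A|} \leqslant \left(1+\frac{\varepsilon}{4}\right)^{2|A|},
\end{equation*}
which gives $|\mfN|\leqslant (1+4/\varepsilon)^{2|A|}\leqslant (5/\varepsilon)^{2|A|}$ provided $\varepsilon\leqslant 1$ (and for $\varepsilon>1$ the bound is trivial since a single point suffices).

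There is no real obstacle here; the only thing requiring a small amount of care is the conversion between the complex dimension $|A|$ of $\sfA$ and the real dimension $2|A|$ appearing in the exponent, which is precisely why the factor of $2$ shows up in the statement. The translation from $\ell_2$ closeness of vectors to trace-norm closeness of density operators is handled entirely by Lemma \ref{lem:onenorm-twonorm}, which also accounts for the factor of $2$ between the $\varepsilon/2$ used in the packing and the $\varepsilon$ appearing in the net condition on density operators.
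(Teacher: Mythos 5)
The paper does not actually prove this lemma; it simply states it and refers the reader to \cite{HLSW03}. So there is no in-paper proof against which to compare. Your proof is the standard volume-packing argument, which is essentially the argument given in \cite{HLSW03}, and it is correct: a maximal $(\varepsilon/2)$-separated set of unit vectors is by maximality an $\varepsilon/2$-net in the $\ell_2$ norm; Lemma \ref{lem:onenorm-twonorm} then upgrades this to the trace-norm condition required by Definition \ref{def:nets}; and packing disjoint balls of radius $\varepsilon/4$ into the ball of radius $1+\varepsilon/4$ in $\mbR^{2|A|}$ gives $|\mfN| \leqslant (1 + 4/\varepsilon)^{2|A|} \leqslant (5/\varepsilon)^{2|A|}$ for $\varepsilon \leqslant 1$.

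One small imprecision: your dismissal of the case $\varepsilon > 1$ by saying ``a single point suffices'' is only literally true once $\varepsilon \geqslant 4$, since for a fixed unit vector $\ket{\varphi}$ the worst case is $\|\ket{\psi}-\ket{\varphi}\|_2 = 2$, which needs $\varepsilon/2 \geqslant 2$. For $1 < \varepsilon < 4$ your volume bound $(1+4/\varepsilon)^{2|A|}$ is still valid, it just isn't dominated by $(5/\varepsilon)^{2|A|}$. This regime is irrelevant to the use of the lemma in Chapter~\ref{chp:locking} (where $\varepsilon \leqslant e^{-2}$), so it is a cosmetic remark rather than a gap, but if you want a fully tight statement you should either restrict to $\varepsilon\leqslant 1$ or handle $\varepsilon\in(1,4)$ separately.
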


\end{document}